\newcommand\footnoteref[1]{\protected@xdef\@thefnmark{\ref{#1}}\@footnotemark}
\definecolor{modra3}{rgb}{.1,.0,.4}
\definecolor{zelena}{rgb}{0,.35,0}
\def\emphh#1{\emph{#1}}  %a ja to googlim, kde je to definovane...  jinak emph ma trochu slozitejsi funkci, a na toto uz existuji prikazy textbf resp. mathbf.
\def\yDelta{Y-$\Delta$ }
\def\degr{\mathrm{deg}}
\def\nCross#1#2#3{\mathrm{cr}_{#3}(#1, #2)}
\def\pdeg#1{{pdeg}(#1)}
\def\pNeigh#1{{\mathrm{P}}(#1)}
\def\mapping{\zeta}
\def\st#1{{\mathrm{St}}(#1)}
\def\deltaY{$\Delta$-Y }
\newcommand{\rephrase}[3]{\noindent\textbf{#1 #2}.~\emph{#3}}
\definecolor{modra}{rgb}{0,0,.8}
\definecolor{piros}{rgb}{.8,0,0}
\def\marrow{{\marginpar[\hfill$\Rrightarrow$]{$\Lleftarrow$}}}
\def\jk#1{}
\def\rf#1{{\color{piros} {\sc RF: }{\marrow\sf #1} }}
\def\jk#1{}
\def\rf#1{}
\newif\iflong
\newtheorem{theorem}{Theorem} %[section]
\newtheorem{corollary}[theorem]{Corollary}
\newtheorem{claim}[theorem]{Claim}
\theoremstyle{remark}
\newtheorem{remark}[theorem]{Remark}
\theoremstyle{definition}
\newtheorem{definition}[theorem]{Definition}
\begin{document}

\title{Hanani--Tutte for approximating maps of graphs\footnote{Extended abstract appeared in Proceedings of The 34th International Symposium on Computational Geometry (SoCG 2018).
We are grateful to M. Skopenkov for his critical remarks to previous versions of this paper. We are still working on some of the remarks, and are expecting to hear from him whether the current version still has some more critical issues.}}
%\titlerunning{Hanani--Tutte for c-planarity with embedded pipes} %optional, in case that the title is too long; the running title should fit into the top page column

\author{Radoslav Fulek\thanks{Department of Mathematics, Stanford University; \texttt{radoslav.fulek@gmail.com}. The research leading to these results has received funding from the People Programme (Marie Curie Actions) of the European Union's Seventh Framework Programme (FP7/2007-2013) under REA grant agreement no [291734]. The author gratefully acknowledges support from Austrian Science Fund (FWF): M2281-N35.}
\and
Jan Kyn\v{c}l\thanks{Department of Applied Mathematics and Institute for Theoretical Computer Science, Charles University, Faculty of Mathematics and Physics, 
Malostransk\'e n\'am.~25, 118 00~ Praha 1, Czech Republic;
\texttt{kyncl@kam.mff.cuni.cz}. Supported by project 
16-01602Y of the Czech Science Foundation (GA\v{C}R) and by Charles University project UNCE/SCI/004.}}
%affil[1]{}
%affil[2]{Department of Applied Mathematics and Institute for Theoretical Computer Science, Charles University, Faculty of Mathematics and Physics, 
%alostransk\'e n\'am.~25, 118 00~ Praha 1, Czech Republic\\
%texttt{kyncl@kam.mff.cuni.cz}}
%authorrunning{R.\, Fulek and J.\, Kyn\v{c}l} %mandatory. First: Use abbreviated first/middle names. Second (only in severe cases): Use first author plus 'et. al.'

%Copyright{Radoslav Fulek and Jan Kyn\v{c}l}%mandatory, please use full first names. LIPIcs license is "CC-BY";  http://creativecommons.org/licenses/by/3.0/

%subjclass{I.3.5 Computational Geometry and Object Modeling}% mandatory: Please choose ACM 1998 classifications from http://www.acm.org/about/class/ccs98-html . E.g., cite as "F.1.1 Models of Computation".
%keywords{Planarity testing, C-planarity, Matching, Winding number}

\maketitle

\begin{abstract}
We resolve in the affirmative a conjecture of Repov\v{s} and A.~Skopenkov (1998), and essentially resolve a conjecture M. Skopenkov (2003) generalizing the classical Hanani--Tutte theorem to the setting of approximating maps of graphs on 2-dimensional surfaces by embeddings. Our proof of this result is constructive and almost immediately implies an efficient algorithm for testing whether a given piecewise linear map of a graph in a surface is approximable by an embedding.
More precisely, an instance of this problem consists of (i) a  graph $G$ whose vertices are partitioned into clusters and whose inter-cluster edges are partitioned into bundles, and (ii) a region $R$ of a 2-dimensional compact surface $M$  given as the union of a set of pairwise disjoint discs corresponding to the clusters and a set of pairwise disjoint ``pipes''
corresponding to the bundles, connecting certain pairs of these discs.
We are to decide whether $G$ can be embedded inside $M$ so that the vertices in every cluster are drawn in the corresponding disc, the edges in every bundle pass only through its corresponding pipe, and every edge crosses the boundary of each disc at most once. %intra-cluster edge of $G$ stays inside the disc containing its end vertices.

%The proofs of our results extend almost verbatim to any closed two-dimensional surface.
 %intra-cluster edge of $G$ stays inside the disc containing its end vertices.
 \end{abstract}
 
 \newpage
 \tableofcontents
 \thispagestyle{empty}
 
 \newpage
\setcounter{page}{1} %only for Focs submission
%========================================================================== 
\section{Introduction}
\label{sec:intro}

It is a classical result of Hopcroft and Tarjan that graph planarity can be tested in linear time~\cite{HoTa74_planarity}, and a linear-time algorithm is also known for testing whether $G$ can be embedded into an arbitrary compact 2-dimensional manifold $M$~\cite{Moh99}, though
computing the orientable genus (as well as Euler genus and non-orientable genus) of a graph is NP-hard~\cite{T89_genusNP}
% Simpler and faster algorithms were constructed in~\cite{BM04_planarity,DDR2006_planarity}.

Seeing a graph $G$ as a $1$-dimensional topological space, an embeddability-testing algorithm decides whether there exists
an injective continuous map, also called an \emphh{embedding}, $\psi: G\rightarrow M$, where $M$ is a given triangulated compact 2-dimensional manifold without boundary.
We study a variant of this algorithmic problem in which we are given a piecewise linear continuous map $\varphi: G \rightarrow M$, which is typically not an embedding, and we are to decide
whether for every $\varepsilon>0$ there exists an embedding $\psi: G\rightarrow M$ such that $\|\psi-\varphi\|<\varepsilon$, where $\|.\|$ is the supremum norm.
Such a map $\psi$ is called an \emph{$\varepsilon$-approximation} of $\varphi$, and in this case we say that $\varphi$ is \emph{approximable by an embedding};
or as in~\cite{AFT18+_weak}, a \emph{weak embedding}.
If $\varphi$ is a constant map, the problem is clearly equivalent to the classical planarity testing.
Obviously, an instance of our problem is negative if there exists a pair of edges $e$ and $g$ in $G$ such that the curves $\varphi(e)$ and $\varphi(g)$ induced by $\varphi$ properly cross. Hence, in a typical instance of our problem the map $\varphi$ somewhat resembles an embedding except that we allow a pair of edges to overlap and an edge to be mapped onto a single point.

The problem of approximability of $\varphi$ by an embedding and its higher-dimensional analogues appeared in the literature under different names in several contexts in both mathematics and computer science. Its investigation in mathematics goes back to the 1960s, when Sieklucki proved a theorem~\cite[Theorem 2.1]{S69_realization}
implying the following. For a given $G$, every continuous piecewise linear map $\varphi: G \rightarrow \mathbb{R}\subset \mathbb{R}^2$ is approximable by an embedding if and only if every connected component of $G$ is a subcubic graph, i.e., a graph of maximum vertex degree at most 3, 
with at most one vertex of degree $3$\footnote{The theorem of Sieklucki is more general than the claim and is formulated as a result about ``realization of mappings''. In this setting we are given a topological space $Y$ and a map $\varphi: G\rightarrow X$ and we are interested in the existence of an embedding $\mathcal{E}$ of $X$ in $Y$ such that the composition of $\varphi$ and $\mathcal{E}$ 
is approximable by an embedding.}.
In a more recent paper by M.~Skopenkov~\cite{Sko03_approximability},
an algebraic characterization via van Kampen obstructions of maps $\varphi$ approximable by an embedding in the plane
is given in the case when $G$ is a cycle or when $G$ is subcubic and
the image of $\varphi$ is a simple closed curve.
This implies a polynomial-time algorithm for the decision problem in the corresponding cases
and can be seen as a variant of the following characterization of planar graphs
due to Hanani and Tutte.
%A graph is  planar if it can be drawn in the plane so that every pair of its
%non-adjacent edges cross an even number of times.

The Hanani--Tutte theorem is a classical result~\cite{Ha34_uber,Tutte70_toward} stating that a graph $G$ is planar if it can be drawn in the plane so that every pair of edges not sharing a vertex cross an even number of times. 
According to Schaefer~\cite[Remark 3.6]{Sch13b_towards}, \emph{``The planarity criterion of Hanani--Tutte brings together computational, algebraic and combinatorial aspects of the planarity problem.''} 
Perhaps the most remarkable algorithmic aspect of this theorem is that it implies the existence of a polynomial-time algorithm for planarity testing~\cite[Section 1.4.2]{Sch13_hananitutte}. In particular, the Hanani--Tutte theorem 
reduces planarity testing to solving a system of linear equations over $\mathbb{Z}_2$.

In what follows, we measure the time complexity of the algorithm in terms of the number of real values specifying
$\varphi$, denoted by $|\varphi|$, where a pair of real values specify $\varphi(v)$, for every $v\in V(G)$, and for every edge $e\in E(G)$
the polygonal line $\varphi(e)$ is additionally specified by a sequence of coordinates of its internal vertices.

Roughly speaking, in the corresponding special cases the result of M.~Skopenkov says that $\varphi$ is approximable by an embedding if and only if $\varphi$ is approximable by a generic continuous map under which every pair of non-adjacent edges cross
an even number of times.
The running time of the algorithm based on the characterization is $O(|\varphi|^{2\omega})$, where %$|\varphi|$ is the number of values specifying $\varphi$ and 
$O(n^{\omega})$ is the complexity of multiplication of square $n\times n$ matrices~\cite[Section 2]{FKMP15}.
The best current algorithms for matrix multiplication give $\omega<2.3729$~\cite{LeGall14_powers,WVV12}\footnote{Since a linear system appearing in the solution is sparse, it is also possible to use Wiedemann's randomized algorithm~\cite{Wie86_sparse}, with expected running time $O(|\varphi|^4\log{|\varphi|}^2)$ in our case.}.
Independently of the aforementioned developments, results of recent papers~\cite{AAET16,ChEX15,CDPP09} on weakly simple embeddings imply that the problem of deciding the approximability of $\varphi$ by an embedding is tractable and can be carried out in $O(|\varphi|\log |\varphi|)$  time if $G$ is a cycle. 
%The main question left open was whether the problem is tractable without imposing any restriction on the graph or the image of $\varphi$.
Besides the mentioned results, prior to our work, restrictive partial results with much worse running time were achieved~\cite{ADDF17,F16_bounded,F17_pipes,FKMP15}.

We show that the problem of deciding whether $\varphi$ is approximable by an embedding admits an efficient algorithm for every graph $G$ and continuous piecewise linear map $\varphi:G\rightarrow M$.
To  this end we essentially confirm a conjecture of M.~Skopenkov~\cite[Conjecture 1.6]{Sko03_approximability}. 

In spite of the analytic definition, the algorithmic problem of deciding whether $\varphi$ is approximable by an embedding admits a polynomially equivalent reformulation that is of combinatorial flavor and that better captures the essence of the problem. %is for us much more convenient to work with.
Therefore we state our results in terms of the reformulation, whose planar  
case is a fairly general restricted version of the c-planarity problem~\cite{FCE95a_how,FCE95b_planarity} of Feng, Cohen and Eades introduced by Cortese et al.~\cite{CDPP09}.
The computational complexity of c-planarity testing is a well-known notoriously difficult open problem in the area of graph visualization~\cite{CB05_invited}.
To illustrate this state of affairs we mention that Angelini and Da Lozzo~\cite{AL16_pipes} have recently studied our restricted variant (as well as its generalizations) under the name of \emphh{c-planarity with embedded pipes}
and provided an FPT algorithm for it~\cite[Corollary 18]{AL16_pipes}. %, see Corollary 4 therein.

Roughly speaking, in the \emphh{clustered planarity}, shortly \emphh{c-planarity}, problem we are given a planar graph $G$ equipped with a hierarchical structure of subsets of its vertex 
set. The subsets are called \emphh{clusters}, and two clusters are either disjoint or one contains the other. The question is whether a planar embedding of $G$ with the following property exists:
the vertices in each cluster are drawn inside a disc corresponding to the cluster so that the boundaries of the discs do not intersect, the discs respect the hierarchy of the clusters, and every edge in the embedding crosses the boundary of each disc at most once. 

\jk{pridat ref~\cite{Sko95_thickening} algorithm for approximability by reduction to thickenability of 2-complexes into 3-manifolds}
%http://link.springer.com/article/10.1007/BF02305012
%(A generalization of Neuwirth's theorem on thickening 2-dimensional polyhedra)
\jk{ - (G,H,phi) is a weak embedding~\cite{ChEX15}, is a weak independently even embedding, weak independent Z2-embedding... independently even clustered drawing - to je mozna ono!!! - pojem pro clustered planaritu s trubkami!!! (c-planar by bylo pretizene)
..weak embedding by bylo G -> R2, to H je tam pak skoro zbytecny, ale asi lepsi to rict s tim}

\jk{mitosis - jiny pojem pro derivaci :) jak se tam tahaji ty hrany do jadra a k okraji... nebo meiosis? protoze si nebere vsechen geneticky material?}

\jk{poznamky:}

\jk{The derivative operation looks much like the "node expansion" in Cortese at al. }

\jk{Samozrejme bude potreba vysvetlit vsechny souvislosti, a poukazat na to, jak graph drawing community ignored these earlier topological papers.}

\jk{ze nakresleni pipe grafu je dano kombinatoricky, ne geometricky jako nejake PL vnoreni, kde jsou potreba souradnice bodu.}

\jk{jednodussi dukaz pro 2 clustery, ze existuje diky stronger ht, by se mohl v poznamce napsat - oprava rotaci exposed vrcholu + split, pouziti stronger HT, bipartitni lemma. Mozna stale potreba cykly z kaktusu, ktere budou mit kompletne opravene rotace, a budou disjunktni, ale uz ne kuchani.}

\jk{je nutne, aby po derivaci hrany pripojene na ruzne vrcholy jedne 0-komponenty byly propojene v novem pipe-clusteru? U te verze, co jsem popisoval, tohle splneno byt nemuselo.}

\jk{formulovat to cele jako stronger HT? - asi jo, staci jen overit, ze v kazdem kroku se zachovavaji rotace pokud byly sude}

\jk{Also, everything seems to be the same on any orientable surface. Perhaps even non-orientable, just there the condition on monotone cycles is a little bit different if the final cycle is nonorientable.}

\jk{a new section about algorithm - because we cannot use our general algorithm directly - we cannot allow edge-cluster switches, just edge-vertex switches where the vertex is in the same cluster as one vertex of the edge. And this is easily seen to be enough.}

\paragraph{Notation.}
Let us introduce the notation necessary for precisely stating the problem that we study.
 Let $G=(V,E)$ be a multigraph  without loops. If we treat $G$ as a topological space, then a \emphh{drawing} $\psi$ of $G$ is a piecewise linear map from $G$ into a triangulated 2-dimensional manifold $M$ where every vertex in $V(G)$ is mapped to a unique point and every
edge $e\in E$ joining $u$ and $v$ is mapped bijectively to a simple curve joining $\psi(u)$ and $\psi(v)$. 
Unless stated otherwise by a \emphh{curve} or an \emphh{arc} we mean the image of a continuous piecewise linear map from the closed interval $[0,1]\subset \mathbb{R}$ to $M$.
 A curve is \emphh{closed} if its map sends $0$ and $1$ to the same point. 
We understand $E$ as a multiset, and by a slight abuse of notation
we refer to an edge $e$ joining $u$ and $v$ as $uv$ even though there might be other edges joining the same pair of vertices. Multiple edges are mapped to distinct curves meeting at their endpoints.
Given a map $m$ we denote by $m|_X$, where $X$ is a subset of the domain of $m$, the function obtained from $m$ by restricting its domain to $X$. If $H$ is a graph equipped with an embedding, 
we denote by $H|_X$, where $X$ is a subgraph of $H$, the graph $X$ with the embedding inherited from $H$.

If it leads to no confusion,  we do not distinguish between
a vertex or an edge and its image in the drawing and we use the words ``vertex'' and ``edge'' in both
 contexts. Also when talking about a drawing we often mean its image.
 To describe modifications of drawings we opt for a less formal and more intuitive style and use terms such as ``dragging'', ``closely following'', ``deforming'', etc., similarly as in~\cite{PSS06_removing}, which we believe should not lead to confusion.
 Since these terms have in the given context very clear meaning and to formalize them is rather  straightforward, but also technical, we think that being more formal would not really help most of the readers to follow the exposition.

We assume that drawings satisfy the following standard general position conditions. No edge passes through a vertex,
 every pair of edges intersect in finitely many points, no three edges intersect at the same inner point,
and every intersection point between a pair of edges is realized either by a proper crossing or a common endpoint. Here, by a \emph{proper crossing}
we mean a transversal intersection that is a single point.

An \emphh{embedding} of a graph $G$ is a drawing of $G$ in $M$ without crossings. 
An embedding is \emphh{cellular} if every connected component of its complement in $M$ is homeomorphic to an open dics.
The \emphh{rotation} at a vertex $v$ in a drawing of $G$ is the clockwise cyclic order of the edges incident to $v$ in a small neighborhood of $v$ in the drawing w.r.t a chosen orientation at the vertex. The \emphh{rotation system} of a drawing of $G$ is the set of rotations of all the vertices in the drawing. 
A cellular embedding of $G$ is determined up to isotopy class by the rotation system if $M$ is orientable. If $M$ is non-orientable we need to additionally specify the signs of the edges as follows.
We assume that $M$ is constructed from a punctured 2-sphere by turning all the holes into cross-caps, i.e., by identifying the pairs of opposite points on every hole. 
A \emphh{sign} on an edge is positive if overall the edge passes an even number of times through the cross-caps, and negative otherwise.

Refer to Figure~\ref{fig:ex}.
We refer the reader to the monograph by Mohar and Thomassen~\cite{MT02_graphs} for a detailed introduction into surfaces and graph embeddings.\footnote{Working on surfaces with higher genus is not really essential in our problem, since everything important happens already in the plane. Therefore the reader that is not familiar with them is encouraged to skip  parts of the   arguments  dealing with them.}
Let $\varphi:G \rightarrow M$ be a piecewise linear map with finitely many linear pieces. Suppose that $\varphi$ is free of edge crossings, and in $\varphi$, edges do not pass through vertices. As we will see later, the image of $\varphi$ can be naturally identified with a graph $H$ embedded in $M$.
Throughout the paper we denote both vertices and edges of $H$ by Greek letters. Let the \emphh{thickening} $\mathcal{H}$ of $H$ be a  2-dimensional surface with boundary obtained as a quotient space of a union of pairwise disjoint topological discs as follows. We take a union of pairwise disjoint closed discs $\mathcal{D}(\nu)$, called \emph{clusters}, for all $\nu\in V(H)$ and closed rectangles $\mathcal{P}(\rho)$, called \emph{pipes}, for all $\rho\in E(H)$.
We connect every pair of discs $\mathcal{D}(\nu)$ and $\mathcal{D}(\mu)$, such that $\rho=\nu\mu\in E(H)$, by $\mathcal{P}(\rho)$ in correspondence with the rotations at vertices of the given embedding of $H$ as described next. We consider a subset of $\partial \mathcal{D}(\nu)$, for every $\nu \in V(H)$, consisting of $\degr(\nu)$  pairwise disjoint closed (non-trivial) arcs $\mathcal{A}(\nu,\mu)$,  one for every $\nu\mu\in E(H)$, appearing along $\partial \mathcal{D}(\nu)$ in correspondence with the rotation of $\nu$.
For every $\mathcal{D}(\nu)$, we fix an orientation of $\partial \mathcal{D}(\nu)$ and $\partial \mathcal{P}(\nu\mu)$.
If $M$ is orientable, for every $\mathcal{P}(\nu\mu)$, we identify by an orientation reversing homeomorphism its opposite sides with $\mathcal{A}(\nu,\mu)$ and $\mathcal{A}(\mu,\nu)$ w.r.t.  the chosen orientations  of $\partial \mathcal{D}(\nu)$,
$\partial \mathcal{D}(\mu)$, and $\partial \mathcal{P}(\nu\mu)$. 
If $M$ is non-orientable,
for every $\mathcal{P}(\nu\mu)$ with the positive sign we proceed as in the case when $M$ is orientable and for every every $\mathcal{P}(\nu\mu)$ with the negative sign, we identify  by an orientation preserving homeomorphism   its opposite sides with $\mathcal{A}(\nu,\mu)$ and $\mathcal{A}(\mu,\nu)$ w.r.t.  the chosen orientations of  $\partial \mathcal{D}(\nu)$ and $\partial \mathcal{P}(\nu\mu)$, and the reversed orientation of $\partial \mathcal{D}(\mu)$.
We call the intersection of $\partial\mathcal{D}(\nu)\cap \partial \mathcal{P}(\nu\mu)$ 
a \emphh{valve} of $\nu\mu$.

The rotation system of $H$ with the signs on its edges are \emphh{realized} on a 2-dimensional surface $M$ if by attaching discs to $\mathcal{H}$ by identifying homeomorphically each boundary component of $\mathcal{H}$ with the boundary of a single disc we get a union of pairwise disjoint surfaces, whose connected sum is a surface homeomorphic to $M$.

%---------------------------------------------------------------------------------------------------
\paragraph{Instance.} %Notation
\jk{to jeste zobecnime na pripad, kdy H je multigraf bez smycek, kde se bude predpokladat podminka na dvojice multihran v ruznych trubkach, ze se nesmi protnout lisekrat "vlevo ani vpravo". Redukce je podrozdeleni trubky clusterem a kazde hrany vrcholem, a preswitchovani na indep. even drawing (jde to, lokalne je to podprostor dimenze 3 pro 2 cesty delky 2). Toto je pripad, kdy $H$ je cluster adjacency graph.}

\jk{odsud; jeste sem vlozit komentar z minule verze, prepsat bez epsilonu - k definici embedded pipes snad epsilony nejsou potreba... proste vrcholy budou reprezentovany jako disky, hrany taky jako disky, vsechno interior-disjoint, a hranove disky se protinaji pouze s vrcholovymi podel spolecne hranice - valve. Valves are disjoint. Proc to delat jednoduse...  misto tehle epsilon-definice napsat, ze problem jde take formulovat jako epsilon-approximability of simplicial maps of graphs to the plane, where a simplicial map can map multiple vertices to the same point, ane every edge is mapped to a straight-line segment, which may be degenerate.}

An instance of the problem that we study is defined as follows.
The instance is a triple $(G,H,\varphi)$ of an (abstract) graph $G$, a graph $H$
embedded in a closed 2-dimensional manifold $M$, and a
%simplicial \jk{is not used anywhere else} 
map $\varphi: V(G)\rightarrow V(H)$ such that every pair of vertices joined by an edge in $G$ are mapped either to a pair of vertices joined by an edge in $H$ or to the same vertex of $H$.
We naturally extend the definition of $\varphi$ to each subset $U$ of $V(G)$ by putting $\varphi(U)=\{\varphi(u)| \ u\in U\}$, and to each subgraph $G_0$ of $G$ by putting $\varphi(G_0)=(\varphi(V(G_0)),\{\varphi(e)| \ e\in E(G_0), |\varphi(e)|= 2\})$.
The map $\varphi$ induces a partition of the vertex set of $G$ into \emph{clusters} $V_{\nu}$, where $V_{\nu}=\varphi^{-1}[\nu]$.

\paragraph{Question:} 
Decide whether there exists an embedding $\psi$ of $G$ in the interior of a thickening $\mathcal{H}$ of $H$ so that the following hold.

\begin{enumerate}[(A)]
\item \label{it:1st} Every vertex $v\in V_{\nu}$ is drawn in the interior of $\mathcal{D}(\nu)$, i.e., $\psi(v)\in \mathrm{int}(\mathcal{D}(\nu))$.
\item \label{it:2nd}  For every $\nu\in V(H)$, every edge $e\in E(G)$ intersecting $\partial \mathcal{D}(\nu)$  does so in a single proper crossing, i.e., $|\psi(e)\cap \partial\mathcal{D}(\nu)|\le 1$.
\end{enumerate}

Note that conditions~(\ref{it:1st}) and (\ref{it:2nd}) imply that every edge of $G$ is allowed to pass through at most one pipe as long as $G$ is drawn in $\mathcal{H}$. The instance is \emphh{positive} if an embedding $\psi$
of $G$ satisfying~(A) and (B)  exists and \emphh{negative} otherwise.
If $(G,H,\varphi)$ is a positive instance we say that $(G,H,\varphi)$ is \emphh{approximable by the embedding}  $\psi$, shortly \emphh{approximable}. 
We call $\psi$ the \emphh{approximation} of $(G,H,\varphi)$.
When the instance $(G,H,\varphi)$ is clear from the context, we call $\psi$ the \emphh{approximation} of $\varphi$.

\begin{figure}
\centering
\includegraphics[scale=0.8]{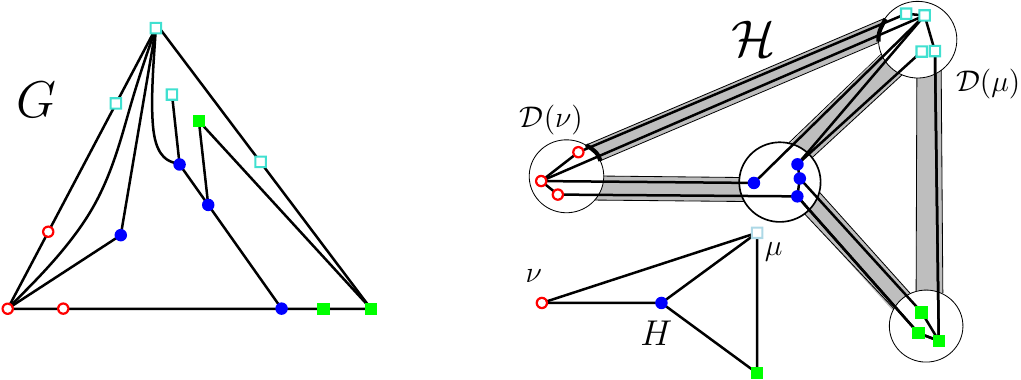}
\caption{An instance $(G,H,\varphi)$ and its approximation by an embedding contained in the thickening $\mathcal{H}$ of $H$. The valves of the pipe of $\rho=\nu\mu$ at $\mathcal{D}(\nu)$ and $\mathcal{D}(\mu)$ are highlighted by bold arcs.}
\label{fig:ex}
\end{figure}

\begin{figure}
\centering
\includegraphics[scale=0.8]{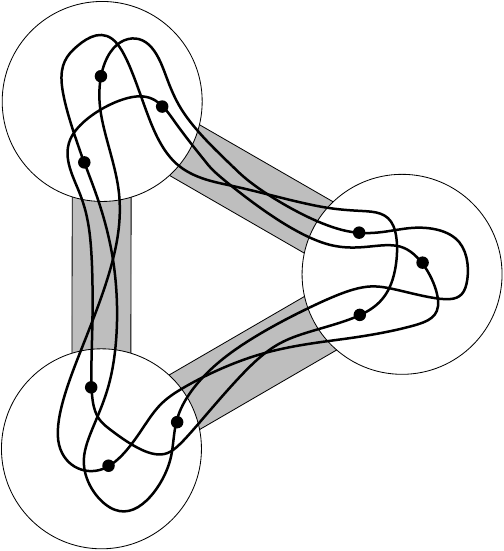}
\caption{``Standard winding example'' witnessing the fact that the $\mathbb{Z}_2$-approximability of an instance $(G,H,\varphi)$ does not imply its approximability by an embedding. In the figure, $G$ and $H$ are cycles of length 9 and 3, respectively. The instance is negative, since $\varphi$ forces $G$ to ``wind three times'' around $H$.}
\label{fig:swe}
\end{figure}

The instance $(G,H,\varphi)$, or shortly $\varphi$, is \emphh{locally injective} if for every vertex $v\in V(G)$, the restriction of $\varphi$ to the union of $v$ and the set of its neighbors is injective, or equivalently, no two vertices that are adjacent or have a common neighbor in $G$ are mapped by $\varphi$ to the same vertex in $H$.
An edge of $G$ is a \emphh{pipe edge} if it is mapped by $\varphi$ to an edge of $H$. 
When talking about pipe edges, we have a particular instance in mind, which is clear from the context.
Let $C$ denote a subgraph of $G[V_\nu]$, that is, the subgraph of $G$ induced by $\varphi^{-1}[\nu]=V_{\nu}$.
The \emphh{pipe degree}, $\pdeg{C}$, of  $C$  is the number of 
edges $\rho$ of $H$  for which there exists a pipe edge $e$ with one vertex in $C$ such that $\varphi(e)=\rho$. Let the \emphh{pipe neighborhood} $\pNeigh{C}$ of $C$ denote the set of end vertices of the edges in $H$ different from $\nu$ witnessing $\pdeg{C}$.

In Section~\ref{sec:equivalence} we show that the problem of deciding whether a piecewise linear continuous map $\varphi_0$ of from $G$ to  $M$ is approximable by an embedding is polynomially equivalent to the problem of deciding whether an instance constructed from $\varphi_0$ is positive.

%==========================================================================
\subsection{The result}
\label{sec:result}

An edge in a drawing is \emphh{even} if it crosses every other edge an even number of times.
A vertex in a drawing is \emphh{even} if every pair of its incident edges cross an even number of times.
An edge in a drawing is \emphh{independently even} if it crosses every other non-adjacent edge an even number of times.
A drawing of a graph is \emphh{(independently) even} if all edges are (independently) even. Note that every embedding is an even drawing.

\jk{tady by mozna bylo lepsi zase formalne definovat problem Z2-aproximace - protoze "instance" nemuze jen tak plavat ve vakuu, je to instace nejakeho problemu.}
We formulate our main theorem in terms of a relaxation of the notion of an approximable instance $(G,H,\varphi)$.
An instance $(G,H,\varphi)$ is \emphh{$\mathbb{Z}_2$-approximable} if there exists an independently even drawing of $G$ in the interior of $\mathcal{H}$ satisfying~(\ref{it:1st}) and~(\ref{it:2nd}). We call such a drawing a $\mathbb{Z}_2$-approximation of $(G,H,\varphi)$.

The proof of the Hanani--Tutte theorem from~\cite{PSS06_removing} proves that given an independently even drawing of a graph in the plane, there exists an embedding of the 
graph in which the rotations at even vertices are preserved, that is, they are the same as in the original independently even drawing. We refer to this statement
as to the \emphh{unified Hanani--Tutte theorem}~\cite[Theorem 1.1]{FKP17_unified}. Our result can be thought of as a generalization of this theorem, which also motivates the following definition.
A drawing $\psi$ of $G$ is \emphh{compatible} with a drawing $\psi_0$ of $G$ if every even vertex in $\psi_0$ is also even in $\psi$ and has the same rotation up to the choice of an orientation in both drawings $\psi_0$ and $\psi$.\footnote{Since in general we work also with non-orientable surfaces the rotation is determined only up to the choice of an orientation at each vertex if the surface $M$, on which    the rotation system of $H$ with signs is realized, is non-orientable.} If $M$ is orientable our result is a proper generalization of the unified Hanani--Tutte theorem, since ``up to the choice of an orientation'' can be dropped in the previous definition in this case.

It is known that $\mathbb{Z}_2$-approximability of $(G,H,\varphi)$ does not have to imply its approximability by an embedding~\cite[Figure 1(a)]{ReSk98_deleted}.
Our main result characterizes the instances
$(G,H,\varphi)$ for which such implication holds.
The characterization is formulated in terms of 
the derivative of $(G,H,\varphi)$, whose formal definition is postponed to Section~\ref{sec:normal-form}, since its definition relies on additional concepts that we need to introduce, which would unnecessarily further delay stating of our main result. We give an informal description of the derivative later in this subsection.

\begin{theorem}
\label{thm:main}
If an instance $(G,H,\varphi)$ is $\mathbb{Z}_2$-approximable by an independently even drawing $\psi_0$ then either
$(G,H,\varphi)$ is approximable by an embedding $\psi$  compatible with $\psi_0$, or it is not approximable by an embedding and in the $i$th derivative $(G^{(i)},H^{(i)},\varphi^{(i)})$, for some $i \in \{1,2,\dots, 2|E(G)|\}$, there exists a connected component $C\subseteq G^{(i)}$ such that $C$ is a cycle, $\varphi^{(i)}$ is locally injective and $(C,H^{(i)}|_{\varphi^{(i)}(C)},\varphi^{(i)}|_C)$ is not approximable by an embedding.
\end{theorem}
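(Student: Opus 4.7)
The plan is to prove Theorem~\ref{thm:main} by induction on a suitable complexity measure (essentially the total number of pipe edges, possibly with a secondary measure to justify the $2|E(G)|$ bound on the derivative depth). The starting point is the given $\mathbb{Z}_2$-approximation $\psi_0$. The strategy is to either redraw $\psi_0$ into an honest embedding $\psi$ that preserves rotations at all even vertices, or to isolate, through repeated differentiation, a structural obstruction living on a single cycle component of some iterated derivative.

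First I would develop the local picture inside a single cluster $\mathcal{D}(\nu)$. Restricted to $\mathcal{D}(\nu)$, the drawing is a drawing of the graph $G[V_\nu]$ together with short stubs for each pipe edge terminating on $\partial\mathcal{D}(\nu)$, and the restriction of $\psi_0$ is independently even inside a disc. The unified Hanani--Tutte theorem of~\cite{FKP17_unified} should then yield an embedding of this local graph that preserves the rotations at even vertices. Crucially, to make this local embedding \emph{compatible} with the pipe structure, the cyclic order in which pipe stubs appear along $\partial\mathcal{D}(\nu)$ must match the order already prescribed by the valves. Forcing this match is exactly the role of the derivative: each application of the derivative uses the $\mathbb{Z}_2$-approximation on the cluster boundary to refine the instance (by splitting a cluster, contracting certain pieces, or ``pulling'' a pipe edge into a new sub-cluster, along the lines of Cortese~et~al.'s node expansion), producing $(G^{(i+1)},H^{(i+1)},\varphi^{(i+1)})$ together with a derived $\mathbb{Z}_2$-approximation. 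I would prove, as the technical backbone, that (i)~$\mathbb{Z}_2$-approximability and compatibility with $\psi_0$ are preserved by differentiation, and (ii)~approximability of $(G^{(i+1)},H^{(i+1)},\varphi^{(i+1)})$ by a compatible embedding lifts to approximability of $(G^{(i)},H^{(i)},\varphi^{(i)})$ by a compatible embedding.

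Next I would analyze termination. Each non-trivial derivative step should strictly decrease a monovariant---e.g., the number of edges of $G$ not yet separated into a locally injective cycle plus the number of pipe edges incident to non-trivial cluster subgraphs---which gives the bound $i\le 2|E(G)|$. If the process halts with every remaining component being a cycle with locally injective $\varphi^{(i)}$, two cases appear: either each such cycle is itself approximable by an embedding (and so by combining with the previous lifting step one recovers an embedding of the original $(G,H,\varphi)$ compatible with $\psi_0$), or at least one such cycle $C$ is not approximable and we are in the second alternative of the theorem.

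The main obstacle I anticipate is step~(ii) of the backbone: when we glue local embeddings of clusters along pipes, we must ensure that the cyclic order of edge stubs imposed at each valve from inside the cluster coincides with the order imposed from the pipe. For clusters containing an even vertex whose rotation is fully prescribed, and for valves shared between clusters, a careful accounting of sign changes along boundaries of clusters is required; this is where the reduction to a locally injective cycle obstruction becomes necessary, because a cycle that loops through several clusters can witness a parity mismatch that no local modification can repair. Once this gluing lemma is in place, the inductive argument closes cleanly, and the constructive nature of each step yields the algorithmic consequences advertised in the introduction.
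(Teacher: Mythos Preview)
Your high-level architecture matches the paper: induct on a potential, repeatedly bring the instance to a canonical form, differentiate, show that $\mathbb{Z}_2$-approximability and compatibility descend while approximability lifts (the paper's Claims~\ref{claim:derivative} and~\ref{claim:integration}), and handle a terminal ``locally injective'' base case. But two places in your plan are genuine gaps, not just missing detail.

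\textbf{The monovariant.} ``Total number of pipe edges'' does not obviously decrease under differentiation: new pipe edges are created at central vertices, and edges of $H'$ multiply as well. The paper's working quantity is $p(G,H,\varphi)=|E_p(G)|-|E(H)|$, and the proof that it strictly drops when $\varphi$ is not locally injective (Claim~\ref{claim:semi-trivial}) uses an injective charging from the new matching of pipe edges into vertices of old pipe edges, together with a count of tree components of $H_0'$. Your secondary measure (``number of edges not yet separated into a locally injective cycle'') is not defined precisely enough to do this work, and the $2|E(G)|$ bound needs a second phase (Claim~\ref{claim:semi-trivial2}) to kill off path components once local injectivity is reached.

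\textbf{The base case.} You write that the process ``halts with every remaining component being a cycle with locally injective $\varphi^{(i)}$''. That is not what happens: differentiation stabilises when $\varphi$ is locally injective, but $G$ can still have components of arbitrary shape, including vertices of degree $\ge 3$. The substantive content of the base case---and the mechanism that singles out a \emph{cycle} obstruction when one exists---is entirely missing from your sketch. In the paper this is done by defining, for each cluster $\nu$ and each basepoint ${\bf p}\in\partial\mathcal{D}(\nu)$, a relation $<_{\bf p}$ on the ``$\nu$-diagonals'' (paths of length two through a vertex of $V_\nu$), proving it is antisymmetric and coherent across neighbouring clusters (Claims~\ref{claim:comparison}--\ref{claim:comparison3}), and then splitting into an \emph{acyclic} case (the tournaments are transitive, and one can realise the orders simultaneously to get a genuine embedding) versus a \emph{cyclic} case (a strongly connected piece of the tournament forces the supporting vertices to lie on a cycle component $C$, which is exactly the winding obstruction). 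Your ``local unified Hanani--Tutte plus gluing'' idea is the right instinct for the acyclic half, but it gives no way to \emph{detect} the cyclic case or to prove that the obstruction, when present, lives on a single cycle component; Claim~\ref{claim:comparison3} is the missing lemma.

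A smaller omission: before differentiating, the paper first passes to a \emph{normal form} (Section~\ref{sec:normalform2}) in which each cluster-induced component is a tree with a distinguished central vertex of controlled pipe degree. This is where most of the redrawing work (flips, finger moves, splits, the delicate Claims~\ref{claim:basic} and~\ref{claim:split}) actually lives, and it is what makes the derivative reversible. Your plan does not allocate any effort to this step.
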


\jk{ve zneni vety nastava nejaka automaticka "restrikce" te "instance" na cykl-komponentu C, nevim, jestli to je nekde formalne definovane (hlavne to ponechani stejneho fi) ... nebo crvrta derivace? :) ... Tady bych si jeste predstavoval rict konkretneji, jak ta obstrukce vypada - ze to je nekolikrat namotany cyklus - to by se zase definovalo pred tou vetou, slovy ze je to "essentially the only obstruction", a pak se (v te vete) objasni, ze essentially znamena ze se ta obstrukce muze objevit v derivaci.}

\begin{remark}
Refer to Figure~\ref{fig:swe}.
The obstruction $(C,H^{(i)}|_{\varphi^{(i)}(C)},\varphi^{(i)}|_C)$ from
the statement of the theorem has the form of the ``standard winding example''~\cite[Figure 1(a)]{ReSk98_deleted},
in which the cycle $C$ is forced by $\varphi^{(i)}$ to wind around a point inside a face of $H$ more than once (and an odd number of times, since it has a $\mathbb{Z}_2$-approximation). 
\end{remark}

Our main result implies the following.

\begin{corollary}
\label{cor:forest}
If $G$ is a forest, the $\mathbb{Z}_2$-approximability implies approximability by an embedding.\jk{our theorem implies this, right? this should be stated explicitly.}
\end{corollary}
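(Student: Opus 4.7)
The plan is to apply Theorem~\ref{thm:main} directly and eliminate the second alternative. Assume that $(G,H,\varphi)$ is $\mathbb{Z}_2$-approximable, with $G$ a forest. By Theorem~\ref{thm:main}, either $(G,H,\varphi)$ is approximable by an embedding (in which case we are done), or for some $i\in\{1,\dots,2|E(G)|\}$ the derived instance $(G^{(i)},H^{(i)},\varphi^{(i)})$ contains a connected component $C\subseteq G^{(i)}$ that is a \emph{cycle}. The strategy is to show that the second alternative is impossible whenever the starting graph $G$ is acyclic.

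The key step is to verify that the derivative operation, whose formal definition is given in Section~\ref{sec:normal-form}, cannot create a cycle out of an acyclic graph. Informally, the derivative behaves like the ``node expansion'' of Cortese et al., replacing a vertex by a set of vertices (together with possibly new pipe edges inside clusters), so that every edge of $G^{(i+1)}$ either corresponds to an edge of $G^{(i)}$ or lies inside the expansion of a single vertex of $G^{(i)}$. In particular, a cycle in $G^{(i+1)}$ would project to a closed walk in $G^{(i)}$, which forces the existence of a cycle in $G^{(i)}$. By induction on $i$, if $G=G^{(0)}$ is a forest, then so is $G^{(i)}$ for every $i\geq 0$, and hence no connected component of any $G^{(i)}$ can be a cycle.

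Once this structural invariant is established, the second alternative in Theorem~\ref{thm:main} is vacuous for forests, and we conclude that $(G,H,\varphi)$ is approximable by an embedding, as required.

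The main obstacle is the structural claim that derivation preserves the property of being a forest. I expect this to follow from a direct inspection of the definition in Section~\ref{sec:normal-form}: one needs to check that each elementary step of the derivative only splits vertices and possibly deletes or redirects edges within the vertex-split, but never glues two distinct vertices together nor inserts an edge whose endpoints already lie in distinct components of what used to be a tree. If derivation were ever to identify two different vertices of $G^{(i)}$, a tree could conceivably be turned into a unicyclic graph, so this is precisely the point that has to be verified carefully against the formal definition.
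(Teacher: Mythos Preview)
Your overall strategy is correct and coincides with one of the two justifications the paper gives: apply Theorem~\ref{thm:main} and rule out the cycle alternative by showing that every $G^{(i)}$ is a forest. The paper also offers a slightly different primary route, via the ``moreover'' part of Claim~\ref{claim:hereditary}: any cycle $C\subseteq G^{(i)}$ can be traced back to a cycle $C_0\subseteq G$, which is impossible when $G$ is a forest. Your inductive argument and the paper's back-tracing argument are essentially dual to each other.

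One point deserves correction. You describe the derivative as ``replacing a vertex by a set of vertices'' and speak of edges of $G^{(i+1)}$ lying ``inside the expansion of a single vertex of $G^{(i)}$''. In this paper that is not how the derivative acts: by Definition~\ref{def:derivativeAbstract} one has $G'=G$; only $H$ and $\varphi$ change. The graph $G$ is modified solely in the step that brings the instance into the normal form (Claim~\ref{claim:normal_form}), via cycle reductions, pruning, edge contractions, vertex splits, and suppression of degree-$2$ vertices. So the verification you flag in your last paragraph should be carried out not for the derivative, but for these normalization operations in Section~\ref{sec:normalform2}. For a forest the cycle-reduction phase is vacuous, pruning only deletes, and the remaining operations (contraction of a tree edge, vertex split, suppression) each preserve acyclicity, so the check is indeed routine; but your description of which operations need checking is off.
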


Theorem~\ref{thm:main} essentially confirms a conjecture of M.~Skopenkov~\cite[Conjecture 1.6]{Sko03_approximability}\footnote{The conjecture predicts $i$ in Theorem~\ref{thm:main} to be at most $|V|$ rather than $2|E|$, hence, the word ``essentally''.}, since our definition of the derivative agrees with his definition in the case when $G$ is a cycle and since for every
cycle $C$ in $G^{(i)}$  there exists
 a cycle $D$ in $G$ 
such that $(D^{(i)},H^{(i)}|_{\varphi(D^{(i)})},\varphi^{(i)}|_{D^{(i)}})=(C,H^{(i)}|_{\varphi^{(i)}(C)},\varphi^{(i)}|_C)$, by Claim~\ref{claim:hereditary}.
The previous claim  also immediately implies Corollary~\ref{cor:forest} which
 confirms a conjecture of Repov\v{s} and A. B. Skopenkov~\cite[Conjecture 1.8]{ReSk98_deleted}.
 The corollary follows also implicitly from the proof of Theorem~\ref{thm:main}, since it will be straightforward to see that $G^{(i)}$ is a forest, if $G$ is a forest.
The main consequence of Theorem~\ref{thm:main} is the following.

\begin{theorem}
\label{thm:main2}
We can test in $O(|\varphi|^{2\omega})$, where $O(n^{\omega})$ is the complexity of multiplication of square $n\times n$ matrices, whether $(G,H,\varphi)$ is approximable by an embedding.
\end{theorem}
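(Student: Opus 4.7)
The algorithm is driven by Theorem~\ref{thm:main}. In the first phase we test $\mathbb{Z}_2$-approximability of $(G,H,\varphi)$, and in the second phase, assuming we have produced an independently even drawing $\psi_0$, we iteratively compute the derivatives $(G^{(i)},H^{(i)},\varphi^{(i)})$ for $i=1,\ldots,2|E(G)|$ and check whether any of them contains a cycle-component obstruction of the form described in Theorem~\ref{thm:main}. By that theorem, $(G,H,\varphi)$ is approximable by an embedding if and only if the first phase succeeds and no obstruction is detected in the second phase.

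For the first phase we generalize the linear-algebraic reduction underlying the Hanani--Tutte theorem (cf.~\cite[Section~1.4.2]{Sch13_hananitutte}) to the pipe setting. Start from an arbitrary drawing of $G$ inside $\mathcal{H}$ satisfying conditions~(\ref{it:1st}) and~(\ref{it:2nd}); such a drawing can be constructed greedily in linear time by fixing an arbitrary placement of the vertices in their discs and routing each pipe edge through the prescribed pipe. Introduce, for every pair $(e,v)$ consisting of an edge $e\in E(G)$ and a vertex $v\in V(G)$ lying in the same cluster as one endpoint of $e$ but not incident to $e$, a variable $x_{e,v}\in\mathbb{Z}_2$ that encodes the parity of the number of times we drag the corresponding end of $e$ around $v$ inside that cluster. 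These are the only edge--vertex switches that preserve~(\ref{it:1st}) and~(\ref{it:2nd}). For each pair $\{e,f\}$ of non-adjacent edges of $G$, the parity of their crossings in the modified drawing is an affine function of the variables $x_{e,\cdot}$ and $x_{f,\cdot}$ over $\mathbb{Z}_2$; equating all these parities to zero yields a system of $O(|\varphi|^2)$ equations in $O(|\varphi|^2)$ unknowns. The system has a solution if and only if $(G,H,\varphi)$ is $\mathbb{Z}_2$-approximable, and any such solution yields an explicit independently even drawing $\psi_0$. Gaussian elimination then runs in time $O(|\varphi|^{2\omega})$, matching the stated bound.

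For the second phase, given $\psi_0$, we apply the (constructive) derivative operation defined in Section~\ref{sec:normal-form} to produce $\psi_0^{(1)},\psi_0^{(2)},\ldots$, and, at each level $i$, we scan the connected components of $G^{(i)}$ for those that are cycles on which $\varphi^{(i)}$ is locally injective. For each such candidate cycle $C$, deciding whether $(C,H^{(i)}|_{\varphi^{(i)}(C)},\varphi^{(i)}|_C)$ is approximable by an embedding reduces, by the remark following Theorem~\ref{thm:main}, to testing whether the winding of $\varphi^{(i)}|_C$ around the relevant face of $H^{(i)}$ equals $\pm 1$; equivalently, one may invoke the polynomial-time cycle-case algorithm of M.~Skopenkov~\cite{Sko03_approximability}. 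Since each derivative is a local transformation around the current image and the cumulative size of the instances across all $2|E(G)|=O(|\varphi|)$ levels is polynomial in $|\varphi|$ with exponent below $2\omega$, the first phase dominates the total running time at $O(|\varphi|^{2\omega})$.

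The main obstacle in this plan is verifying that the second phase is genuinely cheaper than the first, i.e., that iterated derivation does not cause a superpolynomial blow-up. This hinges on showing that each derivative step can be computed in time polynomial in the current instance size and that the instance size does not grow uncontrollably; both facts should follow from the concrete local definition of the derivative given in Section~\ref{sec:normal-form}, together with the fact that each derivative is driven by the finite combinatorial data of $\psi_0$ and affects only a bounded region around the already-resolved part of the drawing.
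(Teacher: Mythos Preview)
Your overall architecture matches the paper's: solve a $\mathbb{Z}_2$ linear system of size $O(|\varphi|^2)$ for $\mathbb{Z}_2$-approximability, then iterate the derivative and look for cycle-component obstructions. The first phase is described essentially as the paper does it (the paper simply cites~\cite[Section~2]{FKMP15} and notes that edge--cluster switches are forbidden, which agrees with your restriction of edge--vertex switches to the common cluster).

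The genuine gap is exactly the one you flag at the end: you do not actually bound the size of the iterated instances. Saying it ``should follow from the concrete local definition of the derivative'' is not a proof, and in fact the full construction of $\psi_0^{(i)}$ that you propose is \emph{not} what the paper does for the decision problem. Each normalization step involves cycle reductions, pruning, contractions and vertex splits; the derivative construction in Section~\ref{sec:derivative_z2approx} copies pieces of the drawing and reconnects them. Controlling the cumulative size of all of this over $O(|\varphi|)$ rounds is not obvious, and you give no argument for it.

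The paper sidesteps this entirely by working with the \emph{simplified instance}: contract every connected component induced by a cluster to a single vertex, so that only pipe edges survive. Claim~\ref{claim:simplified} shows that simplification commutes with derivation, and Claim~\ref{claim:hereditary} shows that cycle obstructions in $G^{(i)}$ already come from cycles in $G$. One then restricts to connected components of $G$ in which every cluster-induced piece has pipe degree at most $2$ (only these can produce cycle obstructions), and for such components the second part of Claim~\ref{claim:semi-trivial} guarantees $|E_p(G')|\le |E_p(G)|$, so the simplified instance never grows. Thus each of the $O(|\varphi|)$ simplified derivatives has size $O(|\varphi|)$, and the second phase is genuinely dominated by the $O(|\varphi|^{2\omega})$ linear-algebra step. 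It also suffices to test cycle components only at the final level $i=2|E(G)|$, since approximability is preserved in both directions under the derivative (Claims~\ref{claim:derivative} and~\ref{claim:integration}); checking every level as you do is harmless but unnecessary.
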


Theorem~\ref{thm:main2} implies tractability of c-planarity with embedded pipes~\cite{CDPP09} and therefore solves a related open problem of Chang, Erickson and Xu~\cite%arXiv version, Appendix D.2]
[Section 8.2]{ChEX15} and Akitaya et al.~\cite{AAET16}. The theorem also implies that strip planarity introduced by Angelini et al.~\cite{ADDF17} is tractable, and hence, solves the main problem asked therein.
The theorem generalizes results of~\cite{FKMP15} and~\cite{FPSS12_monotone}, and implies that c-planarity~\cite{FCE95a_how,FCE95b_planarity} for flat clustered graphs is tractable for instances with three clusters (Theorem~\ref{thm:c-planarity} in Section~\ref{sec:c-planarity}); this has been open, to the best of our knowledge. 
We remark that only solutions to the problem for two clusters were given so far~\cite{B98_2cl,GUT02_2cl,HN16_twopage}.
Nevertheless, after the completion of this work our running time was  improved to $O(|\varphi|^2\log |\varphi|)$~\cite{AFT18+_weak}. The improvement on the running time was achieved by using a similar strategy as in the present work, while eliminating the need to solve the linear system and employing a very careful running time analysis.
On the other hand, Theorem~\ref{thm:main} and Theorem~\ref{thm:main2} easily generalize to the setting when clusters are homeomorphic to cylinders   and $\mathcal{H}$ is homeomorphic to a torus or a cylinder
as discussed in Section~\ref{sec:toroidal}.
It is an interesting open problem to find out if the technique
of~\cite{AFT18+_weak} generalizes to this setting as well.

The main problem with extending these results to more general setting is that  
the tools such as PQ/PC-trees~\cite{BR16_simPQtrees,BL76_PQtrees,HMcC03_PCtrees} that are often used for devising algorithms for similar  problems did not seem to work here in spite of the fact that various special cases were solved by applying such methods. Though, in~\cite{AFT18+_weak} related SPQR-trees were employed  to improve upon our running time. 
A recent work~\cite{ADDF17,F16_cyclic} suggested an approach via computation of a flow/perfect matching in a graph. However, this approach is tailored for the setting in which also the isotopy class of an embedding of $G$ is fixed. An attempt to make it work also in the general setting was made in~\cite{F16_bounded} by the first author, but fell short of providing an algorithm even when $H$ is a path, except if $G$ is a forest.
The approach via a variant of the Hanani--Tutte theorem~\cite{FKMP15,Sch13b_towards}, which we take, also did not look very promising due to the counterexample found in~\cite[Figure 1(a)]{ReSk98_deleted}, which was discovered independently in~\cite[Section 6]{FKMP15}. 

The discussion in the rest of this section is rather informal and aims only at conveying the intuition underlying our technique.

%- - - - - - - - - - - - - - - - - - - - - - - - - - - - - - - - - - - - - - - - - - - - - 
\paragraph{The derivative of maps of graphs.}

\begin{figure}
\centering
\includegraphics[scale=0.8]{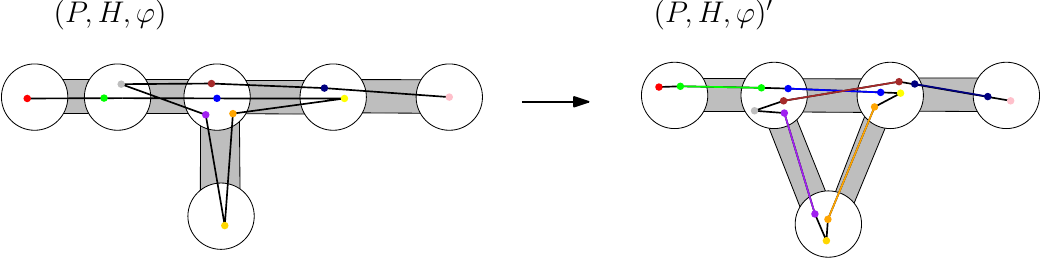}
 \caption{The construction of the derivative $(P,H,\varphi)^{'}$ of $(P,H,\varphi)$. 
}
\label{fig:exSimple0}
\end{figure}

\begin{figure}
\centering
\includegraphics[scale=0.8]{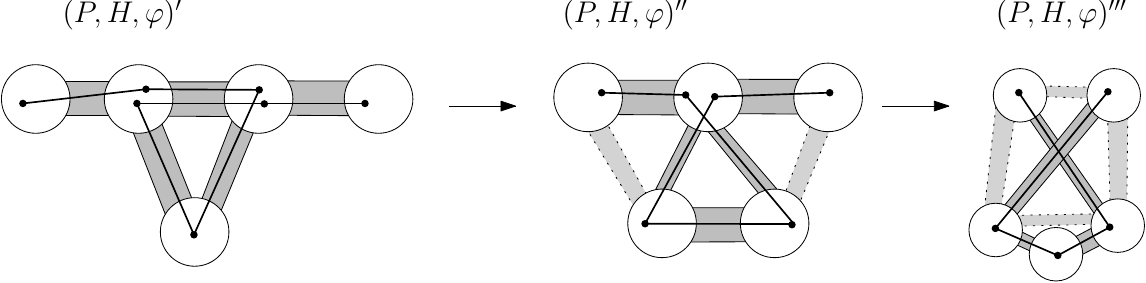}
\caption{Subsequent derivatives of $(P,H,\varphi)$. We contract every connected component of $(\varphi^{(i)})^{-1}[\nu]$, for $\nu\in V(H^{(i)})$, $i=1,2,3$, to a vertex without changing the approximability of the instance. The third derivative certifies that the original instance was not approximable by an embedding. The light shaded thickened edges of $H$ do not belong to the image of $\varphi^{(i)}$ in the corresponding instance.}
\label{fig:exSimple}
\end{figure}

\jk{"light shade" in Figure~\ref{fig:exSimple} neni prilis vyrazny oproti tomu darker.}

The main contribution of our work is an extension to arbitrary continuous piecewise linear maps of graphs in $\mathbb{R}^2$ of the approach of M.~Skopenkov~\cite{Sko03_approximability} for detecting whether a continuous map of a graph into a circle in $\mathbb{R}^2$, or
 an arbitrary continuous piecewise linear map of a cycle into $\mathbb{R}^2$ is approximable by an embedding. The technique therein is based on the notion of the \emphh{derivative} of a map of a graph introduced by Minc in~\cite{M94_derivative, Minc97_arcs}.
 % Minc defined the derivative only for graphs with maximum degree $2$, and 
% M.~Skopenkov extended his definition to the case of maps of graphs into a circle.
 The novelty of our technique lies in extending the techniques from this line of research and combining them with the independently developed techniques for variants of the Hanani--Tutte theorem~\cite{F14+_towards,FKMP15,PSS06_removing}. \jk{tahle veta mi tu pripada nejak nepatricne - najednou hovori obecne o vsech predchozich pristupech, kdyz se tu ma hovorit konkretne o derivaci (i kdyz je to historicka survey zatim). Mozna by se vic hodila pred tenhle odstavec o derivaci, na konec "obecnejsi" casti uvodu.}

To illustrate the notion of the derivative, we first discuss the case treated by Minc, when $G$ in an instance $(G,H,\varphi)$ is a path $P$.
Intuitively, the derivative of $(P,H,\varphi)$ smooths out the instance while simultaneously zooming into the structure of the map $\varphi$.
For example, if $H$ is a path of length $k$ then $k$ successive applications of the derivative result in an instance with a single cluster,
and if $H$ is a cycle, an application of the derivative might not change $H$, but shortens ``spurs''.

Refer to Figures~\ref{fig:exSimple0} and~\ref{fig:exSimple}.
It is easy to see that if $P$ is a path we can assume without loss of generality that $\varphi^{-1}[\nu]$,
for $\nu\in V(H)$, is always an independent set; in other words, $\varphi(e)$ is an edge of $H$ for every $e\in E(G)$.
The derivative in this case is defined as a new instance $(P',H',\varphi')=(P,H,\varphi)'$,
where $H'$ is a subgraph of the line graph\footnote{The line graph $L(H)$ of the graph $H$ is the graph, whose vertices are the edges of $H$, and a pair of vertices in $L(H)$ is joined by an edge
if the corresponding pair of edges in $H$ share a vertex.}
 of $H$ that is not necessarily planar.
However, $H'$ is planar
if $(P,H,\varphi)$ is approximable by an embedding.
Thus, $H'$ is still equipped with a drawing, but this drawing does not have to be an embedding.
The connected components of $\varphi^{-1}[\rho]$, for $\rho\in E(H)$, are mapped by $\varphi'$ to the vertex of $H'$ corresponding to $\rho$.
Such connected components are then joined by edges both of whose end vertices correspond to the same original vertex of $P$, and hence, the resulting graph $P'$ is again a path. 

The drawing of $H'$ is defined as a restriction of a drawing of the line graph $L(H)$ of $H$ naturally inherited from the embedding of $H$ as follows. We draw a vertex corresponding 
to an edge of $H$ as a point in its interior. The edges of $L(H)$ corresponding to a vertex $\nu$\jk{to uz chce znacnou davku predstavivosti... na prvni pohled to vypada jako chyba, protoze naopak vrcholy linegrafu koresponduji hranam grafu.} 
of $H$ form a clique $K_\nu$. The cliques $K_{\nu}$ are drawn so that combinatorially the drawing is equivalent to
a straight-line drawing in which the vertices are drawn in convex position ordered according to the rotation of $\nu$. Every pair of such cliques is drawn in the outer face of each other.
For us, the most interesting property of $(P,H,\varphi)'$ is that $(P,H,\varphi)'$ is approximable
by an embedding if and only if $(P,H,\varphi)$ is, and that the length of $P'$ after contracting all its subpaths mapped by $\varphi'$ to a single vertex of $H'$ is smaller than the length of $P$.

A priori, it is hard to see if the algorithmic problem of deciding whether $(G,H,\varphi)$ is approximable by an embedding is tractable even in the case when $G$ is a path. Here, the main problem are extremes of $\varphi$, which can be though of as ``tips of spurs'',\jk{Cortese et al. tomu rikaji "cusps"} such as yellow and gray vertices in Figure~\ref{fig:exSimple0}, since combinatorially there are two ways of approximating the drawing of the two edges incident to such an extreme by an embedding.
However, the algorithmic question of deciding whether $(P,H,\varphi)$ is approximable by an embedding can be reduced to constructing the derivative and checking whether the obtained drawing of $H'$ is an embedding. Since the latter can be easily carried out in a polynomial time this makes the problem tractable.

M. Skopenkov~\cite{Sko03_approximability} extended the ideas of Minc to the case of subcubic graphs mapped into a circle; that is, the case when $H$ is a cycle. In this case a $\mathbb{Z}_2$-approximation of the derivative is constructed given that the original instance was $\mathbb{Z}_2$-approximable. In Skopenkov's version of the derivative, the graph $H'$ is still a subgraph of the line graph of the original graph, and the $\mathbb{Z}_2$-approximability of the instance implies that $H'$ is planar; this last fact is also true in our definition of the derivative.

Independently discovered operations of expansion and contraction of a base from~\cite{CDPP09}, that were used also in~\cite{AAET16,ChEX15}, can be thought of as a local version of the derivative of Minc and M.~Skopenkov.\jk{tady tem bych asi chtel dat vice prostoru - trochu muj bias, protoze jako prvni jsem cetl Cortese et al.} 
We remark that the developments in~\cite{AAET16,ChEX15,CDPP09} were obtained independently from the 
 line of research pioneered by Minc, and the connection between these two research directions was not realized in the past.
 In fact, similarly as in~\cite{AFT18+_weak} we could use the operations of cluster and pipe expansions which seem to be more efficient from the algorithmic perspective, but in the context of $\mathbb{Z}_2$-approximation the derivative seems to be more natural.

%- - - - - - - - - - - - - - - - - - - - - - - - - - - - - - - - - - - - - - - - - - - - - 
\paragraph{Our extension of the derivative.}
In our extension 
of the Minc's and M.~Skopenkov's derivative\footnote{In comparison with the general definition of the derivative in~\cite{M94_derivative}, our definition is more-or-less the same when $G$ has maximum degree 2, and differs substantially for graphs with maximum degree more than 2.}, $H'$ is no longer a subgraph of the line graph of $H$, but $H'$ is rather a graph obtained by suppressing certain degree-$2$ vertices in a subgraph of a blow-up of the edge-vertex incidence graph of $H$.

One of the major obstacles in extending M.~Skopenkov's approach was the existence of vertices whose incident edges cannot be deformed locally near these vertices so that they all cross each other an even number of times.
Such a local deformation is always possible near vertices of degree at most $3$, thus in subcubic graphs this obstacle does not occur.
Even vertices help us, because the rotation at these vertices in a desired approximation is already decided. Indeed, Theorem~\ref{thm:main} claims the existence of an approximation that is compatible with the given $\mathbb{Z}_2$-approximation.

To overcome this problem we 
alter the given instance, thereby producing a $\mathbb{Z}_2$-approximable instance that is approximable by an embedding if and only if the former instance was, and that has a number of advantages. We say that such instance is in the \emph{subdivided normal form}. In particular, the vertex set of $G$ in such instance contains an independent set $V_s$ of central vertices, whose removal splits $G$ into a set $\mathcal{C}$ of connected components. These components can be understood as analogues of $\varphi$-components defined by Skopenkov~\cite{Sko03_approximability}. Each $C\in \mathcal{C}$ is mapped by $\varphi$ to an edge $\rho=\varphi(C)$ of $H$,
and the problematic parts in its $\mathbb{Z}_2$-approximation are relocated into connected components of $\mathcal{C}$\jk{neni jasne, jestli se ty komponenty nejak dynamicky meni v case; pokud jo, chtelo by to reflektovat v tom znaceni} so that they can be dealt with later.
By suppressing certain vertices of the instance in the subdivided normal form we obtain an instance in the \emph{normal form}.
 The other reason for introducing  the normal form is to impose on the instance conditions analogous to the properties of a \emphh{contractible base} of Cortese et al.~\cite{CDPP09}, or a \emphh{safe pipe}~\cite{AFT18+_weak}, which make the derivative reversible.

In the proof of Theorem~\ref{thm:main}, the obtained $\mathbb{Z}_2$-approximation of the instance in the normal form is repeatedly reduced by using our extended definition of the derivative and subsequently brought into the normal form again, where each time we also produce a 
$\mathbb{Z}_2$-approximation of the reduced instance.\jk{ten odstavec je prilis dlouhy... tady nekde by se asi hodilo ho rozpojit...} In fact, our definition of the derivative requires an instance to be in the normal form. In $(G',H',\varphi')$, the graph $G$ is not changed and we simply have $G'=G$. Thus, the graph $G$ is changed only when it is being brought into the normal form. Every connected component $C$ of $G[V\setminus V_s]$ in the subdivided normal form
is unchanged in the normal form, and is mapped by $\varphi'$ to the vertex $\varphi(C)^*\in V(H')$ corresponding to an edge $\varphi(C) \in E(H)$.
Every vertex $v_s\in V_s$ of degree at least $3$ is mapped to the vertex $v_s^*\in V(H')$ corresponding to a vertex $\nu\in V(H)$ such that
$\varphi(v_s)=\nu$.
The latter type of vertices of $H'$ are not in one-to-one correspondence with the vertices of $H$,
but rather in one-to-many correspondence, where each vertex in $V(H)$ corresponds to a set of vertices 
in $V(H')$.

%Thanks to the fact that in $(G',H',\varphi')$ connected components $C\subseteq G'$ of $G[V\setminus V_s]$ (from the corresponding subdivided normal form) are mapped by $\varphi'$ to vertices of $H'$, we are able to achieve in its $\mathbb{Z}_2$-approximation that incident edges of additional vertices cross each other an even number of times.\jk{tady u poslednich nekolika vet jsem mel potize udrzet pozornost a neco si pod tim vubec predstavit...}

In order to construct a $\mathbb{Z}_2$-approximation of the instance in the normal form we use a redrawing technique
of Pelsmajer et al.~\cite{PSS06_removing} inspired by modular decomposition of Hsu and McConnel~\cite{HMcC03_PCtrees}.
In particular, we will use redrawing techniques to render vertices in certain trees of $G$ even. Such a subtree can be then safely contracted into a vertex. Indeed, in Theorem~\ref{thm:main} we assume that the obtained approximation is compatible with the $\mathbb{Z}_2$-approximation, and thus, contracted trees consisting of even vertices can be recovered in the approximation of the reduced graph.
The other technical difficulty in this approach is the construction of the $\mathbb{Z}_2$-approximation of the derivative of an instance in the normal form. Here, we proceed in two steps. In the first step, we construct a $\mathbb{Z}_2$-approximation of a slight modification of the derivative by pretty much following the redrawing method of M.~Skopenkov. The Skopenkov's method can be seen as a surgery in which we first cut out and then reconnect pieces of the $\mathbb{Z}_2$-approximation
of the original instance induced by the edges of $H$.
The most delicate part of the argument is to define the drawing of the edges reconnecting the severed pieces, and to prove that in the obtained drawing we do not obtain a pair of non-adjacent edges crossing an odd number of times.
In the second step, we further alter the $\mathbb{Z}_2$-approximation thereby obtaining a $\mathbb{Z}_2$-approximation of the desired instance.

%- - - - - - - - - - - - - - - - - - - - - - - - - - - - - - - - - - - - - - - - - - - - - 
\paragraph{Organization and outline of the proof of Theorem~\ref{thm:main} and~Theorem~\ref{thm:main2}.}
We prove Theorem~\ref{thm:main} in Section~\ref{sec:reduction} and Theorem~\ref{thm:main2} in
Section~\ref{sec:alg}, where the proof of 
Theorem~\ref{thm:main2} is merely an ``algorithmic version'' of the proof of
Theorem~\ref{thm:main}. 
The basic tools needed in the proofs are presented in Section~\ref{sec:preliminaries}. 
The ultimate goal in the proof of Theorem~\ref{thm:main} is to reduce the instance $(G,H,\varphi)$ together with its $\mathbb{Z}_2$-approximation so that $\varphi$ is locally injective and $G$ does not contain paths as connected components. Such instances are easy to handle by using results from our work with I.~Malinovi\'c and D.~P\'alv\"olgyi~\cite[Section 6]{FKMP15}.

To this end we keep iteratively applying the derivative (as defined in Section~\ref{sec:normal-form} and further discussed in Section~\ref{sec:derivative_z2approx})  until we arrive at an instance in which $\varphi$ is locally injective. In order to show that we won't be applying the derivative indefinitely we define a potential function $p(G,H,\varphi)=(|E_p(G)| - |E(H)|)\ge 0$, where $E_p(G)$ is the set of all pipes edges in $G$, whose value decreases after an application of the derivative if $\varphi$ is not locally injective.

In the proof we start with pre-processing the instance thereby bringing it into the normal form defined in Section~\ref{sec:normal-form} and further discussed in Section~\ref{sec:normalform2}.
If $(G,H,\varphi)$ is in the normal form, and $\varphi$ is not locally injective, we apply the derivative, to simplify the instance.
In Section~\ref{sec:derivative_z2approx}, we prove that the obtained instance $(G',H',\varphi')$ is also $\mathbb{Z}_2$-approximable,
and approximable by an embedding if $(G,H,\varphi)$ is approximable
by an embedding.
Moreover, in Section~\ref{sec:integration} we show that $(G',H',\varphi')$ is approximable by an embedding only if $(G,H,\varphi)$ is approximable by an embedding, and hence, we can pass to the ``simpler'' instance $(G',H',\varphi')$.
This can be seen as a consequence of the planar case of Belyi's theorem~\cite{B83_self}.
The corresponding formal statements in Section~\ref{sec:derivative_z2approx} and~\ref{sec:integration} are actually more complicated due to Theorem~\ref{thm:main} claiming the existence of an approximation, which is an embedding, compatible with the given $\mathbb{Z}_2$-approximation.
Finally, we prove that after finitely many steps of ``normalizing'' and ``differentiating'' we eventually arrive at a locally injective instance, which is easily to deal with as discussed above.

\jk{tady k tomu a tomu dalsimu mam nevyresene komentare v historicke verzi "c. 0"...}

%===================================================================================
 
\section{Preliminaries}
\label{sec:preliminaries}

\subsection{Graph operations}
\label{sec:preliminariesGraph}

%\paragraph{Additional notation.}
Throughout the paper we use the standard graph-theoretical notions~\cite{D16_graph_theory} such as path, cycle, walk, tree, forest, vertex degree, induced subgraph and others.
By $G\setminus v$ and $G\setminus V_0$, where $v \in V$ and $V_0\subseteq V$, we denote the graph obtained from $G$ by removing $v$ or all the vertices in $V_0$, respectively, together with all the incident edges. Similarly by $G\setminus e$, $G\setminus E_0$ and $G\setminus G_0$,
where $e\in E,$ $E_0\subseteq E$, and $G_0\subseteq G$ is a subgraph of $G$, we denote the graph obtained from $G$ after removing $e$, edges in $E_0$, or all the edges contained in $G_0$, respectively.

Let $v$ be a vertex of degree at least $2$ in a graph $G_0$ with a rotation $(vv_0,\ldots, vv_{\degr(v)-1})$ defined by a drawing of $G_0$.

\begin{definition}
\label{def:Ydelta}
\emphh{The generalized \yDelta operation} applied to $v$ results in the (abstract) graph $G_1$ obtained from $G_0$ by removing the vertex $v$ (with all its incident edges) and introducing the cycle $v_0\ldots v_{\degr(v)-1}v_0$.
\end{definition}

The inverse operation of the generalized \yDelta operation is called a \deltaY operation. Let $C=v_0\ldots v_{k-1}v_0$ be a cycle in $G_0$.

\begin{definition}
\label{def:deltaY}
\emphh{The generalized \deltaY operation} applied to $C$ results in the graph $G_1$ obtained from $G_0$ by removing the edges of $C$ and introducing a new vertex $v$ of degree $k$ and the edges $vv_0,\ldots, vv_{k-1}$. 
\end{definition}

\begin{remark}
On the one hand, performing the generalized \yDelta operation can lead to a creation of multiple edges, since we add an edge $v_iv_{i+1 \mod \deg(v)}$ even if such an edge is already present in the graph.
On the other hand, when performing the generalized \deltaY operation, if an edge of $C$ belongs to a collection of multiple edge, we remove only a single edge from the collection, the one belonging to $C$.
Usually, an application of one of the two operations will be followed by a construction of a drawing of $G_1$. Regarding 
the generalized {\deltaY} operation, in the drawing/embedding of $G_1$ the rotation of $v$ is naturally inherited from the cycle and equals $(vv_0,\ldots, vv_{k-1})$.
\end{remark}

A \emphh{subdivided edge} in a graph $G$ is a path of length at least $2$ whose middle vertices has degree $2$ in $G$.
\emphh{Suppressing} a vertex $v$ of degree $2$ in a graph $G$ is an operation that removes $v$ and its incident edges  from $G$, and joins its two neighbors by an edge in the resulting graph.

\subsection{Working with independently even drawings}
\label{sec:tools}

We present notions and facts that we use when working with independently even drawings.

\paragraph{Crossing number.} By $\nCross{e_1}{e_2}{\psi_0}$ we denote the number of crossings between edges $e_1$ and $e_2$ in a drawing $\psi_0$.
Formally, $\nCross{e_1}{e_2}{\psi_0}=|\psi_0(e_1)\cap \psi_0(e_2)|-|e_1\cap e_2|$ due to our general position assumption on $\psi_0$.

\paragraph{Inside and outside of a contractible closed curve.}
Let $C$ be a closed piece-wise linear curve, whose self-intersections are proper crossings, contained in the interior of a closed disc.
Let us two-color the regions in the complement of $C$ so that
two regions sharing a non-trivial part of the boundary receive distinct colors. The fact that this is possible is well known.
A point not lying on $C$ is \emphh{outside} $C$ if it is contained in the region with the same color as the region containing the boundary of the disc, otherwise it is \emphh{inside} $C$.

\begin{figure}
\centering
\subfloat[]{\label{fig:flip}
\includegraphics[scale=0.7]{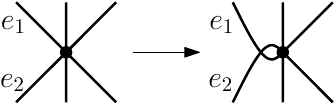}} 
\hspace{30pt}
\subfloat[]{\label{fig:pull}
\includegraphics[scale=0.7]{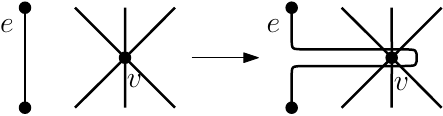} }
\caption{(a) The operation of flip applied to the edges $e_1$ and $e_2$; (b) The operation of pulling the edge $e$ over vertex $v$.}
\end{figure} 

\paragraph{Concatenation of curves.}
We often use the standard operation of concatenation of a pair of curves which is defined as follows. (Recall that we work with non-parametrized curves.)
Let $C_1$ and $C_2$ be a pair of curves sharing an end vertex.
The \emphh{concatenation} of $C_1$ and $C_2$ is a curved obtained as the union of $C_1$ and $C_2$.

\paragraph{Flips and finger moves.}
Refer to Figure~\ref{fig:flip}.
A \emphh{flip} in a drawing of a graph is a modification of the drawing performed in a close neighborhood of a vertex $v$ that switches the order of two consecutive end pieces of edges in the rotation of a vertex while introducing  a single crossing between $e_1$ and $e_2$, but
not affecting the number of crossings between any other pair of edges. The flip of a pair $e_1$ and $e_2$ causes $e_1$ and $e_2$ to cross an odd number of times if $e_1$ and $e_2$ cross an even number of times before the flip and vice-versa. The parity of the number of crossings of no other pair of edges is affected by the flip. 
A flip of $e_1$ and $e_2$ is always possible to perform in a sufficiently small neighborhood $v$.

Refer to Figure~\ref{fig:pull}.
The operation of \emphh{pulling an edge $e$ over a vertex $v$},
also known as \emphh{finger move}, in a drawing is a (generic) continuous deformation of $e$ of the drawing in which $e$ passes over $v$ exactly once and does not pass over any other vertex in the drawing.
Pulling an edge $e$ over a vertex $v$ in a drawing results in a change of the parity
of the number of crossings between $e$ and every edge incident to $v$.
Thus, if we are interested only in the parity of the number of crossings between pairs of edges, the operation of pulling an edge $e$ over a vertex incident to $e$ can be simulated by flips.\jk{ok; ale je potreba se o tom zminovat? pulling operace ma smysl prave jen kdyz ten vrchol neni incidentni s $e$...?}

\begin{figure}
\centering
\subfloat[]{\label{fig:contraction}
\includegraphics[scale=0.7]{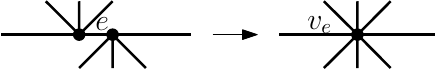}} 
\hspace{30pt}
\subfloat[]{\label{fig:split}
\includegraphics[scale=0.7]{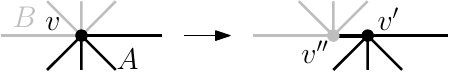} }
\caption{(a) Contraction of the edge $e$; (b) The vertex $A$-split applied to $v$.}
\end{figure} 

\paragraph{Contraction and split.}
Refer to Figure~\ref{fig:contraction}.
We  use the following operation from~\cite{PSS06_removing}.
A \emphh{contraction} of an edge $e=uv$, where $u\not=v$, in a drawing of a graph $G$ is a modification of the drawing  resulting in a drawing of $G/e$ carried out as follows. We turn $e$ into a vertex $v_e$ by moving $v$ along $e$ towards $u$ while dragging all the other edges incident to $v$ along $e$ as illustrated in~\cite[Figure 1]{PSS06_removing}.
Note that by contracting an edge in an (independently) even drawing, we obtain again an (independently) even drawing. We will use this observation tacitly throughout the paper.
By the contraction we can introduce multiple edges or loops incident to the vertex that $e$ is contracted into.
There is a natural correspondence between the edges in the graph before and after the application of the contraction, where the identical edges \emphh{correspond} to each other and an edge $uw$ and $vw$, respectively, \emphh{corresponds} to the edge $v_ew$.

Refer to Figure~\ref{fig:split}.
We  also use the following operation which can be thought of as the inverse operation of the edge contraction
in a drawing of a graph.
Let $v\in V(G)$ and let $A$ be a subset of neighbors of $v$ or the subset of edges in $G$ incident to $v$. Let $B$ denote the complement of $A$ in the set of neighbors of $v$ or in the set of edges incident to $v$, respectively.
A \emphh{vertex $A$-split} in $G$ results in the graph $G_0$ obtained from $G$ by first removing $v$ with all its incident edges, and second inserting a pair of vertices $v'$ and $v''$, the edge $v'v''$, all the edges $v'u$, where $u\in A$ or $vu\in A$, respectively, and all the edges $v''u$, where $u\in B$ or $vu\in B$, respectively.
There is a natural correspondence between the edges in the graph before and after the application of the vertex $A$-split, where the identical edges \emphh{correspond} to each other and an edge  $uv'$ and $uv''$, respectively, \emphh{corresponds} to an edge $uv$.
A \emphh{vertex $A$-split} in a drawing of  $G$ is a local modification of the drawing resulting in a drawing of $G_0$, in which $v'$ and $v''$ are drawn in a small neighborhood of $v$ and are joined by a short crossing-free edge such that the rotations at $v'$ and $v''$ are inherited from the
rotation at $v$, and the new edges are drawn in the small neighborhood of the edges they correspond to in $G$. Here, we assume that the edges between $v$
and $A$ appear consecutively in the rotation at $v$.
We assume that the modification is carried out so that the parity of the number of crossings between 
a pair of edges in $G$ in the drawing we started with is the same as between the corresponding pair of edges in the modified drawing of $G_0$.
When $A$ is clear from the context we will refer to
 the vertex $A$-split as to a vertex split.

The previously defined correspondence relation is a transitive relation; and we extend it to its transitive closure, which will be useful in case of successive applications of contractions and splits. We also naturally extend the correspondence relation to the sets of edges and subgraphs induced by a set of edges. \\

A self-crossing of an edge created by an application of one of the above operations is eliminated by a standard argument, see e.g.~\cite[Figure 2]{PSS06_removing}. The same argument is tacitly also applied in the sequel when such self-crossings are introduced.

\paragraph{Uncorrectable vertices.}
Many technical difficulties we face in the proof of our result stem from the presence of vertices of $G$ in a $\mathbb{Z}_2$-approximation of $(G,H,\varphi)$ that cannot be made even by flips.
It is easy to see that such a vertex cannot be of degree less than $4$. On the other hand,
it is not hard to see that if in a drawing every 4-tuple of edges incident to a vertex does not cause a problem then the vertex can be made even by flips\footnote{We are not aware of anybody explicitly mentioning this fact in the literature.}.
\jk{mozna to plyne z dukazu v Unified Hanani Tutte, pripad 3-souvislych grafu (kde se nasla hvezdovita obstrukce z vrcholu $v$ a vrcholu s indexy $0,i,k,j$). Ale explicitne to tam recene asi neni. Ten dukaz je vpodstate stejny jako ten nize.}

\begin{claim}
If the edges in every 4-tuple of edges incident to a vertex $v$ in a drawing of a graph can be made cross one another an even number of times by flips then we can make $v$ even by flips.
\label{claim:4}
\end{claim}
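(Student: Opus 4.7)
}
The approach is by induction on $d=\deg(v)$, combined with a small amount of $\mathbb{Z}_2$-linear algebra on the parity patterns at $v$. Regard the parities $P=(p_{ij})_{i<j}\in \mathbb{Z}_2^{\binom{d}{2}}$ of the pairs of edges at $v$ as the main object. A single flip of two consecutive edges $e_i,e_j$ in the rotation swaps them and toggles the single coordinate $p_{ij}$. The set $\mathcal{C}_v$ of patterns reachable to the all-zero pattern by some sequence of flips is an affine subspace of $\mathbb{Z}_2^{\binom{d}{2}}$, described by the vanishing of a family of $\mathbb{Z}_2$-linear ``obstructions''. Correctability of a $4$-tuple $T=\{e_a,e_b,e_c,e_d\}$ is exactly the statement that the restriction $P|_T \in \mathbb{Z}_2^{\binom{4}{2}}$ lies in the analogous subspace $\mathcal{C}_T$ for the degree-$4$ vertex obtained by forgetting the other edges; note that flips involving an edge outside $T$ do not alter either the parities within $T$ or the induced cyclic order of $T$, so $\mathcal{C}_T$ can be computed ignoring the remaining edges.

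The base cases $d\le 3$ are immediate: any pair of incident edges is adjacent in the cyclic rotation, so each odd-parity pair can be directly flipped, and the hypothesis is vacuous since no $4$-tuples exist. The case $d=4$ is literally the hypothesis applied to the unique $4$-tuple. For the inductive step assume $d\ge 5$ and that the statement holds for vertices of smaller degree. Applied to the sub-vertex consisting of the first $d-1$ edges $e_1,\dots,e_{d-1}$ (all of whose $4$-tuples are $4$-tuples of the full vertex and hence correctable), the induction hypothesis yields a flip sequence $\sigma_1$ after which $e_1,\dots,e_{d-1}$ are pairwise even; flips that slide past $e_d$ can be used to bring the relevant pairs to adjacency, and since these auxiliary flips modify only parities of the form $p_{id}$, they do not threaten the induction. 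After $\sigma_1$ the remaining task is to drive the ``star'' vector $(p_{1d},\dots,p_{(d-1)d})$ to zero by further flips that move $e_d$ around the rotation, without disturbing the already-even parities among $e_1,\dots,e_{d-1}$.

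The key step is to show that the current star vector lies in the affine subspace of star patterns reachable by $e_d$-moving flips, and this is exactly where the $4$-tuple hypothesis enters. For each triple $\{e_i,e_j,e_k\}\subseteq\{e_1,\dots,e_{d-1}\}$, the $4$-tuple $T=\{e_d,e_i,e_j,e_k\}$ is correctable by hypothesis; combined with the already-established evenness $p_{ij}=p_{ik}=p_{jk}=0$, the obstructions for $\mathcal{C}_T$ (a codimension-$2$ subspace of $\mathbb{Z}_2^{6}$ computed once and for all in the $d=4$ analysis) translate into linear identities among $p_{id},p_{jd},p_{kd}$ alone. Collecting these identities over all triples shows that the star vector satisfies precisely the constraints that characterize correctability of the star by $e_d$-moving flips, so a flip sequence $\sigma_2$ zeroing it out exists. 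The composition $\sigma_2\circ\sigma_1$ then witnesses the evenness of $v$.

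The main technical obstacle is the bookkeeping in this last step, namely the claim that the obstruction space of $\mathcal{C}_v$ is $\mathbb{Z}_2$-spanned by obstructions each supported on only four edges at $v$. Equivalently, one must verify that every $\mathbb{Z}_2$-linear invariant of the flip action that vanishes on the all-zero pattern can be written as a sum of analogous invariants of $4$-tuples, and that restricting to the star of $e_d$ preserves enough of this structure. This can be done by a direct computation: enumerate generating loops in the flip action (the ``star loops'' at each edge, which realize the moves $e_d\mapsto$ ``once around the rotation''), show that they generate the full return-to-rotation subspace, and match the resulting obstructions with those coming from individual $4$-tuples. Once this span statement is in place, the rest of the argument is mechanical.
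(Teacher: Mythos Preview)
Your proposal outlines a plausible framework, but it is not a proof as written: the decisive step --- that the obstruction space of $\mathcal{C}_v$ is spanned by obstructions supported on $4$-tuples --- is merely asserted (``This can be done by a direct computation'') rather than carried out. The sketch you give (``enumerate generating loops in the flip action \ldots and match the resulting obstructions with those coming from individual $4$-tuples'') does not constitute an argument, and this is precisely the heart of the claim. There is also a subtlety you pass over: flips act on pairs (rotation, parity vector), not on parity vectors alone, so your assertion that $\mathcal{C}_v$ is an affine subspace of $\mathbb{Z}_2^{\binom{d}{2}}$ cut out by linear functionals requires justification; it is not immediate that projecting the flip-orbit of the all-zero state to the parity coordinates yields a coset of a linear subspace.

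The paper's proof is far more elementary and sidesteps this structural analysis entirely. It proceeds greedily: fix one edge $e$ and use flips to make every other edge cross $e$ evenly (not caring what happens to other pairs); then repeatedly flip any adjacent pair still crossing oddly. This terminates because each such flip strictly decreases the number of odd pairs. If it terminates without making $v$ even, one reads off an explicit $4$-tuple $\{e, e', vu, vw\}$, where $vu, vw$ is the closest remaining odd pair in the rotation starting from $e$ and $e'$ lies between them; in this $4$-tuple only $vu, vw$ cross oddly. A short $\mathbb{Z}_2$-invariant computation (the mod-$2$ degree sequence of the ``odd-crossing'' graph on the four edges, read in rotation order, is $(1,0,1,0)$ and can only change by the all-ones vector under flips) then certifies that this particular $4$-tuple cannot be corrected, contradicting the hypothesis. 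So where your approach aims to characterise the entire correctable set and induct, the paper simply produces a witnessing bad $4$-tuple by a direct procedure.
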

\begin{proof}
Consider a vertex $v$ that is not even in the drawing. Let $e$ be an edge incident to $v$.
By performing flips at $v$ we easily achieve that every other edge incident to $v$ crosses $e$ evenly. Here, we do not care if we change the parity of crossings between other pairs of edges incident to $v$.
Then we keep flipping pairs of edges incident to $v$ crossing each other oddly and contiguous (neighboring) in the rotation at $v$. Note that this procedure must terminate, since at every step we decrease the number of pairs of edges crossing an odd number of times. Suppose that the procedure does not render $v$ even.
Let $vu$ and $vw$ be the closest pair in the rotation at $v$ (w.r.t. the clockwise order starting with $e$) crossing each other an odd number of times. Since $vu$ and $vw$ are not consecutive in the rotation at $v$
(otherwise we could flip them) there exists an edge $e'$ between $vu$ and $vw$ (in the above mentioned order) crossing both $vu$ and $vw$ an even number of times.
Then $vu, vw,e$ and $e'$ form a desired 4-tuple of edges incident to $v$ that cannot be made even by flips. This can be seen by a simple case analysis, but we provide a nicer proof below.

We consider the graph $G_v=(\{vu,vw,e,e'\},E(G_v))$, in which two vertices $y$ and $z$ (that are also edges of $G$) are joined by an edge if and only if  $y$ and $z$ cross an odd number of times in the drawing of $G$.
Let ${\bf z}=(z_1,z_2,z_3,z_4)\in \mathbb{Z}_2^4$ denote the degree sequence of $G_v$, in which $z_i=\degr(v_i) \mod 2$, where $(v_1=e,\ldots,v_4)$ is the restriction of the rotation at $v$  in the drawing of $G$ to $\{vu,vw,e,e'\}$. 
Observe that ${\bf z}=(1,0,1,0)$ and that an application of a flip in this particular case either leaves ${\bf z}$ unchanged or adds all 1s vector to ${\bf z}$. The same holds if ${\bf z}=(0,1,0,1)$.
However, ${\bf z}=(0,0,0,0)$ if every pair among $\{vu,vw,e,e'\}$ cross an even number of times, which concludes the proof.
\end{proof}

%In the proof of Theorem~\ref{thm:main} we proceed by first pre-processing the instance $(G,H,\varphi)$ together with its $\mathbb{Z}_2$-approximation so that clusters $V_{\nu}$ do not induce cycles. To this end we will repeatedly apply the presently defined operation of cycle reduction.
\jk{nejaka subsection nazvana cycle reduction? nejaka uvodni veta ze se chystame definovat *a* cycle reduction operation? referenci na Revisited, kde jsme tu operaci uz delali? (byla specialnejsi nebo stejna jako tady?) Obrazek?}

\begin{figure}
\centering
\includegraphics[scale=0.7]{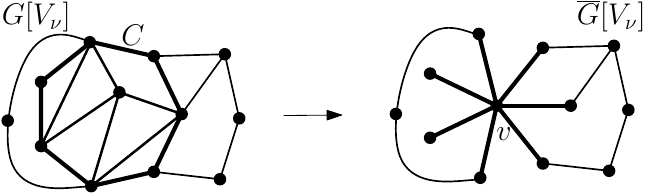}
\caption{Deleting the interior of $C$ in $G[V_{\nu}]$ and applying the generalized \deltaY-operation to $C$.}
\label{fig:deltaY}
\end{figure} 

\paragraph{Clone.}
Given a $\mathbb{Z}_2$-approximation $\psi_0$ of $(G,H,\varphi)$
in Section~\ref{sec:normalform2} we show that there exists an instance $(\hat{G},\hat{H},\hat{\varphi})$ in a so-called normal form that is more convenient to work with admitting a $\mathbb{Z}_2$-approximation $\hat{\psi_0}$ such that $(\hat{G},\hat{H},\hat{\varphi})$ is approximable by an embedding if and only if $(G,H,\varphi)$ is approximable.
Since Theorem~\ref{thm:main} requires that the obtained 
embedding is compatible with the given $\mathbb{Z}_2$-approximation, we additionally require that 
if $(\hat{G},\hat{H},\hat{\varphi})$ is approximable by an embedding compatible with $\hat{\psi_0}$ then $(G,H,\varphi)$ is approximable by an embedding compatible with $\psi_0$.
Thus, we can safely pass to the modified instance. This motives the following definition.

We say that $(\hat{G},\hat{H},\hat{\varphi},\hat{\psi}_0)$  is
a \emphh{clone} of $(G,H,\varphi,\psi_0)$ if the following holds.
If $(\hat{G},\hat{H},\hat{\varphi})$ is approximable by an embedding compatible with $\hat{\psi_0}$ then $(G,H,\varphi)$ is approximable by an embedding compatible with $\psi_0$; and
if $(G,H,\varphi)$ is approximable by an embedding then $(\hat{G},\hat{H},\hat{\varphi})$ is approximable by an embedding.
The 4-tuple $(\hat{G},\hat{H},\hat{\varphi},\hat{\psi}_0)$ is \emphh{approximable by an embedding} if $(\hat{G},\hat{H},\hat{\varphi})$ is approximable by an embedding. 
\begin{remark}
Note that being a clone is a transitive relation.
However, the relation is not symmetric, and thus, it is not an equivalence relation. 
\end{remark}

In the following we need the following claim that was observed by Pelsmajer et al.~\cite[proof of Theorem 3.1]{PSS06_removing}.

\begin{claim}
\label{claim:cycle_correction}
Let $C$ be a cycle of $G$ and let $v\in V(C)$. For any drawing of $G$, there exists a finite sequence of flips at $v$ that makes every pair of edges incident to $v$  involving an edge of $C$ cross an even number of times.
\end{claim}

\begin{proof}
Let $e$ and $f$ denote the edges of $C$ incident to $v$.
By performing appropriate flips we make $e$ and $f$ cross an even number of times.
Let $k$ denote the number of pairs of edges incident to $v$ that involve $e$ or $f$ and 
that cross an odd number of times.
If $k=0$ we are done. 
Suppose for the sake of contraction that $k>0$ is smallest possible after application of any finite sequence of flips at $v$ in the given drawing.

Hence, there exists an edge $g$ incident to $v$ such that $g\not=e,f$ and $g$ crosses, let's say $e$, an odd number of times.
We perform appropriate flips, all involving $g$, and not involving $f$, after which $g$ and $e$ cross an even number of times. It is easy to see that this is indeed possible.
Since $f$ was not involved in any of the flips  we did not change the property of $e$ and $f$ crossing
an even number of times, and  we did not change the parity of the number of crossings between $f$ and any other edge. Since $g$ was involved in all the flips we did not change the parity of the crossings between $e$ and any other edge than $g$. Hence, we obtained a drawing contradicting the choice of $k$ which concludes the proof. 
\end{proof}

\paragraph{Cycle reduction.}
Refer to Figure~\ref{fig:deltaY}.
In order to bring a given instance into a normal form we use the operation of cycle reduction, whose repeated application turns 
the subgraphs of $G$ induced by clusters into forests.
Given $(G,H,\varphi)$, suppose that there exists a cycle $C\subseteq G[V_{\nu}]$.
Let $\psi_0$ be a $\mathbb{Z}_2$-approximation of $(G,H,\varphi)$ which is an independently even drawing.
Roughly, the cycle reduction removes the subgraph of $G$ inside of $C$ and replaces $C$ with a star by
using the generalized \deltaY operation.
Here, we use the notion of being ``inside'' and ``outside'' as defined in the beginning of this subsection.
The \emphh{cycle reduction} of $C$ in $G$ is the operation that returns $(\overline{G},\overline{H},\overline{\varphi})$ together with its $\mathbb{Z}_2$-approximation $\overline{\psi_0}$ obtained from $(G,H,\varphi)$ and $\psi_0$ as follows.

We put $\overline{H}:=H$.
If there exists an edge of $C$ that does not cross every other edge in $G$ an even number of times in $\psi_0$, we modify $\psi_0$ locally at the vertices of $C$ by performing appropriate flips so that this is not the case by using Claim~\ref{claim:cycle_correction}. The set of vertices of the cycle $C$ forms a (possibly trivial) cut in $G$, i.e.,  either side of the cut might be empty, splitting the vertex set of $G\setminus V(C)$ into the set of vertices $V_{in}=V_{in}(C)$  inside of $C$ and $V_{out}=V_{out}(C)$ outside of $C$, where $V_{in}\subset V_{\nu}$. 
Let the set of \emphh{inner diagonals} $D_{in}=D_{in}(C)$ and \emphh{outer diagonals} $D_{out}=D_{out}(C)$ be the set of edges of $G[V(C)]$ both of whose end pieces at $C$ start inside and outside of $C$, respectively. Note that the edge set of $G[V(C)]$ is partitioned into $E(C),D_{in}$ and $D_{out}$.
The graph $\overline{G}$ is obtained, first, by removing the vertices in $V_{in}$ (and their incident edges) from $G$ and the edges in $D_{in}$, and second, by performing the generalized \deltaY operation on $C$; recall Definition~\ref{def:deltaY}.
The map $\overline{\varphi}$ is naturally inherited from $\varphi$. 
By following the approach used in the proof of~\cite[Theorem 2]{FKMP15}, the new vertex $v$ introduced by the generalized \deltaY operation is drawn in $\overline{\psi_0}$ very close to an arbitrary vertex of $C$ inside of $C$, and its adjacent edges closely follow the edges of $C$ in $\psi_0$ (now deleted)  while staying on the same side. We draw the new edges so that the rotation at $v$ in the resulting drawing corresponds to the order in which the neighbors of $v$ appear on $C$, i.e.,
every crossing between a pair of edges incident to $v$ corresponds to a crossing between a pair of edges on $C$. Clearly, new edges can be drawn so that the resulting drawing satisfies all the general position assumptions stated in the introduction.
We show that the resulting drawing is independently even.

\begin{claim}
\label{claim:evenness}
Suppose that ${\psi_0}$ is independently even.
Then the drawing $\overline{\psi_0}$ is independently even and the vertex $v$ is even in $\overline{\psi_0}$ and the rotation at $v$ in $\overline{\psi_0}$ is the same (up to the orientation) as the order of the neighbors of $v$ on $C$.
\end{claim}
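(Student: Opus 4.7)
The plan is to verify four properties: (i) every non-adjacent pair of edges inherited from $G$ still crosses evenly in $\overline{\psi_0}$, (ii) every non-adjacent pair consisting of one inherited edge and one new edge $vv_i$ crosses evenly, (iii) every pair of new edges incident to $v$ crosses evenly, and (iv) the rotation at $v$ is as claimed. Properties (i)--(iii) together say exactly that $\overline{\psi_0}$ is independently even and that $v$ is even.

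First I would record the effect of the preliminary flips. Each flip toggles parity only for pairs of edges adjacent at the flipped vertex, so non-adjacent parities are unchanged; in particular independent evenness of $\psi_0$ is preserved, and the cited result~\cite[proof of Theorem 3.1]{PSS06_removing} guarantees that an appropriate sequence of flips at vertices of $C$ suffices to make every edge of $C$ cross every other edge of $G$ an even number of times. Property (i) then follows at once, since inherited edges and their mutual crossings are untouched by the rest of the construction of $\overline{\psi_0}$.

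For (ii) and (iii) the crucial observation is that each new edge $vv_i$ is drawn as a parallel copy of an arc $P_i$ of $C$ from (near) $v_0$ to $v_i$ on the inner side of $C$, within an arbitrarily thin neighborhood of $P_i$. Hence, for any edge $e\in E(\overline{G})\setminus\{vv_0,\ldots,vv_{k-1}\}$ not incident to $v_i$, one has $\nCross{e}{vv_i}{\overline{\psi_0}}\equiv \sum_{f\in E(P_i)}\nCross{e}{f}{\psi_0}\pmod 2$, and every summand on the right is even by the preceding paragraph; this proves (ii). The construction is explicitly set up so that every crossing between two new edges $vv_i$ and $vv_j$ corresponds to a crossing between an edge of $P_i$ and an edge of $P_j$, whence $\nCross{vv_i}{vv_j}{\overline{\psi_0}}\equiv \sum_{f\in E(P_i),\, f'\in E(P_j)}\nCross{f}{f'}{\psi_0}\pmod 2$; each summand is even because after the preliminary flips even the edges of $C$ cross each other evenly, proving (iii). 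Property (iv) is built into the routing: the end-pieces of $vv_0,\ldots,vv_{k-1}$ at $v$ are prescribed to cyclically follow the appearance of $v_0,\ldots,v_{k-1}$ along $C$, which is exactly the claimed rotation up to the choice of orientation determined by which side of $v_0$ the vertex $v$ is placed on.

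The main technical point I expect to have to spell out is the thin-neighborhood bookkeeping behind the congruences in (ii) and (iii): namely, that a sufficiently close parallel of $P_i$ inside $C$ is, generically, in bijection with $P_i$ for crossings with any external edge, and that the only crossings among the new edges are the forced ones reflecting self-crossings of the (now deleted) cycle $C$. This is standard parallel-curve reasoning and follows once the even-edge property of $C$ has been established, so no new ideas are required beyond the careful setup above.
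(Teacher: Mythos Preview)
Your proof is correct and follows essentially the same approach as the paper, which dispatches the claim in three terse sentences by noting that the edges of $C$ cross every other edge evenly (giving $v$ even), that edges not adjacent to $v$ are drawn exactly as in $\psi_0$ (giving independent evenness among inherited edges), and that the rotation is as stated by construction. Your decomposition into cases (i)--(iv) is the same argument spelled out in more detail; in particular you make explicit the case of a new edge $vv_i$ versus a non-adjacent inherited edge, which the paper's proof leaves implicit.
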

\begin{proof}
Since the edges of $C$ cross every other edge of $G$ evenly, the vertex $v$ is even in $\overline{\psi_0}$.
Then the drawing $\overline{\psi_0}$ is independently even,
since the edges of $\overline{G}$ not adjacent to $v$ are drawn exactly as in $\psi_0$.
The last part of the claim follows by the construction of $\overline{\psi}_0$.
\end{proof}

By using the previous claim we can establish the following useful tool.

\begin{claim}
\label{claim:cycle_red}
Let $(\overline{G},\overline{H},\overline{\varphi},\overline{\psi}_0)$ be obtained by a cycle reduction from $(G,H,\varphi,\psi_0)$ as described above.
$(\overline{G},\overline{H},\overline{\varphi},\overline{\psi}_0)$ is a clone of  $(G,H,\varphi,\psi_0)$.
Moreover, if $G$ is connected we have $|E(G)|-|V(G)|+1>|E(\overline{G})|-|V(\overline{G})|+1\ge 0$.
\end{claim}
\begin{proof}
Note that the drawings $\psi_0$ and $\overline{\psi_0}$ are the same on the restriction to the subgraph $(G\setminus V_{in})\setminus (E(C)\cup D_{in})$. Thus, an embedding of $\overline{G}\setminus v$, where $v$ was introduced by the generalized \deltaY operation, which is compatible with the  restriction of $\psi_0$ to $(G\setminus V_{in})\setminus (E(C)\cup D_{in})$ is also compatible with $\overline{\psi_0}$ and vice-versa.
We show the two required conditions of a clone.

{\bf First,} we show that if $(\overline{G},\overline{H},\overline{\varphi})$ is approximable by an embedding compatible with $\overline{\psi_0}$ then $(G,H,\varphi)$ is approximable by an embedding compatible with $\psi_0$.
Given the approximation $\overline{\psi}$ (compatible with $\overline{\psi_0}$) of $(\overline{G},\overline{H},\overline{\varphi})$ we obtain an approximation $\psi$ compatible with $\psi_0$ of 
$(G,H,\varphi)$ as follows. First, we perform the generalized \yDelta operation on the vertex $v$ introduced by the generalized \deltaY operation. Let $C$ be the cycle obtained by this operation. The resulting graph
is $G_{out}=(G\setminus V_{in})\setminus D_{in}$. We obtain an approximation $\psi_{out}$ of $(G_{out},H|_{\varphi(G_{out})},\varphi|_{G_{out}})$, where $\varphi|_{G_{out}}$ is the restriction of $\varphi$ to $G_{out}$, by the following modification of $\overline{\psi}$. Let $v_0,v_1,\ldots, v_{k-1}$ denote the vertices of $C$ listed in the order of appearance on $C$, i.e., $v_iv_{i+1 \mod k}\in E(C)$ for $i=0,\ldots,k-1$. 
We first draw the edges $v_iv_{i+1 \mod k}$ by closely following the path $v_ivv_{i+1 \mod k}$ in $\overline{\psi}$, and second, discard the edges $v_iv$.
By Claim~\ref{claim:evenness}, up to the choice of orientation, the rotation at $v$ is $(vv_0,vv_1,\ldots, vv_{k-1})$, and thus, the modification can be performed without introducing any edge crossings. Note that the embedding of $C$ in $\psi_{out}$ bounds a topological disc. We assume that $\psi_{out}$ is contained in a surface with a single boundary component formed by $\psi_{out}(C)$. This is clearly possible even if $G_{out}$ has more than one connected component.

Suppose for a while that there exists an embedding $\psi_{in}$ of $G_{in}=(G\setminus V_{out})\setminus D_{out}$ in the plane such that $\psi_{in}(C)$ bounds a face in $\psi_{in}$ and such that $\psi_{in}$ is compatible with the restriction of $\psi_0$ to $G_{in}$. Without loss of generality, we assume that $\psi_{in}$ is contained in a discs in which $\psi_{in}(C)$ forms the boundary. This is clearly possible even if $G_{in}$ has more than one connected component.
Second, we merge both embeddings by identifying surfaces containing them on their boundaries by a homeomorphism mapping bijectively $\psi_{out}(v_iv_{i+1 \mod k})$, for $i=0,\ldots, k-1$, to  $\psi_{in}(v_iv_{i+1 \mod k})$.
By the discussion in the first paragraph of the proof, the obtained embedding is compatible with $\psi_0$. Indeed, the compatibility cannot be violated by the vertices of $C$, since an even vertex, let's say $v_i$, in $\psi_0$ has in both $\psi_{in}$ and $\psi_{out}$
the same rotation as in the respective restriction of $\psi_0$.

It remains to show that the restriction of $\psi_0$ to $G_{in}$ 
implies that there exists $\psi_{in}$ 
such that $\psi_{in}(C)$ bounds a  face and so that $\psi_{in}$ is compatible with the restriction of $\psi_0$ to $G_{in}$.
To this end, in the drawing $\psi_0$ of $G_{in}$, we apply the vertex $A_v$-split to every $v\in V(C)$, where $A_v=\{u,w| \ vw,vu\in E(C)\}$.
Applying the unified Hanani--Tutte theorem to the resulting
independently even drawing, and contracting the edges introduced previously by vertex $A_v$-splits in the obtained embedding yields an embedding of $G_{in}$ with the required property.
 
{\bf Second,}  given an approximation $\psi$ of $({G},{H},\varphi)$ we obtain an approximation $\overline{\psi}$ of 
$(\overline{G},\overline{H},\overline{\varphi})$ analogously by removing $V_{in}$ and $D_{in}$, using the generalized \deltaY operation on $C$, and embedding $v$ together with its adjacent edges in the face of $\psi$ bounded by $C$.
% The compatibility of $\overline{\psi}$ with $\overline{\psi}_0$ follows by the following   analogous to the one in the previous paragraph. Since the rotation at $v$ in the obtained embedding corresponds to the order of  appearance 
% of the vertices on $C$, the compatibility  cannot be violated by $v$.
% It also cannot be violated by a vertex $v_i$ of $C$. Indeed,  $v_i$ is an even
% vertex in $\psi_0$ if and only if $v_i$ is an even vertex in $\overline{\psi}_0$ by the construction of $\overline{\psi}_0$. The rotation at $v_i$ in $\psi_0$ is obtained from the rotation at $v_i$ in $\overline{\psi}_0$ by replacing the edges $v_{i-1 \mod k}v_i$
% and $v_iv_{i+1 \mod k}$ with $vv_i$. Similarly, 
%  the rotation at $v_i$ in $\psi$ is obtained from the rotation at $v_i$ in $\overline{\psi}$ by replacing the edges $v_{i-1 \mod k}v_i$
% and $v_iv_{i+1 \mod k}$ with $vv_i$. Therefore the rotations at $v_i$ in $\overline{\psi_0}$ and $\overline{\psi}$ are the same, since 
%  the rotations at $v_i$ in ${\psi_0}$ and ${\psi}$ are the same.

Finally, we prove  the ``moreover'' part of the claim. The number of edges incident to the vertices of $V_{in}$ belonging to the same connected component as $C$ in $G$ is at least $|V_{in}|$. Thus, $|E(\overline{G})|\le |E(G)|-|V_{in}|$,  and
$|V(G)|+1-|V_{in}|=|V(\overline{G})|$. 
By summing the equation with the inequality, we obtain $|E(\overline{G})|+|V(G)|+1-|V_{in}|\le |E(G)|-|V_{in}|+|V(\overline{G})|$ and the claim follows.
\end{proof}

\paragraph{Local obstructions.}
Characterizing $\mathbb{Z}_2$-approximable instances $(G,H,\varphi)$ by a Kuratowski-style characterization seems to be unfeasible. Nevertheless, there exists a pair of local obstructions to $\mathbb{Z}_2$-approximability corresponding to $K_{3,3}$ and $K_5$, which was observed already by Minc~\cite{Minc97_arcs};
see Figure~\ref{fig:config}.

A \emphh{YY-configuration} in $(G,H,\varphi)$ is a pair of $3$-stars $S_1\subseteq G$ and $S_2\subseteq G$ intersecting possibly only in leaves, such that $\varphi(S_1)$ is a $3$-star and $\varphi(S_1)=\varphi(S_2)$.

An \emphh{X-configuration} in $(G,H,\varphi)$ is a pair of $2$-stars $S_1\subseteq G$ and $S_2\subseteq G$, all of whose edges are pipe edges, and such that there are vertices $\nu,\mu_1,\ldots, \mu_4$ of $H$ satisfying 
\begin{itemize}
\item $\varphi(S_1)=(\{\nu,\mu_1,\mu_{2}\}, \{\nu \mu_1,\nu \mu_{2}\})$,
\item $\varphi(S_2)=(\{\nu,\mu_3,\mu_{4}\}, \{\nu \mu_{3},\nu \mu_{4}\})$, and 
\item in $H$, the edges $\nu\mu_1,\nu\mu_2$ alternate with $\nu\mu_3,\nu\mu_4$ in the rotation at $\nu$. 
\end{itemize}
%\jk{nemusi byt subdivided edges, ne, protoze prostredni vrchol nemusi byt stupne 2 v $G$, ne?}\rf{je to myslene ako podgraf} 

\begin{claim}
\label{claim:YY}
A YY-configuration cannot occur in $(G,H,\varphi)$ if $(G,H,\varphi)$ is $\mathbb{Z}_2$-approximable.
\end{claim}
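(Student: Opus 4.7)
The plan is to show that a YY-configuration together with a $\mathbb{Z}_2$-approximation $\psi_0$ would yield a drawing of $K_{3,3}$ in the plane in which every pair of non-adjacent edges crosses evenly, contradicting the Hanani--Tutte theorem. Fix notation: let $v_1,v_2\in V_\nu$ be the centers and $u_i\in V_{\mu_i}$, $w_i\in V_{\mu_i}$ ($i=1,2,3$) the leaves of $S_1,S_2$ respectively; possibly $u_i=w_i$.

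First I would record the key localization observation. For $j\neq k$ the pipe edges $v_1u_j$ and $v_2w_k$ are non-adjacent in $G$, since $v_1\neq v_2$ and $u_j,w_k$ lie in the disjoint clusters $V_{\mu_j},V_{\mu_k}$; moreover in $\psi_0$ their drawings traverse disjoint pipes and disjoint leaf discs, so all of their crossings occur inside $\mathcal{D}(\nu)$, and the $\mathbb{Z}_2$-approximation property forces these crossings to be even. Next I would pinch each pair of arcs entering the same pipe to a common point: for each $i$ pick a point $a_i$ on the valve $\mathcal{A}(\nu,\mu_i)$ lying between the two entry points of the edges of $S_1\cup S_2$ using pipe $\mathcal{P}(\nu\mu_i)$, and prolong both arcs along the valve so that they terminate at $a_i$. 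Because the prolongations live on $\partial\mathcal{D}(\nu)$ and use distinct valves, they introduce no new crossings, so the restriction of $\psi_0$ to $S_1\cup S_2$ inside $\overline{\mathcal{D}(\nu)}$ becomes a drawing in a disc of a $K_{2,3}$ on parts $\{v_1,v_2\}$ and $\{a_1,a_2,a_3\}\subset\partial\mathcal{D}(\nu)$, with all six non-adjacent pairs of edges still crossing evenly.

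Finally I would close up by attaching an auxiliary disc $D'$ to $\mathcal{D}(\nu)$ along a short boundary arc containing $a_1,a_2,a_3$, placing a new vertex $v_\infty$ in the interior of $D'$, and joining $v_\infty$ to each $a_i$ by three pairwise disjoint arcs inside $D'$. This produces, inside the topological disc $\overline{\mathcal{D}(\nu)}\cup D'$, a drawing of $K_{3,3}$ with bipartition $\{v_1,v_2,v_\infty\}\sqcup\{a_1,a_2,a_3\}$. Every non-adjacent pair of its edges crosses evenly: the six pairs $(v_1a_j,v_2a_k)$ with $j\neq k$ inherit even parity from the previous step, and any pair involving a $v_\infty$-edge pairs an arc drawn entirely inside $D'$ with an arc drawn entirely inside $\overline{\mathcal{D}(\nu)}$, which meet only at their common endpoint on $\partial D'\cap\partial\mathcal{D}(\nu)$ and hence do not properly cross. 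Embedding $\overline{\mathcal{D}(\nu)}\cup D'$ in $\mathbb{R}^2$ and applying the Hanani--Tutte theorem forces $K_{3,3}$ to be planar---the desired contradiction. The one step I expect to require care is the valve-prolongation, which must be done so as to avoid creating spurious crossings; the degenerate case $u_i=w_i$ (two arcs in the same pipe sharing a leaf) causes no trouble, since such arcs are adjacent in $G$ and therefore contribute no parity constraint to the argument.
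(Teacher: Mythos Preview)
Your argument is correct and is essentially the same idea the paper uses. The paper does not prove Claim~\ref{claim:YY} directly but derives it (together with Claim~\ref{claim:cross}) from the more general Claim~\ref{claim:Gi}, whose proof carries out exactly your construction: restrict $\psi_0$ to $\mathcal{D}(\nu)$, contract each valve to a single point, add boundary structure, and invoke Hanani--Tutte. Where you attach an auxiliary disc $D'$ with a new vertex $v_\infty$ joined to $a_1,a_2,a_3$, the paper instead draws the cycle $C_\nu$ along $\partial\mathcal{D}(\nu)$; these are \yDelta\ duals of one another, and either produces a non-planar graph (your $K_{3,3}$ versus the paper's $K_{2,3}$ plus an outer triangle forced to bound the outer face) once the independently-even condition is in place.

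The one genuine difference is scope: the paper proves the planarity-with-prescribed-outer-face of the whole auxiliary graph $G_\nu$, which simultaneously handles the X-configuration and feeds into the definition of the derivative later on, whereas your argument is tailored to the YY case. A minor advantage of your direct route is that it does not pass through contracting connected components of $G[V_\nu]$, so it transparently covers the case where $v_1$ and $v_2$ lie in the same component; deducing that case from the statement of Claim~\ref{claim:Gi} alone requires a small extra step. Your caveat about the valve-prolongation is well placed but harmless: the two prolonged arcs within a single valve are adjacent (they share $a_i$), so any crossings there are irrelevant for Hanani--Tutte, and prolongations on distinct valves are disjoint.
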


\begin{claim}
\label{claim:cross}
An X-configuration cannot occur in $(G,H,\varphi)$ if $(G,H,\varphi)$ is $\mathbb{Z}_2$-approximable.
\end{claim}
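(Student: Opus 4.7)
The plan is to derive a parity contradiction localised inside the disc $\mathcal{D}(\nu)$. Suppose for contradiction that $\psi_0$ is a $\mathbb{Z}_2$-approximation of $(G,H,\varphi)$ containing an X-configuration with 2-stars $S_1,S_2$ and witnesses $\nu,\mu_1,\ldots,\mu_4$. Let $v_1,v_2$ denote the centres of $S_1,S_2$. Because both edges of $S_1$ are pipe edges whose images are incident to $\nu$, and they share $v_1$, we must have $\varphi(v_1)=\nu$; analogously $\varphi(v_2)=\nu$, so by~(\ref{it:1st}) both $v_1,v_2$ lie in $\mathrm{int}(\mathcal{D}(\nu))$. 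Since $\mu_1,\ldots,\mu_4$ are pairwise distinct, no leaf of $S_1$ can coincide with a leaf of $S_2$; together with the tacit assumption $v_1\neq v_2$ (if $v_1=v_2$, all four edges are pairwise adjacent and independent evenness imposes no constraint), this makes $S_1$ and $S_2$ vertex-disjoint, so every edge of $S_1$ is non-adjacent to every edge of $S_2$.

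Next I would study the drawing inside $\mathcal{D}(\nu)$. By~(\ref{it:2nd}), each pipe edge of $S_i$ intersects $\partial\mathcal{D}(\nu)$ in exactly one point, and since $\psi_0$ lives in $\mathrm{int}(\mathcal{H})$ this point lies on the valve of the pipe carrying the edge. Concatenating the two edges of $S_1$ at $v_1$ and restricting to $\mathcal{D}(\nu)$ yields a piecewise linear arc $c_1\subseteq\mathcal{D}(\nu)$ whose endpoints lie on $\mathcal{A}(\nu,\mu_1)$ and $\mathcal{A}(\nu,\mu_2)$; similarly $S_2$ yields an arc $c_2\subseteq\mathcal{D}(\nu)$ joining $\mathcal{A}(\nu,\mu_3)$ and $\mathcal{A}(\nu,\mu_4)$. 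By the alternation assumption on the rotation of $\nu$ in $H$, the four endpoints appear on $\partial\mathcal{D}(\nu)$ in the cyclic order $\mathcal{A}(\nu,\mu_1),\mathcal{A}(\nu,\mu_3),\mathcal{A}(\nu,\mu_2),\mathcal{A}(\nu,\mu_4)$, so the endpoints of $c_1$ alternate with those of $c_2$.

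A standard $\mathbb{Z}_2$-intersection argument in a disc (deform $c_1$ rel.\ boundary to a straight chord, which changes $|c_1\cap c_2|\bmod 2$ by~$0$, and apply the Jordan curve theorem to the chord and $c_2$) then forces $c_1$ and $c_2$ to meet transversally in an odd number of interior points. It remains to observe that there are no crossings between $S_1$ and $S_2$ outside $\mathcal{D}(\nu)$: outside this disc every edge of $S_1$ lies in $\mathcal{P}(\nu\mu_i)\cup\mathcal{D}(\mu_i)$ for some $i\in\{1,2\}$, whereas every edge of $S_2$ lies in the analogous region for $i\in\{3,4\}$, and since $\mu_1,\ldots,\mu_4$ are distinct these two regions are disjoint. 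Therefore the total number of crossings between edges of $S_1$ and edges of $S_2$ equals $|c_1\cap c_2|$, which is odd.

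On the other hand this total is the sum of the four pairwise crossing numbers $\nCross{e}{e'}{\psi_0}$ with $e\in E(S_1)$, $e'\in E(S_2)$, each of which is even since $\psi_0$ is independently even and the pair is non-adjacent; the sum of four even integers is even, a contradiction. The only slightly delicate step is the ``alternating endpoints force odd crossings in a disc'' lemma, which I would either cite from basic surface topology or prove in one line by the chord-replacement reduction sketched above; the rest is bookkeeping about which regions of $\mathcal{H}$ are occupied by the edges of $S_1$ and $S_2$.
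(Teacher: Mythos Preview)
Your proof is correct and complete. The parity argument inside $\mathcal{D}(\nu)$ is exactly the right local obstruction, and your bookkeeping about where the edges of $S_1$ and $S_2$ live outside $\mathcal{D}(\nu)$ is accurate given~(\ref{it:1st}) and~(\ref{it:2nd}). The parenthetical treatment of the degenerate case $v_1=v_2$ is also appropriate: the claim is only meaningful (and only used later) when the two centres are distinct.

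The paper takes a different, less direct route. Rather than proving Claim~\ref{claim:cross} (and Claim~\ref{claim:YY}) in isolation, it derives both from the more general Claim~\ref{claim:Gi}: the auxiliary graph $G_\nu$ is planar, and if $\deg(\nu)\ge 3$ it embeds with the cycle $C_\nu$ bounding the outer face. That claim is proved by contracting each connected component of $G[V_\nu]$ and each valve to a point inside $\mathcal{D}(\nu)$, obtaining an independently even drawing of $G_\nu$, and then invoking the unified Hanani--Tutte theorem. An X-configuration would force two internal vertices $C_1^*,C_2^*$ of $G_\nu$ to be joined by paths to alternating vertices on $C_\nu$, i.e.\ two crossing chords inside $C_\nu$, contradicting the outer-face embedding.

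Your argument is effectively the minimal hand-computation underlying that application of Hanani--Tutte, restricted to the single pair of chords that matters. The advantage of your approach is that it is entirely elementary and self-contained, requiring no appeal to Hanani--Tutte. The advantage of the paper's approach is that Claim~\ref{claim:Gi} is a tool reused elsewhere (notably in constructing the embedding of $H'$ for the derivative), so proving it once and reading off both local obstructions as corollaries is economical in the overall architecture.
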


The two previous claims follow from a more general statement that we formulate next.

Refer to Figure~\ref{fig:derivative05} (left).
For an instance $(G,H,\varphi)$, we define a graph that captures the connectivity between the connected components of $G$ induced by a cluster $V_{\nu}$, for $\nu\in V(H)$, and the rest of the graph $G$.
Let $G_{\nu}$ be the graph whose vertex set  is
in the bijection with the disjoint union of the set of the edges of $H$ containing
$\nu$ and the set of connected components of $G[V_\nu]$. This bijection is denoted by superscript~$*$.
\[
V(G_\nu)=\{(\nu\mu)^*| \ \nu\mu\in E(H) \} \cup \{ C^*| \ C \mathrm{\ is \ a \ connected \ component \ of } \ G[V_{\nu}] \}.
\]
The edge set of $G_\nu$ is defined as follows. A pair of vertices $(\nu\mu_1)^*$ and $(\nu\mu_2)^*$ is joined by an edge in $G_{\nu}$ if and only if $\nu\mu_1$ and $\nu\mu_2$ are consecutive in the rotation at $\nu$ (in $H$).
If $\nu$ is incident to at least three edges, let $C_{\nu}$ be the cycle in $G_{\nu}$ induced by $\{(\nu\mu)^*| \ \nu\mu\in E(H) \}$.
Furthermore, we have an edge $C^*(\nu\mu)^*$ in $G_\nu$, if and only if $\mu\in \pNeigh{C}$.
See Figure~\ref{fig:derivativeDef} (left) for an illustration.

\begin{figure}
\centering
\includegraphics[scale=0.8]{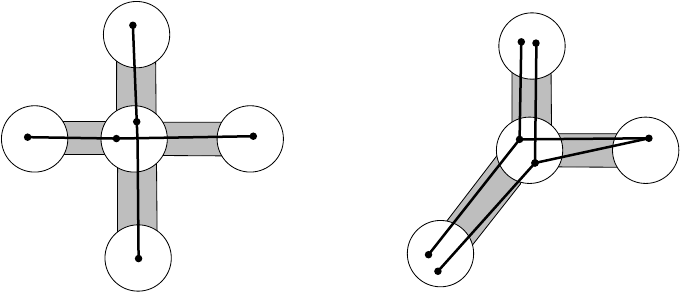}
\caption{Analogs of Kuratowski obstructions to graph planarity for approximating maps of graphs by embeddings, X-configuration (on the left) and YY-configuration (on the right).}
\label{fig:config}
\end{figure}

\begin{claim}
\label{claim:Gi}
If $(G,H,\varphi)$ is $\mathbb{Z}_2$-approximable 
then $G_{\nu}$ is planar;
and if additionaly, the vertex $\nu$ in $H$ has degree at least $3$ then $G_{\nu}$ admits an embedding in the plane in which the cycle $C_{\nu}$ bounds the outer face.
\end{claim}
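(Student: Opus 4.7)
The plan is to extract from $\psi_0$ an independently even drawing $\tilde\psi$ of $G_\nu$ inside the closed disc $\mathcal{D}(\nu)$ and then invoke the Hanani--Tutte theorem. Place each $(\nu\mu)^*$ at a point just inside $\mathcal{D}(\nu)$ near the midpoint of the valve of the pipe $\nu\mu$, and each $C^*$ at an arbitrary vertex $v_C\in V(C)$ inside $\mathcal{D}(\nu)$; fix a spanning tree $T_C$ of every component $C$ of $G[V_\nu]$, and for every $\mu\in\pNeigh{C}$ choose a single pipe edge $e_{C,\mu}=v_{C,\mu}w_{C,\mu}$ of $G$ with $v_{C,\mu}\in V(C)$ and $\varphi(e_{C,\mu})=\nu\mu$. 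Each edge $C^*(\nu\mu)^*$ of $G_\nu$ is realized in $\tilde\psi$ by concatenating closely-traced copies of (i)~the unique $T_C$-path from $v_C$ to $v_{C,\mu}$, (ii)~the portion of $e_{C,\mu}$ lying inside $\mathcal{D}(\nu)$, and (iii)~a short arc from the valve to $(\nu\mu)^*$; any self-crossings introduced by the concatenation are eliminated by the standard rerouting described in Section~\ref{sec:tools}. When $\nu$ has degree at least $3$ in $H$, the cycle edges of $C_\nu$ are drawn as short arcs along $\partial\mathcal{D}(\nu)$ that avoid every valve other than those of their own endpoints.

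I then verify that $\tilde\psi$ is independently even. Two cycle edges on $\partial\mathcal{D}(\nu)$ cross nothing, and a cycle edge is non-adjacent to $C^*(\nu\mu)^*$ only when $(\nu\mu)^*$ is not one of its endpoints, in which case the two curves lie in separated parts of the disc by the choice of valves. The remaining case is a pair of non-adjacent edges $\alpha=C^*(\nu\mu)^*$ and $\beta=D^*(\nu\mu')^*$, where non-adjacency forces $C\ne D$ and $\mu\ne\mu'$. Each ``piece'' used to construct $\alpha$ is an edge of $G$ with vertex set in $V(C)\cup V_\mu$, while each piece forming $\beta$ has vertex set in $V(D)\cup V_{\mu'}$; since $C\ne D$ and $\mu\ne\mu'$, every such pair of pieces is non-adjacent in $G$, and hence crosses evenly in the independently even drawing $\psi_0$. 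Summing these even contributions shows that $\nCross{\alpha}{\beta}{\tilde\psi}$ is even.

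Thus $\tilde\psi$ is an independently even drawing of $G_\nu$ in the plane, and the Hanani--Tutte theorem yields that $G_\nu$ is planar. For the second statement, observe further that in $\tilde\psi$ the cycle $C_\nu$ is a simple closed curve whose exterior contains no vertex or edge of $G_\nu$, and the vertices of $C_\nu$ appear along $\partial\mathcal{D}(\nu)$ in the cyclic order prescribed by the rotation of $\nu$ in $H$. A variant of Hanani--Tutte for the closed disc with prescribed boundary cyclic order then yields a planar embedding of $G_\nu$ in which $C_\nu$ bounds the outer face. The main technical subtlety is the parity bookkeeping in the middle paragraph, which crucially uses the fact that the pieces of $G$ used to draw non-adjacent edges of $G_\nu$ are themselves pairwise non-adjacent in $G$; the rest follows from a direct application of Hanani--Tutte.
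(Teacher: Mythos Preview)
Your approach is essentially the paper's: build an independently even drawing of $G_\nu$ from the restriction of $\psi_0$ to $\mathcal{D}(\nu)$ and then apply Hanani--Tutte. The paper carries this out more compactly by contracting each valve (together with all severed pipe-edge endpoints on it) to the point $(\nu\mu)^*$ and contracting each component of $G[V_\nu]$ to $C^*$, which sidesteps your bookkeeping for spanning trees, chosen pipe edges, and short arcs; and for the outer-face statement it realises your ``disc variant'' of Hanani--Tutte explicitly via the vertex $A_v$-split at each vertex of $C_\nu$, the unified Hanani--Tutte theorem, and re-contraction, rather than invoking the variant as a black box.
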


\begin{proof}
The claim is trivial if $\nu$ has degree $2$, since
then $G_{\nu}$ is formed by a collection of paths
of length at most $2$, all joining the same pair of vertices,
and possibly some additional edges, each incident to a degree-1 vertex.
Hence, we assume that $\nu$ has degree at least $3$.
We construct an independently even drawing of $G_{\nu}$ in which the cycle $C_{\nu}$ is crossing-free and the rest of the drawing is inside $C_{\nu}$.
For every vertex $v\in V(C_{\nu})$, let $A_v=\{u,w\}$ denote the set of the two neighbors of $v$ in $V(C_{\nu})$ such that $uv,vw\in E(C_{\nu})$.
In $G_\nu$, we apply the vertex $A_v$-split to every vertex $\nu$ of $C_{\nu}$.
The set of vertices $v'$ adjacent to the vertices in $A_v$,
for all $v\in C_{\nu}$, form a cycle in the modified graph which we, by a slight abuse of notation, denote by $C_{\nu}$.
 Note that now all the vertices of $C_{\nu}$ are even.
Then by the unified Hanani--Tutte theorem we obtain an embedding of the modified graph that yields a required embedding of $G_{\nu}$ after contracting edges created previously by vertex $A_v$-splits.

The independently even drawing, to which we apply  the unified Hanani--Tutte  theorem, is obtained from the restriction of a $\mathbb{Z}_2$-approximation of $(G,H,\varphi)$ 
to $\mathcal{D}(\nu)$. In the restriction, we contract the valve of $\nu\mu$, for every $\nu\mu\in E(H)$, together with the severed end points of the parts of pipe edges in $\mathcal{D}(\nu)$ to a point. Such point represents the vertex $(\nu\mu)^*$ in the drawing. Then the edges of $C_{\nu}$ are drawn along the boundary of $\mathcal{D}(\nu)$.
Finally, we contract every connected component of $G[V_{\nu}]$ to a vertex and discard created loops and multiple edges.
\end{proof}

%===================================================================================

\section{Normal form and derivative}
\label{sec:normal-form}

We assume that $(G,H,\varphi)$ is $\mathbb{Z}_2$-approximable throughout this section.

\begin{figure}
\centering
\includegraphics[scale=0.8]{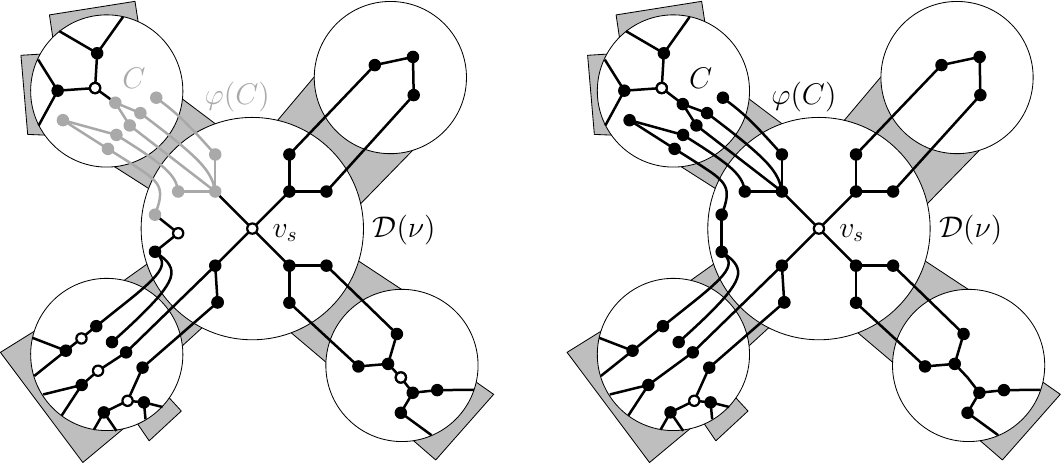}
\caption{A part of $(G,H,\varphi)$ in the subdivided normal form (left)  illustrated by its approximation in $\mathcal{H}$, and in the normal form (right). The empty vertices belong to the independent set $V_s\subset V(G)$. A connected component $C$ of $G[V\setminus V_s]$ is colored gray.}
\label{fig:normalform}
\end{figure}

\subsection{Normal form}

\paragraph{Intuitive explanation.} We define the normal form of an instance $(G,H,\varphi)$ to which we can apply the derivative. Recall that $\varphi:G \rightarrow H$. 
In order to keep the definition more compact we define the normal form via its topologically equivalent subdivided variant.
This variant also facilitates the definition of the derivative.
Roughly speaking, $(G,H,\varphi)$ is in the subdivided normal form if there exists an independent set $V_s\subset V(G)$ without degree-$1$ vertices such that every connected component $C$ of $G[V\setminus V_s]$ is mapped by $\varphi$ to an edge $\varphi(C)=\nu\mu$ of $H$ and both its parts mapped to $\nu$ and $\mu$ are forests. We call vertices in $V_s$ \emphh{central vertices}, which conveys an intuition that every vertex of $V_s$ constitutes in some sense a center of a connected component induced by a cluster.

The normal form is obtained from the subdivided normal form by suppressing in $V_s$ any vertices of degree~2, i.e., by replacing each such vertex $v_s$ and both its incident edges by a single edge, and performing the same replacement for $\varphi(v_s)$; see Figure~\ref{fig:normalform} for an illustration.

\begin{definition}
\label{def:normal}
The instance $(G,H,\varphi)$ is in the \emphh{subdivided normal form} if 
 there exists an independent set $V_s\subset V(G)$ with the following properties. \\
 (I) For every connected component $C$ of $G[V\setminus V_s]$:
 \begin{enumerate}[(1)]
 \item
 \label{it:first}
the image $\varphi(C)$ is an edge of $E(H)$; and
 \item
 \label{it:second}
 $\varphi^{-1}|_{\nu}[\varphi(C)]$ is a forest for both vertices $\nu\in \varphi(C)$.
 \end{enumerate}
(II) For every connected component $C$ of $G[V_\nu]$, for every $\nu\in V(H)$, we have $|V(C) \cap V_s|= 1$ if $\pdeg{C}\ge 2$, and $V(C) \cap V_s= \emptyset$ otherwise;
 for every $v_s\in V_s\cap V(C)$ we have that $deg(v_s)=\pdeg{C}$; and no two edges incident to $v_s$ join $v_s$ with connected components $C_1$ and $C_2$ of $G[V\setminus V_s]$ such that $\varphi(C_1)=\varphi(C_2)$.

The instance obtained from an instance $(G,H,\varphi)$ in the subdivided normal form by suppressing all degree-$2$ vertices in $V_s$ is in the \emphh{normal form}.
Such an instance in the normal form \emphh{corresponds} to the original instance $(G,H,\varphi)$ in the subdivided normal form, and vice-versa.
\end{definition}

\begin{remark}
Regarding the subdivided normal form,
the condition~(\ref{it:first}) implies that in an instance in the normal form there exists no connected component induced by $V_\nu$ 
 of pipe degree $0$ in $G[V\setminus V_s]$, for any $\nu\in V(H)$.
It follows from the definition that the degree of $v_s\in V_s$ is at least two and equals to the number of connected components
of $G[V\setminus V_s]$, that $v_s$ is adjacent to, each of which
is mapped by $\varphi$ to a different edge of $H$.
\end{remark}

\subsection{Derivative}
The rest of the section is inspired by the work of M.~Skopenkov~\cite{Sko03_approximability}
and also Minc~\cite{Minc97_arcs}.
In particular, the notion of the derivative in the context of map approximations was introduced by Minc and adapted to the setting of $\mathbb{Z}_2$-approximations by M.~Skopenkov for instances $(G,H,\varphi)$ where $G$ is subcubic and $H$ is a cycle.
We extend his definition to instances $(G,H,\varphi)$ in the normal form. 
(A somewhat simplified extension was already used in~\cite{F17_pipes}.)
Thus, by derivating $(G,H,\varphi)$ we, in fact, mean bringing the instance $(G,H,\varphi)$ into the normal form and then derivating the instance in the normal form.
Given that the instance is in the normal form or the subdivided normal form, the operation of the derivative outputs an instance $(G',H',\varphi')$, where $G'=G$ and $H'$ along with the manifold $M'$ containing $H'$ are defined next.

In order to keep the exposition more compact we formulate it first for the instances in the subdivided normal form. Thus, in the following we assume $(G,H,\varphi)$ to be in the subdivided normal form.

\subsubsection{Derivative for (subdivided) normal form.}

\paragraph{Construction of $H'$.} Refer to Figure~\ref{fig:derivative3}. 
Let $V_s$ be a set of central vertices in $G$.
Let $\mathcal{G}$ be a bipartite graph with the vertex set $V_s \cup \mathcal{C}$, where $\mathcal{C}=\{C| \ C$ is a connected component of $G[V\setminus V_s]\}$, in which $v_s\in V_s$ and $C \in \mathcal{C}$ are joined by an edge if and only if $v_s$ is joined by an edge with a vertex of $C$ in~$G$. 
Let the \emphh{star} of $v_s$, $\st{v_s}$,  with $v_s\in V_s$ be the subgraph
of $G$ induced by  $\{v_s\} \cup \bigcup_{C\in V(\mathcal{G}): \ v_sC\in E(\mathcal{G})}V(C)$.
The vertices of $H'$ are in the  bijection with the union of $V_s$ with the set of the edges of $H$. This bijection is denoted by superscript~$*$.
Let $H'$ be a  bipartite graph such that
$V(H')=\{\rho^*| \rho\in E(H)\} \cup \{v_s^*| \ v_s\in V_s \}$.
 We have $\rho^*v_s^*\in E(H')$ if and only if $\rho \in E(\varphi(\st{v_s}))$. We use the convention of denoting a vertex in $V(H')$ whose corresponding edge in $E(H)$ is $\rho=\nu\mu$ by both $\rho^*$ or~$(\nu\mu)^*$. \\

 Recall that $V_\nu = \varphi^{-1}[\nu]$, for $\nu\in V(H)$. Let $H_{\nu}'$ be the subgraph of $H'$ induced by $\{v_s^*|\ v_s\in V_{\nu} \cap V_s\} \cup \{(\nu\mu)^*|\nu\mu\in E(H)\}$.
 Note that $H_{\nu}'$ is a subgraph of $G_{\nu}$ (defined in the ``Local obstructions'' paragraph in Section~\ref{sec:tools}).
 Hence, by Claim~\ref{claim:Gi} every $H_{\nu}'$ is a planar graph since $(G,H,\varphi)$ is $\mathbb{Z}_2$-approximable, which we assume throughout this section.
%  \rf{The assumption is stated at the beginning of the section. Added the  first sentence in the previous paragraph. We stick to using $V_\nu$ instead of  $\varphi^{-1}[\nu]$ since $\varphi^{-1}[\nu]$ might also mean to include edges between vertices in $V_\nu$.}

Next, we construct the embedding of $H'$ by specifying its rotation system, and if $M$ is non-orientable also specifying the signs of its edges. The rotation system and the signs will combinatorially encode the cellular embedding of $H'$ on a surface $M'$, whose Euler genus is at most as large as the Euler genus of $M$. 
We read the rotation system off an embedding of $H'$ on $M$ constructed next.

 In the sequel we work with embeddings $G_{\nu}$'s given  by Claim~\ref{claim:Gi}.
In what follows we combine the  embeddings of $G_{\nu}$'s,  $\nu\in V(H)$, with the help of an auxiliary graph $H_{aux}$, so that in the rotation at $(\nu \mu)^*$,
for all $\nu\mu\in E(H)$, the edges of $G_{\nu}$ do not alternate with the edges of $G_{\mu}$.
 After deleting  edges in $G_{\nu}\setminus H_{\nu}'$, for every $\nu\in V(H)$, this will result in an embedding of $H'$ on $M$ (which is not necessarily cellular).

\begin{figure}
\centering
\includegraphics[scale=0.8]{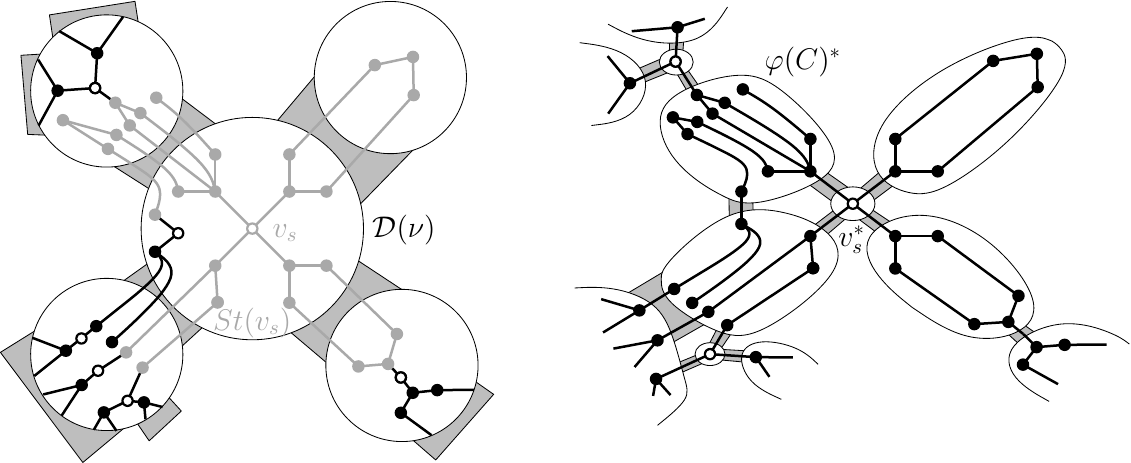}
\caption{A part of $(G,H,\varphi)$ in the subdivided normal form  illustrated by its approximation in $\mathcal{H}$ with $\st{v_s}$  colored gray (left).  Derivative of the same instance after being brought to the normal form  (right).}
\label{fig:derivative3}
\end{figure}

% The embedding of $H'$ is inherited from $H$ as indicated in~Figure~\ref{fig:derivative05} (right).
\paragraph{Construction of  the embedding of $H'$.} 
 Refer to Figure~\ref{fig:derivativeDef}.
  The auxiliary graph  $H_{aux}$ is obtained from $H$ as follows.
 First, we subdivide all the edges of $H$. Second, in the resulting graph $H_0$ we apply the generalized \yDelta operation to all the vertices of degree at least $3$.  Third, we suppress the  former vertices of degree $2$, that is, those vertices that were not introduced by subdivisions in the first step.
 
 We turn the embedding of $H$ on $M$ into an embedding on $M$ of  $H_{aux}$ by following the three steps of the construction of $H_{aux}$. To this end,  for every edge $\nu\mu\in E(H)$, we fix one of its vertices to be an \emphh{attractor} of $\nu\mu$. 
 
 In the first step, we perform the subdivision on every edge $\nu\mu=\rho\in E(H)$ with the attractor $\mu$ so that 
the subdividing vertex, let's denote it $\nu_\rho$, is embedded very close to the attractor  of $\rho$. In particular, the edge joining $\nu_\rho$ and the attractor vertex $\mu$ of $\rho$ does not pass through any cross-cap $M$ if $M$ is non-orientable.
In the second step, we embed cycles introduced by the generalized 
 \yDelta operation so that for every $\nu$ of degree at least 3
 and every pair of consecutive edges $\nu\mu_1$ and $\nu\mu_2$ in the rotation at $\nu$ in the embedding of $H$ the edge $\nu_{\nu\mu_1}\nu_{\nu\mu_2}$ of $H_{aux}$ closely follows 
 the embedding of the edges $\nu_{\nu\mu_1}\nu$ and $\nu\nu_{\nu\mu_2}$ in the embedding of $H_0$.
 In the third step, we do not change the image of the embedding.
 
  In order to construct  a desired embedding of $H'$, we  merge  the constructed embedding of $H_{aux}$ with the embeddings of $G_{\nu}$'s, $\nu\in V(H)$, obtained by Claim~\ref{claim:Gi},  one by one  as follows. If $\nu\in V(H)$ has degree at least 3, let $C_{\nu}^\Delta$ denote the cycle in $H_{aux}$ obtained from $\nu$ by the generalized \yDelta operation.
We cut out from $M$ the interior of the disc bounded by the embedding of
$C_\nu^\Delta$.
Let  $D_\nu$ denote a disc containing the embedding of $G_{\nu}$ obtained by Claim~\ref{claim:Gi}, where the embedding of $C_\nu$ forms the boundary of $D_{\nu}$.
We fill the created hole on $M$ with $D_\nu$ by identifying $(\nu\mu)^*\in C_{\nu}$ with the vertex $\nu_{\nu\mu}$ on $C_{\nu}^\Delta$ subdividing $\nu\mu$ in $H_0$; and homeomorphically extending this identification on the edges
embedded along the boundary of $M$ and $D_\nu$.
For $\nu\in V(H)$, let $\nu\mu_1,\ldots, \nu\mu_{\deg({\nu})}$ be the set of its incident edges in $H$.
We further deform (while preserving the isotopy class) the obtained embedding $G_\nu$,  for every $\nu\in V(H)$, so that its edges incident to  $(\nu\mu_i)^*=\rho_i^*$, for every $i=1,\ldots, \deg({\nu})$, closely follow the embedding of $\nu\mu_i$ towards $\nu$ from $\rho_i^*$ (which is embedded close $\nu_{\rho_i}$)  and divert from  $\nu\mu_i$ only very close to $\nu$ after passing through all the cross-caps along  $\nu_{\rho_i}\nu$. Besides parts of the edges incident to $\rho_i^*$  closely following $\nu_{\rho_i}\nu$ the rest of $G_\nu$ is embedded in a small neighborhood of $\nu$ in the embedding of $H_{aux}$.

 \begin{figure}
\centering
\includegraphics[scale=0.7]{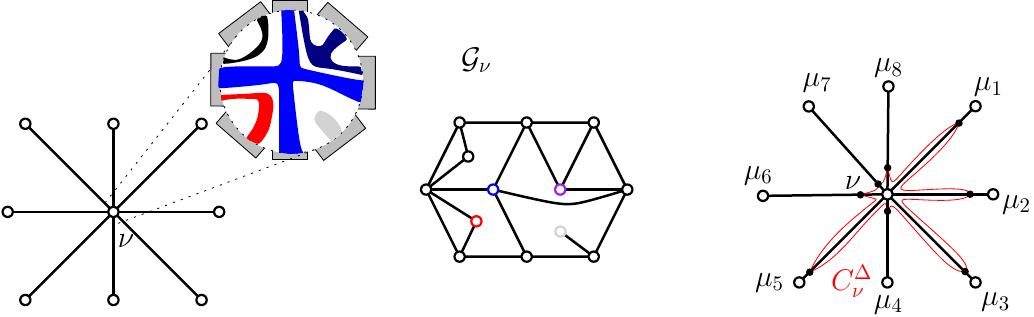}
\caption{The graph $G_{\nu}$ corresponding to a vertex $\nu$ of $H$ (left). Colors encode the correspondence between the connected components of $\varphi^{-1}[{\nu}]$ in $G$ and vertices of $G_{\nu}$. The embedding of $C_{\nu}^{\Delta}$ (right). The black vertices are subdiving the edges of $H$ in $H_0$. The vertex $\nu$ is the attractor of the edges $\nu\mu_4,\nu\mu_6,\nu\mu_7$ and $\nu\mu_8$.}
\label{fig:derivativeDef} 
\end{figure}

% \rf{Revised by including the former remark into the text.}
Analogously for every former degree-$2$ vertex $\nu\in V(H)$ with neighbors $\mu_1,\mu_2\in V(H)$, we  replace by $G_{\nu}$ the edge $\mu_1\mu_2$  obtained after suppressing $\nu$  as follows. 
If $\nu$ has degree 2, $G_\nu$ is a union of (subdivided) edges between $(\nu\mu_1)^*$ and $(\nu\mu_2)^*$ and some additional edges connecting either of $(\nu\mu_1)^*$ and 
$(\nu\mu_2)^*$ to a degree-1 vertex.
We identify $(\nu\mu_1)^*,(\nu\mu_2)^*\in V(G_{\nu})$ with $\mu_1,\mu_2\in V(H)$, respectively;
embed the rest of $G_{\nu}$ along $\mu_1\mu_2$ and discard the edge $\mu_1\mu_2$. 
Finally, we discard the edges not appearing in $H'$, which might be only the edges on a cycle $C_{\nu}$.

If $M$ is non-orientable we additionally need to assign signs to the edges of $H'$. Roughly, we can picture the above construction of the embedding of $H'$ on $M$ being performed so that a vertex $(\nu\mu)^*$ is drawn very close to 
where the attractor $\mu$ of $\nu\mu$ was drawn in the embedding of $H$. The edges of $H_\nu'$ incident to $(\nu\mu)^*$ closely follow  $\nu\mu$
towards $\nu$. Thus, all the edges of $H_\nu'$  incident to  $(\nu\mu)^*$ will pass through all the cross-caps that $\nu\mu$ passed through.

Formally, we enumerate all the edges in $E(H)$ with negative sign and process them successively in the corresponding order each time flipping the signs of certain edges in $E(H')$.
In the beginning, all the edges in $H'$ have the positive sign.
Processing an edge $\nu\mu\in E(H)$ with the negative sign amounts to 
flipping the signs of all the edges in $H_\nu'$ incident to $(\nu\mu)^*$,
where $\mu$ is the attractor of $\nu\mu$.
This finished the construction of the embedding of $H'$.

\begin{claim}
\label{claim:newSurface}
The rotation system of $H'$ and the signs of its edges in the constructed embedding of $H'$ on $M$,  define a cellular embedding of $H'$ on a surface of $M'$, whose Euler genus is at most as large as the Euler genus of $M$. 
\end{claim}

\begin{proof}
If $M$ is orientable this follows  immediately by the construction in which we did not add any handle to $M$.
If $M$ is non-orientable, we  did not add any cross-cap to $M$ so the Euler genus of $M'$ cannot be larger than the Euler genus of $M'$.

Finally, by the construction, the sign of an edge in $H'$ (given by the above described procedure) is positive if and only if the edge passes through the cross-caps an even number of times.
\end{proof}

\jk{trochu moc operaci najednou... hodil by se obrazek, spis teda film :)} 

% By Claim~\ref{claim:newSurface}, the rotation system of $H'$ and the signs on its edges are realized on the surface with the same or lower Euler genus than  the embedding of $H$.

% \rf{Changed the definition so that it is clear that  the derivative depends also on $V_s$.}

\begin{definition}
\label{def:derivativeAbstract}
Let $(G,H,\varphi)$ be $\mathbb{Z}_2$-approximable and in the subdivided normal form with a fixed independent set $V_s$ satisfying~(I) and~(II).
The \emphh{derivative} $(G,H,\varphi)'$ of $(G,H,\varphi)$  is the instance $(G',H',\varphi')$ such that $\varphi'(v_s)=v_s^*$, for $v_s\in V_s$, and 
$\varphi'(v)=\varphi(C)^*$, for every $v\in V(C)$, where $C$ is a connected component of $G[V\setminus V_s]$.
 (Hence, $\varphi(C)$ is an edge of $H$ by~(\ref{it:first}) in
the definition of the subdivided normal form.)

The \emphh{derivative} $(G,H,\varphi)'$ of $(G,H,\varphi)$, where $(G,H,\varphi)$ is $\mathbb{Z}_2$-approximable and in the normal form, is the instance obtained from the derivative of the corresponding instance in the subdivided normal form by suppressing every vertex $v_s$ of degree $2$ in $V_s$ and its image $\varphi(v_s)$ in $H'$, and eliminating multiple edges in $H'$.
\end{definition}

\begin{remark}
Since $(G',H',\varphi')$ is defined only for instances in the (subdivided) normal form, by derivating an instance, which is not in the normal form, we will mean an operation that, first, brings the instance into the normal form by a deterministic algorithm thereby constructing $V_s$,  and second, applies the derivative to the instance. Here, the construction of $V_s$ and $(G',H',\varphi')$ will depend besides $(G,H,\varphi)$ also on its $\mathbb{Z}_2$-approximation $\psi_0$.
We take the liberty of denoting by  $G',H'$ and $\varphi'$ an object that does not depend only on $G,H$ and $\varphi$, respectively, but on the whole instance $(G,H,\varphi)$ and $\psi_0$.
\end{remark}

A derivative of $(G',H',\varphi')$ is then denoted by $(G^{(2)},H^{(2)},\varphi^{(2)})$, and in general we put $(G^{(i)},H^{(i)},\varphi^{(i)})=(G^{(i-1)},H^{(i-1)},\varphi^{(i-1)})'$.

In Section~\ref{sec:derivative_z2approx}, we show that 
if $(G,H,\varphi)$ in the normal form is $\mathbb{Z}_2$-approximable then $(G',H',\varphi')$ is 
$\mathbb{Z}_2$-approximable as well.
More precisely, we prove the following claim.

\begin{claim}
\label{claim:derivative}
If the instance $(G,H,\varphi)$ in the normal form  with fixed $V_s$ is $\mathbb{Z}_2$-approximable by a drawing $\psi_0$ then the derivative $(G',H',\varphi')$, where $G'=G$, is
$\mathbb{Z}_2$-approximable by the drawing $\psi_0'$ such that $\psi_0$ is compatible with $\psi_0'$. Moreover, if $\psi_0$ is crossing free so is $\psi_0'$.
\end{claim}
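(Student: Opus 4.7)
}
The strategy is to construct $\psi_0'$ by essentially \emph{reinterpreting} the drawing $\psi_0$ inside a new thickening $\mathcal{H}'$ of $H'$, where $\mathcal{H}'$ is placed in a small neighbourhood of $\mathcal{H}$. Concretely, I plan to embed $\mathcal{H}'$ so that each cluster disc $\mathcal{D}(v_s^*)$ is a tiny disc around $\psi_0(v_s)$ inside $\mathcal{D}(\varphi(v_s))$, and each cluster disc $\mathcal{D}(\rho^*)$ is obtained from $\mathcal{P}(\rho)$ by extending it into $\mathcal{D}(\nu)$ and $\mathcal{D}(\mu)$ (where $\rho=\nu\mu$) so as to engulf a small neighbourhood of every connected component $C$ of $G[V\setminus V_s]$ with $\varphi(C)=\rho$. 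The new pipe $\mathcal{P}(v_s^*\rho^*)$ is then a narrow strip from $\mathcal{D}(v_s^*)$ to $\mathcal{D}(\rho^*)$ enclosing the unique edge (guaranteed by condition~(II) of Definition~\ref{def:normal}) joining $v_s$ to $C$ in $G$.

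The combinatorial blueprint for this placement is exactly the planar embedding of $G_\nu$ with $C_\nu$ as the outer face provided by Claim~\ref{claim:Gi}: the cyclic order of the pipes $\mathcal{P}(v_s^*\rho^*)$ around $\partial\mathcal{D}(v_s^*)$ (and of the engulfed regions around each valve of $\mathcal{P}(\rho)$ in $\mathcal{D}(\nu)$) is read off this planar embedding, and it matches the rotation at $v_s^*$ and $\rho^*$ in the embedding of $H'$ constructed in Section~\ref{sec:normal-form}. I then define $\psi_0'$ to agree with $\psi_0$ on all edges, modulo a local deformation near each central vertex $v_s$ bringing the rotation at $v_s$ into agreement with the one dictated by this planar embedding. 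Such a deformation can be achieved by flips at $v_s$, and flips at $v_s$ change only the parity of crossings between pairs of edges sharing the endpoint $v_s$.

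To verify the three assertions: condition~(A) holds by construction (every $v_s$ lies in $\mathcal{D}(v_s^*)$ and every vertex of a component $C$ lies in $\mathcal{D}(\varphi(C)^*)$); condition~(B) holds because edges internal to a component $C$ stay inside $\mathcal{D}(\varphi(C)^*)$ while an edge $v_su$ with $u\in V(C)$ is enclosed in its strip $\mathcal{P}(v_s^*\varphi(C)^*)$ and hence crosses $\partial\mathcal{D}(v_s^*)$ and $\partial\mathcal{D}(\varphi(C)^*)$ exactly once. Independent evenness of $\psi_0'$ is inherited from $\psi_0$: the reinterpretation does not alter the underlying curves, and the flips at central vertices only affect adjacent pairs. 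Compatibility with $\psi_0$ is immediate at every non-central vertex; at an even central vertex $v_s$, the rotation of $\psi_0$ is already planar (since $v_s$ is even), so I can choose the embedding of $G_\nu$ in Claim~\ref{claim:Gi} to extend this rotation, and then no flips are performed at $v_s$. If $\psi_0$ is crossing-free, so is $\psi_0'$, as the only modifications happen inside neighbourhoods of central vertices that are already locally planar.

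The main obstacle I expect is technical: verifying that the extended pipe regions and the strips $\mathcal{P}(v_s^*\rho^*)$ can simultaneously be realized inside $\mathcal{D}(\nu)$ in a pairwise (interior-)disjoint way, respecting the drawing $\psi_0$. This relies on two structural facts: (i) by property~(II) of the normal form, the edges incident to $v_s$ inside $\mathcal{D}(\nu)$ are in bijection with the components of $G[V\setminus V_s]$ mapped to the various pipes at $\nu$, so there is no routing conflict; and (ii) the planarity of $G_\nu$ with $C_\nu$ as outer face (Claim~\ref{claim:Gi}) provides exactly the interior routing needed. Once these two observations are combined with the flip arguments above, the construction of $\psi_0'$ and verification of the three required properties become straightforward.
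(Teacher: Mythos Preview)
Your approach has a genuine gap: the embedding of $\mathcal{H}'$ as a collection of pairwise interior-disjoint regions inside $\mathcal{H}$ cannot in general be carried out when $\psi_0$ is only a $\mathbb{Z}_2$-approximation rather than an embedding. Concretely, take two components $C_1,C_2$ of $G[V\setminus V_s]$ with $\varphi(C_1)=\rho_1\neq\rho_2=\varphi(C_2)$ whose parts in $\mathcal{D}(\nu)$ cross one another (an even number of times, as $\psi_0$ is independently even, but they may still cross). Any region of $\mathcal{D}(\nu)$ engulfing $C_1$ must contain these crossing points, and so must any region engulfing $C_2$; hence the two putative cluster discs $\mathcal{D}(\rho_1^*)$ and $\mathcal{D}(\rho_2^*)$ cannot be interior-disjoint. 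The same obstruction shows that a component $C$ with $\varphi(C)=\rho$ can ``surround'' a central vertex $v_{s'}$ belonging to a different tree of $G[V_\nu]$, forcing any region engulfing $C$ to have a hole and thus fail to be a disc. The planar embedding of $G_\nu$ supplied by Claim~\ref{claim:Gi} is obtained via the unified Hanani--Tutte theorem and need not bear any geometric resemblance to the actual layout of $\psi_0$ inside $\mathcal{D}(\nu)$; it is a combinatorial certificate, not a deformation of $\psi_0$. So you cannot simultaneously ``leave the curves of $\psi_0$ untouched'' and ``carve $\mathcal{D}(\nu)$ according to the blueprint of $G_\nu$''.

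The paper's proof gets around exactly this difficulty by abandoning the idea of partitioning $\mathcal{D}(\nu)$. Instead, for each edge $\rho=\nu\mu$ it takes a \emph{separate copy} of the whole region $\mathcal{D}(\nu)\cup\mathcal{P}(\rho)\cup\mathcal{D}(\mu)$ via an orientation-preserving homeomorphism $h_\rho$ onto $\mathcal{D}(\rho^*)$ (so different $\mathcal{D}(\rho^*)$'s each receive their own copy of $\mathcal{D}(\nu)$). The pipe edges of $G'$ are then drawn by following, inside each copy, certain shortest paths $Q_C,Q_D$ in $\hat\psi_0$, and are reconnected across the new pipes. The nontrivial content is Claim~\ref{claim:ie}, a parity computation showing that these reconnected edges cross each other evenly; this is precisely the step your ``the reinterpretation does not alter the underlying curves'' was meant to replace, but cannot. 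The subsequent Step~2 then reinstates the central vertices $v_s$ from their $Y$--$\Delta$ cycles, again via a delicate parity argument (property~$(*)$ about the curve $O$). Your plan would be essentially correct for the ``moreover'' part where $\psi_0$ is crossing-free, but for the main statement you need the copy-and-reconnect surgery and the accompanying parity verification.
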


% \rf{Changed the formulation in the following paragraph.}
We will use the previous claim to derive that 
if $(G,H,\varphi)$ (not necessarily in the normal form) is approximable by an embedding the same holds for $(G',H',\varphi')$, which we use in the proof of 
Theorem~\ref{thm:main} to conclude that if 
$(G',H',\varphi')$ is not approximable by an embedding
the same holds for $(G,H,\varphi)$.
However, we need also the converse of this to hold, which is indeed the case.

% \rf{Changed the formulation.}
\begin{claim}
\label{claim:integration}
If the derivative  $(G',H',\varphi')$ of $(G,H,\varphi)$ in the normal form with fixed $V_s$ is approximable by an embedding  compatible with $\psi_0'$ (given by Claim~\ref{claim:derivative}) then $(G,H,\varphi)$, where $G=G'$, is
 approximable by an embedding compatible with $\psi_0$. 
\end{claim}

We prove the claim in Section~\ref{sec:integration}.

%===================================================================================

\section{Reduction to our problem}
\label{sec:equivalence}

\jk{hodilo by se mit presne definice a nazvy jednotlivych problemu. Na prvni pohled nechapu, cim se ty dva problemy "fundamentalne" lisi - aneb nevim jak vyresit "najdi 5 rozdilu" :)}

The aim of this section is to show that the problem of deciding whether $\varphi_0:G\rightarrow M$
is approximable by an embedding is polynomially reducible to our problem of determining if $(G,H,\varphi)$ is approximable by an embedding.

First, by subdiving edges of $G$ with vertices we obtain that $G$ has no multiple edges and for every pair of edges $e$ and $g$ of $G$
either $\varphi_0(e)=\varphi_0(g)$ or the relative interiors of $\varphi_0(e)$ and $\varphi_0(g)$ are disjoint.
The number of required subdivisions is $O(|\varphi|^2)$, since every edge is subdivided $O(|\varphi|)$ times.

Second, we construct an instance $(G,H,\varphi)$, where $V(H):=\varphi_0(V(G))$, 
$E(H):=\varphi_0(E(G))$, $\varphi(v):=\varphi_0(v)$ for all $v\in V(G)$, and the isotopy class of an embedding of $H$ is inherited from $\varphi_0(G)$.

\begin{claim}
\label{claim:reductionToPipes}
The instance $(G,H,\varphi)$ is approximable by an embedding if and only if $\varphi_0$ is approximable by an embedding.
\end{claim}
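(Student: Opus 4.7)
The plan is to realize the combinatorial thickening $\mathcal{H}$ concretely as a thin regular neighborhood of $H=\varphi_0(G)$ inside $M$, and to show that embeddings of $G$ into such a realization satisfying (A) and (B) are in essentially canonical correspondence with $\varepsilon$-approximations of $\varphi_0$ for all sufficiently small $\varepsilon>0$. The subdivision step already carried out ensures that distinct edges of $H$ are internally disjoint, so such a regular-neighborhood realization exists.

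For the direction $(G,H,\varphi)$ approximable $\Rightarrow$ $\varphi_0$ approximable, I would argue as follows. Let $\psi$ be an embedding of $G$ in $\mathcal{H}$ witnessing (A) and (B), and let $\varepsilon>0$ be arbitrary. Choose a realization of $\mathcal{H}$ inside $M$ in which each disc $\mathcal{D}(\nu)$ has diameter less than $\varepsilon$ and each pipe $\mathcal{P}(\rho)$ is a rectangular neighborhood of the segment $\rho$ of width less than $\varepsilon$, consistent with the rotation system and signs of $H$. Then $\psi(G)\subseteq \mathcal{H}$ lies within the $\varepsilon$-neighborhood of $\varphi_0(G)$. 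To upgrade image-closeness to $\|\psi-\varphi_0\|<\varepsilon$ in the supremum norm, I would reparametrize $\psi$ edge by edge: for every edge $e=uv$ with $\varphi_0(e)$ a single edge $\rho=\mu_1\mu_2$, property (B) guarantees that $\psi(e)$ crosses each valve of $\mathcal{P}(\rho)$ exactly once, hence traverses the pipe monotonically and can be parametrized in lockstep with $\varphi_0(e)$; for every edge $e$ with $\varphi_0(e)$ a single point $\nu$, $\psi(e)$ stays inside $\mathcal{D}(\nu)$ and any monotone parametrization is $\varepsilon$-close to the constant $\varphi_0(e)$.

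For the converse direction, I would fix a realization of $\mathcal{H}$ in $M$ with some small but \emph{fixed} disc radius and pipe width $\delta$, and then choose $\varepsilon>0$ much smaller than $\delta$ and smaller than the minimum distance in $M$ from any $\mathcal{D}(\nu)$ to every non-incident edge of $H$. By hypothesis there is an embedding $\psi_\varepsilon:G\to M$ with $\|\psi_\varepsilon-\varphi_0\|<\varepsilon$. Condition (A) is immediate because $\psi_\varepsilon(v)$ lies in the ball of radius $\varepsilon$ around $\varphi_0(v)=\nu$, which is contained in the interior of $\mathcal{D}(\nu)$. Moreover, each $\psi_\varepsilon(e)$ is confined to the $\varepsilon$-tube of $\varphi_0(e)$, so $\psi_\varepsilon(G)\subseteq \mathcal{H}$. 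The only remaining obstacle to (B) is that $\psi_\varepsilon(e)$ with $\varphi_0(e)=\mu_1\mu_2$ may meet $\partial \mathcal{D}(\mu_i)$ in $2k+1>1$ points, and similarly an edge $e$ with $\varphi_0(e)$ a single vertex $\nu$ may leave and re-enter $\mathcal{D}(\nu)$.

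The main obstacle I anticipate is this last step — eliminating superfluous boundary crossings while preserving the embedding property. I would proceed by innermost-disc surgery. For a fixed cluster boundary $\partial\mathcal{D}(\nu)$, consider the collection of sub-arcs of edges of $\psi_\varepsilon(G)$ lying inside $\mathcal{D}(\nu)$ both of whose endpoints are on $\partial\mathcal{D}(\nu)$, together with those initial/terminal sub-arcs of edges $e$ with one endpoint in $V_\nu$ that exit and re-enter $\mathcal{D}(\nu)$. Each such sub-arc, together with an arc of $\partial\mathcal{D}(\nu)$, bounds a closed disc inside $\mathcal{D}(\nu)$; choose one, $D$, that is innermost, i.e., whose interior meets no other such sub-arc and no vertex of $\psi_\varepsilon(G)$. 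Isotope the chosen sub-arc across $D$ and slightly out through $\partial\mathcal{D}(\nu)$; since $D$ was innermost and $\psi_\varepsilon$ is an embedding, this isotopy introduces no new crossings and strictly decreases the total number of boundary intersections. Iterating finitely many times over all cluster boundaries produces an embedding of $G$ in $\mathcal{H}$ satisfying both (A) and (B), completing the proof.
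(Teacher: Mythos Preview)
Your overall strategy mirrors the paper's: realize $\mathcal{H}$ as a regular neighborhood of $H$ in $M$, handle one direction by shrinking the approximation into a thin realization, and handle the other direction by an innermost-bigon removal to enforce~(B). Your treatment of the ``only if'' direction is more explicit than the paper's (which simply calls it trivial) and is fine.

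The one substantive difference---and the place where a gap appears---is the direction of your bigon surgery. You look for an innermost bigon \emph{inside} $\mathcal{D}(\nu)$, bounded by a returning sub-arc $\alpha$ and an arc of $\partial\mathcal{D}(\nu)$, and push $\alpha$ out into the pipe. For this to work you need the bigon's interior to be free not only of other such sub-arcs but also of every vertex of $V_\nu$ (and hence of all vertex-to-boundary arcs attached to them). You assert such an innermost vertex-free bigon exists, but it need not: a single returning arc $\alpha$ can split $\mathcal{D}(\nu)$ into two pieces each containing a vertex of $V_\nu$. Nothing in the $\varepsilon$-closeness rules this out, since $\varphi_0|_e$ is only required to be piecewise linear onto the edge $\nu\mu$, not monotone, so $\psi_\varepsilon(e)$ may dip arbitrarily close to $\nu$ before returning to the valve.

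The paper avoids this by doing the dual surgery on the \emph{pipe} side of the valve: it takes a sub-arc $p_e$ of $\psi(e)$ lying in $\mathcal{P}(\nu\mu)$ with both endpoints on the valve, bounding (together with a valve arc) a bigon in the pipe of minimal area, and pushes $p_e$ into $\mathcal{D}(\nu)$. Since pipes contain no vertices, that bigon's interior is automatically disjoint from $\psi(G)$, and the reconnection inside $\mathcal{D}(\nu)$ can be done in a thin collar of the valve arc. Switching your surgery to the pipe side closes the gap with no other change to your argument.
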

\begin{proof}
The ``only if'' direction is trivial and therefore we prove 
the ``if'' direction.
Let $\psi$ denote an embedding of $G$ in $M$ that is an $\epsilon$-approximation of $\varphi_0$ for $\epsilon>0$. 
We show that $\psi$ gives rise to an embedding of $G$ in $\mathcal{H}$ satisfying (\ref{it:1st}) and~(\ref{it:2nd}).

By shrinking $\psi$ we can assume  that $\psi:G\rightarrow \mathcal{H}$ and that the property~(\ref{it:1st}) holds.
It is suficient to show that $\psi$ can be deformed slightly so that~(\ref{it:2nd}) holds, i.e., every edge of $G$ intersecting the boundary of $\mathcal{D}(\nu)$ does so in a single proper crossing.
By a small generic perturbation we achieve that $\psi(e)$, for every $e\in E(G)$, intersects the boundary of $\mathcal{D}(\nu)$, for every $\nu\in V(H)$, in finitely many proper crossings.
Let us choose $\psi$ so that the total number of crossings of edges with boundaries of $\mathcal{D}(\nu)$ is minimized.

We show that the existence of $\psi(e)$, for some $e\in E(G)$, crossing the boundary of $\mathcal{D}(\nu)$ at least twice leads to contradiction with the choice of $\psi$.
We consider a shortest piece $p_e$ of $\psi(e)$ between a pair of its crossings with $\mathcal{D}(\nu)$.
The end vertices of $p_e$ are contained in a valve $\omega$ of $\varphi(e)$.
We choose $p_e$ so that the area of the disc bounded by the curve obtained by concatenating $p_e$ with 
the part of $\omega$, that is bounded by the endpoints of $p_e$, is minimized. 
Note that the disc is contained in the pipe of $\varphi(e)$, and therefore its interior is disjoint from $\psi(G)$. Thus, $\psi(e)$ can be cut at the endpoints
of $p_e$ and the severed ends reconnected by a curve contained in the interior of $\mathcal{D}(\nu)$, see Figure~\ref{fig:lensremoval}.
In the resulting embedding, the total number of crossings of edges with boundaries of $\mathcal{D}(\nu)$ is smaller than in $\psi$ (contradiction).
\end{proof}

\begin{figure}
\centering
\includegraphics[scale=0.8]{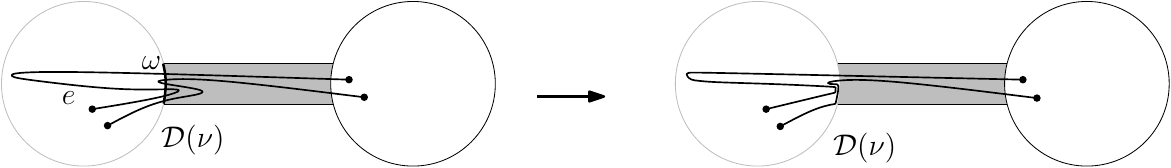}
\caption{Deforming an approximation so that every valve is crossed by an edge at most once.}
\label{fig:lensremoval}
\end{figure}

%===================================================================================

\section{Normal form}
\label{sec:normalform2}

% \rf{Recalled the definition of the potential and $E_p(G)$.}
The aim of this section is to prove the following claim.
% Recall that $p(G,H,\varphi)=(|E_p(G)| - |E(H)|)\ge 0$, where $E_p(G)$ is the set of all pipes edges in $G$

\begin{claim}
\label{claim:normal_form}
Let $(G,H,\varphi)$ be a $\mathbb{Z}_2$-approximable instance and
let $\psi_0$ be its $\mathbb{Z}_2$-approximation.
There exist \ 
(1) an instance $(\hat{G},\hat{H},\hat{\varphi})$ in the normal form that is
$\mathbb{Z}_2$-approximable  such that $(G,H,\varphi)$ and $(\hat{G},\hat{H},\hat{\varphi})$ have the same number of pipe edges;  \ 
and (2) a $\mathbb{Z}_2$-approximation $\hat{\psi_0}$
of $(\hat{G},\hat{H},\hat{\varphi})$ such that 
 $(\hat{G},\hat{H},\hat{\varphi},\hat{\psi}_0)$ is
a {clone} of $(G,H,\varphi,\psi_0)$.
Furthermore,  we can choose $\hat{\psi}_0$ so that 
every vertex of $V_s$ (as in the definition of the normal form) is even in $\hat{\psi}_0$.
\end{claim}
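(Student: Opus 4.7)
The plan is to transform $(G,H,\varphi,\psi_0)$ into normal form by a two-phase procedure in which every intermediate operation preserves $|E_p|$ and $|E(H)|$ (hence $p$) and transports $\psi_0$ into a $\mathbb{Z}_2$-approximation of a clone of $(G,H,\varphi,\psi_0)$. Throughout we keep $\hat{H}:=H$, so $|E(H)|$ is never affected. A preliminary cleanup step removes any tree $T$ of $G[V_\nu]$ with $\pdeg{T}=0$; since $T$ has no pipe edges, its removal leaves both $|E_p|$ and cloneness intact.

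\emph{Phase 1 (making clusters forests).} While some $G[V_\nu]$ contains a cycle, pick an \emph{innermost} cycle $C$, meaning that the region bounded by $\psi_0(C)$ inside $\mathcal{D}(\nu)$ contains no vertex of $G$. Such a cycle exists because $G[V_\nu]$ is drawn inside $\mathcal{D}(\nu)$ and any nested chain of cycles eventually bottoms out; moreover an innermost $C$ admits no inner chord (else a strictly inner cycle would exist), so $V_{in}(C)=\emptyset$ and $D_{in}(C)=\emptyset$. Thus Claim~\ref{claim:cycle_red} applies without losing any pipe edge and without touching $H$, preserving $p$ and yielding a clone. Each reduction strictly decreases $|E(K)|-|V(K)|+1$ for the relevant component $K\supseteq C$, so the process terminates with every $G[V_\nu]$ a forest after finitely many rounds.

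\emph{Phase 2 (introducing central vertices).} For each tree $T$ of $G[V_\nu]$ with $k=\pdeg{T}\geq 2$ and $\pNeigh{T}=\{\mu_1,\ldots,\mu_k\}$, first contract every intra-cluster edge of $T$ to a single vertex $v_T\in V_\nu$; contractions of intra-cluster edges preserve $|E_p|$, cloneness and independent-evenness. Next partition the pipe edges at $v_T$ by their $\varphi$-image into sets $E_{\mu_1},\ldots,E_{\mu_k}$ and apply $k-1$ vertex $A$-splits in sequence, first splitting $v_T$ into $v_s$ and $u_1$ with $A=E_{\mu_1}$, then splitting $v_s$ into $v_s$ and $u_2$ with $A=E_{\mu_2}$, and so on. After the splits, $v_s$ is adjacent precisely to $u_1,\ldots,u_k$, and each $u_i$ carries only the pipe edges to $\mu_i$. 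Declaring $V_s$ to be the collection of all such central vertices yields the subdivided normal form: Condition (II) of Definition~\ref{def:normal} is direct; Condition (I) follows since starting from $u_i$ and traversing $\hat{G}[V\setminus V_s]$ through a $\nu\mu_i$-pipe lands in $V_{\mu_i}$, where by Phase 2 applied to $V_{\mu_i}$ the receiving subtree has its further pipe edges going only back to $\nu$; hence each such component is contained in $V_\nu\cup V_{\mu_i}$ and maps to the edge $\nu\mu_i$ of $H$. Suppressing degree-$2$ central vertices yields $(\hat{G},\hat{H},\hat{\varphi})$ in normal form.

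\emph{Evenness of $V_s$ and the main obstacle.} It remains to choose $\hat{\psi}_0$ so that every $v_s\in V_s$ (of degree $\geq 3$ after suppression) is even. We perform flips at $v_s$ to remove odd-crossing pairs among its incident edges; by Claim~\ref{claim:4} it suffices to show that no $4$-tuple of edges at $v_s$ forms the obstruction described there. A would-be obstruction would encode four pipe destinations at $\nu$ whose orderings in the rotation at $v_s$ and at $\nu$ cannot be reconciled by local flips; tracing each edge $v_su_i$ through $u_i$ and out along its pipe edges to $\mu_i$ then exhibits either an X-configuration or a YY-configuration in $(G,H,\varphi)$, contradicting $\mathbb{Z}_2$-approximability via Claims~\ref{claim:cross} and~\ref{claim:YY}. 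The main technical obstacle is tracking the compatibility half of the cloneness relation through Phase 2: one must verify that even vertices of the running $\mathbb{Z}_2$-approximation keep their rotations as they are merged or separated. This follows because an intra-cluster contraction merges the rotations of its two endpoints in the unique planar way and vertex $A$-splits invert contractions; an embedding of $(\hat{G},\hat{H},\hat{\varphi})$ compatible with $\hat{\psi}_0$ therefore lifts back to an embedding of $(G,H,\varphi)$ compatible with $\psi_0$, closing cloneness in both directions and finishing the proof.
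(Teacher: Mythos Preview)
Your two-phase scheme has the right shape, but Phase~2 contains a genuine gap concerning the \emph{compatibility} half of the clone relation. You contract every intra-cluster tree $T$ to a single vertex $v_T$ and then perform vertex $A$-splits, asserting that ``contractions of intra-cluster edges preserve \ldots\ cloneness.'' This is not true in general: Claim~\ref{claim:contraction} allows contracting an intra-cluster edge $e=uw$ only when \emph{both} endpoints are even in $\psi_0$. If $u$ is even but $w$ is not, the merged vertex $v_e$ need not be even in the running $\mathbb{Z}_2$-approximation, so an embedding of $\hat G$ compatible with $\hat\psi_0$ places no constraint on the rotation at $v_e$; when you ``lift back'' to $G$, nothing forces the rotation at $u$ to agree with $\psi_0$. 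Your final paragraph flags this as ``the main technical obstacle,'' but the one-line justification (``merges the rotations of its two endpoints in the unique planar way'') does not address it: once the evenness of $u$ is absorbed into a non-even $v_e$, that rotation information is simply lost. The same objection hits your splits: Claim~\ref{claim:splitting} requires that $A$ be consecutive in the rotation and that every edge of $A$ cross every edge of $B$ evenly, and you verify neither before splitting. This is exactly why the paper's third phase (``Untangling trees'') is so elaborate: the machinery of $\mu$-important edges and paths, together with Claims~\ref{claim:basic} and~\ref{claim:split}, exists to make the relevant vertices even \emph{before} each contraction or split, so that Claims~\ref{claim:contraction} and~\ref{claim:splitting} are legitimately applicable at every step.

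A secondary gap is your argument that $v_s$ can be made even by flips. You claim a $4$-tuple obstruction at $v_s$ would exhibit an X- or YY-configuration, but Claims~\ref{claim:cross} and~\ref{claim:YY} forbid specific two-star patterns in $G$, not $4$-tuples of edges at a single vertex with bad crossing parities; the inference from one to the other is not supplied. The paper instead uses the parity identity of Claim~\ref{claim:basic} and a short case analysis (the ``Single-vertex case'' for $\overline{C}$) to check the hypothesis of Claim~\ref{claim:4} directly. Your Phase~1 via innermost cycles is harmless (indeed unnecessary, since vertices in $V_{in}(C)$ can carry no pipe edges once the edges of $C$ are made even), but it does not compensate for the missing work in Phase~2.
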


%First, we recall the claim.\\
%
%\rephrase{Claim}{\ref{claim:normal_form}}{
%There exist (1) an instance $(\hat{G},\hat{H},\hat{\varphi})$ in the normal form that is 
% $\mathbb{Z}_2$-approximable, and (2) a $\mathbb{Z}_2$-approximation $\hat{\psi_0}$
%of $(\hat{G},\hat{H},\hat{\varphi})$ such that 
% $(\hat{G},\hat{H},\hat{\varphi})$ with $\hat{\psi}_0$ is
%a {clone} of $(G,H,\varphi)$ with $\psi_0$.
%}

\begin{proof}
Clearly, it is enough to prove the existence 
of a desired instance $(\hat{G},\hat{H},\hat{\varphi})$ along with $\hat{\psi}_0$ in the subdivided normal form.
We obtain $(\hat{G},\hat{H},\hat{\varphi},\hat{\psi}_0)$ from $(G,H,\varphi, \psi_0)$ by a sequence of modifications, where 
$\hat{H} = H$.
Throughout the proof we assume that multiple edges incident to an even vertex $v$ are eliminated by subdiving all but one edge participating in a multiple edge by a vertex drawn very close to $v$. This is to ensure that 
by performing flips at a vertex in an independently even drawing we do not destroy ``evenness'' of another vertex.
It will be straightforward to check that every step of the construction preserves  the number of pipe edges as required by the claim.
The proof is carried out in three phases.
During the first two phases of the proof we want to satisfy the following for every connected component $C$ of every $\varphi^{-1}[\nu]= G[V_{\nu}]$.
\begin{enumerate}[(i)]
\item \label{it:tree} $C$ is a tree; and 
\item \label{it:pruning} every leaf of $C$ is incident to at least one pipe edge.
\end{enumerate}

First, we eliminate cycles induced by clusters thereby producing a clone satisfying~(\ref{it:tree}). Second, we prune branches of the trees induced by clusters not incident to pipe edges thereby producing a clone satisfying~(\ref{it:pruning}).
Third, we further process $(G,H,\varphi)$ thereby producing a clone in the subdivided normal form.
We put $(\hat{G}_1,\hat{H}_1,\hat{\varphi}_1):=(G,H,\varphi)$ and $\hat{\psi_1}:=\psi_0$. At the $t$th step, for $t\ge 1$, 
we produce the clone $(\hat{G}_{t+1},\hat{H}_{t+1},\hat{\varphi}_{t+1},\hat{\psi}_{t+1})$ of $(\hat{G}_{t},\hat{H}_{t},\hat{\varphi}_t,\hat{\psi_t})$, where
$\hat{H}_{t+1}=H$.
By the transitivity of the clone relation, once we produce a clone in the subdivided normal form we are done.
It will be easy to see that if $\hat{\psi}_t$ is an embedding then $\hat{\psi}_{t+1}$
is also an embedding, and thus, we will use this fact tacitly to show that 
$(\hat{G}_{t+1},\hat{H}_{t+1},\hat{\varphi}_{t+1},\hat{\psi}_{t+1})$ is a clone at every step.
%We process connected components of $G$ one by one, and hence, we assume that $G$ is connected.
We proceed with a detailed description of the three previously described phases. We begin with the first two stages which are described in the following two short paragraphs.

% \rf{Below emphasized that the cycle reduction preserves the connectedness and the application of the lemma to $C$.}

\paragraph{Eliminating cycles inside clusters.}
We turn every connected component $C$ of $G[V_{\nu}]$ into a tree.
To this end we repeatedly apply the cycle reduction. Since the cycle reduction preserves the connectivity of $C$, by Claim~\ref{claim:cycle_red} applied to $C$ after finitely many applications of the reduction we obtain a $\mathbb{Z}_2$-approximation $\hat{\psi}_2$ of $(\hat{G}_2,\hat{H}_2,\hat{\varphi}_2)$, in which every $V_{\nu}(\hat{G}_2)$ induces a forest. By Claim~\ref{claim:cycle_red}, $(\hat{G}_2,\hat{H}_2,\hat{\varphi}_2,\hat{\psi}_2)$
is a desired clone.

\paragraph{Pruning.}
We want to achieve that every leaf in every connected component induced by a cluster is incident to at least one pipe edge.
We prune all the ``unimportant'', i.e., not incident to a pipe edge, branches of the connected components, now trees, induced by $V_{\nu}$'s in $(\hat{G}_2,\hat{H}_2,\hat{\varphi}_2)$ so that~(II) of the subdivided normal form is satisfied.
Formally, this means successively deleting leaves that are not incident to pipe edges.
Let $(\hat{G}_3,\hat{H}_3,\hat{\varphi}_3)$ be the resulting instance.
We also delete all the connected components induced by clusters with pipe degree $0$.
The pruned parts can be introduced into an approximation of $(\hat{G}_3,\hat{H}_3,\hat{\varphi}_3)$ so that we 
get a desired approximation of $(\hat{G}_2,\hat{H}_2,\hat{\varphi}_2)$.
Hence, $(\hat{G}_3,\hat{H}_3,\hat{\varphi}_3,\hat{\psi}_3)$ is a desired clone. \\

What follows is inspired by~\cite{HMcC03_PCtrees} and is more demanding than the previous two stages.
It will be almost straightforward to see that every modification carried out at this stage produces a clone,
and hence, in order not to make the proof unnecessarily long we sometimes omit formal arguments regarding that.
We enumerate connected components  induced by all clusters $V_{\nu}(\hat{G}_3)$ with pipe degree at least $2$, and process them in this order. Here, the goal is to alter each such component $C$ so that there exists a single vertex $v_C\in C$ such that
the desired set $V_s$ of special vertices is formed by the set of $v_C$'s, for every $C$, such that $\pdeg{C}\ge 2$.
%Note that $C$ is a tree by~(\ref{it:tree}).
The altering of the components uses operations of vertex split, contractions, flips and finger moves in $\hat{\psi}_3$. Note that the operation of contractions might introduce loops or multiple edges. In our 
case a loop cannot arise since we will be contracting only edges of  components induced by clusters, which are trees. 
% \rf{The multiple edges that could arise would be pipe edges.}

The multiple edges cause minor difficulties, since flipping a pair of multiple edges $uv$ in $\hat{\psi}_3$ at $u$, where $v$ is even in $\hat{\psi}_3$, turns $v$ 
into a non-even vertex in $\hat{\psi}_3$, which is undesirable since at every step we want to produce a clone.  However, there is an easy remedy to this problem, which lies in subdividing edges incident to an even vertex $v$ by vertices that are drawn in a $\mathbb{Z}_2$-approximation very close to $v$ so that $v$ is still even after subdivisions and the multiple edges incident to $v$ are eliminated.
We tacitly assume that such an operation is performed whenever this situation occurs, and therefore,
by doing  flips  of edges with its incident vertices we will always produce a clone.
This being said we proceed to the description of the altering procedure;
and we start with a claim that restricts the way in which pipe edges can be attached to $C$.

\begin{figure}
\centering
\includegraphics[scale=0.8]{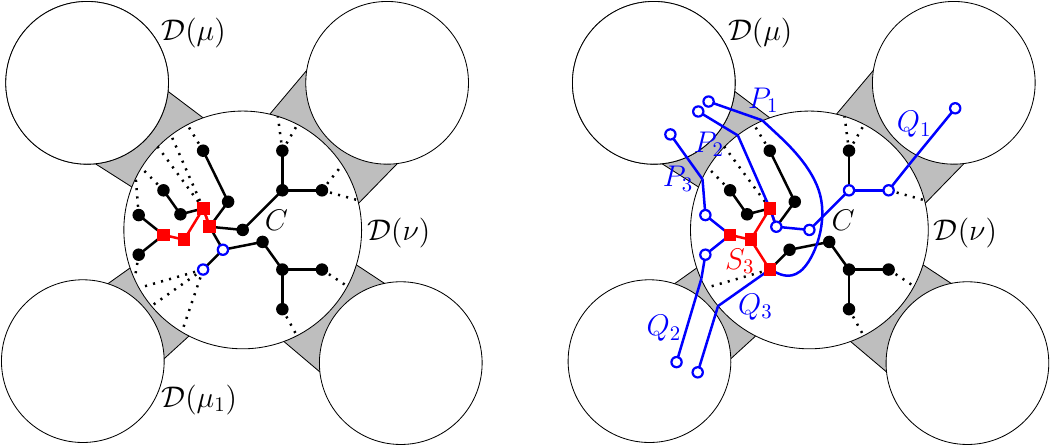}
\caption{The $\mu$-important edges colored red (squares) and
$\mu_1$-important vertices colored blue (empty disks) (left); construction of a drawing of a subdivision of $K_{3,3}$ in the proof of Claim~\ref{claim:impPath} (right). }
\label{fig:jImportant}
\end{figure}

% \rf{Added a recall to $P(C)$.}
Refer to Figure~\ref{fig:jImportant} (left).
 Recall that $\pNeigh{C}$ of $C$ denotes the pipe neighborhood of $C$ defined along with the instance in the introduction. 
Let $\nu \mu$ be an edge in $E(\hat{H}_3)$ such that at least one pipe edge incident to $C$ is mapped by $\hat{\varphi}_3$ to $\nu \mu$.
For an edge $e\in E(C)$, let $C_1(e)$ and $C_2(e)$ be the pair of trees whose union is $C\setminus e$.
An edge $e$ of $E(C)$ is $\mu$-\emphh{important} if $C_1(e)$ and $C_2(e)$  satisfy $\{\mu\}\subsetneq\pNeigh{C_1(e)}$ and $\{\mu\}\subsetneq \pNeigh{C_2(e)}$. A vertex $v$ of $V(C)$ is $\mu$-\emphh{important} if for every connected component $T$ of $C\setminus v$ we have  either $\pNeigh{T}=\{\mu\}$ or $\mu\not\in\pNeigh{T}$.

\begin{claim}
\label{claim:impPath}
Let $C$ be a connected component of $V_\nu(\hat{G}_3)$ such that $\pdeg{C}\ge 2$.
Let $\nu \mu$ be an edge in $E(\hat{H}_3)$ such that at least one pipe edge incident to $C$ is mapped by $\hat{\varphi}_3$ to $\nu \mu$.
In $C$, there exists  either
\begin{enumerate}[(1)]
\item a path $P$ with at least one edge formed by all the $\mu$-important edges of $C$; \item exactly one $\mu$-important vertex; or 
\item $\mu$-important vertices induce an edge or a subdivided edge in $\hat{G}_3$.
\end{enumerate}
\end{claim}
\begin{proof}
First, we show that $\mu$-important edges form a subtree $T$ of $C$.  Let $I\subset E(C)$ denote the set of $\mu$-important edges in $C$. Suppose for the sake of contradiction that the subgraph $C[I]$ of $C$ induced by $I$ is not connected. Hence, there exists a pair of edges  $e_1\in E(C)$ and $e_2\in E(C)$ belonging to different connected components of $C[I]$. Let $P$ be a path in $C$ between an end vertex of $e_1$ and an end vertex of $e_2$. Note that all the edges on $P$ are $\mu$-important, which is contradiction with the choice of $e_1$ and $e_2$. Indeed, let $f\in E(P)$,
and let $e_k\in E(C_k(f))$ and $f\not\in E(C_1(e_k))$, for $k=1,2$.
Then $\{\mu\}\subsetneq\pNeigh{C_1(e_1)}\subseteq \pNeigh{C_1(f)}$ and $\{\mu\}\subsetneq\pNeigh{C_1(e_2)}\subseteq \pNeigh{C_2(f)}$. Thus, $f$ is $\mu$-important.

Refer to Figure~\ref{fig:jImportant} (right).
Second, we show that $T$ is a path with at least one edge.
To this end we show that if $T$ contains a vertex $v$ of degree at least $3$, we obtain an independently even drawing of a subdivision of $K_{3,3}$ in the plane, and thus, we contradict the Hanani--Tutte theorem. The claimed drawing of $K_{3,3}$ is obtained as follows. Let $S_3$ be a 3-star consisting of $v$ and its three incident $\mu$-important edges of $C$.
Let $P_1, P_2, P_3$ be  three minimal paths  (disjoint from $v$) joining each leaf vertex of $S_3$ with a vertex in $V_{\mu}$. Let $Q_1, Q_2, Q_3$ be three minimal paths (disjoint from $v$) joining each leaf vertex of $S_3$ with a vertex in $V\setminus (V_{\nu} \cup V_{\mu})$. 
The paths $P_1,P_2,P_3$ and $Q_1,Q_2,Q_3$  exist, since $S_3$ is formed by $\mu$-important edges, that is, every leaf of $S_3$ is connected by a path in $C$, that is edge-disjoint from $S_3$, with a vertex  that is incident to a pipe edge mapped by $\hat{\varphi}_3$ to $\nu\mu$ and $\nu\mu_1$, respectively, for some $\mu_1\not=\mu$.
Let us modify $\hat{\psi}_3$ as follows. We contract the closure of the complement of the valve of $\nu\mu$ in the boundary of $\partial \mathcal{D}(\nu)$ to a point.
Similarly, we contract the shortest closed arc contained in the valve of $\nu\mu$ containing all the three crossing points with $P_1,P_2$ and $P_3$ to a point.
The desired drawing of a subdivision of $K_{3,3}$ is obtained
as the restriction of the previous modification $\hat{\psi}_3$ to the union of $S_3$ with the parts
$P_1,P_2,P_3$ and $Q_1,Q_2,Q_3$ drawn in $\mathcal{D}(\nu)$.
Here, we  regard the leaf vertices of $S_3$ to form one part of the vertex set of $K_{3,3}$; and $v$ together with the pair of points that the two disjoint parts of $\partial \mathcal{D}(\nu)$ were contracted into as the other part.

If there exists no $\mu$-important edge there must exist
at least one $\mu$-important vertex. Indeed, let us direct an
edge $e=uv$ of $C$ towards $u$, if there exists $k\in\{1,2\}$ and $\mu_1\not=\mu$ such that $u\in C_k(e)$  and $\mu,\mu_1\in \pNeigh{C_k(e)}$. (This possibly leaves some edges of $C$ undirected. Also as there is no $\mu$-important edge in $C$, every edge is directed in at most one direction.)
A vertex $u$ in $C$ with out-degree $0$, which must exist since $C$ is a tree, is $\mu$-important, for if not there would be an edge in $C$ incident to $u$ directed away from $u$ (contradiction). 

Now, if there exists a pair of $\mu$-important
vertices $u$ and $v$ then the unique path $P$ between $u$ and $v$ in $C$, has none of its edges directed. 
Indeed, let $e$ denote an edge on the path $P$ in $C$ between $u$ and $v$. Suppose for the sake of contradiction that $e$ is directed away from, let's say $u$. Then $u$ cannot have out-degree $0$, because in this case also the edge incident to $u$ on $P$ is directed away from $u$ (contradiction).

It remains to show that undirected edges in $C$ form a (subdivided) edge in $\hat{G}_3$.
Let us consider the subgraph $\overline{C}$ of $C$ consisting of undirected edges. Since~(\ref{it:pruning}) holds,  $\hat{G}_3$ cannot contain three edges $e_1, e_2$ and $e_3$ incident to the same vertex $v$
such that $e_2,e_3\in E(\overline{C})$. For if not, let w.l.o.g.  $C_1(e_1),C_1(e_2)$ and $C_1(e_3)$ be trees not containing $v$. 

W.l.o.g. we assume that $e_1$ is not a pipe edge. (If $e_1$ is a pipe edge, we can subdivide it so that this is no longer the case, which does not affect whether  $e_2$ and $e_3$ are directed or not.) First, we consider the case when  $\mu\in \pNeigh{C_1(e_1)}$, which is not empty by~(\ref{it:pruning}).  Then $\{\mu\}= \pNeigh{C_1(e_2)}$ and $\{\mu\}= \pNeigh{C_1(e_3)}$, since otherwise one of $e_2$ and $e_3$ would be directed towards $v$. Indeed, $\mu\in \pNeigh{C_1(e_1)}$, $\pNeigh{C_1(e_k)} \subseteq \pNeigh{C_2(e_l)}$, for every $k\not =l$, and both of  $\pNeigh{C_1(e_2)}$  and $\pNeigh{C_1(e_3)}$ are not empty  by~(\ref{it:pruning}).  Since $\pdeg{C}\ge 2$ and
$P(C)=\pNeigh{C_1(e_2)} \cup\pNeigh{C_2(e_2)}$, this implies that there exists 
 $\mu_1\not=\mu$ such that $\mu_1\in \pNeigh{C_2(e_2)}$. Hence, 
$e_2$ is directed towards $v$, since also $\mu\in \pNeigh{C_1(e_3)} \subseteq \pNeigh{C_2(e_2)}$ (contradiction).

If $\mu\not\in \pNeigh{C_1(e_1)}$ 
 an analogous argument, in which the roles of $\mu_1$ and $\mu$ are exchanged,  applies:  
 In this case, there exists $\mu_1\in \pNeigh{C_1(e_1)}$,  $\mu_1\not=\mu$.  Then $\mu\not\in\pNeigh{C_1(e_2)}$ and $\mu\not\in \pNeigh{C_1(e_3)}$, since otherwise one of $e_2$ and $e_3$ would be directed towards $v$ by an analogous argument as in the previous case. Since $\mu \in P(C)$, this implies that $\mu\in \pNeigh{C_2(e_2)}$. Hence, 
$e_2$ is directed towards $v$, since there exists $\mu_2\in \pNeigh{C_1(e_3)} \subseteq \pNeigh{C_2(e_2)}$, $\mu_2\not=\mu$ (contradiction).

Finally, it also follows that $\overline{C}$ is connected. Indeed, every edge on a path $P=uPv$ between two connected components of $\overline{C}$ is directed. Suppose that an edge $e$ of $P$ incident to $u$ is direced away (resp., towards) from $u$, that 
is, w.l.o.g.  $u\in C_2(e)$ (resp., $u\in C_1(e)$)  and $\mu\subsetneq C_1(e)$. It follows that every edge $e_0$ of $\overline{C}$ incident to $u$ (resp., $v$) is directed toward $u$ (resp., $v$), since w.l.o.g. $C_1(e)\subseteq C_1(e_0)$, which  is a contradiction. 
\end{proof}

We proceed by successively altering connected components $C$ in $\hat{\varphi_3}^{-1}[\nu]$, which are all trees, so that the following condition holds for every such component $C$: \\

$(\diamond)$ for every the $\mu\in P(C)$, $\mu$-important vertices in $V(C)$ induce an edge. \\

To this end we enumerate all the elements in $\pNeigh{C}$ in an arbitrary order $\mu_1,\ldots, \mu_{|\pdeg{C}|}$. We assume that $C$ was altered for all $\mu_1,\ldots, \mu_{i-1}$ for some $i\le \pdeg{C}$.

First, suppose that the case (3) from Claim~\ref{claim:impPath} applies
with $C$ and $\mu_i$ in place of $C$ and $\mu$, respectively. 
If the $\mu_i$-important vertices in $V(C)$ induce a subdivided edge in $\hat{G}_3$, we suppress the subdividing $\mu_i$-important vertices. Subsequently, we alter $C$, for $\mu_{i+1}$, if $i<\pdeg{C}$. If $i=\pdeg{C}$, we continue with processing an arbitrary unprocessed connected component or proceed to the next stage if no such component exists.

Second, suppose that the case (1) or (2) from Claim~\ref{claim:impPath} applies.
Let $P_i$ be a path formed by $\mu_i$-important edges of $C$, or a path consisting of the only $\mu_i$-important vertex $q_i$. We will process $C$ so that there will exist exactly two $\mu_i$-important vertices in $C$ joined by an edge,
or in other words so that the case (3) applies and we subsequently proceed as in the previous paragraph.
To this end we apply flips and vertex splits to vertices of $P_i$ or $q_i$ which will render the vertices of (the path corresponding to) $P_i$ or $q_i$, respectively, even in $\hat{\psi}_3$. Given that this is the case, if~(1) of Claim~\ref{claim:impPath} occurs  we contract $P_i$
into an even vertex $q_i$, which leave us to deal with~(2) of Claim~\ref{claim:impPath}. This is possible since the new vertex that we obtain by contracting $e$ is not incident to any loop ($C$ is a tree), and the new vertex is even in the resulting drawing. Thus, by contracting $e$ we necessarily produce a clone.

Indeed, recalling the definition of the clone we need to verify two implications which can be carried out as follows. First, in a compatible embedding of the graph we get from $\hat{G}_3$ after contracting $e$, the vertex corresponding to $e$ can be split in order to obtain embedding of $\hat{G}_3$ compatible with $\hat{\psi}_3$. Second, contracting an edge in any embedding will again produce an embedding. Formally, we have the following.

\begin{claim}
\label{claim:contraction}
Let $(G,H,\varphi,\psi_0)$ be such that $(G,H,\varphi)$ is an instance and $\psi_0$ is its $\mathbb{Z}_2$-approximation.
Let  $e$ be an edge $E(G)$
such that $\varphi(e)=\nu$ and $\varphi^{-1}[\nu]$ is a forest, where $\nu\in~V(H)$, and such that both end vertices of $e$ are even in $\psi_0$.
Let $(\hat{G},H,\hat{\varphi},\hat{\psi}_0)$ be obtained from $(G,H,\varphi,\psi_0)$ by 
contracting $e$. Then $(\hat{G},H,\hat{\varphi},\hat{\psi}_0)$ is a clone of 
$(G,H,\varphi,\psi_0)$.
\end{claim}

In the subsequent step, we  apply to $q_i$ the vertex $A_{q_i}$-split,
where $A_{q_i}$ contains the neighbors of $q_i$ mapped by $\hat\varphi_3$ to $\mu_i$, and the neighbors in the connected components of $C\setminus q_i$, whose incident pipe edges are all mapped to $\nu\mu_i$. (Since $q_i$ is $\mu_i$-important each of the  remaining neighbors of $q_i$ is  either mapped by   $\hat\varphi_3$ to  $\mu_j$, for some $\mu_j\not=\mu_i$, or belongs to a connected component of  $C\setminus q_i$, none of whose incident pipe edges is mapped to $\nu\mu_i$.)
Thereby we turn $q_i$ into a pair of $\mu_i$-important vertices joined by an edge.
Before accomplishing this goal we present a pair of auxiliary facts discovered in~\cite[Section 9.1.2]{F14+_towards},  both of which  play an important role in the rest of this section.

%\jk{urcite chces odkazovat na verzi v2, kdyz uz existuje v8?}
%\rf{lepsie bude odkazovat bez verzie, aj ked po osmej uz dalsi diel nie je v plane}

Let $e_1,e_2,e_3$ and $e_4$ be four distinct edges incident to $v\in C$. 
We assume that in the drawing of $\hat{\psi}_3$ the edges $e_1$ and $e_2$ do not alternate with $e_3$ and $e_4$ in the rotation at $v$. 
If $e_1\in E(C)$, let $C_1$ be the connected component of $C\setminus e_1$ containing the end vertex of $e_1$ different from $v$. If $e_1\not\in E(C)$, $C_1$ is a subgraph of $\hat{G}_3$ consisting of $e_1$.
We define $C_2,C_3$ and $C_4$ analogously.
For $l=1,2,3,4$, let  $\mu_{i_l}\in \pNeigh{C_{l}}$,  if $C_l$ is induced by $V_{\nu}$;
and let  $\mu_{i_l}$ be such that $\hat{\varphi}_3(C_{i_l})=\nu\mu_{i_l}$ otherwise.
The relation $\equiv_2$ stands for the equality $\mod 2$.
See Figure~\ref{fig:basiclemma} for an illustration.

\begin{figure}
\centering
\includegraphics[scale=0.8]{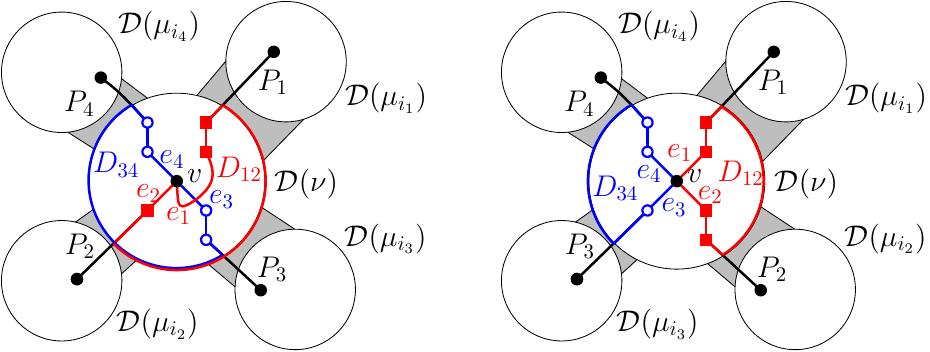}
\caption{Illustration of Claim~\ref{claim:basic} and its proof.}
\label{fig:basiclemma}
\end{figure}

\begin{claim}
\label{claim:basic}
The edges $\nu\mu_{i_{1}}$ and $\nu\mu_{i_{2}}$ do not alternate
with $\nu\mu_{i_{3}}$ and $\nu\mu_{i_{4}}$, where $\mu_{i_1},\mu_{i_2}\not=\mu_{i_3},\mu_{i_4}$, in the rotation at $\nu$, if and only if
$ \nCross{e_1}{e_3}{\hat{\psi}_3}+\nCross{e_1}{e_4}{\hat{\psi}_3}+\nCross{e_2}{e_3}{\hat{\psi}_3}+\nCross{e_2}{e_4}{\hat{\psi}_3}\equiv_2 0$.
In particular, if $i_1=i_2$ or $i_3=i_4$, we have
$ \nCross{e_1}{e_3}{\hat{\psi}_3}+\nCross{e_1}{e_4}{\hat{\psi}_3}+\nCross{e_2}{e_3}{\hat{\psi}_3}+\nCross{e_2}{e_4}{\hat{\psi}_3}\equiv_2 0$.
\end{claim}
\begin{proof}
Let $P_1, P_2, P_3$ and $P_4$ be paths in $\hat{G}_3$ of minimal length starting with $v$ then following $e_1,e_2,e_3$ and $e_4$, and ending by a pipe edge mapped by $\hat{\varphi}_3$ to $\nu\mu_{i_{1}},\nu\mu_{i_{2}},\nu\mu_{i_{3}}$ and $\nu\mu_{i_{4}}$, respectively.
Let $D_{12}$ be a closed curve in the plane obtained by concatenating parts of $P_1$ and $P_2$  
in the drawing of $\hat{\psi}_3$ inside $\mathcal{D}(\nu)$ and the part of the boundary of $\mathcal{D}(\nu)$ not crossing $P_3$. Let $D_{34}$ be a closed curve in the plane obtained by concatenating parts of $P_3$ and $P_4$ in the drawing of $\hat{\psi}_3$ inside $\mathcal{D}(\nu)$ and the part of the boundary of $\mathcal{D}(\nu)$ not crossing $P_1$.

We have $\sum_{e,e'}\nCross{e}{e'}{\hat{\psi}_3}\equiv_2 0$, where we sum over edges $e$
fully or partially contained in $D_{12}$ and $e'$ in $D_{34}$, if and only if the edges $\nu\mu_{i_{1}}$ and $\nu\mu_{i_{2}}$ do not alternate
with $\nu\mu_{i_{3}}$ and $\nu\mu_{i_{4}}$. Since $\hat{\psi}_3$ is independently even
we have $ \sum_{e,e'}\nCross{e}{e'}{\hat{\psi}_3}\equiv_2 \nCross{e_1}{e_3}{\hat{\psi}_3}+\nCross{e_1}{e_4}{\hat{\psi}_3}+\nCross{e_2}{e_3}{\hat{\psi}_3}+\nCross{e_2}{e_4}{\hat{\psi}_3}$, and the claim follows.
%The claim is a consequence of the fact that $D_0$ and $D_1$ cross an even number of times.
\end{proof}

Refer to Figure~\ref{fig:rotation}.
Let $v_i$ be a vertex on $P_i$ which is formed by $\mu_i$-important edges.
Let $E(v_i)$ be a nonempty set of edges incident to $v_i$ consisting of  pipe edges mapped by $\hat{\varphi}_3$ to $\nu\mu_i$, and  edges joining $v_i$ with trees $T$ of $C\setminus v_i$ such that $\mu_i\in \pNeigh{T}$.
Note that $E(v_i)$ is not uniquely defined due to the edges  incident to $v_i$ that lie on $P_i$ for which we can choose whether we put them in $E(v_i)$.
 
 Let $E'(v_i)$ be the nonempty set of edges incident to $v_i$
 and not in $E(v_i)$. Thus,  $E(v_i)$ and $E'(v_i)$ form a non-trivial partition of edges incident to $v_i$, since $P_i$ is formed by $\mu_i$-important edges.
Furthermore, due to~\eqref{it:pruning} every edge in $E'(v_i)$ is either a pipe edge mapped by $\hat{\varphi}_3$ to $\nu\mu_j$ for some $\mu_j\not=\mu_i$, or joins $v_i$ with a tree $T$ of $C\setminus v_i$  such that $\mu_j\in \pNeigh{T}$, where $\mu_j\not=\mu_i$.

\begin{claim}
\label{claim:split}
There exists a subset of the set of edges incident to $v_i$
such that by pulling the edges in the subset over $v_i$ in $\hat{\psi}_3$
results in a drawing in which every edge in $E(v_i)$ crosses every edge in $E'(v_i)$ an even number of times.
\end{claim}

Claim~\ref{claim:split} (proved below)  will allow us to apply the following claim.

\begin{claim}
\label{claim:splitting}
Let $(G,H,\varphi,\psi_0)$ be such that $(G,H,\varphi)$ is an instance and $\psi_0$ is its $\mathbb{Z}_2$-approximation.
Let $v=v_i$ be a vertex of $V(G)$ on $P_i$.
Let $A=E(v)$ be the subset of edges incident to $v$ appearing consecutively in the rotation at $v$ and let  $B$ the set of remaining edges incident to $v$. Suppose that every edge in $A$ crosses every edge in $B$ an even number of times in $\psi_0$. 
Let $(\hat{G},H,\hat{\varphi},\hat{\psi}_0)$ be obtained from $(G,H,\varphi,\psi_0)$ by performing a vertex $A$-split on $v$ in $\psi_0$ such that for the resulting edge $e$ we have $\hat{\varphi}(e)={\varphi}(v)$.
Then $(\hat{G},H,\hat{\varphi},\hat{\psi}_0)$ is a clone of 
$(G,H,\varphi,\psi_0)$.
\end{claim}
\begin{proof}
Similarly as for Claim~\ref{claim:contraction}, we need to verify two implications of the clone definition. First, in an  embedding of $\hat{G}$ compatible with $\hat{\psi}_0$ contracting $e$ yields an embedding of $G$ compatible with $\hat{\psi}$. Second, performing $A$-split on $v$ in an embedding of $G$ approximating $\varphi$ results in an embedding of $\hat{G}$ approximating $\hat{\varphi}$, since edges of $A$ and $B$ do not interleave in the rotation at $v$. Indeed, due to the definition of $A=E(v)$ such an interleaving would necessarily lead to an edges crossing (contradiction).
\end{proof}

\begin{proof}[proof of Claim~\ref{claim:split}]
We correct the rotation at $v_i$ by flips so that the edges in $E(v_i)$ appear consecutively in the rotation at $v_i$.
Since we assume that the edges of $E(v_i)$ appear consecutively in the rotation at $v_i$, the same holds for $E'(v_i)$.
The claim is easy if the size of $E(v_i)$ is one, since then we just pull over $v_i$ the edges in $E'(v_i)$ incident to $v_i$ and crossing the edge in $E(v_i)$ an odd number of times.
Similarly we are done if $E'(v_i)$ contains only one edge.
Let $e_1,e_2\in E(v_i)$ and let $e_3,g_4\in E'(v_i)$ be four distinct edges incident to $v_i$.

By Claim~\ref{claim:basic}, we have
$ \nCross{e_1}{e_3}{\hat{\psi}_3}+\nCross{e_1}{e_4}{\hat{\psi}_3}+\nCross{e_2}{e_3}{\hat{\psi}_3}+\nCross{e_2}{e_4}{\hat{\psi}_3}\equiv_2 0$.
Thus, for every pair of edges $e_{1}$ and $e_{2}$ in $E(v_i)$ either $\nCross{e_1}{e_3}{\hat{\psi}_3}\equiv_2 \nCross{e_2}{e_3}{\hat{\psi}_3}$ for all $e_3\in E'(v_i)$,
or $\nCross{e_1}{e_3}{\hat{\psi}_3}\not\equiv_2 \nCross{e_2}{e_3}{\hat{\psi}_3}$ for all $e_3\in E'(v_i)$. Indeed, for $e_3,e_4\in E'(v_i)$ violating the claim we have
$\nCross{e_1}{e_3}{\hat{\psi}_3}+\nCross{e_2}{e_3}{\hat{\psi}_3} \not \equiv_2 \nCross{e_1}{e_4}{\hat{\psi}_3}+ \nCross{e_2}{e_4}{\hat{\psi}_3}.$ Hence,
$\nCross{e_1}{e_3}{\hat{\psi}_3}+\nCross{e_2}{e_3}{\hat{\psi}_3} +\nCross{e_1}{e_4}{\hat{\psi}_3}+ \nCross{e_2}{e_4}{\hat{\psi}_3}\not\equiv_2 0.$ 
It follows that we can partition $E(v_i)$ into two parts 
$E_1(v_i)$ and $E_2(v_i)$ such that for every $e_1$ and $e_2$ coming from the same part we have $\nCross{e_1}{e_3}{\hat{\psi}_3}\equiv_2 \nCross{e_2}{e_3}{\hat{\psi}_3}$ for all $e_3\in E'(v_i)$, and for every $e_1$ and $e_2$ coming from different parts we have $\nCross{e_1}{e_3}{\hat{\psi}_3}\not \equiv_2 \nCross{e_2}{e_3}{\hat{\psi}_3}$ for all $e_3\in E'(v_i)$. 

By pulling every edge of $E_1(v_i)$ over $v_i$ we obtain $\nCross{e_{1}}{e_3}{\hat{\psi}_3}\equiv_2 \nCross{e_{2}}{e_3}{\hat{\psi}_3}$ for all $e_3\in E'(v_i)$
and $e_{1},e_{2}\in E(v_i)$.
Thus, we obtained a drawing in which for all $e_3\in E'(v_i)$ either $\nCross{e_{1}}{e_3}{\hat{\psi}_3} \equiv_2 1$ for all $e_1\in E(v_i)$ or $\nCross{e_{1}}{e_3}{\hat{\psi}_3} \equiv_2 0$ for all $e_1\in E(v_i)$.
By pulling every $e_3\in E'(v_i)$, for which $\nCross{e_{1}}{e_3}{\hat{\psi}_3} \equiv_2 1$ for all $e_1\in E(v_i)$, over $v_i$ in the obtained drawing, we arrive at a desired drawing of $\hat{G}_3$ and that concludes the proof.
\end{proof}

\begin{remark}
We present an alternative proof strategy for the proof of the previous lemma whose successful execution might lead to a significant simplification of the proof of Claim~\ref{claim:normal_form}. Let $\mathcal{P}$ denote a set of $|E(v_i)|$ minimal paths emanating from $v_i$, whose respective first edges are the edges in $E(v_i)$ and the last end vertices are mapped by $\hat{\varphi}_3$ to $\mu_i$. Analogously, let $\mathcal{P}'$ denote a set of $|E'(v_i)|$ minimal paths emanating from $v_i$, whose first edges are the edges in $E'(v_i)$ and the last end vertices are  mapped by $\hat{\varphi}_3$ neither to $\nu$ nor $\mu_i$.
The existence of $\mathcal{P}$ and $\mathcal{P}'$ is guaranteed by the  properties of $E(v_i)$ and $E'(v_i)$, respectively.

 It is perhaps possible to prove the previous lemma by considering a generic continuous deformation argument taking place inside $\mathcal{D}(\nu)$, during which we eliminate all the crossings between all pairs of paths $P$ and $P'$, where $P\in \mathcal{P}$ and $P'\in \mathcal{P}'$. The resulting drawing does not have to be independently even, but it can be turned into an independently even drawing by performing appropriate finger moves taking place inside $\mathcal{D}(\nu)$. However, performing  finger moves  can destroy the desired property that every edge in $E(v_i)$ crosses every edge in $E'(v_i)$ an even number of times. Therefore an additional argument is required to show that we can make the drawing independently even while still maintaining this property. To make this work does not seem to be straightforward, even if we use the fact that $C$ is a tree.
\end{remark}

When using Claim~\ref{claim:basic} and Claim~\ref{claim:split} we tacitly assume the fact that~(\ref{it:tree}) and~(\ref{it:pruning}) hold which makes the claims applicable.
In the following paragraph we process the path $P_i$ consisting of $\mu_i$-important edges or a single $\mu_i$-important vertex so that $P_i$ can be contracted to an even vertex $q_i$ as desired which is then subsequently split into an edge. To this end we first make all the vertices on $P_i$ even.

\paragraph{Processing $P_i$.}
Suppose that $v_i$ is a vertex of $P_i$, where $P_i$ can be also a trivial path consisting of a single vertex.
%(If $v_i$ is an end vertex of $P_i$ a simpler analogous argument applies.)

\begin{figure}
\centering
\subfloat[]{
\includegraphics[scale=0.8]{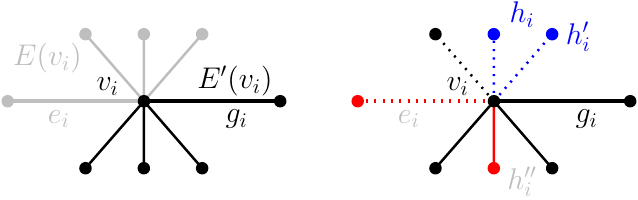}
\label{fig:rotation}}
\subfloat[]{
\hspace{2pt}
\includegraphics[scale=0.8]{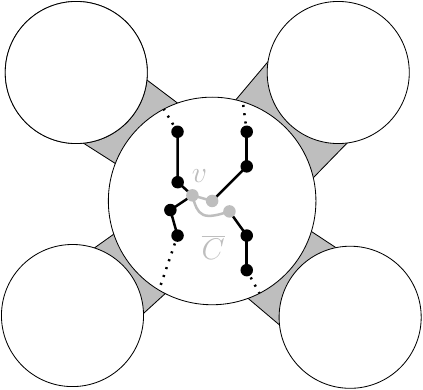}
\label{fig:rotation2}}
\caption{(a) The edges incident to $v_i$. On the left, the edges in $E(v_i)$ are colored grey
and the edges in $E'(v_i)$ are colored black. On the right, every dotted edge crosses every non-dotted edge an even number of times. (b) Every four edges incident to the vertex $v$ in $\overline{C}$ can be made cross each other an even number of times by flips.}
\end{figure}

Refer to Figure~\ref{fig:rotation} (left).
Let $E_0(v_i)$ be the set of all edges $e$ incident to $v_i$, such that $e$ is either a pipe edge mapped by $\hat{\varphi}_3$ to $\nu\mu_i$, or joins $v_i$ with a tree of $C\setminus v_i$ incident to pipe edge(s) mapped by $\hat{\varphi}_3$ to $\nu\mu_i$.
If $P_i$ is a trivial path  or $v_i$ is an end vertex of $P_i$ let $E(v_i):=E_0(v_i)$. Otherwise, let $E(v_i):=E_0(v_i)\setminus g_i$,
where $g_i\in E(P_i)$ is incident to $v_i$ and chosen arbitrarily.
 Let $E'(v_i)$ be the set of remaining edges
incident to $v_i$. The set $E'(v_i)$ is non-empty since 
the pipe degree of $C$ is at least 2 and $P_i$ is formed by $\mu_i$-important edges.
If $E(P_i)\cap E(v_i)\not=\emptyset$, let $e_i$ be the edge of $P_i$ in $E(v_i)$.
Note that if $E(P_i)\cap E'(v_i)\not=\emptyset$, the edge  $g_i \in E(P_i)$ is in $E'(v_i)$. Also note that if $g_i$ exists also $e_i$ exists.
We correct the rotation at $v_i$ by flips so that  \\

 (*)  all the edges in $E(v_i)\setminus e_i$ appear consecutively in the rotation at $v_i$ (with $e_i$ included if $e_i$ exists), and all the edges in $E'(v_i)\setminus g_i$ appear consecutively in the rotation at $v_i$ (with $g_i$ included if $g_i$ exists).\\

We achieve by flips of edges incident to $v_i$ that \\

 $(\circ)$ every edge in $E(v_i)$ crosses every edge in $E'(v_i)$ an even number of times and that every edge incident to $v_i$ crosses every edge of $P_i$ incident to $v_i$ an even number of times, while maintaining~(*).   \\

Suppose that $(\circ)$ holds, we show that then we can apply the vertex splits to $v_i$ thereby replacing it with an even vertex on $P_i$, which will produce a clone by Claim~\ref{claim:splitting}.
Due to~(*), the following operation is possible and will lead to an independently even drawing. If $E(v_i)\setminus e_i\not=\emptyset$, we apply the vertex $(E(v_i)\setminus e_i)$-split to $v_i$. In $\hat{\psi}_3$, we obtain a pair of vertices $u_i$ and $w_i$ that replace $v_i$ and that are joined by a short crossing-free edge $e_i'$ such that $u_i$ is incident to all the edges corresponding to the edges of $E(v_i)\setminus e_i$ and $w_i$ is incident to all the remaining edges. Thus, $w_i$ replaces $v_i$ on $P_i$. Analogously, if $E'(v_i)\setminus g_i\not=\emptyset$, we apply a vertex $(E'(v_i)\setminus g_i)$-split to $w_i$ or $v_i$ (depending on whether $v_i$ was split before) thereby obtaining $y_i$ and $z_i$ joined by a short crossing-free edge such that $z_i$ is incident to all the edges corresponding to $E'(v_i)\setminus g_i$ and $z_i$ is incident to the remaining edges. Thus, $z_i$ replaces $w_i$ or $v_i$ on $P_i$. Note that the vertex $z_i$ is even as required. Moreover, note that all the edges on $P_i$ are still $\mu_i$-important. Therefore it does not matter if we first achieve $(\circ)$ for all the vertices on $P_i$ or perform the vertex split(s) as soon as $(\circ)$  holds for a particular vertex. By Claim~\ref{claim:splitting}, splitting  $v_i$ produces a clone.

We show how to achieve with flips that $(\circ)$ holds.
 By Claim~\ref{claim:split} applied to $E(v_i)$, we obtain that every edge in $E(v_i)$ crosses every edge in $E'(v_i)$ an even number of times. Note that if none of $e_i$ and $g_i$ exists, we proved $(\circ)$, and that $e_i$ and $g_i$ cross each other evenly, if they both exist. 
 Thus, we assume that $e_i$ exists. If $\nCross{h_i}{e_i}{\hat{\psi}_3}\equiv_2 1$, for every $h_i\in E(v_i)\setminus e_i$, we flip every edge $E(v_i)\setminus e_i$ with $e_i$, while maintaining~(*) in the end.
A symmetric argument applies to $E'(v_i)$ and $g_i$.
Refer to Figure~\ref{fig:rotation} (right)
 Suppose that there exist at least two edges $h_i$ and $h_i'$ in $E(v_i)\setminus e_i$, such that the parity of crossings between $h_i$ and $e_i$ is not the same as between $h_i'$ and $e_i$. We presently show that this leads to $E'(v_i)=\emptyset$. Indeed, by applying Claim~\ref{claim:basic} to the four-tuple consisting of $h_i,h_i'$, $e_i$ and an edge $h_i''$ in $E'(v_i)$ (playing roles of $e_1,e_2,e_3$ and $e_4$ in this order) we have 
  $\nCross{h_i}{e_i}{\hat{\psi}_3}+\nCross{h_i'}{e_i}{\hat{\psi}_3}+\nCross{h_i}{h_i''}{\hat{\psi}_3}+
  \nCross{h_i'}{h_i''}{\hat{\psi}_3}\equiv_2 0$, but $\nCross{h_i}{h_i''}{\hat{\psi}_3}\equiv_2
  \nCross{h_i'}{h_i''}{\hat{\psi}_3}\equiv_2 0$. Hence, in the case when every $h_i\in E(v_i)\setminus e_i$ does not already intersect $e_i$ an even number of times, either we flip with $e_i$ every $h_i\in E(v_i)$ crossing $e_i$ an odd number of times, when
 $E'(v_i)=\emptyset$, or every $h_i\in E(v_i)\setminus e_i$ intersects $e_i$ an odd number times.
The latter case was already taken care of. A symmetric argument  applies to $E'(v_i)$ and $g_i$, which concludes the  proof of  $(\circ)$.
Hence, we can assume that all the vertices of $P_i$ are even, and we contract $P_i$ to an even vertex $q_i$. 
By Claim~\ref{claim:contraction}, contracting $P_i$ results in a  clone, see Figure~\ref{fig:norm1}.

%
%
% By applying Claim~\ref{claim:basic} to four-tuples consisting of $e_i,g_i$, and a pair of edges in $E(v_i)\setminus e_i$ (playing roles of $e_0,e_1,e_2$ and $e_3$ in this order) we obtain that either all the edges in $E(v_i)\setminus e_i$ cross the edge $e_i$ an even number of times or an odd
% number of times. Suppose that there exists an edge $h_i$ incident to $v_i$ not in $E(v_i) \cup \{g_i\}$. By applying Claim~\ref{claim:basic} to a four-tuple consisting of $h_i,e_i,g_i$ and an edge $h_i'\in E(v_i)\setminus \{e_i\}$ we obtain that $h_i$ crosses $g_i$ oddly if and only if $h_i'$ crosses $e_i$ oddly.
% 
% Suppose that $h_i'$ crosses $e_i$ evenly. By the two previous facts, it follows that $h_i$ (if it exists) crosses $g_i$ evenly, and this claim is independent of the choice of $h_i'$ or $h_i$.
%Thus, $e_i$ and $g_i$ cross other edges incident to $v_i$ evenly, the edges in $E(v_i)$ cross other edges incident to $v_i$ evenly,
% and $e_i$ and $g_i$ split the rotation at $v_i$ into two intervals one of which consists of the edges in $E(v_i)\setminus e_i$. 
% 
%
%If $h_i'$ crosses $e_i$ oddly, similarly as above it follows that $h_i$ (if it exists) crosses $g_i$ oddly, and this claim is independent of the choice of $h_i'$ or $h_i$.
%Thus, in this case we flip all the edges in $E(v_i)\setminus e_i$ with $e_i$, and similarly we flip all the edges in $E'(v_i)\setminus g_i$ with $g_i$. Now, we can apply the argument of the previous paragraph in order to replace $v_i$ with an even vertex on $P_i$. 
%Hence, we can assume that every vertex of $P_i$ is even in $\hat{\psi}_3$. \\

 \begin{figure}
\centering
\subfloat[]{
\includegraphics[scale=0.7]{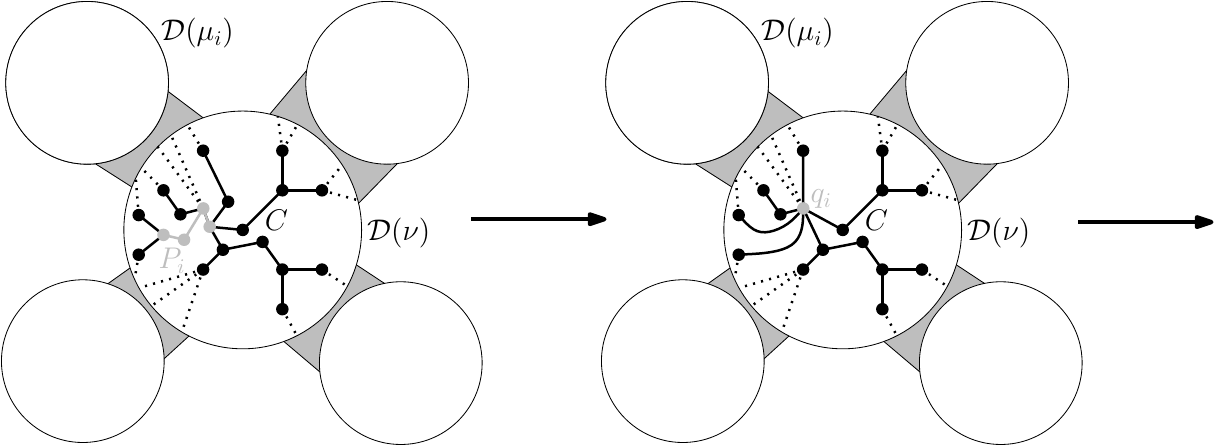}
\label{fig:norm1}}
\hspace{100pt}
\subfloat[]{\includegraphics[scale=0.7]{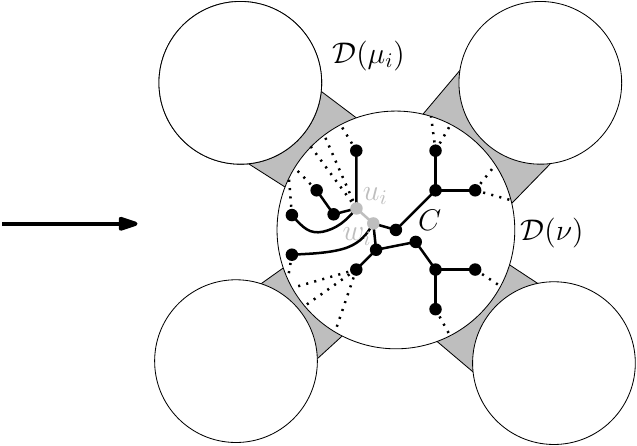}
\label{fig:norm2}}
\hspace{2pt}
\subfloat[]{
\includegraphics[scale=0.7]{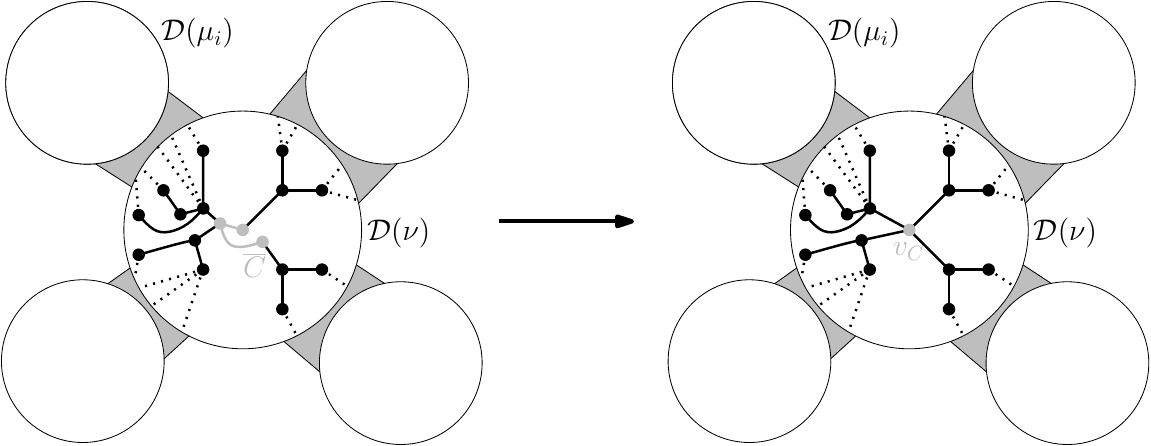}
\label{fig:norm3}}
\caption{(a) Path $P_i$ formed by $\mu_i$-important edges (left); contraction of $P_i$ to $q_i$ (right). (b) Splitting $q_i$ into $u_iw_i$. (c) After contracting $P_i$ composed of $\mu_i$-important edges. The subtree $\overline{C}$ that can be turned into an even subgraph by flips and subsequently contracted into a vertex $v_C$.}
\end{figure}

We are finally in a position to establish property~$(\diamond)$ for $i$. In the following, we first show that splitting $q_i$ is indeed possible and then  carefully argue that by splitting it we do not violate~$(\diamond)$ for $j<i$.

 Refer to Figure~\ref{fig:norm2} and~Figure~\ref{fig:twist}.
Similarly as above let $E(q_i)$ be the set of all edges incident to $q_i$ that are either pipe edges mapped by $\hat{\varphi}_3$ to $\nu\mu_i$, or are joining $q_i$ with trees of $C\setminus q_i$ incident to pipe edge(s) mapped by $\hat{\varphi}_3$ to $\nu\mu_i$.
By Claim~\ref{claim:basic}, the edges in $E(q_i)$, appear consecutively in the rotation at $q_i$.
Indeed, a four-tuple of edges violating our previous claim, is a four-tuple violating 
Claim~\ref{claim:basic}.
Hence, by applying a vertex split we can turn the vertex $q_i$ into 
vertices $u_i$ and $w_i$ joined by a short crossing-free edge $e_i$ (overriding previously defined $e_i$) such that $u_i$ is incident to all the edges corresponding to the edges of $E(q_i)$ as indicated in the figure and $w_i$ is incident to all the remaining edges. Thereby we established property~$(\diamond)$ for $i$.

After performing operations described in previous paragraphs successively in $C$ for every $j\in J$,
and doing likewise for every connected component induced by a cluster of $(\hat{G}_3,\hat{H}_3,\hat{\varphi}_3)$,
we obtain $(\hat{G}_4,\hat{H}_4,\hat{\varphi}_4)$. 
Note that edges obtained by vertex splits inside $V_{\nu}$'s can be contracted in an approximation of $(\hat{G}_4,\hat{H}_4,\hat{\varphi}_4)$, and hence, we can safely perform them. Also the path $P_i$ is disjoint from all the previously obtained $u_{j}$'s, $j<i$,
due to the following. None of the edges incident to such $u_{j}$
is $\mu_i$-important, and the edge corresponding to $u_{j}w_{j}$, i.e., the edge obtained from $u_{j}w_{j}$ by successive  contractions  of paths through $w_j$,
is either undirected or directed away from $u_j$ if we direct it 
 towards a vertex of $C$ incident to an edge mapped by $\hat{\varphi}_3$ to $\nu\mu_i$ (as in the proof of Claim~\ref{claim:impPath}).
Note that the edge $u_{j}w_{j}$ is undirected only if $\pdeg{C}=2$ in which case property $(\diamond)$ holds for $i$, which must be equal to 2,
already  after processing $P_j$, where $j=1$. \\

\begin{figure}
\centering
\includegraphics[scale=0.7]{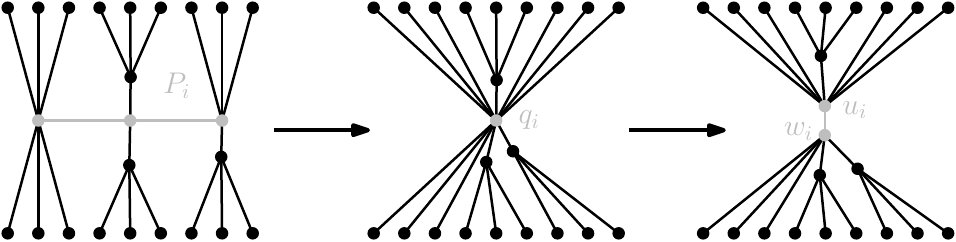}
\caption{Contracting $P_i$ into $q_i$ and splitting $q_i$ into $u_iw_i$.}
\label{fig:twist}
\end{figure}

Refer to Figure~\ref{fig:norm3}.
Let $e_i$ be as defined in the previous paragraph or an edge corresponding to such $e_i$ in the sense as defined in Section~\ref{sec:tools} in the context of contractions and splits. Thus, $e_i$ is an edge whose both end vertices are $\mu_i$-important.
After processing all the paths consisting of $\mu_i$-important edges in $C$ we obtain that   case (3) of Lemma~\ref{claim:impPath} applies to every $\mu_i\in \pNeigh{C}$.
Suppose that $\pdeg{C}=2$. We subdivide the edge $u_1w_1$ connecting the pair of important vertices in $C$ by a vertex $v_C$ drawn very close to $u_1$ so that the resulting modification of $\hat{\psi}_4$ is still independently even. Otherwise, 
let $\overline{C}$ be a connected component of $C\setminus \{e_i|  \ \mu_i\in \pNeigh{C} \}$, where $C$ is induced by $V_{\nu}[\hat{G}_4]$, such that $\overline{C}$ is incident to all $e_i$'s. The fact that such a component necessarily exists is implied by the folklore result claiming that trees have 2-Helly property\footnote{A family of sets $\mathcal{S}$ has 2-Helly property if for every subset  
$\mathcal{S}'\subseteq \mathcal{S}$ holds the following. The intersection $\bigcap_{S\in \mathcal{S}'}S\not=\emptyset$, if for every $S_1,S_2\in \mathcal{S}'$ it holds that $S_1 \cap S_2\not=\emptyset$.}
as follows.
We consider the set of trees $C_1(e_i)$, for $i=1,\ldots, \pdeg{C}$, not containing $u_i$.
Since every pair of trees $C_1(e_i)$ and $C_1(e_j)$, for $i\not=j$, intersect (in $\hat{G}_4$),
their common intersection $\overline{C}$ is also non-empty. Furthermore, 
$C_2(e_i)$'s containing $u_i$'s are pairwise disjoint, and therefore
 $E(\overline{C})$, $\{e_1,\ldots, e_{\pdeg{C}}\}$, $E(C_2(e_1))$,\ldots,  $E(C_2(e_{\pdeg{C}}))$, form a partition of the edge set of $C$.

Next, we process $\overline{C}$ so that it consists only of a single even vertex. 

\paragraph{Processing $\overline{C}$.}
\begin{itemize}
\item {\bf Single-vertex case.}
If $\overline{C}$ is a single vertex $v_C$, this vertex must be $\mu_i$-important for every $\mu_i$ such that an edge incident to $C$ is mapped to $\nu\mu_i$.
If $deg(v_C)\le 3$ we can make $v_C$ even by flips
in $\hat{\psi}_4$.
Otherwise, we apply Claim~\ref{claim:basic} and Claim~\ref{claim:4} to every 4-tuple of edges (and 4-tuples of corresponding neighbors of $\nu$)  incident to $v_C$ as follows.
The edges incident to $v_C$ are precisely $e_1,\ldots, e_{\pdeg{C}
}$. W.l.o.g we assume that $\nu\mu_1,\nu\mu_2,\nu\mu_3$ and $\nu\mu_4$ appear in this order in the rotation at $\nu$. We correct the rotation at $v_C$ by flips so that $e_1$ and $e_2$ do not alternate with $e_3$ and $e_4$; and so that $e_1$ and $e_2$ cross an even number of times and the same holds for $e_3$ and $e_4$. By Claim~\ref{claim:basic}, $ \nCross{e_1}{e_3}{\hat{\psi}_4}+\nCross{e_1}{e_4}{\hat{\psi}_4}+\nCross{e_2}{e_3}{\hat{\psi}_4}+\nCross{e_2}{e_4}{\hat{\psi}_4}\equiv_2 0$. 
If $v_C$ is not even, up to symmetry, there are four cases to consider:
\begin{itemize}
\item
$ \nCross{e_1}{e_3}{\hat{\psi}_4}\equiv_2\nCross{e_1}{e_4}{\hat{\psi}_4}\equiv_2  1 \ \mathrm{and} \ \nCross{e_2}{e_3}{\hat{\psi}_4}\equiv_2\nCross{e_2}{e_4}{\hat{\psi}_4}\equiv_2 0$
\item
$ \nCross{e_1}{e_3}{\hat{\psi}_4}\equiv_2\nCross{e_2}{e_4}{\hat{\psi}_4}\equiv_2  1 \ \mathrm{and} \ \nCross{e_2}{e_3}{\hat{\psi}_4}\equiv_2\nCross{e_1}{e_4}{\hat{\psi}_4}\equiv_2 0$
\item
$ \nCross{e_1}{e_3}{\hat{\psi}_4}\equiv_2\nCross{e_2}{e_4}{\hat{\psi}_4}\equiv_2  0 \ \mathrm{and} \ \nCross{e_2}{e_3}{\hat{\psi}_4}\equiv_2\nCross{e_1}{e_4}{\hat{\psi}_4}\equiv_2 1$
\item
$ \nCross{e_1}{e_3}{\hat{\psi}_4}\equiv_2\nCross{e_2}{e_4}{\hat{\psi}_4}\equiv_2  \nCross{e_2}{e_3}{\hat{\psi}_4}\equiv_2\nCross{e_1}{e_4}{\hat{\psi}_4}\equiv_2 1$
\end{itemize}

\begin{figure}
 \centering
\includegraphics[scale=0.7]{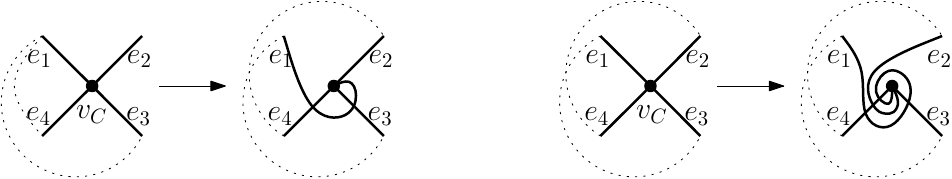}
\caption{Correcting the rotations at $v_C$ by flips so that $v_C$ becomes an even vertex. The dotted arcs join pairs of end pieces of edges that cross an odd number of times.}
\label{fig:local}
\end{figure}

We observe that in every case the rotation at $v_C$ can be corrected by flips so that $v_C$ becomes even, and thus, Claim~\ref{claim:4} implies that $v_C$ can be made even by flips; see Figure~\ref{fig:local} illustrating the redrawing in the first two cases. Indeed, the above argument applies to any 4-tuple of edges incident to $v_C$.

\item {\bf $\overline{C}$ contains at least two vertices.}
We similarly show that we  can apply flips in $\hat{\psi}_4$ to the adjacent edges in $\overline{C}$ so that every edge of $\overline{C}$ crosses every other edge of $\hat{G}_4$ an even number of times.
 To this end let $e\in E(\overline{C})$. Let ${C_1}(e)$ and ${C_2}(e)$ be the pair of trees obtained from ${C}$ after removing $e$.
Suppose that $\mu_1,\mu_2\in \pNeigh{C_1(e)}$ and 
$\mu_3,\mu_4\in \pNeigh{C_2(e)}$, where $\mu_1\not= \mu_2$ and $\mu_3\not= \mu_4$. 

\begin{claim}
\label{claim:rotation}
In the rotation at $\nu$ in $\hat{H}_4=H$ the edges $\nu\mu_1,\nu\mu_2$ do not alternate with
$\nu\mu_3,\nu\mu_4$.
\end{claim}
\begin{proof}
We contract $e$ in $\hat{\psi}_4$. Let $v$ be the vertex that $e$ was contracted into. Let $e_1,e_2,e_3$ and $e_4$ (overriding previous $e_j$'s) denote the four edges incident  to $v$ such that $e_j$, for $j=1,2,3,4$, joins $v$ with a connected component $C_j$ of $C\setminus e_j$ such that $\mu_j\in \pNeigh{C_j}$.
The edge $e_j$ is uniquely determined for every $j$, since $\overline{C}$ contains exactly one $\mu_j$-important vertex of $C$ and this vertex separates $\overline{C}$ from all the vertices of $C$ incident to edges $f$ for which $\hat{\varphi}_4(f)=\mu_j$.
Note that $\nCross{e_1}{e_3}{\hat{\psi}_4}\equiv_2\nCross{e_1}{e_4}{\hat{\psi}_4}\equiv_2  \nCross{e_2}{e_3}{\hat{\psi}_4}\equiv_2\nCross{e_2}{e_4}{\hat{\psi}_4}\equiv_2 0$.
Hence, the claim follows by Claim~\ref{claim:basic}.
\end{proof}

Refer to Figure~\ref{fig:rotation2}.
Consider a vertex $v$ of $\overline{C}$ and a 4-tuple of incident edges $e_1,e_2,e_3$ and $e_4$ to $v$ (again overriding previous $e_j$'s).
Note that every edge  $e_i$, for $i=1,\ldots,4$, belongs to $C$ by the construction of $\overline{C}$.
Let $C_1(e_i)$, for $i=1,\ldots,4$, be such that $v\not\in V(C_1(e_i))$. By the construction of $\overline{C}$ we also have that if both $\pNeigh{C_1(e_i)}\cap \pNeigh{C_1(e_j)}\not=\emptyset$ then $i=j$. Moreover, by Claim~\ref{claim:rotation} for every
pair $1\le i< j \le 4$, the edges joining $\nu$ with vertices in $\pNeigh{C_1(e_i)}$ do not alternate 
in the rotation at $\nu$ with the edges joining it with $\pNeigh{C_1(e_j)}$.
Therefore by Claim~\ref{claim:basic}, every 4-tuple of edges incident to a vertex $v$ of $\overline{C}$ can be made even by flips at $v$ by the same  argument as in the case when $\overline{C}$ consists of a single vertex. It follows that the vertex $v$ of $\overline{C}$ can be made even by flips by Claim~\ref{claim:4}. Thus, by Claim~\ref{claim:contraction} contracting $\overline{C}$ into a  single vertex $v_C$ yields an instance which is a clone of the instance before the contraction. 
%Finally, we split $v_C$ so that its degree is equal to the pipe degree of $C$ and so that the conditions imposed on $C$ in the subdivided normal form are satisfied.
%This is possible by Claim~\ref{claim:splitting}.\\
 \end{itemize}
 
Doing likewise for all such $\overline{C}$ we arrive at $(\hat{G}_5,\hat{H}_5,\hat{\varphi}_5)$ in the subdivided normal form with $V_s=\{v_C| \ C$ is a connected component of $G[V_{\nu}]$ for some $i \}$.
Indeed, (\ref{it:first}) and~(\ref{it:second}) in the definition of the subdivided normal form are immediately satisfied by $(\hat{G}_5,\hat{H}_5,\hat{\varphi}_5)$ and $V_s$. In particular, (\ref{it:second}) follows from~(\ref{it:tree}).
% The second part of (\ref{it:first}) follows from~(\ref{it:pruning}).
In order to prove~(\ref{it:first}) it is enough to observe that a connected component $C$ of $\hat{G}_5[V\setminus V_s]$ mapped by $\hat{\varphi}_5$ to at least two edges of $\hat{H}_5$ must contain a vertex $v_s$ of $V_s$. However, this is impossible since $v_s$ would be a cut vertex of $C$ by the construction.     
The rest of the properties, in particular, the properties of vertices $v_s\in V_s$ follow immediately from the construction. 
Finally, by the construction also every vertex $v_s\in V_s$  is even in $\hat{\psi}_5$ and that concludes the proof.
\end{proof}

\section{Derivative of a \texorpdfstring{$\mathbb{Z}_2$}{Z2}-approximation}
\label{sec:derivative_z2approx}

\begin{figure}
\centering
\subfloat[]{\label{fig:derivating}
\includegraphics[scale=0.7]{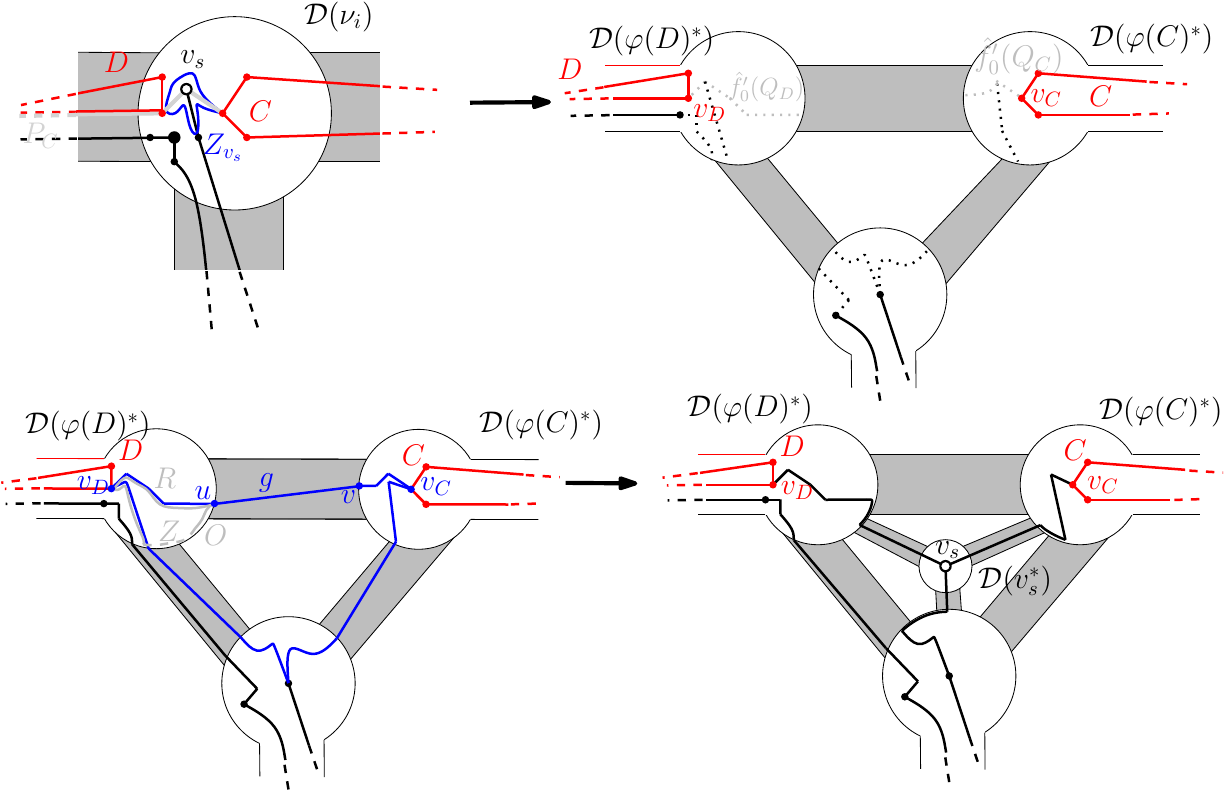}} \\
\subfloat[]{\label{fig:derivating2}
\includegraphics[scale=0.7]{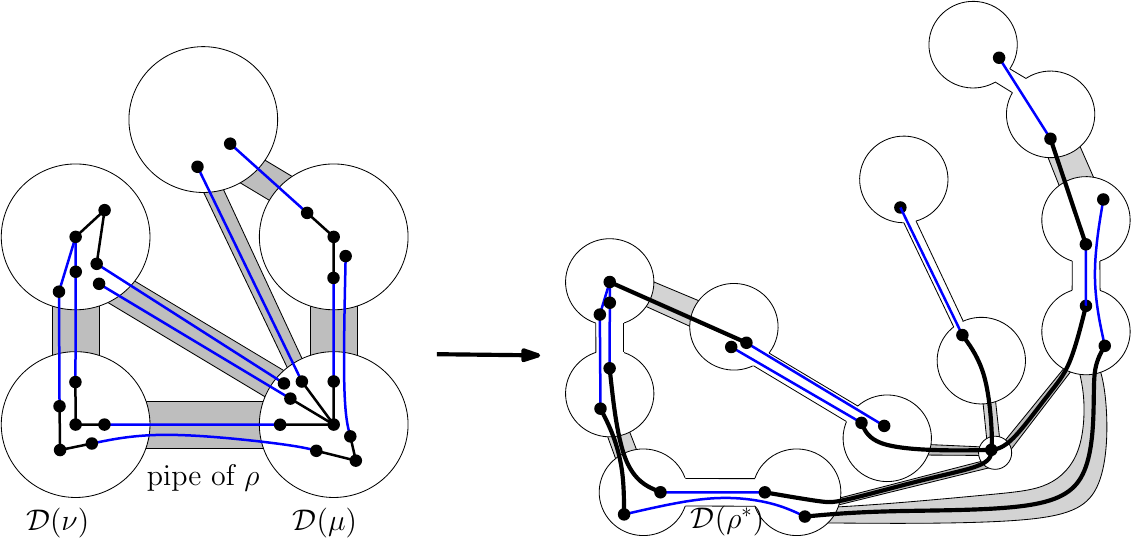} }
\caption{(a) Derivating a $\mathbb{Z}_2$-approximation $\psi_0$, local view. The first step of the construction of $\psi_0'$, in which the edges connecting copies of parts of the original $\mathbb{Z}_2$-approximation are drawn along dotted parts not included in the resulting $\mathbb{Z}_2$-approximation of the derivative (top, and bottom left). The second step of the construction $\psi_0'$, in which we use the part of $g$ inside $\mathcal{D}(\varphi(D)^*)$ to draw an edge incident to $v_s$ (bottom). (b) Derivating a $\mathbb{Z}_2$-approximation $\psi_0$, global view. The blue parts are mapped homeomorphically from the $\mathbb{Z}_2$-approximation of the original instance, on the left, to the $\mathbb{Z}_2$-approximation of its derivative, on the right.}
\end{figure}

Let $\psi_0$ be a {$\mathbb{Z}_2$-approximation} of $(G,H,\varphi)$ in the normal form.

The \emphh{$\mathbb{Z}_2$-derivative} $\psi_0'$ of $\psi_0$ is a {$\mathbb{Z}_2$-approximation} of $(G,H,\varphi)'=(G',H',\varphi')$ constructed as follows. (Similarly as above $\psi_0^{(i)}$ denotes $(\psi_0^{(i-1)})'$.)
We construct $\psi_0'$ in two steps.

\paragraph*{Step 1.}
Refer to Figure~\ref{fig:derivative05} (right), and Figure~\ref{fig:derivating} (top).

The goal in Step 1 is to construct a $\mathbb{Z}_2$-approximation $\hat{\psi}_0'$ of an auxiliary instance $(\hat{G}',\hat{H}',\hat{\varphi}')$ that is a slight modification of $(G',H',\varphi')$, in which
all $v_s\in V_s$ are eliminated by generalized \yDelta operations.

To this end we first construct an auxiliary instance $(\hat{G},H,\hat{\varphi})$,
where $\hat{G}$ is obtained from $G$ by applying the generalized \yDelta operation to every vertex in $V_s$, which must be of degree at least $3$, where the rotation is taken from $\psi_0$. We put $\hat{\varphi}(v):=\varphi(v)$ for all $v\in V(\hat{G})$.
Let $\hat{\psi}_0$ be the $\mathbb{Z}_2$-approximation of $(\hat{G},H,\hat{\varphi})$ obtained from $\psi_0$ by drawing the newly introduced edges along the images of the corresponding edges incident to $v_s$ in $\psi_0$. In particular, we draw a new edge $uv$ by closely following the curve $\psi_0(uv_sv)$. The resulting drawing is independently even, since every vertex $v_s\in V_s$ can be assumed to be even in $\psi_0$ by Claim~\ref{claim:normal_form}. 

% but also by the following argument. Let us bring the instance $(G,H,\varphi)$ in the corresponding subdivided normal form. By the definition of the subdivided normal form, every pair of edges incident to $v_s$ join $v_s$ with connected components $C$ and $D$ of ${G}[V\setminus V_s]$ mapped by ${\varphi}$ to different edges of $H$. Hence, by Claim~\ref{claim:basic} every four-tuple of edges incident to $v_s$ can be made cross each other an even number of times by flips.
% Thus, by Claim~\ref{claim:4} we can make $v_s$ even by flips.
 
 \begin{figure}
\centering
\includegraphics[scale=0.7]{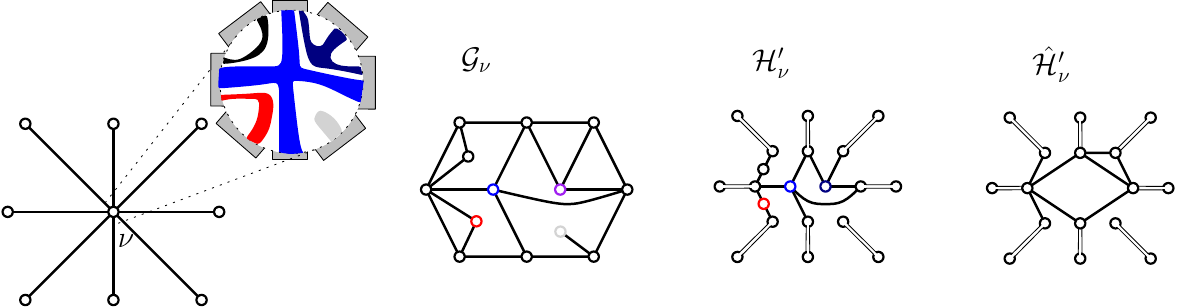}
\caption{The graph $G_{\nu}$ corresponding to a vertex $\nu$ of $H$ (left). Colors encode the correspondence between the connected components of $\varphi^{-1}[{\nu}]$ in $G$ and vertices of $G_{\nu}$. The thickening $\mathcal{H}_i'$ and $\hat{\mathcal{H}}_i'$ of the subgraph $H_{\nu}'$ and $\hat{H}_{\nu}'$, respectively, in $H'$ and $\hat{H}'$ (right). In the thickening, vertices have
a dumbbell shape just to illustrate the homeomorphism $h_\rho$, for $\rho\in E(H)$. In the actual thickening the vertices of $H'$ and $\hat{H}'$ give rise to discs, of course.}
\label{fig:derivative05}
\end{figure} 

We naturally extend the definition of the derivative to $(\hat{G},H,\hat{\varphi})$ thereby obtaining $(\hat{G}',\hat{H}',\hat{\varphi}')$. Formally, $\hat{G}=\hat{G}'$, $\hat{\varphi}'(v):=\varphi'(v)$, and hence, $V(\hat{H}')=\{\rho^*| \ \rho\in E(H)\}$. Then $E(\hat{H}')=\{\hat{\varphi}'(v)\hat{\varphi}'(u)| uv\in E(\hat{G})$ and $\hat{\varphi}'(v)\not=\hat{\varphi}'(u)\}$. 
Let $\hat{H}_{\nu}'$ be the  subgraph of $\hat{H}'$ induced by $\{(\nu\mu)^*|\ \nu\mu\in E(H)\}$ (see Figure~\ref{fig:derivative05} (right)).
The embedding of $\hat{H}'$ is obtained analogously as the embedding of ${H'}$ by merging embeddings of 
$\hat{H}_{\nu}'$'s inherited from embeddings of $G_\nu$'s given by  Claim~\ref{claim:Gi}.
 Note that $\hat{H}_{\nu}'$ is a graph obtained from a subgraph of $G_{\nu}$ (defined in Section~\ref{sec:preliminaries}) by suppressing some degree-$2$ vertices and applying generalized \yDelta operation
 to vertices $v_s^*$, for $v_s\in V_s$.
 Hence, by Claim~\ref{claim:Gi} every $\hat{H}_{\nu}'$ is a planar graph if $(\hat{G},H,\hat{\varphi})$ is $\mathbb{Z}_2$-approximable.
 The rule for putting the signs on the edges of $\hat{H}'$ is analogous to that for $H'$.

We construct $\hat{\psi}_0'$.
This step more-or-less follows considerations of M.~Skopenkov~\cite{Sko03_approximability}.
The derivative $\hat{\psi}_0'$ is constructed by a surgery
of $\hat{\psi}_0$, in which we first copy parts of the drawing $\hat{\psi}_0$, where each part is a subgraph of  $\hat{G}$ consisting of a set edges in $\hat{G}$, whose corresponding edges in $\hat{G}'=\hat{G}$, are inside a single cluster in $(\hat{G}',\hat{H}',\hat{\varphi}')$.
These parts are then reconnected by pipe edges of $(\hat{G}',\hat{H}',\hat{\varphi}')$.
The crucial idea is to define the drawing $\hat{\psi}'$ on the pieces of pipe edges inside discs $\mathcal{D}(\rho^*)$ so that they follow a copy of a drawing of a path by $\hat{\psi}_0$. Then reconnecting the severed ends of pipe edges by curves 
contained in pipes leads somewhat magically to a $\mathbb{Z}_2$-approximation of $(\hat{G}',\hat{H}',\hat{\varphi}')$ as proved in Claim~\ref{claim:ie}. \\

\begin{remark}
Such a surgery did not appear in the previous work of Pelsmajer et al.~\cite{PSS06_removing} and therefore the reader might wonder if the argument cannot be simplified by a more direct 
argument using only continuous deformation followed by a redrawing of $\mathcal{H}$. The reason for why a local continuous deformation argument, that does not take into account the whole $\psi_0$, is likely to fail, is provided, for example, by the previously discussed ``standard winding example''
~\cite[Figure 1]{ReSk98_deleted} and its modifications.

Intuitively, we would like to modify $\psi_0$ inside $\mathcal{D}(\nu)$ so that $\psi_0$ remains independently even, a pair of pipe edges
cross only if they belong to the same pipe, and a pair of  components of $G[V_\nu]$
cross only if they are adjacent to pipe edges in the same pipe. The ``standard winding example'' shows that $\psi_0$ can entangle the edges in a way that no continuous deformation of $\psi_0$ in $\mathcal{D}(\nu)$ will  accomplish this.

A sketch of a formal proof of this, which follows, uses a technique introduced in~\cite{FKMP15} that we discuss in Section~\ref{sec:reduction}. In the example, the relation $<_{\bf p}$ with respect to $\psi_0$  (see Definition~\ref{def:relation}), where ${\bf p}\in \partial\mathcal{D}(\nu)$, is cyclic regardless of the choice of $\nu$. It is not hard to see that a desired continuous deformation of $\psi_0$ taking place inside $\mathcal{D}(\nu)$ renders $<_{\bf p}$ acyclic, which in turn contradicts Claim~\ref{claim:comparison2}. 
\end{remark} 

We will be mimicking the construction of the embedding of $\hat{H}'$ on the level of  $\mathcal{\hat{H}}'$, which is  the thickening of $\hat{H}'$.
Therefore for constructing $\hat{H}'$ we inherit the orientation of $\partial \mathcal{D}((\nu\mu)^*)$, for every $\nu\mu\in E(H)$, from $\partial \mathcal{D}(\mu)$, where $\mu$ is chosen arbitrarily if $M$ is orientable, and otherwise $\mu$ is the attractor (as chosen in the construction of the embedding of $H'$ in Section~\ref{sec:normal-form}) of $\nu\mu$.\footnote{
One can think of this as follows. The attractor of $\nu\mu$, let's say $\mu$, is the one that pretty much stays at its original location in the construction of the embedding of $H'$ on $M$,
while $\nu$ is being dragged closer to $\mu$ thereby pushing $\mu\nu$ off all the cross-caps. Now, considering $\mathcal{H}$ embedded on $M$, if the sign on $\nu\mu$ was negative, the orientation of the portion of  $\partial\mathcal{D}((\nu\mu)^*)$, which is inherited from the original orientation of $\partial\mathcal{D}(\mu)$, corresponding to $\partial\mathcal{D}(\nu)$ is opposite to the original orientation of $\partial\mathcal{D}(\nu)$.}

In what follows we first define the restriction of $\hat{\psi}_0'$ to the subgraph of $\hat{G}'[V_{\rho^*}]$, for $\rho\in E(H)$. The restriction of $\hat{\psi}_0'$ is defined as $h_\rho\circ \hat{\psi}_0$,
where $h_\rho$ is  an orientation preserving homeomorphism mapping the union of the pipe of $\rho=\nu\mu$, $\mathcal{P}(\rho)$ with $\mathcal{D}(\nu)$ and $\mathcal{D}(\mu)$ to the disc $\mathcal{D}(\rho^*)$ in $\hat{\mathcal{H}}'$ such that the corresponding valves are mapped onto each other.
This is illustrated in Figure~\ref{fig:derivative05} (right).
Formally, a valve at $\mathcal{D}(\nu)$ of $\nu \nu_1$, where  $\nu_1\not =\mu$, is mapped bijectively onto the valve of $\rho^*(\nu \nu_1)^*$ and similarly
a valve at $\mathcal{D}(\mu)$ of $\mu \mu_1$, where $\mu_1\not =\nu$, is mapped onto the valve of $\rho^*(\mu \mu_1)^*$.

Refer to Figure~\ref{fig:derivating} (top, 	bottom left).
Second, we define the restriction of $\hat{\psi}_0'$ to pipe edges of $\hat{G}'$, i.e., edges mapped by $\hat{\varphi}'$ to edges of $\hat{H}'$ which will conclude the construction of $\hat{\psi}_0'$. Let $g=v_Cv_D$ be such an edge, where $v_C$ and $v_D$ belong to a connected component $C$ and $D$, respectively, of $\hat{G}'\setminus E_p(\hat{G}')$, where $E_p(\hat{G}')$ is the set of pipe edges in $(\hat{G}',\hat{H}',\hat{\varphi}')$. It must be that if $g$ does not belong to $G$, then $g$ was introduced to $\hat{G}'=\hat{G}$ by the generalized \yDelta operation, and that $\hat{\varphi}'(g)=\varphi(C)^*\varphi(D)^*$.
% In $G$, $v_C$ and $v_D$ are neighbors of a vertex $v_s\in V_s \cap V_{\nu}$, or form an edge $v_Cv_D$.
%Recall that $r_Cv_T$ and $r_Cv_{T'}$ were consecutive in %the rotation at $r_C$ in $G$. 
Let $P_C$ and $P_D$ be shortest paths in $\hat{G}'$ with all the internal vertices contained in $V(D)$ and $V(C)$, respectively, starting at $v_C$ and $v_D$, respectively, and continuing by the edge $v_Cv_D$, such that its last edge is a pipe edge mapped by $\hat{\varphi}$ to $\varphi(D)$ and $\varphi(C)$, respectively.
Let $Q_C$ and $Q_D$ be the subcurves of $P_C$ and $P_D$, respectively, mapped by $\hat{\psi}_0$ to the disc $\mathcal{D}(\nu)$.
Let us split $g$ into three arcs $g_1,g_2$ 
and $g_3$, where $g_1$ contains $v_C$ and $g_3$ contains $v_D$.
Let $h_1$ be a homeomorphism between $g_1$ and $Q_C$ mapping the end vertex $v_C$ of $g_1$ to the end point of $Q_C$ representing $v_C$. Similarly, let $h_{3}$ be a homeomorphism between $g_3$ and $Q_D$ mapping the end vertex $v_D$ of $g_3$ to the end point of $Q_D$ representing $v_D$.
Recall the definition of $h_{\rho}$ from the previous paragraph.
 We put $\hat{\psi}_0'(g_1):=h_{\varphi(C)}\circ \hat{\psi}_0 \circ h_1(g_1)$ and we put $\hat{\psi}_0'(g_3):=h_{\varphi(D)}\circ \hat{\psi}_0 \circ h_3(g_3)$.
 Finally, $\hat{\psi}_0'(g_2)$ is defined so that $\hat{\psi}_0'(g)$ is a non-self-intersecting arc and $\hat{\psi}_0'(g_2)$ is contained 
 in the pipe of $\varphi(C)^*\varphi(D)^*$.
 
 \begin{claim}
 \label{claim:ie}
 If $\hat{\psi}_0$ is independently even then $\hat{\psi}_0'$ is independently even.
 \end{claim}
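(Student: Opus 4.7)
The goal is to verify that for any two non-adjacent edges $e_1, e_2 \in E(\hat{G}')$, $\nCross{e_1}{e_2}{\hat{\psi}_0'}\equiv_2 0$. The plan is to partition the image of $\hat{\psi}_0'$ among the discs $\mathcal{D}(\rho^*)$ and the pipes $\mathcal{P}(\rho^*\sigma^*)$ of $\hat{\mathcal{H}}'$, to count the crossings in each region separately, and to sum modulo $2$. The two main tools are that every $h_\rho$ is an orientation-preserving homeomorphism, hence preserves crossing counts, and that $\hat{\psi}_0$ is independently even by assumption.

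First, I would describe the image of each edge of $\hat{G}'$ by type. An intra-cluster edge $e$ with $\hat{\varphi}'(e)=\rho^*$ is drawn inside $\mathcal{D}(\rho^*)$ as $h_\rho(\hat{\psi}_0(e))$, and a pipe edge $g=v_Cv_D$ of $\hat{G}'$ is drawn as $g_1\cup g_2\cup g_3$ with $g_1=h_{\varphi(C)}(\hat{\psi}_0(Q_C))$, $g_3=h_{\varphi(D)}(\hat{\psi}_0(Q_D))$, and $g_2$ in the corresponding pipe. Consequently, the number of crossings of $\hat{\psi}_0'(e_1)$ with $\hat{\psi}_0'(e_2)$ inside a disc $\mathcal{D}(\rho^*)$ (with $\rho=\nu\mu$) equals the number of crossings in $\hat{\psi}_0$ between the corresponding pre-image curves, all of which lie in $\mathcal{D}(\nu)\cup\mathcal{P}(\rho)\cup\mathcal{D}(\mu)$.

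Second, I would proceed by case analysis on the types of $e_1$ and $e_2$. When both are intra-cluster edges of $\hat{G}'$ in the same $\rho^*$, the crossing count reduces to $\nCross{e_1}{e_2}{\hat{\psi}_0}$, which is even because $e_1, e_2$ are also non-adjacent in $\hat{G}$. When their images sit in disjoint parts of $\hat{\mathcal{H}}'$, the count is zero. The core cases are (i) $e_1$ intra-cluster in $\rho^*$ and $e_2=v_Cv_D$ a pipe edge contributing $g_1$ to $\mathcal{D}(\rho^*)$, and (ii) both $e_1,e_2$ pipe edges with coinciding endpoint-clusters. In each such case, the crossings in $\mathcal{D}(\rho^*)$ decompose as a sum over the consecutive edges $f_0=g, f_1,\ldots,f_k$ of the auxiliary path $P_C$ (and analogously $P_D$) of the restriction of $\nCross{f_{j_1}}{\cdot}{\hat{\psi}_0}$ to $\mathcal{D}(\nu)$. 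The crucial observation, supplied by the construction of $\hat{\varphi}'$ and by the normal-form hypothesis, is that each $f_j$ with $j\ge 1$ lies entirely in $V(D)$, a connected component of $\hat{G}'\setminus E_p(\hat{G}')$ mapped to a different vertex of $\hat{H}'$ than the one containing the endpoints of $e_1$; hence $f_j$ is non-adjacent to $e_1$ in $\hat{G}$ and contributes an even summand.

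Finally, crossings inside the pipes $\mathcal{P}(\rho^*\sigma^*)$ involve only the $g_2$-arcs of pipe edges, which can be routed to cross non-adjacent partners evenly thanks to the compatibility of valve orientations imposed by the important-vertex convention and edge-signs on $\hat{H}'$ used to construct $\hat{\mathcal{H}}'$. The main obstacle is the sub-case of (ii) in which the two pipe edges share both endpoint-clusters, so that the same pair contributes crossings in $\mathcal{D}(\rho^*)$, in $\mathcal{D}(\sigma^*)$, and inside the pipe: there potentially adjacent pairs $(f_{j_1}^{(1)}, f_{j_2}^{(2)})$ of path edges appear, and one has to combine the contributions across the two discs and the pipe, invoking Claim~\ref{claim:basic} on the relevant four-tuples of edges together with the fact that $P_C$ and $P_D$ are shortest paths ending in pipe edges mapped to $\varphi(D)$ and $\varphi(C)$, respectively. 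This parallels the redrawing argument of M.~Skopenkov~\cite{Sko03_approximability}, adapted to our normal-form instances.
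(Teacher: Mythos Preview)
Your decomposition of the crossing count by regions of $\hat{\mathcal{H}}'$ and your treatment of the easy cases (two intra-cluster edges, an intra-cluster edge against a pipe edge, pipe edges landing in disjoint regions) are sound and agree with the paper, which dismisses these cases in one line (``by the construction, a pair of independent edges of $\hat{G}'$ that cross an odd number of times must be formed by two pipe edges'').

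The gap is in your handling of the crucial sub-case~(ii), two pipe edges $g,g'$ with $\hat{\varphi}'(g)=\hat{\varphi}'(g')$. Two points:
\begin{itemize}
\item Your sentence ``the $g_2$-arcs \ldots\ can be routed to cross non-adjacent partners evenly'' is not correct: once $g_1,g_3$ and $g_1',g_3'$ are drawn, the endpoints of $g_2$ and $g_2'$ on the two valves are fixed, and hence the parity $\nCross{g_2}{g_2'}{\hat{\psi}_0'}$ is completely determined (it equals~$1$ exactly when those four endpoints alternate along the boundary of the pipe). There is no free routing.
\item Claim~\ref{claim:basic} is about four edges incident to a common vertex; there is no common vertex here, so invoking it does not close the argument.
\end{itemize}

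The paper's actual mechanism is the following linking-number computation in $\mathcal{D}(\nu)$. The four pre-image curves $Q_C=h_1(g_1)$, $Q_D=h_3(g_3)$, $Q_{C'}=h_1'(g_1')$, $Q_{D'}=h_3'(g_3')$ all lie in $\mathcal{D}(\nu)$; their valve-endpoints determine (via the orientation-preserving $h_\rho$'s and the sign convention on edges of $\hat{H}'$) whether the endpoints of $g_2,g_2'$ alternate on the pipe boundary, and one checks that this alternation happens iff
\[
\nCross{Q_C}{Q_{C'}}{\hat{\psi}_0}+\nCross{Q_C}{Q_{D'}}{\hat{\psi}_0}+\nCross{Q_D}{Q_{C'}}{\hat{\psi}_0}+\nCross{Q_D}{Q_{D'}}{\hat{\psi}_0}\equiv_2 1.
\]
Now the key step you are missing: the two \emph{cross terms} $\nCross{Q_C}{Q_{D'}}{\hat{\psi}_0}$ and $\nCross{Q_D}{Q_{C'}}{\hat{\psi}_0}$ are each even. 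This uses that $H$ has no multi-edges (so $\varphi(C)\neq\varphi(D)$, hence the curves $Q_C$ and $Q_{D'}$ are built from edges of $\hat{G}$ lying in distinct components $D$ and $C'$ of $\hat{G}'\setminus E_p(\hat{G}')$, plus $g$ and $g'$), together with $\hat{\psi}_0$ being independently even. With the cross terms gone one gets
\[
\nCross{g_2}{g_2'}{\hat{\psi}_0'}\equiv_2 \nCross{g_1}{g_1'}{\hat{\psi}_0'}+\nCross{g_3}{g_3'}{\hat{\psi}_0'},
\]
since $\nCross{g_1}{g_1'}{\hat{\psi}_0'}=\nCross{Q_C}{Q_{C'}}{\hat{\psi}_0}$ and $\nCross{g_3}{g_3'}{\hat{\psi}_0'}=\nCross{Q_D}{Q_{D'}}{\hat{\psi}_0}$ by the homeomorphism, and the claim follows. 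Your proposal gestures at ``combining contributions across the two discs and the pipe'' but never isolates this alternation~$\Leftrightarrow$~four-term-sum identity nor the vanishing of the cross terms, which is where the whole argument lives.
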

\begin{figure}
\centering
\includegraphics[scale=0.7]{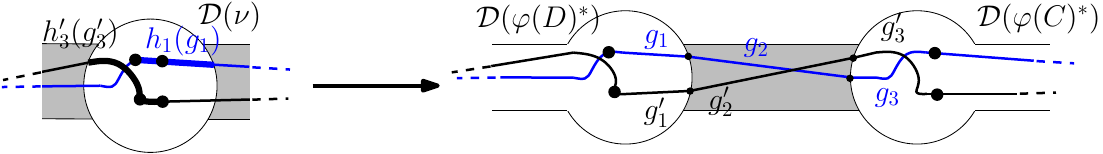}
\caption{Illustration for the proof of Claim~\ref{claim:ie}. The drawing  $\hat{\psi_0}(h_1(g_1))$ and $\hat{\psi_0}(h_3'(g_3'))$ is thickened on the left.}
\label{fig:ie}
\end{figure}

 \begin{proof}
  Refer to Figure~\ref{fig:ie}.
  For the sake of contradiction suppose the contrary.
By the construction, a pair of independent edges of $\hat{G}'$ that cross an odd number of times under $\hat{\psi}_0'$ must be formed by two pipe edges in the same pipe.
Let $g$ and $g'$ be edges in such pair. Let $g_1,g_2,g_3$ and $g_1',g_2',g_3'$, be the partition of the curve representing $g$ and $g'$, respectively, as in the definition of the derivative. In, $\hat{G}=\hat{G}'$ we temporarily subdivide $g'$ and $g''$ as indicated in the figure so that the curves 
$g_1,g_2,g_3$ and $g_1',g_2',g_3'$ are images of the  edges created by the subdivisions in $\hat{\psi}_0'$. We denote the  edges  created by the subdivisions by the same corresponding symbols.
We are done if we prove that $\nCross{g_1}{g_1'}{\hat{\psi}_0'}+\nCross{g_2}{g_2'}{\hat{\psi}_0'}+\nCross{g_3}{g_3'}{\hat{\psi}_0'}\equiv_2 0$ or equivalently,
that 

\begin{equation}
\label{eqn:0}
\nCross{g_1}{g_1'}{\hat{\psi}_0'}+\nCross{g_3}{g_3'}{\hat{\psi}_0'}\equiv_2\nCross{g_2}{g_2'}{\hat{\psi}_0'}.
\end{equation}
With this goal in mind we first show that $\nCross{g_2}{g_2'}{\hat{\psi}_0'}\equiv_2 1$ if and only if the end points of $h_1(g_1)$ and $h_3(g_3)$ alternate along the boundary of $\mathcal{D}(\nu)$ with   $h_1'(g_1)$ and $h_3'(g_3)$, , where $h_1',h_3'$ is defined analogously as $h_1,h_3$.
The claim holds due to the choice of signs of the edges of $\hat{H}'$, and the orientation of $\partial \mathcal{D}(\varphi(C)^*)$ and $\partial \mathcal{D}(\varphi(D)^*)$. The sign on the edge
$\varphi(C)^*\varphi(D)^*$ is negative if and only if   exactly one of the edges $\varphi(C)\in E(H)$ and $\varphi(D)\in E(H)$ has the negative sign and does not have $\nu$ as the attractor.
This means that the sign is also negative if and only if the orientation of exactly one of  $\partial \mathcal{D}(\varphi(C)^*)$ and $\partial \mathcal{D}(\varphi(D)^*)$ was inherited from $\partial \mathcal{D}(\nu)$. Hence, the two ``flips'', the one of the pipe of $\varphi(C)^*\varphi(D)^*$
and the one of $\mathcal{D}(\nu)$ cancel out.

It follows that  $\nCross{g_2}{g_2'}{\hat{\psi}_0'}\equiv_2 1$ if and only if
$$\nCross{h_1(g_1)}{h_1(g_1')}{\hat{\psi}_0}+\nCross{h_1(g_1)}{h_3'(g_3')}{\hat{\psi}_0}+\nCross{h_3(g_3)}{h_1'(g_1')}{\hat{\psi}_0}+\nCross{h_3(g_3)}{h_3'(g_3')}{\hat{\psi}_0}\equiv_2 1.$$
Notice that   
 \begin{equation}
\label{eqn:1}
\nCross{h_1(g_1)}{h_3'(g_3')}{\hat{\psi}_0}+\nCross{h_3(g_3)}{h_1'(g_1')}{\hat{\psi}_0}\equiv_2 0,
 \end{equation} 
 as $H$ has no multi-edges or loops,
 and $\hat{\psi}_0$ is independently even.
 Hence,~(\ref{eqn:0}) follows from~(\ref{eqn:1}) and the previous observation,
 since 
  $\nCross{g_1}{g_1'}{\hat{\psi}_0'} = \nCross{h_1(g_1)}{h_1(g_1')}{\hat{\psi}_0}$ and $\nCross{g_3}{g_3'}{\hat{\psi}_0'}=\nCross{h_3(g_3)}{h_3'(g_3')}{\hat{\psi}_0}$.
 \end{proof} 
 
\paragraph*{Step 2.}
We use the drawing $\hat{\psi}_0'$ of $\hat{G}'$ to define a desired drawing $\psi_0'$ of $G'$ thereby proving Claim~\ref{claim:derivative} which we restate for the reader's convenience.

\bigskip

\rephrase{Claim}{\ref{claim:derivative}}{
If the instance $(G,H,\varphi)$ is $\mathbb{Z}_2$-approximable by the drawing $\psi_0$ then $(G'=G,H',\varphi')$ is
$\mathbb{Z}_2$-approximable by the drawing $\psi_0'$ such that $\psi_0$ is compatible with $\psi_0'$. Moreover, if $\psi_0$ is crossing free so is $\psi_0'$.
}

\begin{proof}
Refer to Figure~\ref{fig:derivating} (bottom). 
We use the notation from Step 1.
In what follows we modify  $\hat{\psi}_0'$ thereby obtaining a desired independently even drawing $\psi_0'$ of $G'$.
By Claim~\ref{claim:ie}, $\hat{\psi}_0'$ is independently even.
Let $g\in E(\hat{G}')$, such that $g\not\in E(G')$, be an edge joining a pair of connected components $C$ and $D$ of $\hat{G}'\setminus E_p(\hat{G}')$.
Since, $g\not\in E(G')$, $g$ belongs to a cycle $Z_{v_s}$ obtained by the application of the generalized \yDelta operation to $v_s\in V_s$ of degree at least $3$. 
In what follows we define $\psi_0'$ on the edges incident to $v_s$ by altering the edges of $Z_{v_s}$ in $\hat{\psi}_0'$ .
By repeating the same procedure for every such cycle
$Z_{v_s}$ we obtain $\psi_0'$.

By the definition of the subdivided normal form, $g$ is a pipe edge in $(\hat{G}',\hat{H}',\hat{\varphi}')$. Let us direct the edges of $Z_{v_s}$ so that when following the inherited directions along the cycle $\hat{\varphi}'(Z_{v_s})$ we travel clockwise around
a point in its interior. We subdivide
every edge $g_0$ on $Z_{v_s}$ by a pair of vertices at both valves that it intersects. 
A subdividing vertex is assumed to belong to the same cluster as the end vertex of $g_0$ that is its neighbor. 

Let $u$ be a vertex subdividing $g$ that is joined with the source of $g$ with respect to the chosen directions on the edges of $Z_{v_s}$. Let's say that $u\in V_{\varphi(D)^*}$.
We show that by performing the subdivisions one by one we can maintain the drawing $\hat{\psi}_0'$ independently even.
Indeed, it cannot happen that an edge not sharing a vertex with $g$ crosses exactly one newly created edge an odd number of times, since $\hat{\psi}_0'$ was independently even before the subdivision.
Hence, we just pull edges crossing both newly created edges an odd number of times over the subdividing vertex,
see Figure~\ref{fig:argument}. As illustrated by the figure, this can be performed while not forcing the edge being pulled to cross the boundary of the cluster more than once, since we can pull the edge along the boundary of the cluster.
Note that this would not be possible if more than one subdividing vertex is contained in the same valve, since then the edge being pulled would have to avoid such vertex\footnote{This also kills a na\"ive approach to c-planarity via a Hanani--Tutte variant.}.

\begin{figure}
\centering
\includegraphics[scale=0.7]{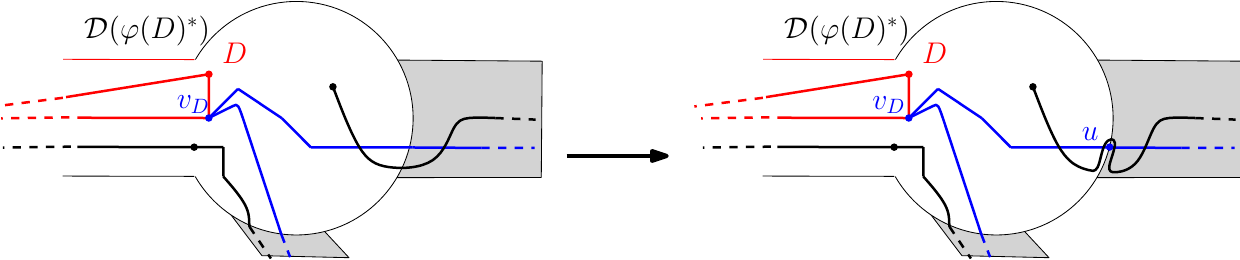}
\caption{Pulling an edge along the boundary of the cluster so that both newly created edges resulting from the subdivision of $g$ cross the edge an even number of times.}
\label{fig:argument}
\end{figure}

Let us put into $\hat{H}'$ all the edges and vertices of $H'$ missing in $\hat{H}'$. We assume that the image of $\hat{\psi}_0'$ is contained in the thickening of the obtained graph. This can be done while keeping the resulting graph crossing free.
Let $O:=O(v_s)$ be a curve following  $\partial\mathcal{D}(\varphi(D)^*)$ joining $u$ and a point of the valve
of $\varphi(D)^*v_s^*$ that is disjoint from all the other valves.
We claim that \\

(*) $O$ crosses every edge in the drawing $\hat{\psi}_0'$ evenly, and hence, zero number of times. \\

If the claim is true, the part of edge $g$ not contained in $R$ can be redrawn so that
from $u$ the edge $g$ continues along $O$ and ends in $v_s$. Then we can apply the same redrawing procedure to every other edge of $Z_{v_s}$.
Note that repeating the same procedure to every other cycle $Z_{v_s}$ does not introduce a pair of non-adjacent edges crossing an odd number of times in the resulting drawing. Indeed, since all the cycles $Z_{v_s}$ are directed clockwise, a pair of curves $O$ defined for different $v_s$'s do not end on the same valve, and hence, the redrawn pieces of edges do not cross.

\begin{figure}
\centering
\includegraphics[scale=0.7]{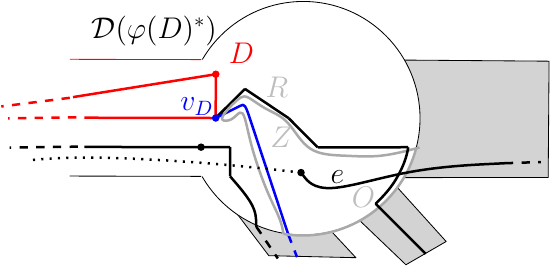}
\caption{The edge $e$ crossing $O$ an odd number of times. The dotted edge cannot exist in $\hat{G}'$.}
\label{fig:argument2}
\end{figure}

Hence, in order to show that $\psi_0'$ is independently even it remains to prove (*). Refer to Figure~\ref{fig:argument2}.
 Let $R:=R(v_s)$ be the path contained in $Z_{v_s}$, such that $\hat{\psi}_0'(R)\subset \mathcal{D}(\varphi(D)^*)$, joining $u$ with the other subdividing vertex of an edge on $Z_{v_s}$.
Let the closed curve $Z:=Z(v_s)$ be obtained by concatenating $O$, $R$ and the part $\partial\mathcal{D}(\varphi(D)^*)$ 
intersecting only valves containing its end points.
We claim that the curve $Z$ crosses every edge crossing the valve containing $u$ evenly.

 Indeed, if such an edge $e$ crosses $Z$ an odd number of times, one of its end vertices is inside $Z$. This follows since $e$ is not incident to a vertex contained in $R$ by the definition of $(\hat{G}',\hat{H}',\hat{\varphi}')$, because $e$ is a pipe edge. 
% However, there exists no other curve in $\mathcal{D}(\varphi(D)^*)$ in the image of the drawing $\hat{\psi}_0'$ besides $R$, that connects the same pair of valves as $R$ does, due to Claim~\ref{claim:YY}.
Thus, $e$ is forced to be incident to a connected component $D_0$
of $\hat{G}'\setminus E_p(\hat{G}')$  mapped by $\hat{\varphi}$
to $\varphi(D)$ such that $D_0$ does not contain a vertex of $R$.
Hence, $D_0$ must have at least one end vertex outside $Z_{v_s}$ (in the sense of the definition from Section~\ref{sec:preliminaries}), which inevitably leads to a pair of non-adjacent edges crossing an odd number of times in $\hat{\psi}_0$ (contradiction). 

 Due to the fact that $R$ consists only of edges of the graph it also crosses the edge $e$ evenly. Hence, the same holds 
for $O$, which concludes the proof of~(*) and therefore we proved that $\psi_0'$ is independently even.

Note that 
the order of the end pieces of edges at corresponding vertices in $\psi_0$ and $\psi_0'$ is the same, since we did not alter them in any way, which shows that $\psi_0$ is compatible with $\psi_0'$.

Finally, the ``moreover part'' follows straightforwardly from the construction of $\psi_0'$.
\end{proof}

%Second, if $g$ was obtained by a suppression of a vertex $v_s$ of degree $2$ in $V_s$, we observe that $v_s$ belongs to a cluster $\nu_{v_s}$ such that only vertices of $V_s$ of degree $2$ are mapped to it by $\hat{\varphi}'$. Hence, we subdivide the edge $\hat{\varphi}'(g)$ by a vertex, as well as every edge $h\in E(\hat{G}')$,
%for which $\hat{\varphi}'(g)=\hat{\varphi}'(h)$, one by one while maintaining $\hat{\psi}_0'$ independently even by pulling edges over newly created vertices similarly as above.

%By the discussion in 1. and 2. we have the following.

\bigskip

%===================================================================================

\section{Integration} 
\label{sec:integration}

Let $(G',H',\varphi')=(G,H,\varphi)'$ and $\psi_0'$ be the $\mathbb{Z}_2$-approximation obtained by Claim~\ref{claim:derivative} from a $\mathbb{Z}_2$-approximation $\psi_0$ of $(G,H,\varphi)$, i.e., $\psi_0'$ is the $\mathbb{Z}_2$-derivative of $\psi_0$. 
The aim of this section is to prove that an approximation $\psi'$ of $(G',H',\varphi')$ yields an approximation $\psi$ of $(G,H,\varphi)$.
More precisely, we prove Claim~\ref{claim:integration} that we restate  for the reader's convenience. \\

%\begin{figure}
%\centering
%\subfloat[]{
%\includegraphics[scale=0.8]{integration}
%\label{fig:integ}}
%\subfloat[]{
%\includegraphics[scale=0.8]{}
%\label{fig:integ2}}
%\\
%\subfloat[]{
%\includegraphics[scale=0.8]{}
%\label{fig:integ4}}
%\caption{Integrating $\psi'$. (a) Construction of the curve $C_2^{\rho^*}$. (b) 
%Construction of the curve $C_5^{\rho^*}$. (c) Deformation of an approximation of $(G',H',\varphi')$ resulting in an approximation of $(G,H,\varphi)$.}
%\end{figure}

\medskip  
\rephrase{Claim}{\ref{claim:integration}}
{If the instance $(G',H',\varphi')$ is approximable by an embedding $\psi'$ compatible with $\psi_0'$ then $(G,H,\varphi)$ is
 approximable by an embedding compatible with $\psi_0$. 
}
\jk{tohle mi prislo strasne matouci, jakto, ze se najednou objevil claim 16/17, kdyz tvrzeni okolo maji cisla jako 28... Myslim, ze by se bez tohohle kopirovani dalo obejit.}

\begin{proof}
By following the approach of~\cite{F17_pipes}, see also~\cite{AFT18+_weak}, we show that there exists a desired embedding $\psi$ of $G$ that is an approximation of $\varphi$. This follows basically from
the well-known fact that for every plane bipartite graph there exists a simple closed curve that intersects every edge
of the graph exactly once and in a proper crossing.
This is also a consequence of the planar case of Belyi's theorem~\cite{B83_self}. In order to simplify the notation below we subdivide pipe edges of $G=G'$ and the corresponding edges of $H'$ so that $(G',H',\varphi')$ is the derivative of the instance  $(G,H,\varphi)$ after being brought into the  subdivided normal form. At the end we pass to the normal form again.

By Claim~\ref{claim:newSurface} (Section~\ref{sec:normal-form}), we assume that $\mathcal{H}'$ is embedded in $M$ as a small neighborhood 
of an embedding of $H'$ obtained by the procedure in the proof of Claim~\ref{claim:newSurface}.
Thus, the embedding of $H'$ is in some sense compatible with the embedding of $H$ in the former instance $(G,H,\varphi)$.
% Let $C$ be a connected component of $G[V_{\nu}]$ for some $\nu\in V(H)$. 
In what follows, we will first define  for each $\rho^*=(\nu\mu)^*\in V(H')$ a curve $C_{\rho^*}\subset \mathcal{D}(\rho^*)$, such that $\partial C_{\rho^*} \subset \partial\mathcal{D}(\rho^*)$ and 
$C_{\rho^*}\setminus \partial C_{\rho^*} \subset \mathcal{D}(\rho^*)\setminus \partial\mathcal{D}(\rho^*)$,  splitting $\mathcal{D}(\rho^*)$ into two parts as follows. Vertices coming from $V_\nu$ are in one part and and the vertices coming from $V_\mu$ are in the other part.  Having constructed $C_{\rho^*}$'s we connect them appropriately by curves contained in $M\setminus \mathcal{H}'$ so  that they yield boundaries of the discs and pipes homeomorphic to $\mathcal{H}$ containing the image of a required $\psi$.

\paragraph{Construction of $C_{\rho^*}$.}
By property~(\ref{it:second}) of the subdivided normal form, $\varphi|_{\nu}^{-1}(\varphi(D))$, where $D$ is a connected component of $G[V\setminus V_s]$ and $\nu \in \varphi(D)$, is a forest.
Let $T$ be a tree of the forest $\varphi|_{\nu}^{-1}(\varphi(D))$.
In $G'$, we contract every such tree $T$ into a vertex, and update $\psi'$ and $\varphi'$ accordingly.
Here, we crucially used~(\ref{it:second}), since we cannot afford to create loops at vertices by contractions, which would violate bipartiteness, as we will see next.
 Let $\hat{G}',\hat{\psi}'$ and $\hat{\varphi}'$ denote the updated $G',\psi'$ and $\varphi'$, respectively.

%Refer to Figure~\ref{fig:integ} and~\ref{fig:integ2}.
 Refer to Figure~\ref{fig:integration0}.
Since $G$ is in the subdivided normal form, in its derivative $(\hat{G}',H',\hat{\varphi}')$, every cluster $V_{\rho^*}$ of $\hat{G}'$, where $\rho^*=(\nu\mu)^*$, induces a bipartite graph $\hat{G}'[V_{\rho^*}]=(V_{\nu}' \uplus V_{\mu}', E_{\rho^*})$, whose part $V_{\nu}'$ and $V_{\mu}'$ contains vertices obtained by contracting trees corresponding to connected components of $G[V_{\nu}\setminus V_s]$ and $G[V_{\mu}\setminus V_s]$, respectively. 

By the same token, for every pipe edge $e'$ of $G'$ we have $\hat{\varphi}'(e)=\rho^*v_s^*$, for some $\rho\in E(H)$ and $v_s\in V_s$.
We subdivide once every pipe edge  $e$ of $\hat{G}'$, such that $\hat{\varphi}'(e)=\rho^*v_s^*$ for some $\rho\in E(H)$.
In the embedding $\hat{\psi}'$, we draw the subdividing vertex as $\hat{\psi}'(e)\cap \partial\mathcal{D}(\rho^*)$. Let $\partial_\nu\mathcal{D}((\nu\mu)^*)$ and $\partial_\mu\mathcal{D}((\nu\mu)^*)$ be the partition of $\partial\mathcal{D}((\nu\mu)^*)$ into two closed nonempty parts such that 
$\partial_{\nu}\mathcal{D}((\nu\mu)^*)$ and $\partial_\mu\mathcal{D}((\nu\mu)^*)$ contains all the vertices drawn as  $\hat{\psi}'(e)\cap \partial\mathcal{D}((\nu\mu)^*)$ for the pipe edges $e\in E(\hat{G}')$ incident to a vertex in $V_\nu'$ and $V_\mu'$, respectively. By slightly abusing the notation we denote also the resulting graph by 
$\hat{G}'$ and the resulting embedding by $\hat{\psi}'$ and will continue doing so in all their subsequent modifications.

\begin{figure}
\centering
\includegraphics[scale=1]{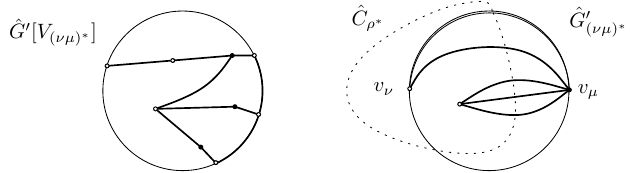}
\caption{On the left, the subgraph of $\hat{G'}$ induced by the cluster $(\nu\mu)^*$  embedded inside $\mathcal{D}((\nu\mu)^*)$. On the right, the graph $\hat{G}_{(\nu\mu)^*}'$ embedded inside embedded inside $\mathcal{D}((\nu\mu)^*)$. The empty and filled vertices belong to $V_\nu'$ and $V_\mu'$, respectively.
}
\label{fig:integration0}
\end{figure}

Next we  augment $\hat{G}'$ and $\hat{\psi}'$ as follows. We join by an edge, that is embedded in $\partial_\nu\mathcal{D}((\nu\mu)^*)$, every pair of vertices $w$ and $z$ such that  $\hat{\psi}'(w),\hat{\psi}'(z)\in\partial_\nu\mathcal{D}((\nu\mu)^*)$, and
   the part of $\partial_\nu\mathcal{D}((\nu\mu)^*)$ between $\hat{\psi}'(w)$ and $\hat{\psi}'(z)$ does not intersect the (image of) drawing $\hat{\psi}'$.
Then for every $(\nu\mu)^*\in V(H')$ we consider the subgraph ${G}_{(\nu\mu)^*}'$ of $\hat{G}'$ induced by the vertices drawn by $\hat{\psi}'$ in $\mathcal{D}((\nu\mu)^*)$. Let $\hat{G}_{(\nu\mu)^*}'$ be obtained from
${G}_{(\nu\mu)^*}'$ by successively contracting every edge incident to a vertex embedded by $\hat{\psi}'$ in $\partial\mathcal{D}((\nu\mu) ^*)$. Let $v_\nu$ and $v_\mu$ denote the vertices of  $\hat{G}_{(\nu\mu)^*}'$  that the edges of ${G}_{(\nu\mu)^*}'$ were contracted into in the previous step.  Either  of $v_\nu$ and $v_\mu$ might not exist. We assume that $v_\nu$ and $v_\mu$ is embedded in 
$\partial_\nu\mathcal{D}((\nu\mu)^*)$ and $\partial_\mu \mathcal{D}((\nu\mu)^*)$, respectively, and $\hat{G}_{(\nu\mu)^*}'$ is embedded in  $\mathcal{D}((\nu\mu)^*)$.  If $v_\nu$ and $v_\mu$ does not exist we embed $v_\nu$ and  $v_\mu$, respectively, as an isolated vertex in $\partial_\nu\mathcal{D}((\nu\mu)^*)$ and $\partial_\mu \mathcal{D}((\nu\mu)^*)$.
 Finally, we augment $\hat{G}_{(\nu\mu)^*}'$ by joining $v_\nu$ and $v_\mu$ by an edge drawn in $\partial\mathcal{D}((\nu\mu)^*)$ and by a slight abuse of notation we call 
 the resulting graph  $\hat{G}_{(\nu\mu)^*}'$.
 
 The graph $\hat{G}_{(\nu\mu)^*}'$ is bipartite and embedded in a disc. Therefore Belyi's theorem applies and yields a closed curve $\hat{C}_{\rho^*}$ crossing every edge of $\hat{G}_{(\nu\mu)^*}'$ in $\hat{\psi}'$ exactly once. Note that we can assume that  $|\hat{C}_{\rho^*}\cap \partial\mathcal{D}((\nu\mu)^*)|=2$, where 
 both intersection points are crossing points. 
 By taking ${C}_{\rho^*}:=\hat{C}_{\rho^*}\cap \mathcal{D}((\nu\mu)^*)$ and reversing the contractions we recover the original embedding of $G'[V_{(\nu\mu)^*}]$ so that 
 ${C}_{\rho^*}$ crosses its every edge exactly once and splits $\mathcal{D}((\nu\mu)^*)$ into two parts as desired.

\begin{figure}
\centering
\includegraphics[scale=1]{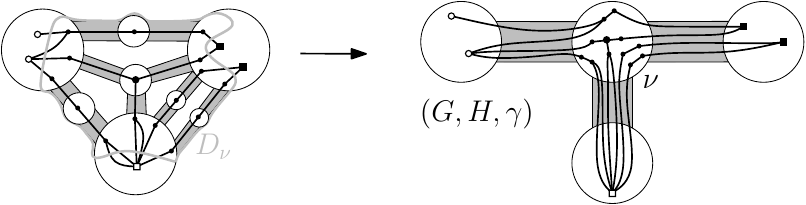}
\caption{Construction of the disc $D_\nu$ and the deformation into $\mathcal{H}$. Note that the instance on the left is in the subdivided normal form.}
\label{fig:integration}
\end{figure}

\paragraph{Construction of $\psi$.}
Refer to Figure~\ref{fig:integration}.
We construct $\psi$ as $h \circ \psi'$, where $h$ is a homeomorphism from a closed subset of $M$ to $\mathcal{H}$. In the following, by slightly abusing the introduced terminology we define this closed subset as the union of discs $\mathcal{D}(\nu)$, for all $\nu\in H$, and pipes joining them, that is homeomorphic via $h$ to $\mathcal{H}$. 
Let $\mathcal{C}_\nu=\{C_{\rho^*}| \ \rho=\nu\mu\in E(G)\}$. Let $D_\nu$ be a close curve obtained by concatenating curves in $\mathcal{C}_\nu$ in the order corresponding to the rotation of $\nu$ in $H$ by curves following the boundary of $\mathcal{H}'$.
Here, we assume that all the edges of the cycle $C_{\nu}$ defined in Section~\ref{sec:preliminaries} are  present in $H'$ and embedded as specified in the construction of the embedding of $H'$ on $M$ in Section~\ref{sec:normal-form} , for every $\nu\in V(H)$, which can be assumed without loss of generality, since they do not violate the embeddability of $H'$ on $M$.

We will show that the curve $D_\nu$ bounds a desired discs $\mathcal{D}(\nu)$ containing the images of all the vertices in $V_{\nu}$ in the embedding.
Note that $D_\nu$ and $D_\mu$ are overlapping if $\nu\mu\in E(H)$.
We perturb these curves to eliminate such overlaps, and introduce pipes, whose valves correspond to formerly overlapping parts.
This is done in a way which can be thought of as a reversal of the construction of the embedding of $H'$ in the proof of Claim~\ref{claim:newSurface}. Therefore the original signs on the edges of $M$ are easily recovered.
Note that $D_\nu$, for every $\nu\in V(H)$, passes an even number of times through the cross-caps of $M$ by the definition of the signs of the edges in $E(H')$ in Section~\ref{sec:normal-form}.  Indeed, $D_\nu$ passes along each edge of $H$ through cross-caps in both directions. Since we can assume that the embedding of $H'$ is cellular, it follows that $D_\nu$ is contractible, and hence, indeed bounds a topological disc $\mathcal{D}(\nu)$.
Then taking the union of $\mathcal{D}(\nu)$'s and their connecting pipes  we are done, since a desired approximation of $(G,H,\varphi)$ is obtained as $h\circ \psi'$, where $h$ is a homeomorphism between the union and $\mathcal{H}$ taking $\mathcal{D}(\nu)$ to $\mathcal{D}(\nu)\subseteq \mathcal{H}$. \\

Finally, we pass to the normal form by suppressing subdividing vertices accordingly.
Since we have not altered the rotation system in the proof, and the rotation system of $\psi_0$ and $\psi_0'$ is the same at the corresponding vertices in $G$ and $G'$, respectively, this concludes the proof.
\end{proof}

\jk{mostly skipped}

\section{Proof of Theorem~\ref{thm:main}}
\label{sec:reduction}

%We adjust the definition of the derivative $(G',H',\varphi')$.
%Let $(G_M',H_M',\varphi_M')=(G,H,\varphi)_M'$\footnote{The obvious reason for keeping both definitions is that the original definition is more conceptual and the adjusted one suits better algorithmic purposes.} be obtained from $(G',H',\varphi')$
%by suppressing every special vertex $v_s$ in $V(G)\setminus V_s$ of degree $2$ and every $\nu_{v_s}$ of degree $2$ in $V(H)$.
%We call the adjusted version~\emphh{minc derivative}.
%Clearly, a $\mathbb{Z}_2$-approximation $\psi_0'$ of $(G',H',\varphi')$ can be also regarded as 
%a $\mathbb{Z}_2$-approximation of $(G,H,\varphi)_M'$.

Let $(G,H,\varphi)$ be an instance that is $\mathbb{Z}_2$-approximable by an independently even drawing $\psi_0$. We start with a claim that helps us to identify instances that cannot be further simplified by derivating.

We show that by successively applying the derivative we eventually obtain an instance such that $\varphi$ is locally injective.

Let $E_p(G)$ be the set of pipe edges in $G$, and let $p(G,H,\varphi)=|E_p(G)| - |E(H)|$.
%Let $p(G,H,\varphi)= \max_{W,\ W \ \mathrm{is \ a \ spur \ in} \ G} w(W)$.

\begin{claim}
\label{claim:semi-trivial} If $(G,H,\varphi)$ is in the normal form then
$p(G',H',\varphi') \le p(G,H,\varphi)$. If 
additionally 
$\varphi$ is not locally injective after suppressing in $G$ all degree-$2$ vertices incident to an edge induced by a cluster, the inequality is strict; that is,
$p(G',H',\varphi') < p(G,H,\varphi)$.

Furthermore, if $G$ is connected and every connected component $C$ induced by $V_{\nu}$, for all $\nu\in V(H)$, has pipe degree at most $2$, then $|E_p(G')|\le |E_p(G)|$.
\end{claim}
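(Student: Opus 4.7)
The strategy is direct counting combined with a case analysis.  Because $G'=G$, $V_s$ is independent in $G$, and $H'$ is bipartite between $\{\rho^*:\rho\in E(H)\}$ and $\{v_s^*:v_s\in V_s\}$, the pipe edges of $(G',H',\varphi')$ are precisely the edges of $G$ with one endpoint in $V_s$ and the other outside, so $|E_p(G')|=\sum_{v_s\in V_s}\deg_G(v_s)$.  Condition~(II) of the normal form, which gives $\deg_G(v_s)=\pdeg{C_{v_s}}$ together with the pairwise-distinct-images clause, shows that the $H'$-degree of $v_s^*$ is also $\deg_G(v_s)$, so $|E(H')|$ equals the same sum.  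Since the normal form has no degree-$2$ central vertices, the suppression and multi-edge-elimination steps of the derivative are vacuous, whence $p(G',H',\varphi')=0$.  On the other hand $p(G,H,\varphi)\ge 0$ because every edge of $H$ in a normal-form instance is the image of at least one pipe edge of $G$ (a consequence of stage~2 of Claim~\ref{claim:normal_form}).  Combining the two bounds yields $p(G',H',\varphi')\le p(G,H,\varphi)$.

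For the strict inequality I would classify the obstruction to local injectivity in $G^*$.  If some vertex $v$ has two distinct neighbors $u,w$ with $\varphi(u)=\varphi(w)\ne\varphi(v)$, then $uv$ and $vw$ are two pipe edges mapped to the same edge of $H$, so $|E_p(G)|\ge|E(H)|+1$ and we are done.  Otherwise an intra-cluster edge $uv$ survives the suppression, forcing $\deg_G(u),\deg_G(v)\ge 3$ by property~(\ref{it:pruning}) and the suppression rule.  Condition~(\ref{it:first}) then forces the $V\setminus V_s$-component $C$ containing $u$ and $v$ to have a single image $\nu\mu$, which confines every non-intra neighbor of $u$ or $v$ lying in $V\setminus V_s$ to the common cluster $V_\mu$.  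Avoiding the first case at $u$ and at $v$ allows at most one such neighbor per vertex, while (\ref{it:first}) applied to $C$ insists on a $V_\mu$-vertex in $C$; together they produce two distinct pipe edges into $V_\mu$, giving $|E_p(G)|>|E(H)|$.  The subtle sub-case, in which the extra edges leave through central vertices of $V_s$, is handled using the clause of~(II) forbidding a central vertex to send two edges into a common $V\setminus V_s$-component; this is the step I expect to be the main technical difficulty.

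For the furthermore clause, under $\pdeg{C}\le 2$ uniformly no central vertex can exist in the normal form: such a vertex would have degree $\pdeg{C}\le 2$ by condition~(II), but all degree-$2$ central vertices have been suppressed to form the normal form.  Hence $V_s=\emptyset$, and the earlier count gives $|E_p(G')|=0\le|E_p(G)|$.
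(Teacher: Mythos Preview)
Your argument contains a fundamental gap at the step where you conclude that the suppression and multi-edge elimination in the derivative are vacuous. You say that ``the normal form has no degree-$2$ central vertices,'' and this is true; but the derivative of a normal-form instance is \emph{defined} (Definition~\ref{def:derivativeAbstract}) by first passing to the \emph{corresponding subdivided normal form}, derivating there, and then suppressing the degree-$2$ central vertices and their images in $H'$. The subdivided normal form \emph{does} contain a degree-$2$ central vertex for every cluster component of pipe degree exactly~$2$ (condition~(II)). After derivating and suppressing these, each such $v_s$ becomes a pipe edge of $G'$ that is \emph{not} incident to any surviving central vertex; these edges form the matching~$M'$ in the paper's proof. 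On the $H'$ side, each suppressed $v_s^*$ creates an edge between two $\rho^*$-type vertices, and \emph{several} such edges between the same pair get collapsed to one by the multi-edge elimination. Hence $p(G',H',\varphi')=|M'|-|E(H_0')|$, which is \emph{not} zero in general (take two cluster components of pipe degree~$2$ with the same pipe neighborhood: they contribute two edges to $M'$ but only one edge to $H_0'$).

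This error propagates to your ``furthermore'' argument as well: when every cluster component has pipe degree at most~$2$, you correctly observe that the normal-form $V_s$ is empty, but then $|E_p(G')|=|M'|$, not $0$. Bounding $|M'|$ against $|E_p(G)|$ is precisely the content of the paper's injective-mapping argument~$\zeta$, which you would still need. Your strict-inequality analysis is likewise built on the false premise $p'=0$ and so does not stand on its own, and its later sub-cases are moreover only sketched (``this is the step I expect to be the main technical difficulty'' is not a proof). The paper instead bounds $|M'|\le |E_p(G)|-c$ via the injection~$\zeta$ together with an accounting of the $2c$ endpoints of former pipe edges that fall outside the image of~$\zeta$, and combines this with $|E(H_0')|\ge |E(H)|-c$; the strictness is then read off from when the injection is a bijection.
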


\begin{proof}
%Refer to Figure~\ref{fig:spurs}.
We prove the first part of the claim and along the way establish the second part.
Let $\psi_0$ be a $\mathbb{Z}_2$-approximation of $(G,H,\varphi)$. 
%Throughout the proof by vertices and edges of $G$ we often mean their representation by $\psi_0$.
Note that in $G'$ the pipe edges incident to a central vertex $v_s\in V_s$ (every such $v_s$ has degree at least $3$) and edges in $H'$ incident to $\varphi'(v_s)$ contribute together zero towards $p(G',H',\varphi')$.
Let $H_0'=H'\setminus \{\nu_{v_s}|\ v_s\in V_s\}$.\\ 
 $(*)$ The number of edges in $H_0'$ is at least $|V(H_0')|-c= |E(H)|-c$, where $c$ is the number of connected components of $H_0'$ that are trees. We use this fact together with a simple charging scheme in terms of an injective mapping $\mapping$ defined in the next paragraph to prove the claim.
 
 Suppose for a while that $H_0'$ is connected.
 The set of pipe edges of $G'$ not incident to any $v_s\in V_s$ forms a matching $M'$ in $G'$ by Definition~\ref{def:derivativeAbstract}. Let $D(v)$, $v\in V(G')$, be the connected component of $G'\setminus E_p(G')$ containing the vertex $v$. By the first property of the components $G[V\setminus V_s]$ in Definition~\ref{def:normal}, for every  $v\in V(M')$, the component $D(v)$ contains at least one former pipe edge, i.e., a pipe edge in $(G,H,\varphi)$.  Let $V_p$ be the set of vertices in $G'$ incident to these former pipe edges. We construct an injective mapping $\mapping$ from the set $V(M')$ to $V_p$.
The mapping $\mapping$ maps a vertex $v\in V(M')$ to a closest vertex (in terms of the graph theoretical distance in $D(v)$) in $V_p\cap V(D(v))$. The mapping $\mapping$ is injective by the fact, that in the corresponding subdivided normal form, every connected component of $G[V_{\nu}]$, for $\nu\in V(H)$, contains at most one central vertex.
 Indeed, recall that this central vertex is suppressed in the normal form and the edge thereby created becomes a pipe edge $e$ in $H'$, and thus, belongs to $M'$. Each end vertex $v$ of $e$ is mapped by $\zeta$ to a vertex $u$ such that $\varphi(u)=\varphi(v)$.
Thus, the injectivity could be violated only by the end vertices of $e$. However, this cannot happen since every connected component of $G[V_{\nu}]$ is a tree.
The injectivity of $\mapping$ implies that $2|M'|$ is upper bounded by $2|E_p(G)|$, and therefore $|M'|$ is upper bounded by $|E_p(G)|$, which proves the second part of the claim. Furthermore, $|E_p(G)|=|M'|$ only if after suppressing all vertices of degree $2$ incident to an edge induced by a cluster, $\varphi$ is locally injective, and $H_0'$ contains a cycle.
 
If $H_0'$ is disconnected, then we have $|M'|\le |E_p(G)|-c$, where the inequality is strict if $\varphi$ is not locally injective
after suppressing all degree-$2$ vertices incident to an edge induced by a cluster.
Indeed, if $|M'|= |E_p(G)|-c$, then there exist
exactly $2c$ vertices $v$, $v\in e\in E_p(G)$, that are not 
in the image of the map $\mapping$.
However, 
there are at least $2c$ vertices in $G'$  each of which is mapped by $\varphi'$ to a vertex of degree at most $1$ in $H_0'$. This follows since a connected component of $H_0'$, that is an isolated vertex $\nu$, contributes  at least two end vertices of an edge $e\in E_p(G)$ such that $\varphi'(e)=\nu$; and a connected component of $H_0'$, that is a tree, has at least two leaves
each of which contributes by at least one end vertex of an edge in $E_p(G)$ mapped to it by $\varphi'$. Hence, if $|M'|= |E_p(G)|-c$ then all the vertices that are not contained in the image of $\mapping$, are accounted for by these $2c$ vertices.

Putting it together, we have $|M'|\le |E_p(G)|-c$ and 
$(*)$ \  $|E(H)|-c \le |E(H_0')|$, where the first inequality is strict if $\varphi$ is not locally injective after suppressing all degree-$2$ vertices incident to an edge induced by a cluster.
Since the remaining pipe edges of $G'$ and edges of $H'$ contribute together zero towards $p(G',H',\varphi')$, summing up the inequalities concludes the proof.
\end{proof}

\jk{mostly skipped, but corrected some formulations}
 
\begin{claim}
\label{claim:semi-trivial2} Suppose that $\varphi$ is locally injective after suppressing all degree-$2$ vertices incident to an edge induced by a cluster.
Applying the derivative $|E_p(G)|$ many times yields an instance in which no connected component of $G$ is a path.
\end{claim}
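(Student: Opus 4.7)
The plan is to show that each application of the derivative (together with its implicit normalization) strictly reduces the number of pipe edges inside every path component of $G$, so that after $|E_p(G)|$ iterations every path component has been reduced to having no pipe edges and is removed by pruning.

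First I would observe that, under the hypothesis, every edge of a path component $P$ of $G$ is a pipe edge. Indeed, if two consecutive vertices $u,v$ of $P$ lay in the same cluster, the edge $uv$ would be an intra-cluster edge; but $u$ and $v$ are degree-$\le 2$ vertices incident to this intra-cluster edge, so they would be suppressed in the operation of the hypothesis, and the resulting graph would still contain a locally-injective path through the clusters visited by $P$, forcing consecutive clusters along $P$ to be distinct. Hence $|E_p(P)|=|E(P)|=|V(P)|-1$, and the walk $\varphi(P)$ in $H$ has length $|E_p(P)|$.

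Next I would analyze one application of the derivative to such a $P$. During normalization, every interior vertex $v$ of $P$ forms a singleton cluster-component of pipe degree $2$, so by the untangling procedure of Claim~\ref{claim:normal_form} it is replaced by a three-vertex gadget $u_1-v_C-w_1$ with $v_C\in V_s$. In the derivative each connected component of $G[V\setminus V_s]$ is contracted into a single vertex of $H'$ (namely the corresponding edge-of-$H$ vertex $\rho^*$), and each $v_C$ becomes its own vertex $v_C^*\in V(H')$. After suppressing the degree-$2$ central vertices, the image path $P'$ has the property that its walk $\varphi'(P')$ in $H'$ has length exactly $|E_p(P)|-1$: the walk $\alpha_1\alpha_2\cdots\alpha_k$ in $H$ is replaced by the walk $(\alpha_1\alpha_2)^*(\alpha_2\alpha_3)^*\cdots(\alpha_{k-1}\alpha_k)^*$ in $H'$. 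Hence $|E_p(P')|=|E_p(P)|-1$.

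The strict per-component decrease can also be read off the proof of Claim~\ref{claim:semi-trivial}: the injective map $\mapping$ constructed there is defined on a matching $M'$ of pipe edges of $G'$ and targets endpoints of pipe edges of $G$, and each path component of $G$ is responsible for at least one tree component of $H_0'$, contributing at least $1$ to the slack $c$ in the bound $|M'|\le |E_p(G)|-c$. Restricted to a single path component $P$, this gives $|E_p(P')|\le|E_p(P)|-1$, matching the count above. Iterating $|E_p(G)|$ times, every path component of $G$ is eventually reduced to one whose walk in the current $H^{(i)}$ has length zero, i.e., an entirely intra-cluster path. All its leaves are then incident only to intra-cluster edges, so the pruning phase of the normalization embedded in the derivative operation removes the component; after $|E_p(G)|$ iterations no connected component of $G$ remains a path.

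The main obstacle is justifying the strict per-component decrease rigorously, which requires tracking the untangling and the suppression carefully so that path components neither merge with other components nor are artificially split; a secondary point is that local injectivity need not persist verbatim through the iterations, but it does persist in the sense of the hypothesis (after the specified suppression), which is all that is needed to reapply the argument inductively.
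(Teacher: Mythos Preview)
Your approach is the same as the paper's: both observe that the length of every path component (measured in pipe edges) strictly decreases with each derivative, so after at most $|E_p(G)|$ iterations all path components are gone. The paper's proof is a single sentence to this effect; you supply the explicit computation.

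One inaccuracy is worth flagging. You claim that local injectivity (in the sense of the hypothesis) persists through the iterations, and you rely on this to get $|E_p(P')| = |E_p(P)| - 1$ exactly. Persistence can fail: a locally injective walk $\alpha_0\alpha_1\cdots\alpha_k$ in $H$ (so $\alpha_{i-1}\ne\alpha_{i+1}$ for all $i$) may still satisfy $\{\alpha_{i-1},\alpha_i\} = \{\alpha_{i+1},\alpha_{i+2}\}$ as unordered pairs---take for instance $A,B,C,B,A$---and then the derived walk $(\alpha_0\alpha_1)^*(\alpha_1\alpha_2)^*\cdots(\alpha_{k-1}\alpha_k)^*$ repeats a vertex consecutively, so $\varphi'$ is \emph{not} locally injective after suppression. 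Fortunately this does not break the argument: whenever two consecutive vertices of the derived walk coincide, the corresponding edge of $P'$ becomes an intra-cluster edge rather than a pipe edge, so one actually gets $|E_p(P')| \le |E_p(P)| - 1$, possibly with strict inequality. Replace ``exactly'' by ``at most'' in your second paragraph and drop the persistence claim in the last; the iteration then goes through without it.
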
 

\begin{proof}
It suffices to observe that the maximum length of a path in $G$, that is a connected component, decreases after the derivative.
Trivial connected components are then eliminated by bringing the instance into the normal form.
\end{proof}

%An appropriate spur $W'$ of $G'$ maximizing $w(W')$ gives rise to a spur $W$ of $G$ such that $w(W)=w(W')+1$. Let $\rho_1'$ and $\rho_2'$ denote extremes of the spur $W'$. Let us choose $W'$ so that it is longest possible. Due to the properties of the normal form and the choice of $W'$, $W'$ must contain (at least two) edges corresponding to pipe edges of $G$ that are mapped under $H'$ to $\rho_1'$, and the same holds for $\rho_2'$.
%
%Now, ``integrating'' yields the desired $W$. Indeed, the generalized \deltaY operation turns $W'$ into a walk without repeated edges, since the vertices of the cycle $C$ turned by generalized \deltaY operation into a star are of degree $3$. Here, we assume that if $W'$ has an edge $e$ on $C$,
%the edge $e$ in $W$ is replaced by a pair of edges joining the center of the star with the end vertices of $e$. Finally, the ``de-normalizing'' after the integration does not introduce repeated edges on $W$ by a similar argument with generalized \yDelta operation in place of \deltaY operation.

\begin{proof}[Proof of Theorem~\ref{thm:main}]
%Let $(G,H,\varphi)$ denote the input instance. We prove the claim by induction on $p(G,H,\varphi)$. We observe that $|E(G)|-|V(G)|+1$ is bigger than $-c$, where $c$ is the number of connected components of $G$. Hence, the induction has a well-defined base case as we do not increase $c$ in what follows.
We assume that every edge of $H$ is in the image of $\varphi$ and
proceed by induction on $p(G,H,\varphi)$. First, we discuss the inductive step.
By Claim~\ref{claim:normal_form},
we assume that $(G,H,\varphi)$ is in the normal form, which leaves $p(G,H,\varphi)$ unchanged. Suppose that $\varphi$ is not locally injective after suppressing degree-$2$ vertices incident to an edge induced by a cluster. Derivating $(G,H,\varphi)$ decreases $p(G,H,\varphi)$ by 
Claim~\ref{claim:semi-trivial}. 
By Claims~\ref{claim:derivative} and~\ref{claim:integration}, $(G',H',\varphi',\psi_0')$ is a clone of $(G,H,\varphi,\psi_0)$, where $\psi_0'$ is obtained by  Claim~\ref{claim:derivative}. Hence, in this case we are done by induction.
%The second part of the theorem follows from the second part of Claim~\ref{claim:derivative}, the first part of Claim~\ref{claim:normal_form}, and by observing the following. 
%If by bringing the instance into the normal form and derivating it, we obtain $(C',H')$, where $C'$ is an Euler subgraph, $C$ must be an Euler subgraph.

Thus, we assume that $\varphi$ is locally injective, which includes the case when $p(G,H,\varphi)=0$. This means that either we reduced $G$ to an empty graph, or every connected component $C$ of $G[V_{\nu}]$, for every $\nu\in V(H)$, is a single vertex. We suppose that $G$ is not a trivial graph, since otherwise we are done.
 The proof will split into two cases, the {\bf acyclic} and the {\bf cyclic} case below, but first we introduce some tools from~\cite{FKMP15} that we use extensively in the argument.

We will work with a $\mathbb{Z}_2$-approximation $\psi_0$ of $(G,H,\varphi)$ unless specified otherwise.
Let $P$ be a path of length $2$ in $G$. Let the internal vertex $u$ of $P$ belong to $G[V_{\nu}]$, for some $\nu\in V(H)$. The curve obtained by intersecting the disc $\mathcal{D}(\nu)$ with $P$ is a \emphh{$\nu$-diagonal} \emphh{supported} by $u$. 
By a slight abuse of notation we denote in different drawings  $\nu$-diagonals with the same supporting vertex and joining the same pair of valves by the same symbol.
Let $Q$ be a $\nu$-diagonal supported by a vertex $u$ of $G[V_{\nu}]$. 
Since $\varphi$ is locally injective, $Q$ must connect a pair of distinct valves.
Let ${\bf p}$ be a point on the boundary of the disc $\mathcal{D}(\nu)$ of $\nu$ such that ${\bf p}$ is not contained in any valve. 
%A pair of vertex disjoint $\nu$-diagonals $Q$ and $R$ \emphh{cross} if the two valves containing the endpoints of $Q$ alternate along the boundary of $D(\nu)$ with the two valves containing the endpoints of $R$.

\begin{definition}
\label{def:relation}
(See Figure~\ref{fig:order} (left) for an illustration.)
For a vertex $v \in V_{\nu}$, $v \neq u$, we define $v<_{\bf p} Q$ if
in the two-coloring of the complement of $Q$ (such that connected regions sharing a non-trivial part of the boundary receive different colors)
in the disc $\mathcal{D}(\nu)$, $v$ receives the same color as the component having ${\bf p}$ on the boundary.
%%For an edge $e\in G[V_{\nu}]$ we define $e<_{\bf p} Q$, $e\not\in Q$, if after making all the edges of $Q$ even, both end pieces of $e$ 
%%receive the same color of the component having 
%%${\bf p}$ on the boundary.
%For a connected component $C\subseteq G[V_{\nu}]$ we define
%$C<_{\bf p} Q$ if all the vertices $v\in C$ satisfy $v<_{\bf p} Q$.
%Finally, for a pair distinct connected components
Let $Q_1$ and $Q_2$ be a pair of $\nu$-diagonals connecting the same pair of valves.
We define $Q_1 <_{\bf p} Q_2$ if for the vertex $v$ supporting $Q_1$ we have $v <_{\bf p} Q_2$.
Analogously we define the relation $>_{\bf p}$.
\end{definition}

\begin{figure}
\centering
\includegraphics[scale=0.8]{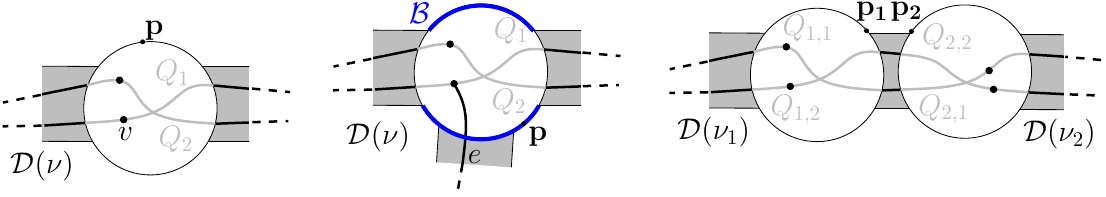}
\caption{A pair of $\nu$-diagonals $Q_1$ and $Q_2$, and a vertex $v$ such that $v<_{\bf p} Q_1$ and $Q_2<_{\bf p} Q_1$ (left). The edge $e$ forcing $Q_1<_{\bf p} Q_2$ (middle).
The relation $Q_{1,1}<_{\bf p_{1}}Q_{2,1}$ forces $Q_{1,2}<_{\bf p_{2}}Q_{2,2}$ (right).}
\label{fig:order}
\end{figure}

%The following two claims are equivalent to (1) and (2) from the proof of~%\cite[Theorem 13]{FKMP15}.
Recall that $H$ contains no multiple edges, since we do not introduce them by derivating. Thus, the upcoming Claims~\ref{claim:comparison} and~\ref{claim:comparison2} follow easily by the same argument as (1) and (2) in~\cite[Theorem 13]{FKMP15}.
\begin{claim}
\label{claim:comparison}
The relation $<_{\bf p}$ is anti-symmetric:
If for a pair of $\nu$-diagonals $Q_1,Q_2$ of $G[V_{\nu}]$ we have $Q_1<_{\bf p}Q_2$ then
$Q_1\not >_{\bf p}Q_2$.
\end{claim}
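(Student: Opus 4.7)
The plan is to establish the claim by a direct topological argument inside the disc $\mathcal{D}(\nu)$, invoking only the Jordan curve theorem and the hypothesis that $H$ has no multiple edges. The core observation is that $Q_2$ is a simple arc in $\mathcal{D}(\nu)$ with endpoints on $\partial\mathcal{D}(\nu)$, so it separates $\mathcal{D}(\nu)$ into exactly two connected regions, and the supporting vertex $u_1$ of $Q_1$ lies in exactly one of them.

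First, I would record the two facts that make the geometric setup well posed. Because $H$ has no multiple edges, the two valves at which $Q_1$ and $Q_2$ meet $\partial\mathcal{D}(\nu)$ correspond to two distinct edges of $H$ incident to $\nu$, and in particular they are disjoint closed arcs of $\partial\mathcal{D}(\nu)$; consequently the complement of the two valves in $\partial\mathcal{D}(\nu)$ consists of exactly two open arcs, and ${\bf p}$, which by choice lies in no valve, belongs to exactly one of them. Second, since the primitive relation $v<_{\bf p}Q$ in Definition~\ref{def:relation} is only defined when $v$ differs from the supporting vertex of $Q$, both $Q_1<_{\bf p}Q_2$ and $Q_1>_{\bf p}Q_2$ implicitly require that the supporting vertices $u_1$ and $u_2$ of $Q_1$ and $Q_2$ be distinct; combined with the standing general-position assumption on $\psi_0$, this yields $u_1\notin Q_2$.

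The main step is then a single application of the Jordan curve theorem to $Q_2$ inside $\mathcal{D}(\nu)$: this produces exactly two connected components of $\mathcal{D}(\nu)\setminus Q_2$, and the two-coloring of Definition~\ref{def:relation} assigns them distinct colors. The vertex $u_1$ belongs to precisely one of the two components; by unwinding the definitions, $Q_1<_{\bf p}Q_2$ holds if and only if $u_1$ lies in the component whose boundary contains ${\bf p}$, while $Q_1>_{\bf p}Q_2$ holds if and only if $u_1$ lies in the other component. Since the two alternatives are mutually exclusive, the desired implication follows.

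I expect the only subtlety worth flagging to be the careful tracking of the setup that guarantees $u_1\neq u_2$ and that the two valves containing the endpoints of $Q_1$ and $Q_2$ are disjoint; both hinge on $H$ being a simple graph, which is exactly the hypothesis emphasized in the sentence preceding Claim~\ref{claim:comparison}. Once this bookkeeping is in place the proof reduces to a single invocation of the Jordan curve theorem, mirroring the structure of the argument for part~(1) of~\cite[Theorem~13]{FKMP15}.
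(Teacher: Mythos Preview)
Your argument rests on a misreading of the relation $>_{\bf p}$. You interpret $Q_1>_{\bf p}Q_2$ as ``$u_1$ lies in the component of $\mathcal{D}(\nu)\setminus Q_2$ \emph{not} containing ${\bf p}$,'' which indeed makes the claim a tautology about the two-colouring. But the paper labels the claim \emph{anti-symmetry}, and anti-symmetry of a relation $<$ is the statement $a<b\Rightarrow\neg(b<a)$; accordingly $Q_1>_{\bf p}Q_2$ is to be read as $Q_2<_{\bf p}Q_1$, i.e.\ the supporting vertex $u_2$ of $Q_2$ lies in the ${\bf p}$-coloured region of $\mathcal{D}(\nu)\setminus Q_1$. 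With this reading the claim says that one cannot have \emph{both} $u_1$ on the ${\bf p}$-side of $Q_2$ and $u_2$ on the ${\bf p}$-side of $Q_1$. That is not a tautology: it genuinely uses that $\psi_0$ is independently even and that $H$ has no multiple edges, exactly the hypotheses the paper highlights before the claim, and it is precisely what lets the paper deduce right afterwards that $<_{\bf p}$ defines a \emph{tournament} on diagonals joining the same pair of valves. Your Jordan-curve argument looks only at $u_1$ relative to $Q_2$ and says nothing about $u_2$ relative to $Q_1$, so it does not address the intended statement.

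A smaller issue: you assert that $Q_2$ is a simple arc, but $Q_2$ is the $\mathcal{D}(\nu)$-portion of two \emph{adjacent} edges meeting at $u_2$, and adjacent edges may cross in an independently even drawing. Thus $Q_2$ can self-intersect; the two-colouring of its complement is still well defined, but the clean ``Jordan curve theorem for a simple arc'' picture is not available.
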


By Claim~\ref{claim:comparison}, the relation $<_{\bf p}$ defines a \emphh{tournament}, that is, a complete oriented graph, on $\nu$-diagonals joining the same pair of valves.
A pair of a $\nu_{1}$-diagonal $Q_1$ and a $\nu_{2}$-diagonal $Q_2$ of $G$ is \emphh{neighboring} if $Q_1$ 
and $Q_2$ have endpoints on the same (pipe) edge.

Let $Q_{1,i}$ and $Q_{2,i}$ be a neighboring pair of a $\nu_{1}$-diagonal and a $\nu_{2}$-diagonal sharing a pipe edge $e_i$, for $i=1,2$, such that $\varphi(e_1)=\varphi(e_2)=\rho=\nu_{1}\nu_{2}$.
Let ${\bf p_{1}}$ and ${\bf p_{2}}$ be on the boundary of $\mathcal{D}(\nu_{1})$ and $\mathcal{D}(\nu_{2})$, respectively, very close to the same side of the pipe of $\rho$.

\begin{claim}
\label{claim:comparison2}
If $Q_{1,1}<_{\bf p_{1}}Q_{1,2}$ then $Q_{2,1}<_{\bf p_{2}}Q_{2,2}$; see Figure~\ref{fig:order} (right)
for an illustration.
\end{claim}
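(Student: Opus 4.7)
The plan is to reinterpret each triple \emph{diagonal--pipe edge--diagonal} as a single arc inside the disc $R:=\mathcal{D}(\nu_1)\cup\mathcal{P}(\rho)\cup\mathcal{D}(\nu_2)$ and then transport the side relation $<_{\bf p}$ across the pipe by a parity argument on a two-coloring. Let $u_i$ and $w_i$ denote the supports of $Q_{1,i}$ and $Q_{2,i}$, respectively. Since $Q_{1,1},Q_{1,2}$ connect the same pair of valves of $\mathcal{D}(\nu_1)$ (and analogously for $Q_{2,1},Q_{2,2}$), the underlying paths of length three in $G$ have the form $a_i\,u_i\,w_i\,d_i$ with $e_i=u_iw_i$, $a_i\in V_\mu$, $d_i\in V_\lambda$, where $\sigma_1=\nu_1\mu$ and $\sigma_2=\nu_2\lambda$ name the ``other'' valves hosting the endpoints of $Q_{1,i}$ and $Q_{2,i}$. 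Set $\pi_i:=Q_{1,i}\cup e_i\cup Q_{2,i}$; each $\pi_i$ is an arc inside the topological disc $R$ whose two endpoints lie on $\partial R$, specifically on the valves of $\sigma_1$ and $\sigma_2$.

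Two-color the components of $R\setminus\pi_2$ so that adjacent regions receive opposite colors, and write $x\sim y$ when $x,y$ belong to the same color class. The heart of the argument is the chain
\[
{\bf p_2}\;\sim\; {\bf p_1}\;\sim\; u_1\;\sim\; w_1.
\]
From this the conclusion is immediate: restricting the coloring of $R\setminus\pi_2$ to $\mathcal{D}(\nu_2)$ recovers, up to a global swap, the two-coloring of $\mathcal{D}(\nu_2)\setminus Q_{2,2}$ (because $\pi_2\cap\mathcal{D}(\nu_2)=Q_{2,2}$), so $w_1\sim{\bf p_2}$ translates to $w_1<_{\bf p_2}Q_{2,2}$, i.e.\ $Q_{2,1}<_{\bf p_2}Q_{2,2}$.

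The three equivalences are justified as follows. For ${\bf p_1}\sim{\bf p_2}$, the short arc of $\partial R$ that joins ${\bf p_1}$ to ${\bf p_2}$ by hugging the common side of the pipe of $\rho$ never meets $\pi_2$, whose endpoints sit on the valves of $\sigma_1$ and $\sigma_2$ away from this arc. For ${\bf p_1}\sim u_1$, the hypothesis $Q_{1,1}<_{\bf p_1}Q_{1,2}$ places $u_1$ and ${\bf p_1}$ in the same class of $\mathcal{D}(\nu_1)\setminus Q_{1,2}$, which matches their class in $R\setminus\pi_2$ because $\pi_2\cap\mathcal{D}(\nu_1)=Q_{1,2}$. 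For $u_1\sim w_1$, the single edge $e_1\subset R$ connects them, and I count its crossings with $\pi_2$ by decomposing $\pi_2$ into parts of $a_2u_2$, $e_2$, and $w_2d_2$. In each of the three pairs $(e_1,a_2u_2)$, $(e_1,e_2)$, $(e_1,w_2d_2)$ the two edges are non-adjacent (their endpoints lie in pairwise distinct clusters, using $u_1\neq u_2$ and $w_1\neq w_2$ because the supports differ), hence cross evenly by independent evenness of $\psi_0$; moreover the supports of the two edges of each pair meet only inside $\mathcal{D}(\nu_1)$, inside $R$, and inside $\mathcal{D}(\nu_2)$ respectively, so all of these crossings lie in $R$ and equal the crossings between $e_1$ and the corresponding sub-edge of $\pi_2$. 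Summing three even counts shows that $e_1$ meets $\pi_2$ evenly, so $u_1\sim w_1$.

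The delicate step is $u_1\sim w_1$: one must confirm that no crossings of $e_1$ with the full edges of $\pi_2$ escape the region $R$ in a way that would corrupt the parity inside $R$. The containment bookkeeping handles this by observing that each constituent edge of $\pi_2$ lives in a specific trio of discs/pipes whose intersection with the support of $e_1$ is exactly one of $\mathcal{D}(\nu_1)$, $R$, or $\mathcal{D}(\nu_2)$, so that all crossings of the full pair already take place inside $R$. The overall structure parallels step~(2) in the proof of~\cite[Theorem 13]{FKMP15}.
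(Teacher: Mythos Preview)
Your argument is correct and follows essentially the same route the paper points to: the paper does not spell out a proof but defers to the parity argument behind (1) and (2) of \cite[Theorem~13]{FKMP15}, and your two-coloring chain ${\bf p_2}\sim{\bf p_1}\sim u_1\sim w_1$ inside the disc $R=\mathcal{D}(\nu_1)\cup\mathcal{P}(\rho)\cup\mathcal{D}(\nu_2)$ is precisely that argument written out in the present notation. The bookkeeping you add---that $\pi_2\cap\mathcal{D}(\nu_j)=Q_{j,2}$, that the boundary arc between ${\bf p_1}$ and ${\bf p_2}$ avoids $\pi_2$, and that all crossings of $e_1$ with the three edges underlying $\pi_2$ lie in $R$ because $H$ has no multi-edges so $\mu\neq\nu_2$ and $\lambda\neq\nu_1$---is exactly what is needed and is handled cleanly.
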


Let $\mathcal{D}_1$ and $\mathcal{D}_2$ be a set of $\nu_{1}$-diagonals and $\nu_{2}$-diagonals, respectively, in $G$ of the same cardinality such that every $\nu_{1}$-diagonal in $\mathcal{D}_1$ ends on the valve of $\rho$ and forms a neighboring pair with a $\nu_{2}$-diagonal in $\mathcal{D}_2$.
We require that all the diagonals in $\mathcal{D}_1$ join the same pair of valves.
Let $\overrightarrow{G(\mathcal{D}_1)}$ and $\overrightarrow{G(\mathcal{D}_2)}$
be the tournament with vertex set $\mathcal{D}_1$ and $\mathcal{D}_2$, respectively, defined by the relation $<_{\bf p_{1}}$ and $<_{\bf p_{2}}$, respectively.
An oriented graph $\overrightarrow{D}$ is \emphh{strongly connected} if there exists a directed path in 
$\overrightarrow{D}$ from $u$ to $v$ for every ordered pair of vertices $u$ and $v$ in $V(\overrightarrow{D})$. The following claim follows 
 from Claim~\ref{claim:comparison2}.

\begin{claim}
\label{claim:comparison3}
If $\overrightarrow{G(\mathcal{D}_1)}$ is strongly connected then all the diagonals in $\overrightarrow{G(\mathcal{D}_2)}$ join the same pair of valves, and the oriented graph $\overrightarrow{G(\mathcal{D}_2)}$ is strongly connected.
\end{claim}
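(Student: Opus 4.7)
The plan is to reduce both assertions to a chained application of Claim~\ref{claim:comparison2} along directed paths in $\overrightarrow{G(\mathcal{D}_1)}$. First I would observe that the neighboring pair relation yields a bijection between $\mathcal{D}_1$ and $\mathcal{D}_2$: each $R \in \mathcal{D}_1$ ends on a unique pipe edge incident to the valve of $\rho$, and I pair $R$ with the unique $S \in \mathcal{D}_2$ sharing that pipe edge; denote this partner by $S(R)$. The hypothesis ``$|\mathcal{D}_1|=|\mathcal{D}_2|$ and the neighboring pair condition'' makes this pairing well defined.

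To prove both assertions in one sweep, I would take arbitrary diagonals $R, R' \in \mathcal{D}_1$ and use strong connectivity of $\overrightarrow{G(\mathcal{D}_1)}$ to pick a directed path $R = R_0 \to R_1 \to \cdots \to R_k = R'$. Each relation $R_i <_{\bf p_1} R_{i+1}$ is meaningful because all elements of $\mathcal{D}_1$ join the same pair of valves, by assumption. Then for each $i$, I apply Claim~\ref{claim:comparison2} to the two neighboring pairs $(R_i, S(R_i))$ and $(R_{i+1}, S(R_{i+1}))$ sharing their respective pipe edges of $\rho$; since $R_i <_{\bf p_1} R_{i+1}$, the claim delivers that $S(R_i)$ and $S(R_{i+1})$ join the same pair of valves and that $S(R_i) <_{\bf p_2} S(R_{i+1})$.

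The first assertion of the claim then follows because the ``same pair of valves'' property propagates transitively through the shared diagonals $S(R_1), \ldots, S(R_{k-1})$ along the chain, and $R, R'$ were arbitrary. The second assertion follows immediately from the same chain: we have produced a directed path $S(R_0) \to S(R_1) \to \cdots \to S(R_k)$ in $\overrightarrow{G(\mathcal{D}_2)}$ from $S(R)$ to $S(R')$, which is now a bona fide tournament on $\mathcal{D}_2$; so strong connectivity is obtained.

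I do not anticipate any serious obstacle; the only point that requires mild care is that the relation $<_{\bf p_2}$ is a priori only defined on pairs of $\nu_2$-diagonals joining the same pair of valves, so one cannot separate the two conclusions and must establish them in tandem along the chain, rather than first asserting strong connectivity and then uniformity of endpoints.
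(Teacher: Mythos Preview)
There is a genuine gap at the heart of your chain argument. You write that Claim~\ref{claim:comparison2} ``delivers that $S(R_i)$ and $S(R_{i+1})$ join the same pair of valves,'' but it does not. Look at Definition~\ref{def:relation}: the relation $Q_1 <_{\bf p} Q_2$ between diagonals is only defined when $Q_1$ and $Q_2$ connect the \emph{same} pair of valves. Consequently the conclusion $Q_{2,1} <_{\bf p_2} Q_{2,2}$ of Claim~\ref{claim:comparison2} presupposes that $Q_{2,1}$ and $Q_{2,2}$ already share endpoints on the same two valves; the claim does not establish that fact. And indeed the fact is false for a single step: there is nothing preventing $R_i <_{\bf p_1} R_{i+1}$ while the partners $S(R_i)$ and $S(R_{i+1})$ terminate at two different valves of $\mathcal{D}(\nu_2)$. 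So your inductive step along the directed path never gets started.

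You correctly flag this subtlety in your final paragraph, but then assume it away. The paper closes the gap with two ideas you are missing. First, it temporarily treats all valves at $\mathcal{D}(\nu_2)$ other than the valve of $\rho$ as a single merged valve; in this coarsened picture every diagonal of $\mathcal{D}_2$ joins the same pair of valves, so Claims~\ref{claim:comparison} and~\ref{claim:comparison2} apply and (by exactly your chaining argument, or directly) the merged-valve tournament $\overrightarrow{G(\mathcal{D}_2)}$ is strongly connected. Second, it observes geometrically that for any fixed original valve $V$, the diagonals in $\mathcal{D}_2$ ending at $V$ are separated from the rest by a \emph{directed cut} in this tournament. Strong connectivity forbids nontrivial directed cuts, hence all diagonals in $\mathcal{D}_2$ end at the same original valve. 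Only then does the second conclusion follow, now by a legitimate application of Claim~\ref{claim:comparison2}. Your proposal needs both of these additional ingredients.
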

\begin{proof}
The second claim follows directly from the first one by Claim~\ref{claim:comparison2}.
Thus, it remains to prove the first claim.
 By treating all the valves at $\mathcal{D}(\nu_{2})$ distinct from the valve of $\rho$ as a single valve, both Claim~\ref{claim:comparison} and Claim~\ref{claim:comparison2} apply to $<_{\bf p_{1}}$ and $<_{\bf p_{2}}$, and the diagonals in $\mathcal{D}_1$ and $\mathcal{D}_2$, respectively.
 Let $\mathcal{D}_2'\subseteq \mathcal{D}_2$ be non-empty and formed by all the $\nu_{2}$-diagonals in $\mathcal{D}_2$ joining the same pair of valves. 
  By the definition of $<_{\bf p_2}$,   the edges in $\overrightarrow{G(\mathcal{D}_2)}$ between 
  $\mathcal{D}_2\setminus \mathcal{D}_2'$ and $\mathcal{D}_2'$ form a directed cut, i.e., either all are directed from $\mathcal{D}_2\setminus \mathcal{D}_2'$ to  $\mathcal{D}_2'$, or all are directed from  $\mathcal{D}_2'$ to  $\mathcal{D}_2\setminus \mathcal{D}_2'$.
Now, Claim~\ref{claim:comparison2} implies $\mathcal{D}_2'=\mathcal{D}_2$.
\end{proof}

%e have neither YY-configuration nor a pair of $\nu$-diagonals that cross in an $\mathbb{Z}_2$-approximation by Claim~\ref{claim:YY} and Claim~\ref{claim:cross}, respectively.
\emphh{The part of $G$ inside $\mathcal{D}(\nu)$} is the union of all $\nu$-diagonals. The part of $G$ inside $\mathcal{D}(\nu)$ is \emph{embedded}
if $\psi_0$ does not contain any edge crossing in $\mathcal{D}(\nu)$.

\paragraph*{Acyclic case.}
In this case, we assume that for every $\nu\in V(H)$ and ${\bf p}\in\partial\mathcal{D}(\nu)$ not contained in any valve, the relation $<_{\bf p}$  induces an acyclic tournament on every set of pairwise vertex-disjoint $\nu$-diagonals joining the same pair of valves. 

We show that we can embed the part of $G$ inside every disc $\mathcal{D}(\nu)$ while respecting the relation $<_{\bf p}$ defined according to $\psi_0$.
In other words, in every cluster we embed connected components (now just vertices) induced by $V_{\nu}$ together with parts of their incident pipe edges ending on valves so that the relations 
$Q_1<_{\bf p} Q_2$ are preserved for every pair of $\nu$-diagonals $Q_1$ and $Q_2$ joining the same pair of valves.
Then by reconnecting the parts $G$ inside $\mathcal{D}(\nu)$'s  we obtain a required embedding of $G$ which 
will conclude the proof.

By an easy application of the unified Hanani--Tutte theorem we obtain an embedding of the part of  $G$ inside $\mathcal{D}(\nu)$. We apply the theorem to an independently even drawing of an auxiliary graph $G_{aux}(\nu)$ in $\mathcal{D}(\nu)$, where the drawing is obtained as the union of the part of $G$ inside $\mathcal{D}(\nu)$ and $\partial\mathcal{D}(\nu)$. By subdividing edges in $G_{aux}(\nu)$ we achieve that the vertices drawn in $\partial\mathcal{D}(\nu)$ are even and therefore we indeed obtain an embedding of the part of $G$ inside $\mathcal{D}(\nu)$ as required.  Suppose that in the embedding of the part of $G$ inside $\mathcal{D}(\nu)$  we have $Q_1>_{\bf p} Q_2$ while in the drawing $\psi_0$ we have $Q_1<_{\bf p} Q_2$.
For the sake of contradiction we consider the embedding with the smallest number of such pairs, and consider such pair $Q_1$ and $Q_2$ whose end points are closest to each other along the valve that contains them.

First, we assume that both $Q_1$ and $Q_2$ pass through a connected component (a single vertex) of $G[V_{\nu}]$ of pipe degree $2$.
The endpoints of $Q_1$ and $Q_2$ are consecutive along valves, since $<_{\bf p}$ is acyclic. Thus, we just exchange them thereby contradicting the choice of the embedding. 
Second, we show that if $Q_1$ passes through a connected component $C_1$ of $G[V_{\nu}]$ of pipe degree at least $3$ and $Q_2$ passes through a component $C_2$ of pipe degree $2$, then the relation $Q_1>_{\bf p} Q_2$ in the drawing of $\psi_0$ leads to  contradiction as well. Let $\rho$ be an edge of $H$ such that there exists an edge incident to $C_1$ mapped to $\rho$ by $\varphi$ and there does not exist such an edge incident to $C_2$, see Figure~\ref{fig:order} (middle) for an illustration.
Let $\mathcal{B}$ be the complement of the union of the valves containing the endpoints of $Q_1$ or $Q_2$ in the boundary of $\mathcal{D}(\nu)$. 
Suppose that the valve of $\rho$ and ${\bf p}$ are contained in the same connected component of $\mathcal{B}$. It must be that $Q_1<_{\bf p}Q_2$ in every $\mathbb{Z}_2$-approximation of $(G,H,\varphi)$.
If the valve of $\rho$ and ${\bf p}$ are contained in the different connected components of $\mathcal{B}$, it must be that $Q_1>_{\bf p}Q_2$ in every $\mathbb{Z}_2$-approximation of $(G,H,\varphi)$,
in particular also in an approximation.
Finally, we assume that $Q_1$ and $Q_2$ pass through a connected component $C_1$ and $C_2$, respectively, of $G[V_{\nu}]$ of pipe degree at least $3$.
Similarly as above, let $\rho_1$ and $\rho_2$ be edges of $H$ such that there exists an edge incident to $C_1$ and $C_2$, respectively, mapped to $\rho_1$ and $\rho_2$ by $\varphi$, and neither $Q_1$ nor $Q_2$ ends on its valve.
 By Claim~\ref{claim:YY}, we have $\rho_1\not=\rho_2$. By Claim~\ref{claim:cross},
 the valve of $\rho_1$ and $\rho_2$ are not contained in the same connected component $\mathcal{B}$.
 Thus, by the same argument as in the previous case it must be that either in every $\mathbb{Z}_2$-approximation of $(G,H,\varphi)$ we have $Q_1<_{\bf p}Q_2$ or in every $\mathbb{Z}_2$-approximation of $(G,H,\varphi)$ we have $Q_1>_{\bf p}Q_2$.

In order to finish the proof in the acyclic case, we would like to reconnect neighboring pairs of diagonals by curves inside the pipes without creating a crossing. Let $Q_{1,i}$ and $Q_{2,i}$, for $i=1,2$, be a pair of a neighboring $\nu_{1}$-diagonal and $\nu_{2}$-diagonal sharing a pipe edge $e_i$, for $i=1,2$, such that $\varphi(e_1)=\varphi(e_2)=\rho$. We would like the endpoints of $Q_{1,1}$ and $Q_{1,2}$ to be ordered along the valve of $\rho$ consistently with the endpoints of $Q_{2,1}$ and $Q_{2,2}$ along the other valve of $\rho$. We are done if Claim~\ref{claim:comparison2} applies to $Q_{i,j}$'s. However, this does not have to be the case if, let's say $Q_{1,1}$ and $Q_{2,1}$, does not join the same pair of valves.
Nevertheless, by treating all the valves distinct from the valve of $\rho$ at $\mathcal{D}(\nu_{1})$ as a single valve, we see that both Claim~\ref{claim:comparison} and Claim~\ref{claim:comparison2}, in fact, apply to $<_{\bf p_{1}}$ and $<_{\bf p_{2}}$. 

\paragraph*{Cyclic case.}
In this case, we assume that for a vertex $\nu\in V(H)$ and ${\bf p}\in\partial\mathcal{D}(\nu)$ not contained in any valve, the relation $<_{\bf p}$  induces an acyclic tournament on a set of pairwise vertex-disjoint $\nu$-diagonals joining the same pair of valves. 

 We consider at least three $\nu$-diagonals $Q_1,\ldots ,Q_l$ inducing a strongly connected component in the tournament defined by $<_{\bf p}$. 
Let ${\bf p_k}$ and ${\bf q_k}$ be endpoints of $Q_k$, for $k=1,\ldots, l$. We assume that ${\bf p_k}$, for $k=1,\ldots,l$, are contained in the same valve, and therefore the same holds for ${\bf q_k}$.
By Claim~\ref{claim:semi-trivial2}, we assume that no connected component in $G$ is a path.
Hence, by Claim~\ref{claim:comparison3} every $Q_k$ is contained in a (drawing of a) connected component of $G$ which is a cycle. Indeed, a vertex of degree at least 3 in a connected component of $G$, whose vertex supports $Q_k$, would inevitably lead to a pair of independent edges crossing oddly in $\psi_0$, since we assume that $\varphi$ is locally injective.
Thus, by a simple inductive argument using Claim~\ref{claim:comparison3} and the fact that no two distinct strongly connected components in an oriented graph share a vertex we obtain the following property of $Q_1,\ldots ,Q_l$.

Every endpoint ${\bf p_k}$ is joined by a  curve in the closure of $\psi_0(G)\setminus \bigcup_{l'=1}^{l}Q_{l'}$ 
with an endpoint ${\bf q_{k'}}$. This defines a permutation $\pi$ of the set $\{Q_1,\ldots ,Q_l\}$, where $\pi(Q_k)=Q_{k'}$. On the one hand, each orbit 
in the permutation $\pi$ must obviously consist of $\nu$-diagonals supported by vertices in the same connected component of $G$, which is a cycle as we discussed in the previous paragraph.
On the other hand,  every pair of diagonals belonging to different orbits is supported by vertices in different cycles in $G$. Hence, the orbits of $\pi$ are in a one-to-one correspondence with a subset of connected components in $G$ all of which are cycles.
Let $C_1\ldots C_{o}$, $o\le l$, denote these cycles. By a simple inductive argument which uses Claim~\ref{claim:comparison3}, we have that every $\varphi(C_{k})={W_k,\ldots, W_k}$  with $W_k$ being repeated $o_{k}$-times, where $W_k$ is a closed walk of $H$ and $o_k$ is the size of the orbit corresponding to $C_k$. By the hypothesis of Theorem~\ref{thm:main} we assume  that
 $(C_k,H|_{\varphi(C_k)},\varphi|_{C_k})$, for $k=1,\ldots, o$, is a positive instance.

By the previous assumption, if the number of negative signs on the edges in $W_k$ (counted with multiplicities) is even then $o_k=1$. Indeed, a closed neighborhood of an approximation $\psi_{C_k}$ (which is an embedding) of $(C_k,H|_{\varphi(C_k)},\varphi|_{C_k})$  is the annulus, in which (the image of) $\psi_{C_k}$ is a non-self intersecting closed piecewise linear curve.
Analogously, we show that if the number of negative signs on the edges 
in $W_k$ (counted with multiplicities) is odd then
$o_k\le 2$, and $o_k=1$ for at most a single value of $k$, i.e., if $o_{k_1}=o_{k_2}=1$ then $k_1=k_2$.

Suppose that the previous claim holds for every $k=1,\ldots ,o$.
Since $l\ge 3$ and $o_k\le 2$ for $k=1,\ldots, o$, we have that $o\ge 2$.
We assume that $o_2\le o_1$.
 We remove the cycle $C_1$ from $G$ and apply induction.
Let $\psi$ be an approximation of $(G\setminus C_1,H,\varphi|_{G\setminus C_1})$ that we obtain by the induction hypothesis.
We construct the desired approximation of $(G,H,\varphi)$ by extending $\psi$ to $G$ as follows. We embed $C_1$ alongside $C_2$  while satisfying (\ref{it:1st}) and~(\ref{it:2nd}) for $(G,H,\varphi)$, which is possible since $1\le o_2\le o_1\le 2$.

It remains to show the claim.
 For the sake of contradiction we assume that  $o_{k_1}=o_{k_2}=1$, for $k_1\not=k_2$. The curves $\psi_0(C_{k_1})$ and $\psi_0(C_{k_2})$ are one-sided and homotopic, and therefore they must cross an odd number of times in  $\psi_0(G)$ (contradiction with the fact that $\psi_0$ is an independently even drawing).
Finally, for the sake of contradiction suppose that for some $k$, we have $o_k\ge 3$ and that there exists an approximation $\psi_{C_k}$ of  $(C_k,H|_{\varphi(C_k)},\varphi|_{C_k})$ (which is an embedding).
If $o_k$ is odd we replace (the image of) $\psi_{C_k}$ by the boundary of its small closed neighborhood, which is connected.
Thus, we can and shall assume that $o_k$ is even and still bigger than $2$.
A closed neighborhood of $\psi_{C_k}$ is the M\"obius band.  By lifting $\psi(C_k)$ to the annulus via the double cover of the  M\"obius band, we obtain a piecewise linear closed non-self intersecting curve winding $o_k/2>1$ times  around the  center of the annulus (contradiction).
\end{proof}

%=====================================================================================

\section{Tractability}
\label{sec:alg}

\jk{better title suggestion: Algorithm?}

We show that testing whether $(G,H,\varphi)$ is approximable by an embedding can be carried out in polynomial time thereby establishing Theorem~\ref{thm:main2}.

By~\cite[Section 2]{FKMP15}, we can test $\mathbb{Z}_2$-approximability in polynomial time.\jk{not really I am afraid. The algorithm will have to be modified I think. There should be a separate section for this.} Therein an algorithm for c-planarity testing is considered based on solving a system of linear equations over $\mathbb{Z}_2$.
This algorithm can be easily adapted to our setting, if we start with an arbitrary initial drawing of polynomial complexity contained in $\mathcal{H}$ satisfying~(A) and~(B), and forbid edge-cluster switches; see~\cite{FKMP15} for details.
Thus, by our main result, Theorem~\ref{thm:main}, it remains to test the existence of a connected component $C, \ C\subseteq G^{(i)}$ such that $C$ is a cycle and $(C,H^{(i)},\varphi^{(i)})$, for $i=2|E(G)|$, is not approximable. 
In order to rule out the existence of $C$ we do not have to construct $\psi_0^{(i)}$,
although such a construction would lead to an efficient algorithm constructing an actual embedding\footnote{By constructing $\psi_0^{(i)}$ we mean specifying the rotation system and the parity of the number 
crossings between every pair of edges of $G$. By observing that every step of our proof is reflected by a change in such a representation of the $\mathbb{Z}_2$-approximation, it is quite straightforward to turn our algorithm into an efficient one that actually constructs an approximation. Nevertheless, to keep the presentation simple we present only a decision version of the algorithm.}.
Instead it is enough to work with a \emphh{simplified instance}
$(\overline{G},H,\overline{\varphi})$ obtained from
$({G},H,{\varphi})$ by discarding every connected component of pipe degree $0$ induced by a cluster $V_{\nu}$, contracting every connected component $C$ induced by a cluster $V_{\nu}$ to a single vertex $v_C$, deleting created loops and multiple edges, and putting $\overline{\varphi}(v_C)=\nu$. Hence, every cluster of $G$ induces an independent set in $\overline{G}$, and edges
of $\overline{G}$ capture adjacency between connected components of $G$ induced by clusters.
Therefore $\overline{G}$ is \emphh{the adjacency graph of connected components induced by clusters} of $({G},H,{\varphi})$.

Let $I=(G,H,\varphi)$, $\overline{I}=(\overline{G},H,\overline{\varphi})$.
To simplify the notation, by $\overline{I}^{(i)}$ we denote the \emphh{simplified derivative} $\underbrace{\overline{\left(\overline{{\left(\overline{\overline{I}'}\right)'}}\right)'}\ldots }_{i-times}$.
The next claim tells us that performing the simplification before each derivative does not change the outcome of the simplification of the derivative.

\begin{claim}
\label{claim:simplified}
For every $i\in \mathbb{N}$, $\overline{\overline{I}^{(i)}}=\overline{I^{(i)}}$.
\end{claim}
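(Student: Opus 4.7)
The plan is to argue by induction on $i$, with the base case $i=0$ following from the idempotence of the simplification operation $\overline{\phantom{I}}$: once each cluster of $\overline{I}$ induces an independent set and no vertex has pipe degree $0$, a second application of the simplification affects nothing, so $\overline{\overline{I}^{(0)}}=\overline{\overline{I}}=\overline{I}=\overline{I^{(0)}}$.

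For the inductive step, assume $\overline{\overline{I}^{(i-1)}}=\overline{I^{(i-1)}}$. By the idempotence of $\overline{\phantom{I}}$ the inductive hypothesis also yields $\overline{I}^{(i-1)}=\overline{I^{(i-1)}}$. Setting $J:=I^{(i-1)}$ and unrolling one step of each sequence, idempotence gives $\overline{\overline{I}^{(i)}}=\overline{(\overline{I}^{(i-1)})'}=\overline{(\overline{J})'}$ and $\overline{I^{(i)}}=\overline{J'}$, so the claim reduces to the commutation lemma
\[
\overline{(\overline{J})'}=\overline{J'}
\]
for an arbitrary $\mathbb{Z}_2$-approximable instance $J$.

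For the commutation lemma, the key observation is that the derivative leaves the underlying abstract graph $G$ untouched (Definition~\ref{def:derivativeAbstract}): only $H$ and $\varphi$ change. The partition of $V(G)$ under $\varphi'$ is the coarsening of the partition under $\varphi$ in which the vertices of every connected component $C$ of $G[V\setminus V_s]$ are merged and labeled by $\varphi(C)^*$, while each central vertex $v_s\in V_s$ is labeled individually by $v_s^*$. Simplifying $J'$ then contracts each connected component of $G'$ lying in a single cluster of $H'$ to a point; such a component is a union of contracted components $C$ of $G[V\setminus V_s]$ together with possibly some central vertices. Thus $\overline{J'}$ depends only on the incidence structure among the components of $G[V_\nu]$, the pipe edges between them, and the placement of central vertices dictated by the normal form of $J$.

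This incidence structure is exactly what is preserved by $\overline{J}$: each connected component of $J[V_\nu]$ of positive pipe degree becomes one vertex of $\overline{G}$, and its pipe-edge neighborhood is faithfully recorded. Moreover, because every cluster of $\overline{J}$ is already an independent set (so there are no cycles or unimportant branches to process), the normal form of $\overline{J}$ is essentially immediate: it merely designates as central those vertices of $\overline{G}$ whose pipe degree is at least $2$. Consequently, the connected components of $\overline{G}[V\setminus V_s(\overline{J})]$ stand in a natural bijection with the connected components of $G[V\setminus V_s]$ of the normal form of $J$, and this bijection respects the map to $E(H)$. After derivating $\overline{J}$ and simplifying once more, the very same collections of vertices are therefore contracted on both sides, establishing the commutation. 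The main technical burden is verifying this bijection carefully, by tracking the normal-form preprocessing of Claim~\ref{claim:normal_form} applied to $J$ (cycle reduction, pruning and untangling drastically reshape $G$ internally inside each cluster, yet leave the component-level adjacency structure, which is all that $\overline{J}$ records, intact); once this identification is in hand, the equality of $\overline{(\overline{J})'}$ and $\overline{J'}$ as instances follows by direct comparison.
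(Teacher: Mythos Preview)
Your proposal is correct and follows essentially the same approach as the paper. The only organizational difference is that you take the trivial case $i=0$ as the base and isolate the key step as a separate ``commutation lemma'' $\overline{(\overline{J})'}=\overline{J'}$, whereas the paper takes $i=1$ as the base case (which \emph{is} your commutation lemma) and then reuses this base step verbatim with $I^{(i-1)}$ in place of $I$ in the induction; the substance of the argument---that normalization leaves the adjacency graph of cluster-induced components unchanged and that this adjacency graph alone determines the simplified derivative---is identical.
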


\begin{proof}
We prove the claim by induction on $i$.
In the base step we have $i=1$, and we need to show that the simplified instance of the derivative of $(G,H,\varphi)$ equals to the
simplified instance of the derivative of the simplified instance $(\overline{G},H,\overline{\varphi})$.
We bring $(\overline{G},H,\overline{\varphi})$ into the normal form $(\overline{G}_N,H,\overline{\varphi}_N)$.
Roughly speaking, we show that the simplified instance of the derivative of $(G,H,\varphi)$ is determined by the simplified instance $(\overline{G},H,\overline{\varphi})$.

By bringing an instance into the normal form we do not change the adjacency graph of connected components induced by clusters.
Hence, without loss of generality we assume that $(G,H,\varphi)$ is already in the normal form as well. 

Let $\mathcal{V}$ (resp. $\overline{\mathcal{V}}$) be the the union of the set of central vertices
$V_s$ (resp. $\overline{V}_s$) in $V(G)$ (resp. $V(\overline{G}_N)$), and the set of connected components of $G[V(G)\setminus V_s]$ (resp. 
 $\overline{G}_N[V(\overline{G}_N) \setminus \overline{V}_s]$).
 Let $\mathcal{G}=(\mathcal{V},\mathcal{E})$ (resp. $\overline{\mathcal{G}}=(\overline{\mathcal{V}},\overline{\mathcal{E}})$) be the bipartite graph in which $v_s\in V_s$ (resp. $v_s\in\overline{V}_s$) is joined by an edge
 with $C\in \mathcal{V}\setminus V_s$ (resp. $C\in \overline{\mathcal{V}}\setminus 
 \overline{V}_s$) if $v_s$ is adjacent to $C$ in $G$ (resp. $\overline{G}_N$).

Note that $\mathcal{G}$ and $\overline{\mathcal{G}}$ are isomorphic to the adjacency graph of connected components induced by clusters in the derivative of $(G,H,\varphi)$ and $(\overline{G},H,\overline{\varphi})$, respectively.
The base step follows by the definition of the derivative, since there exists a graph isomorphism $h$ between $\mathcal{G}$ and $\overline{\mathcal{G}}$, such that $\overline{\varphi}_N(h(v_s))=\varphi(v_s)\in V(H)$, for all $v_s\in V_s$, and $\overline{\varphi}_N(h(C))=\varphi(C)\in E(H)$, for all $C\in \mathcal{V}\setminus V_s$.

In the base step (B.S.) we proved $\overline{\left(\overline{{{{I}}}}\right)'}=\overline{I'}$.
The inductive step follows easily by the induction hypothesis (I.H.), since $\overline{\overline{I}^{(i)}}=\overline{\left(\overline{{{\overline{I}}^{(i-1)}}}\right)'}\stackrel{I.H.}{=} \overline{\left(\overline{{{{I}}^{(i-1)}}}\right)'}\stackrel{B.S.}{=}\overline{I^{(i)}}$. 
\end{proof}

By Theorem~\ref{thm:main}, $\varphi^{(i)}$, for some $i\le 2|E(G)|$, is locally injective after suppressing all degree-$2$ vertices incident to an edge induced by a cluster.
In particular, this means that $({G},H,{\varphi})^{(i)}$ is (after suppression) already a simplified instance.

Thus, by Claim~\ref{claim:simplified} in order to produce $({G},H,{\varphi})^{(i)}$ it is enough
to apply $i$ times the operation of the simplified derivative successively to the simplified instance of $(G,H,\varphi)$.
Note that it is enough to consider connected components $G_0$ of $G$, in which all connected components induced by clusters have pipe degree at most $2$. This is because by ``integrating'' such an instance (as in Section~\ref{sec:integration}) we never introduce a connected component of pipe degree more than $2$. Hence,
a cycle that is a connected component in $G^{(i)}$ and that is not approximable but $\mathbb{Z}_2$-approximable must come from such a connected component of $G$.

Here, we are cheating a little bit as we disregard other connected components of $G$ when derivating, and these connected components could definitely change the outcome of the derivative.
However, we do not lose any relevant information by working only with a subgraph $G_0$ of $G$ by the following claim.
We assume that the derived instances are in the normal form unless stated otherwise.

\begin{claim}
\label{claim:hereditary}
If $C\subseteq G^{(i)}$ is a connected component of $G^{(i)}$ then
 there exists an instance $(C_0,H|_{\varphi(C_0)},\varphi|_{C_0})$, whose $i$-th derivative is $(C,H^{(i)}|_{\varphi^{(i)}(C)},\varphi^{(i)}|_C)$, where 
$(G^{(i)},H^{(i)},\varphi^{(i)})=(G,H,\varphi)^{(i)}$ and $C_0\subseteq G$ is 
a connected component of $G$.
Moreover, if $C$ is a cycle then $C_0$ can be chosen to be a cycle that is not necessarily a connected component of $G$.
 \end{claim}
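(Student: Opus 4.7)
The plan is to prove the claim by induction on $i$, with the base case $i=0$ being immediate (take $C_0 = C$). Two ingredients are needed: a locality property of all operations used in the derivation, and a single-step cycle lifting lemma.

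For the first (non-``moreover'') statement, I would first establish that every operation used to pass from $(G^{(j)},H^{(j)},\varphi^{(j)})$ to $(G^{(j+1)},H^{(j+1)},\varphi^{(j+1)})$ -- namely cycle reduction, pruning and untangling inside normalization, together with the derivative in the subdivided normal form and the subsequent suppression of degree-two central vertices -- acts locally on connected components of the underlying abstract graph. In particular none of these operations merges or splits connected components: cycle reduction only removes interior vertices while keeping all cycle vertices connected through the new \deltaY-vertex, pruning removes only leaves, splits and contractions during untangling obviously preserve connectivity, and the derivative satisfies $G' = G$ in the subdivided normal form. Hence each connected component $C$ of $G^{(i)}$ has a unique ancestor connected component $C_0 \subseteq G$, and because each operation commutes with restriction to the ancestor, the $i$-th derivative of $(C_0,H|_{\varphi(C_0)},\varphi|_{C_0})$ coincides with $(C,H^{(i)}|_{\varphi^{(i)}(C)},\varphi^{(i)}|_C)$.

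For the moreover part, assuming the claim holds for $i-1$, apply the inductive hypothesis to $(G^{(1)},H^{(1)},\varphi^{(1)})$ to obtain a cycle $C_1 \subseteq G^{(1)}$ such that its $(i-1)$-st derivative is $(C,H^{(i)}|_{\varphi^{(i)}(C)},\varphi^{(i)}|_C)$. It then suffices to lift $C_1$ across one derivative step, producing a cycle $C_0 \subseteq G$ whose derivative is $C_1$. Reversing the degree-two suppression reinserts suppressed central vertices into the cycle; the pure derivative in subdivided normal form leaves the underlying graph unchanged; the normalization operations are then reversed one at a time. Pruning does not affect cycles. A split is reversed by contracting the auxiliary edge it introduced; a contraction is reversed by using the contracted edge. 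Cycle reduction is the only nontrivial case: if $C_1$ (after the previous reversals) uses star edges $vv_{j_1},\ldots,vv_{j_k}$ at a \deltaY-vertex $v$ that replaced a cycle $Z$, replace each pair of consecutive (in the rotation at $v$) star edges with the corresponding arc of $Z$ to obtain a cycle not going through $v$.

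Since $C_0$ so constructed is a cycle in $G$, its intra-cluster parts are paths, so the normalization of the standalone instance $(C_0,H|_{\varphi(C_0)},\varphi|_{C_0})$ triggers no cycle reduction and no pruning; only untangling inserts (degree-two) central vertices into each intra-cluster section of positive pipe degree. The subsequent derivative suppresses these same degree-two central vertices, collapsing each intra-cluster section into a single edge, which reproduces exactly $C_1$ together with its induced map into $H^{(1)}$. Iterating this over the $i$ derivative steps shows that the $i$-th derivative of $(C_0,H|_{\varphi(C_0)},\varphi|_{C_0})$ coincides with $(C,H^{(i)}|_{\varphi^{(i)}(C)},\varphi^{(i)}|_C)$, completing the induction.

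The main obstacle is the cycle-reduction step of the lifting: the lifted cycle uses a path along $Z$ in place of star edges, so a priori the standalone derivative of $C_0$ (which contains no intra-cluster cycle and hence triggers no cycle reduction) might not return the same combinatorial object as the restriction of the derivative of the ambient instance to $C_0$'s descendants. Verifying that the induced rotations and $H^{(1)}$-images still agree requires using the construction of $H'$ from Section~\ref{sec:normal-form}, namely that the embedding of $H_\nu'$ inside $\mathcal{D}(\nu)$ is inherited directly from the rotations in the ambient $\mathbb{Z}_2$-approximation, which forces the two candidate derivatives of $C_0$ to agree.
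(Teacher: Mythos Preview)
Your approach is essentially the same as the paper's: induction on $i$, with the base observation that the derivative leaves $G$ unchanged in the (subdivided) normal form, together with a single-step lifting across the normalization operations (contractions, splits, suppressions, cycle reduction) to recover a cycle $C_0$ in the original graph. The paper's proof is considerably terser---it simply says that ``by reversing the process of bringing the instance into the normal form we obtain a closed walk in $G$ containing the desired $C_0$'' and lists the relevant operations (contraction, suppression, subdivision, vertex split, \yDelta), handling the \yDelta case exactly as you do by replacing the pair of star edges through $v$ with an arc of the cycle that $v$ replaced.

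A few minor remarks. First, your inductive structure (apply the IH to $(G^{(1)},H^{(1)},\varphi^{(1)})$ with $i-1$, then lift one step to $G$) is the mirror image of the paper's (apply the IH to $(G,H,\varphi)$ with $i-1$ to reach $G^{(i-1)}$, then go one step to $G^{(i)}$); both are fine. Second, the paper notes that the reversal produces a closed \emph{walk} in $G$ from which $C_0$ is extracted; you construct a cycle directly, which is equivalent since you only ever replace two edges at each \deltaY-vertex by a single arc. Third, the obstacle you identify in the last paragraph---that the standalone derivative of $C_0$ must reproduce the restriction of the ambient derivative even though the standalone normalization of a cycle triggers no cycle reduction---is real but is glossed over entirely in the paper. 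Your indicated resolution via the construction of $H'$ is correct in spirit and is more than the paper offers.
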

\begin{proof}
In order to simplify the notation we omit the restrictions $X|_Y$.
 The claim is proved similarly as Claim~\ref{claim:simplified} by induction on $i$. 
 For $i=1$, for the first part of the claim we have $(C_0,H,\varphi)'=(C,H',\varphi')$, where $C_0=C$, by the definition, before bringing the derivatives into the normal form, and hence, the same holds after
 normalizing. For the ``moreover'' part we observe that if $C$ is a cycle, by reversing
 the process of bringing the instance into the normal form we obtain a closed walk in $G$ containing
 the desired $C_0$. Here, we just consider the effect of the following operations on $C$: a contraction of an edge, suppression of a degree-$2$ vertex, edge subdivision, vertex split, and generalized \yDelta operation. In case of the generalized \yDelta operation performed on $v\in V(C)$, we replace the pair of edges incident to $v$ on $C$ by a subpath contained in the cycle replacing $v$.
 
 For $i>1$, by the induction hypothesis we can choose $C_0$ such that $(C_0,H,\varphi)^{(i-1)}=(C,H^{(i-1)},\varphi^{(i-1)})$, where $C\subseteq G^{(i-1)}$. Since the instances are already in the normal form $C_0$ we have $(C_0,H,\varphi)^{(i)}=((C_0,H,\varphi)^{(i-1)})'\stackrel{I.H.}{=}(C,H^{(i-1)},\varphi^{(i-1)})'=(C,H^{(i)},\varphi^{(i)})$, before bringing the instance in the normal form, and hence, the same holds after
 normalizing. The ``moreover'' part follows by the same argument as in the base case.
\end{proof}

\jk{nejak doufam, ze tyhle claimy zmizi nebo se aspon dostatecne zredukuji spolu s normalni formou...}

\paragraph{Algorithm.}
In order to simplify the notation we omit the restrictions $X|_Y$.
First, we test $\mathbb{Z}_2$-approximability of $(G,H,\varphi)$ by solving a system of linear equations of polynomial size~\cite[Section 2]{FKMP15}. 
If $(G,H,\varphi)$ is not $\mathbb{Z}_2$-approximable then 
we know that it is also not approximable.
Second, our algorithm constructs the simplified instance of $((G_0)^{(i)},H^{(i)},\varphi^{(i)})$, for all $i\le 2|E(G)|$. We do not increase the number of pipe edges by derivating, which follows by the second part of Claim~\ref{claim:semi-trivial}. Since we always contract connected components induced by clusters and discard loops and multiple edges, there are no other edges in the resulting instance besides pipe edges.
Thus, the simplified instance of $((G_0)^{(i)},H^{(i)},\varphi^{(i)})$, for every $i\le 2|E(G)|$, can be constructed in polynomial time.
Finally, for $i=2|E(G)|$, we test for all connected components $C\subseteq (G_0)^{(i)}$ that are cycles whether $(C,H^{(i)},\varphi^{(i)})$ is approximable by an embedding, which can be done efficiently, for example by the algorithm in~\cite{CDPP09}.
By Theorem~\ref{thm:main} all such tests are positive if and only if the original instance is positive.

%===================================================================================

\section{C-planarity}
\label{sec:c-planarity}

We show that Theorem~\ref{thm:main2} implies that c-planarity
is tractable for flat clustered graphs with at most three clusters.

\jk{tato definice by snad mela byt nekde brzo v uvodu... hlavni motivace...}
A \emph{flat clustered graph}, shortly flat \emph{c-graph}, is a pair $(G,T)$ where $G=(V,E)$ is a graph and $T=\{V_0, \ldots, V_{c-1}\}$ is a partition of the
vertex set into \emph{clusters}.\jk{$c$ neni dobry jako promenna, kdyz je to soucast nazvu c-planar...}\jk{tuto definici snad mame lip v revisited nebo i v uvodu... tak zkopirovat+prizpusobit na flat}
A flat c-graph $(G,T)$ is \emph{clustered planar} (or briefly \emph{c-planar}) if $G$ has an
 embedding in the plane or on the sphere $S^2$ such that (i)
for every $V_i\in T$ there is a topological disc $D(V_i)$, where $\mathrm{interior}(D(V_i))\cap \mathrm{interior} (D(V_j))=\emptyset$, if $i\not=j$,
 containing all the vertices of $V_i$ in its interior, and (ii)
 every edge of $G$ intersects the boundary of $D(V_i)$ at most once and in a proper crossing, for every $D(V_i)$.
%A flat c-graph $(G,T)$ with a given combinatorial embedding of $G$ is \emph{c-planar} 
%if additionally the embedding is combinatorially described as given.
 A \emph{clustered drawing} or \emph{clustered embedding} of a flat c-graph $(G,T)$ is a drawing or embedding, respectively,
 of $G$ satisfying (i) and (ii).
%A pair $(G,T)$ is \emphh{c-planar} if there exists a clustered embedding of $(G,T)$. %toto uz se opakuje...
 
\jk{aha, tohle neni hanani--tutte pro 3 clustery... ale ten taky umime - s tim, ze je tam ta obstrukce v podobe namotanych cyklu. Dulezite je, ze z clustered drawing jde snadno udelat clustered drawing, kde vsechna krizeni jsou uvnitr clusteru, pouhym zvetsenim disku reprezentujicich clustery.} 
 
\begin{corollary}\label{thm:c-planarity}
Testing c-planarity of a flat c-graph $(G,T)$ can be carried out in polynomial time if $|T|\le 3$.
\end{corollary}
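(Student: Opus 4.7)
The plan is to reduce c-planarity testing for $|T| \le 3$ to the approximability problem of Theorem~\ref{thm:main2} by enumerating polynomially many candidate ``pipe structures'' $(H, \varphi)$, and then applying that theorem to each.

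For $|T| \le 2$ the reduction is immediate: take $H$ to consist of one vertex per cluster (joined by a single edge when $|T|=2$), embedded trivially in $S^2$, and let $\varphi$ assign each vertex of $G$ to its cluster-vertex. Since the complement of the cluster-discs has no nontrivial topology in this case, $(G,T)$ is c-planar iff $(G, H, \varphi)$ is approximable, and Theorem~\ref{thm:main2} applies directly. This already recovers the known tractability for two clusters.

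For $|T|=3$, given any c-planar embedding of $(G, T)$, the complement of the three cluster-discs in $S^2$ is a pair of pants, and the inter-cluster arcs form a non-crossing family inside it. Grouping these arcs by their homotopy class yields an embedded multi-graph $H$ on three vertices together with a map $\varphi:V(G)\to V(H)$ for which $(G, H, \varphi)$ is approximable; conversely, any approximation of $(G, H, \varphi)$ immediately gives a c-planar embedding of $(G,T)$ by forgetting the pipe constraint. Hence c-planarity for three clusters is equivalent to the existence of some such pair $(H,\varphi)$ with $(G,H,\varphi)$ approximable, and it suffices to enumerate the candidate pairs in polynomial time.

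The main obstacle is to show precisely that such an enumeration is polynomial. The key structural observation is that for three clusters, the rotation at each vertex $\nu_i$ of $H$ in any planar embedding decomposes into two consecutive ``blocks'' of bundles (one for each of the other two cluster-vertices), and within each block the multiplicity and order of bundles is forced by how the third cluster-disc nests among the parallel arcs. Combined with the fact that the total number of bundles is bounded by $|E(G)|$ and that permutations of $G$-edges within a single bundle do not affect the combinatorial type of the pipe structure, this yields a bound that is polynomial in $|V(G)| + |E(G)|$ on the number of essentially distinct $(H,\varphi)$ pairs one must consider. Each such pair is then tested for approximability by Theorem~\ref{thm:main2} in polynomial time, and $(G,T)$ is c-planar if and only if at least one test returns positive.
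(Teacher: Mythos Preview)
Your two-block observation is exactly the key insight, and it is the same observation the paper uses; but you stop short of its full consequence, and the enumeration step you propose is both unnecessary and not rigorously justified.

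The paper constructs a \emph{single} instance $(G,H,\varphi)$ with $H$ the simple graph on the clusters ($V_iV_j\in E(H)$ iff some $G$-edge runs between $V_i$ and $V_j$), tests it once via Theorem~\ref{thm:main2}, and is done. The point is that the two-block property---that in any c-planar embedding the crossings of $V_0$--$V_1$ edges with $\partial D(V_0)$ form a consecutive arc---already implies that all such edges can be routed through \emph{one} pipe. Once the $V_0$--$V_1$ edges occupy a consecutive block on $\partial D(V_0)$ and (by symmetry) on $\partial D(V_1)$, the region bounded by the two extremal edges together with the relevant boundary arcs is a disc containing every other $V_0$--$V_1$ edge; were $D(V_2)$ inside that disc, the $V_0$--$V_2$ edges (whose $\partial D(V_0)$-crossings lie in the complementary block) would have to cross one of the extremal edges. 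Hence there is no need for multi-edges in $H$, no nesting choices to make, and no enumeration at all.

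As written, your enumeration argument has a gap. You assert that ``the multiplicity and order of bundles is forced'' and that this yields polynomially many candidates, but you do not prove either claim, and it is unclear what precisely is being enumerated. If you mean to enumerate multigraphs $H$ together with an assignment of each inter-cluster edge of $G$ to a specific pipe, that assignment is a set partition of the inter-cluster edges and there is no evident polynomial bound on the number of relevant partitions (different edges have different endpoints and hence are not interchangeable). If instead you mean to enumerate only the multigraphs $H$ (with $\varphi$ defined only on vertices, leaving the edge-to-pipe assignment free), then the two-block constraint collapses every such $H$ to the simple triangle-subgraph anyway---and at that point you should simply say so and drop the enumeration, which is precisely what the paper does. (Note also that the paper's machinery, in particular Claim~\ref{claim:ie}, assumes $H$ has no multi-edges, so invoking Theorem~\ref{thm:main2} on a genuinely multi-edged $H$ would require additional work.)
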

 
\begin{proof} 
Given a flat c-graph $(G,T)$ we construct an instance $(G,H,\varphi)$ such that 
$(G,T)$ is c-planar if and only if $(G,H,\varphi)$ is approximable by an embedding.

Formally, let $H=(T,E(H))$ be a graph such that $V_iV_j\in E(H)$ if there exists an edge
 in $G$ with end vertices in $V_i$ and $V_j$. 
Since $|T|\le 3$, the graph $H$ must be a subgraph of the triangle, and hence, the isotopy class\jk{co to tu presne znamena? trojuhelnik jde nakreslit minimalne dvema neizotopickymi zpusoby...} of an embedding $H$ is uniquely determined. Finally, for $v\in V_i$ we put $\varphi(v):=V_i$.
Without loss of generality we assume that $H$ is connected.

 Obviously, if $(G,H,\varphi)$ is approximable by an embedding then $(G,T)$ is c-planar. On the other hand, if $(G,T)$ is c-planar we consider a clustered planar embedding $\psi$ of $(G,T)$ on the sphere $S^2$.
We assume that there exists at least a single
 edge of $G$ with end vertices in $V_0$ and $V_1$.
Let $c_0,\ldots, c_k$ be the intersections of the edges of $G$ with $D(V_0)$ listed in the order of appearance along the boundary of $D(V_0)$.
Note that all the intersections $c_j$ of the edges between $V_0$ and $V_1$ form a subsequence of consecutive elements in $c_0,\ldots, c_k$.\jk{tohle je centralni argument, tak by mel byt podrobnejsi nez "note that". Zde se musi vyuzit toho, ze ty clustery jsou jen 3. Pro 4 to uz obecne neplati.}
Let $c_i,\ldots, c_j$, $i<j$, be the crossings with edges between $V_0$ and $V_1$.
Let $e_i$ and $e_j$ be the edge crossing the boundary of $D(V_0)$ in $c_i$ and $c_j$, respectively.

Let $R_{01}$ be the union of the 
discs $D(V_0)$ and $D(V_1)$, and the region
bounded by the pieces of $e_i$ and $e_j$
internally disjoint from $D(V_0)$ and $D(V_1)$ and the parts of the boundary of $D(V_0)$ and $D(V_1)$ containing all the other intersection points of the edges between $V_0$ and $V_1$.

We define $R_{02}$ and $R_{12}$ analogously if there exists an edge between $V_0$ and $V_2$, and $V_1$ and $V_2$, respectively.
If this is not the case $R_{02}$ or $R_{12}$ is the empty set.
Now, a suitable homeomorphism of $R_{01} \cup R_{02} \cup R_{12}$ yields an approximation of $(G,H,\varphi)$.
 \end{proof}

\section{Torus and thickenability}
\label{sec:toroidal} 

A far reaching  generalization of our problem is \emph{thickenability of 2-dimensional simplicial complexes} (shortly \emphh{thickenability}), i.e., determining for a given   2-dimensional simplicial complex, if there exists a 3-dimensional manifold into which the complex embeds.
That this seemingly unrelated problem is indeed a generalization of the problem that we study follows from~\cite[Lemma]{Skop94_thick}\footnote{The Lemma is stated only for connected graphs and only in the case when the target surface is a sphere. However, it is not hard to prove an analogous statement for disconnected graphs and for orientable surfaces of arbitrary genus.}. In the proof of the lemma our problem is reduced to the thickenability by  considering a triangulation of the mapping cylinder of $\varphi: G \rightarrow M$. 
 We remark that a polynomial-time algorithm for deciding whether a 2-dimensional simplicial complex embeds in $\mathbb{R}^3$ would already yield a polynomial-time algorithm for our problem in the case of orientable surfaces. However, NP-hardness~\cite{dMRST17+_embedR3} of this problem was announced recently, so the existence of such an algorithm is highly unlikely.
 Hence, studying the thickenability, whose computational complexity status is still open, appers to be a next natural  step in our investigation.
 
Neuwirth's algorithm~\cite{Skop94_thick} for thickenability can be seen as the following extension of the problem of deciding if $(G,H,\varphi)$ is approximable by an embedding\footnote{In fact, for Neuwirth's algorithm it is enough to consider $G$ to be a finite union of pairwise disjoint 3-cycles.}.
Roughly speaking, the thickening $\mathcal{H}$ will be a 2-dimensional surface, but this time without boundary, partitioned into regions representing clusters and pipes. Then instead of disc clusters $\mathcal{D}(\nu)$, for $\nu\in V(H)$,
we take punctured 2-spheres $\mathcal{S}(\nu)$ with the number of holes equal to $\degr(\nu)$, and each hole being \emphh{designated} for a unique edge incident to $\nu$. The pipe $P(\nu\mu)$
of $\nu\mu\in E(H)$ is replaced by a cylinder that has one boundary component glued to the boundary of the hole on $\mathcal{S}(\nu)$ designated for $\nu\mu$, and the other boundary component glued to the boundary of a hole on $\mathcal{S}(\mu)$
designated for $\nu\mu$. So, the boundaries of cylinders are analogs of valves.
The problem that we are interested in is, of course, to decide, if an embedding of $G$ in $\mathcal{H}$  satisfying analogs of~(\ref{it:1st}) and~(\ref{it:2nd}), in which we replace $\mathcal{D}(.)$ with $\mathcal{S}(.)$, exists.
 
In the following we show that our technique extends easily to give the tractability of the described more general problem if the maximum degree of $H$ is two, i.e., in the \emphh{toroidal case}, see Figure~\ref{fig:exTorus} for an illustration. Note that in this case every cluster is either a cylinder or a sphere with a single hole. To this end it is enough to slightly generalize three steps in the proof of Theorem~\ref{thm:main} and Theorem~\ref{thm:main2}.
We remark that this  already  generalizes a recent  work on leveled planarity and beyond by Angelini et al.~\cite{ADF+16_beyond}, radial planarity~\cite{FPS16_radialII,FPS17_radial} and also the results on c-planarity without two disjoint clusters by Gutwenger et al.~\cite{GJL+02_advances}. Indeed, the considerations of Section~\ref{sec:alg} carry over verbatim   except that we need to introduce variables for the edge-cluster switches in the linear system as described in~\cite[Section 2]{FKMP15} playing the role of (Dehn) twists defined below, see also~\cite[Section 5]{FPS16_radialII}. We are not aware of any result implying a polynomial time algorithm for this problem.

\begin{figure}
\centering
\includegraphics[scale=0.3]{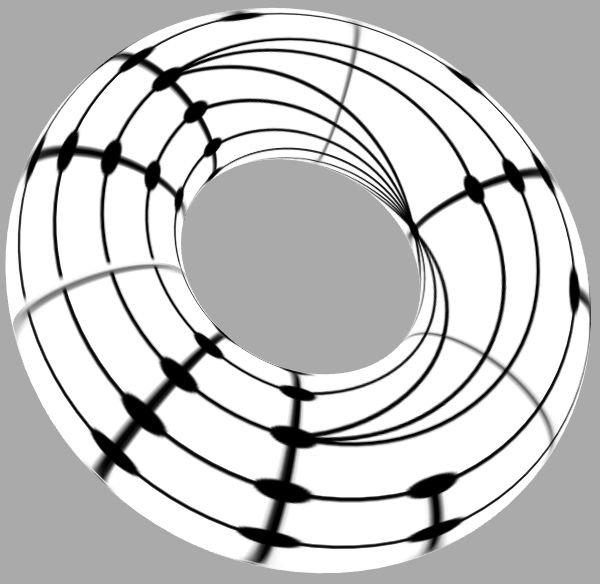}
\caption{The handle body $\mathcal{H}$ in the toroidal case with an embedding satisfying analogs of~(\ref{it:1st}) and~(\ref{it:2nd}). The clusters $\mathcal{S}(\nu)$'s are delimited by three meridians representing (very thin) pipes $P(\nu\mu)$. }
\label{fig:exTorus}
\end{figure}

In the sequel we use the notion of \emphh{winding number ($\mod$ 2)} of a closed curve $C$ drawn on a cylinder understood as  $S^1\times [0,1]$ defined to be 0, if $C$ crosses a line segment $s\times [0,1]$\footnote{Here, we tacitly assume a generic position of $C$ w.r.t. the line segment, i.e., that the line segment an $C$ are intersecting in a finitely many crossing points.}, $s\in S^1$, an even number of times, and to be 1 otherwise.

\subsection{Normalization}

We note that the normalization in Section~\ref{sec:normalform2} goes through except that the application of Claim~\ref{claim:cycle_red} is not possible, in general, since the boundary of a cluster can have two connected components.
In particular, Claim~\ref{claim:cycle_red} is applicable only to cycles induced by a cluster, which is also a cylinder, in $\psi_0$ with winding number~0. A cycle $C$ induced by a cluster $\nu$ with winding number~1 is resolved by modifying the instance and $\psi_0$ as follows.

Analogously to the proof of Claim~\ref{claim:cycle_red} we split $G[V_\nu]$ into 
two parts. Let $\mu_1$ and $\mu_2$ denote the neighbors of $\mu$ each of which might not exists.
Just for the purpose of the next definition we perform the following operation. If $\degr(\nu)=2$, let us turn $\mathcal{S}(\nu)$ into a disc by attaching a disc to the boundary of the hole, to which the cylinder of $\nu\mu_2$ was attached. Thereby a closed curve in a cluster has defined inside and outside in the sense of Section~\ref{sec:preliminaries}.
As in Section~\ref{sec:preliminaries}, let $V_{out}(C)\subset V_{\nu}, D_{out}(C)$, and  $V_{in}(C)\subset V_{\nu}, D_{in}(C)$, denote the set of vertices outside of $C$, the set of outer diagonals of $C$, and vertices inside of $C$, inner diagonals of $C$, respectively.
We construct a clone $(\hat{G},\hat{H},\hat{\varphi},\hat{\psi}_0)$ of $(G,H,\varphi,\psi_0)$ with fewer cycles induced by clusters with winding number 1.

First, we remove $\nu$ and its incident edges from $H$, and introduce two vertices $\nu_1$ and $\nu_2$ to $H$ and edges $\mu_1\nu_1$ and $\mu_2\nu_2$.
Let $\hat{H}$ denote the resulting graph.

Second, we modify  $G$ in order to obtain $\hat{G}$. Roughly, we introduce a second copy of $C$ in $G$, where the second copy takes over the edges between $C$ and  $V_{out}$ and the outer diagonals of $C$. Formally, this amounts to adding to $G$ a new vertex $\overline{v}$, for every $v\in V(C)$. Then adding to $G$ a new edge $\overline{v}\overline{u}$, for every edge $vu\in D_{out}(G)$,
and  $\overline{v}u$, for every edge $vu\in E(G)$, where $v\in V(C)$ and $u\in V_{out}$, 
and finally deleting the edges between $C$ and $V_{out}$ and the edges in $D_{out}$.

Third, we define $\hat{\varphi}(\overline{v}):=\nu_1, \hat{\varphi}(v):=\nu_2$, for $v\in V(C)$, and $\hat{\varphi}(v):=\nu_1$, for $v\in V_{out}$, $\hat{\varphi}(v):=\nu_2$, for $v\in V_{in}$, and $\hat{\varphi}(v):=\varphi(v)$ for the remaining vertices of $\hat{G}$.

Finally, $\hat{\psi}_0$ is naturally inherited from ${\psi}_0$. Since both $\nu_1$ and $\nu_2$ are discs, none of the copies of $C$ has winding number 1 in the resulting instance and therefore we indeed decreased the number of cycles with winding number 1 induced by clusters.
It is straightforward to prove the following.

\begin{claim}
\label{claim:toroidal_red}
$(\hat{G},\hat{H},\hat{\varphi},\hat{\psi}_0)$ is a clone of $(G,H,\varphi,\psi_0)$.
\end{claim}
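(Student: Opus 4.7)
The plan is to verify the two implications of the clone relation by adapting the strategy of Claim~\ref{claim:cycle_red} to the cylinder-cut surgery. Topologically, cutting a cylinder along a simple closed curve of winding number~$1$ produces two annuli, and capping each annulus with a disc on the cut side yields precisely the two discs $\mathcal{S}(\nu_1)$ and $\mathcal{S}(\nu_2)$ of the construction.

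First I would verify that $\hat{\psi}_0$, obtained by restricting $\psi_0$ to $\mathcal{S}(\nu)$, cutting along~$C$, and capping each side, is indeed an independently even drawing of $\hat{G}$ in the thickening of $\hat{H}$. Since the cycle $C$ has winding number~$1$ in $\psi_0$, it separates $V_{out}$ (on the $\mu_1$-side) from $V_{in}$ (on the $\mu_2$-side) inside $\mathcal{S}(\nu)$; consequently the outer diagonals and $V_{out}$-incident edges transfer cleanly into the copy $\mathcal{S}(\nu_1)$ along with the newly introduced vertices $\overline{v}$, while the cycle edges, inner diagonals, and $V_{in}$-incident edges remain in $\mathcal{S}(\nu_2)$. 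Parities of crossings between independent pairs of edges are preserved because the surgery is purely local to $\mathcal{S}(\nu)$.

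For the first direction of the clone relation, given an embedding $\hat{\psi}$ of $\hat{G}$ compatible with $\hat{\psi}_0$, I would reverse the cut-and-cap: inside $\mathcal{S}(\nu_2)$ the cycle~$C$ is already embedded as a simple closed curve bounding an inner disc containing $V_{in}$, and inside $\mathcal{S}(\nu_1)$ I would trace a simple closed curve $\overline{C}$ through the vertices~$\overline{v}$ in the cyclic order inherited from~$C$, using the outer diagonals together with short arcs through the disc to close it up; this curve bounds an inner disc containing~$V_{out}$. Gluing $\mathcal{S}(\nu_1)$ and $\mathcal{S}(\nu_2)$ along the identification $\overline{C}\to C$ sending $\overline{v}\mapsto v$ reconstructs the cylinder $\mathcal{S}(\nu)$ together with an embedding of $G[V_\nu]$ in which~$C$ has winding number~$1$; compatibility with~$\psi_0$ follows because no rotation is altered in the surgery. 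For the second direction, starting from an embedding~$\psi$ of~$G$, I would argue that~$C$ must still have winding number~$1$ in~$\psi$ (the pipe-edge structure of~$G$ forces the separation of $V_{in}$ from the $\mu_1$-pipe boundary in any embedding), then cut $\mathcal{S}(\nu)$ along~$C$, cap each resulting annulus with a disc, and duplicate the vertices of~$C$ as the $\overline{v}$'s to obtain the required embedding of~$\hat{G}$.

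The main obstacle lies in direction~$1$: the vertices~$\overline{v}$ are not joined by a cycle in~$\hat{G}$, only by the outer diagonals, so I need an argument that a simple closed curve $\overline{C}$ passing through them in the cyclic order matching~$C$ can actually be drawn inside $\mathcal{S}(\nu_1)$ and separates $V_{out}$ from the $\mu_1$-valve. This is the natural analogue of the unified Hanani--Tutte step used at the end of the proof of Claim~\ref{claim:cycle_red}, and I would expect it to be established by applying the unified Hanani--Tutte theorem to the subgraph of $\hat{G}$ inside $\mathcal{S}(\nu_1)$ after a suitable $A$-split at each~$\overline{v}$ that records the rotation information inherited from~$\hat{\psi}_0$.
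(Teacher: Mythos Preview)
The paper omits the proof, declaring it ``straightforward'', so there is nothing to compare against directly; your adaptation of Claim~\ref{claim:cycle_red} is the natural approach and presumably what the authors intend.

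There is, however, a gap in your second direction. Your assertion that $\psi(C)$ must have winding number~$1$ in every approximation~$\psi$ of $(G,H,\varphi)$ is not justified and fails in general: if $V_{in}=V_{out}=\emptyset$ and no vertex of~$C$ carries a pipe edge, nothing prevents $\psi(C)$ from being contractible in~$\mathcal S(\nu)$, and even in richer instances, vertices of $V_{in}\cup V_{out}$ without incident pipe edges are not pinned to a specific side of~$\psi(C)$. The cut along~$C$ is in any case unnecessary here. Since $\hat G[V_{\nu_1}]$ together with its incident pipe-edge pieces is isomorphic (via $\overline v\mapsto v$) to a subgraph of~$G$ all of whose pipe edges end on the $\mu_1$-valve, one can simply take the restriction of~$\psi$ to that subgraph inside~$\mathcal S(\nu)$ and cap the $\mu_2$-boundary to obtain a valid embedding in the disc~$\mathcal S(\nu_1)$; the case of~$\nu_2$ is symmetric, and outside $\nu_1,\nu_2$ the instance~$\hat G$ coincides with~$G$.

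For the first direction you correctly isolate the real difficulty---threading a simple curve~$\overline C$ through the~$\overline v$'s in the cyclic order of~$C$---and the analogy with the final step of Claim~\ref{claim:cycle_red} is apt. The mechanism that makes it work is the same: in~$\hat\psi_0$ one can draw the missing cycle~$\overline C$ along the image of~$C$, and every edge of~$\overline C$ is then even (because the edges of~$C$ were made even at the outset). It is this augmented $\mathbb{Z}_2$-drawing, not the embedding~$\hat\psi$, to which the unified Hanani--Tutte step is applied, exactly as it was applied to~$\psi_0|_{G_{in}}$ in Claim~\ref{claim:cycle_red}.
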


\subsection{Derivative}

Since we do not have connected component(s) induced by  clusters of  pipe degree more than $2$, in order to extend the derivative of a $\mathbb{Z}_2$-approximation in the toroidal case, we just need to adapt  
Claim~\ref{claim:ie}. We note that reconnecting an edge inside a cylindrical pipe, as in the construction of $\hat{\psi}_0'$ in Section~\ref{sec:derivative_z2approx}, can introduce a pair of non-adjacent edges crossing an odd number of times. 
Here, we still assume that the severed endpoints on the boundary of the pipe are fixed. However, every triple of edges passing through a single pipe will be fine in the sense that the total number of crossings between pairs among them will be even. This follows from the following claim.

Let $C_1,C_2,C_3$ denote the three curves on a cylinder connecting $S^1\times 0$ with $S^1\times 1$ that are internally disjoint from its boundary and that
intersect both  $S^1\times 0$ and $S^1\times 1$ in three distinct points
${\bf p_1, p_2, p_3}$ and ${\bf q_1, q_2, q_3}$, respectively.
Let us assume that ${\bf p_1, p_2, p_3}$ appears in this order along $S^1\times 0$ clockwise, where we assume that  $S^1$ is endowed with a clockwise orientation.
For a subset $S\subseteq S^1 \times [0,1]$, we denote by $I(S)$ the \emphh{projection} of $S$ to $I=[0,1]$, that is $I(S)=\{i\in I| \  (i,s)\in S$ for some $s\in S^1 \}$.

\begin{claim}
\label{claim:parityTriples}
The parity of the total number of crossings between the pairs of curves among $C_1, C_2$ and $C_3$ is even if and only if ${\bf q_1, q_2, q_3}$ appear in this order along $S^1\times 1$  clockwise.
\end{claim}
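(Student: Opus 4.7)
The plan is to lift everything to the universal cover $\mathbb{R} \times [0,1]$ of the cylinder and reduce the parity of total crossings to arithmetic of floor functions. After a small generic perturbation making all pairwise intersections of $C_1, C_2, C_3$ transversal, fix lifts $\tilde p_i \in \mathbb{R}$ of (the $S^1$-coordinates of) ${\bf p_i}$ with $\tilde p_1 < \tilde p_2 < \tilde p_3 < \tilde p_1 + 1$, reflecting the given clockwise cyclic order, and let $\tilde C_i$ be the lift of $C_i$ starting at $(\tilde p_i, 0)$, with terminal point $(\tilde q_i, 1)$. Each crossing of $C_i$ with $C_j$ on the cylinder is covered bijectively by a crossing of $\tilde C_i$ with some translate $\tilde C_j + k$, $k \in \mathbb{Z}$, via the free action of the deck group on the strip.

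Since $\mathbb{R} \times [0,1]$ is planar and both $\tilde C_i$ and $\tilde C_j + k$ are arcs from the bottom boundary to the top boundary, the classical planarity criterion says they cross an odd number of times iff the pairs $(\tilde p_i, \tilde p_j + k)$ and $(\tilde q_i, \tilde q_j + k)$ appear in opposite horizontal orders, equivalently iff $k$ lies strictly between $\tilde p_i - \tilde p_j$ and $\tilde q_i - \tilde q_j$. Counting such $k$ gives
\[
|C_i \cap C_j| \equiv \lfloor \tilde q_i - \tilde q_j \rfloor - \lfloor \tilde p_i - \tilde p_j \rfloor \pmod 2.
\]

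Summing over the three pairs $i < j$, the three $\tilde p$-floors are each $-1$ by our ordering convention, contributing $-3 \equiv 1 \pmod 2$. Decomposing $\tilde q_i = q_i' + w_i$ with $q_i' \in [\tilde p_1, \tilde p_1 + 1)$ the canonical lift of ${\bf q_i}$ and $w_i \in \mathbb{Z}$, the winding contribution is $\sum_{i<j}(w_i - w_j) = 2(w_1 - w_3) \equiv 0 \pmod 2$, so the total parity depends only on the linear order of $q_1', q_2', q_3'$ inside $[\tilde p_1, \tilde p_1 + 1)$, which precisely encodes the clockwise cyclic order of $({\bf q_1}, {\bf q_2}, {\bf q_3})$ on $S^1 \times 1$.

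A short case check over the six possible linear orderings confirms that $\sum_{i<j}\lfloor q_i' - q_j' \rfloor$ is $\equiv 1 \pmod 2$ exactly in the three orderings realizing the clockwise cyclic order $({\bf q_1}, {\bf q_2}, {\bf q_3})$, and $\equiv 0 \pmod 2$ in the three giving the opposite cyclic order. Combining with the $\tilde p$-contribution, the total parity is $0$ (even) in the clockwise case and $1$ (odd) in the other, establishing the claim. The main technical step is the parity formula above; once in place, the remainder is mechanical arithmetic.
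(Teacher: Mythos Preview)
Your proof is correct and takes a genuinely different route from the paper. The paper argues by first continuously deforming $C_1,C_2,C_3$ so that each projects injectively to $[0,1]$ (i.e.\ is monotone in the second coordinate), noting that crossing parities are preserved; then, with all crossing heights distinct, each crossing swaps exactly two of the three intersection points with the current horizontal slice $S^1\times t$, and any transposition of three points on a circle flips the cyclic orientation. Hence the cyclic order at $t=1$ agrees with that at $t=0$ iff the total number of crossings is even.

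Your approach instead lifts to the universal cover and reduces everything to the identity $|C_i\cap C_j|\equiv \lfloor \tilde q_i-\tilde q_j\rfloor-\lfloor \tilde p_i-\tilde p_j\rfloor\pmod 2$, followed by elementary floor arithmetic. This avoids the deformation step entirely and makes explicit why the individual winding numbers $w_i$ are irrelevant (the contribution $\sum_{i<j}(w_i-w_j)=2(w_1-w_3)$ is visibly even). The price is the six-case check at the end, which the paper's orientation-flip argument handles uniformly. Both proofs are short; the paper's is more geometric, yours more algebraic and arguably more self-contained since it does not invoke the monotone deformation.
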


\begin{proof}
By a continuous deformation fixing $S_1 \times \{0,1\} $ we turn $C_1,C_2$ and $C_3$ into curves, whose projection to $[0,1]$ is injective. We note that during the deformation the parity of the number of crossings between any pair of curves among $C_1,C_2$ and $C_3$ is not changed. W.l.o.g. we assume that the projections to $[0,1]$ of the crossings defined by pairs of curves  $i_1,\ldots , i_k$ are all distinct. The claim follows, since the order of the intersection points  of $C_1,C_2,C_3$ with $S^1 \times \frac{i_j+i_{j+1}}{2}$, let us denote them by ${\bf q_{1,j}, q_{2,j}, q_{3,j}}$, is clockwise for even $j$'s counterclockwise for odd $j$'s.
\end{proof}

 Let $e$ denote an edge not induced by a cluster. Let a \emphh{twist} of $e$ inside the pipe be an operation on the drawing of $G$ that consists changing the (image of the) drawing of $e$ by taking its union with a closed curve of winding number 1 contained inside the pipe intersecting the drawing of $e$. This introduces self-crossing(s) of $e$ that can be get rid of by a standard argument, see Section~\ref{sec:preliminaries}.
 
In what follows we show that Claim~\ref{claim:parityTriples} implies that 
by applying twists to $\psi_0'$ obtained from $\psi_0$ by applying the derivative of Section~\ref{sec:derivative_z2approx}, we obtain a desired $\mathbb{Z}_2$-approximation of $(G',H',\varphi')$ from $\psi_0'$. Here, the definition is extended in a straightforward way such that $\mathcal{D}(\nu)$'s are replaced by cylinders $\mathcal{S}(\nu)$'s. Recall that pipes are also cylinders and valves are circles.

By an argument analogous to the proof of Claim~\ref{claim:ie}, Claim~\ref{claim:parityTriples} implies that for every three edges $e_1,e_2,e_3\in E(G')$ such that $\varphi'(e_1)=\varphi'(e_2)=\varphi'(e_3)=(\mu_1\nu)^*(\mu_2\nu)^*=\rho\in E(H')$, each two of which must be therefore independent,
we have the following.

\begin{figure}
\centering
\includegraphics[scale=0.7]{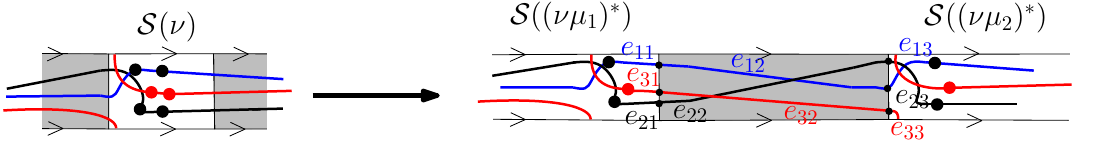}
\caption{Illustration for the proof of Claim~\ref{claim:ieRadial}. The setup here is analogous to the proof of Claim~\ref{claim:ie}.}
\label{fig:ieRadial}
\end{figure}

 \begin{claim}
 \label{claim:ieRadial}
 If $\hat{\psi}_0$ is independently then in $\hat{\psi}_0'$ 
\begin{equation}
\label{eqn:triple}
\nCross{e_1}{e_2}{{\psi}_0'}+\nCross{e_1}{e_3}{{\psi}_0'}+\nCross{e_2}{e_3}{{\psi}_0'}\equiv_2 0.
\end{equation}
 \end{claim}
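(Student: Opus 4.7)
\medskip

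\noindent\textbf{Plan of proof of Claim~\ref{claim:ieRadial}.}
The plan is to adapt the argument of Claim~\ref{claim:ie} from pairs of pipe edges in a disc pipe to triples of pipe edges in a cylindrical pipe, with Claim~\ref{claim:parityTriples} taking over the role that the simple ``alternation vs.\ non-alternation'' dichotomy played in the disc case. The overall strategy is: split each of $e_1,e_2,e_3$ into three arcs according to whether they lie inside the two clusters at the ends of $\rho$ or inside the pipe; sum up all pairwise crossings arc-by-arc; and check that each of the three resulting sums is either manifestly even (the cluster sums, via $\hat{\psi}_0$ and the homeomorphism $h_\rho$) or canceled against another (the pipe sum, via Claim~\ref{claim:parityTriples}).

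First, exactly as in Step~1 of Section~\ref{sec:derivative_z2approx}, write $e_i=e_{i,1}\cup e_{i,2}\cup e_{i,3}$, where $e_{i,1}\subseteq \mathcal{D}((\mu_1\nu)^*)$, $e_{i,3}\subseteq \mathcal{D}((\mu_2\nu)^*)$, and $e_{i,2}$ lies inside the cylindrical pipe of $\rho$. Since arcs with different indices live in pairwise disjoint open regions, for each pair $i<j$ we have
\[
\nCross{e_i}{e_j}{\psi_0'}\equiv_2 \nCross{e_{i,1}}{e_{j,1}}{\psi_0'}+\nCross{e_{i,2}}{e_{j,2}}{\psi_0'}+\nCross{e_{i,3}}{e_{j,3}}{\psi_0'}.
\]
Setting $A=\sum_{i<j}\nCross{e_{i,1}}{e_{j,1}}{\psi_0'}$, $B=\sum_{i<j}\nCross{e_{i,2}}{e_{j,2}}{\psi_0'}$, $C=\sum_{i<j}\nCross{e_{i,3}}{e_{j,3}}{\psi_0'}$, the task reduces to showing $A+B+C\equiv_2 0$.

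Second, apply Claim~\ref{claim:parityTriples} to the three pipe arcs $e_{1,2},e_{2,2},e_{3,2}$ inside the cylinder of~$\rho$: the parity $B$ equals~$0$ iff the three endpoints on the valve at $\mathcal{D}((\mu_1\nu)^*)$ appear in the same clockwise cyclic order as the three endpoints on the valve at $\mathcal{D}((\mu_2\nu)^*)$. To control $A$ and $C$, use that by construction $e_{i,1}=h_\rho\circ\hat{\psi}_0\circ h_1^{(i)}(e_{i,1})$ closely follows a subcurve $Q_{C_i}$ of the path $P_{C_i}$ in the original drawing $\hat\psi_0$, and similarly for $e_{i,3}$ and $Q_{D_i}$. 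Thus $A$ and $C$ are equal (mod~$2$) to the corresponding sums of pairwise crossings of the $Q_{C_i}$'s inside $\mathcal{D}(\nu)$ and of the $Q_{D_i}$'s inside $\mathcal{D}(\mu)$ in $\hat\psi_0$.

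Third, run the same Jordan-curve bookkeeping as in Claim~\ref{claim:ie}, but now with three concatenated closed curves instead of two. For each ordered pair $(i,j)$, form the closed curve obtained by concatenating the subcurves of $P_{C_i}$ and $P_{C_j}$ inside $\mathcal{D}(\nu)$ with the arc of $\partial\mathcal{D}(\nu)$ connecting their endpoints on the valve of $\rho$ and disjoint from the remaining relevant paths. The parity of crossings of such a closed curve with the analogous curve on the $\mathcal{D}(\mu)$-side detects precisely whether the cyclic order at the two valves agrees or disagrees. Expanding these parities using that $\hat\psi_0$ is independently even and that $H$ has no multi-edges, all ``mixed'' terms (involving $h_1^{(i)}(e_{i,1})$ against $h_3^{(j)}(e_{j,3})$ for $i\neq j$) vanish, since the corresponding pairs of edges of $\hat G$ are non-adjacent. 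What remains is exactly $A+C$, so $B\equiv_2 A+C$ and therefore $A+B+C\equiv_2 0$.

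The main obstacle is the third step: keeping track of the signs on the three concatenating boundary arcs so that they compose correctly along both valves, and verifying that all cross-cluster contributions truly cancel. This uses in an essential way that the three edges $e_1,e_2,e_3$ come from three distinct connected components $C_1,C_2,C_3$ (hence independence in $\hat G$ of the relevant sub-paths) together with the absence of multi-edges in $H$, in direct analogy with the final step of the proof of Claim~\ref{claim:ie}.
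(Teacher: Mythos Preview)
Your overall plan---split each $e_i$ into three arcs, reduce to $A+B+C\equiv_2 0$, handle $B$ via Claim~\ref{claim:parityTriples}, and show the cross terms vanish by independent evenness of $\hat\psi_0$---is exactly the paper's strategy. The first two steps are fine and match the paper verbatim.

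The gap is in your third step, and it stems from a misreading of the construction. You write that the $Q_{C_i}$'s sit in $\mathcal{D}(\nu)$ and the $Q_{D_i}$'s in $\mathcal{D}(\mu)$, and you then propose to run the Jordan-curve bookkeeping of Claim~\ref{claim:ie} separately on each side. But in the construction of $\hat\psi_0'$ the two components $C_i$ and $D_i$ are mapped by $\varphi$ to two edges of $H$ sharing the \emph{same} vertex $\nu$, so both $Q_{C_i}$ and $Q_{D_i}$ live inside the \emph{same} cluster $\mathcal{S}(\nu)$. In the toroidal setting that cluster is a cylinder, not a disc: its boundary has two components (the two valves of $\nu\mu_1$ and $\nu\mu_2$), and the closed-curve argument from Claim~\ref{claim:ie}---concatenate two $Q$'s with a single boundary arc---does not go through as stated.

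What the paper does instead is apply Claim~\ref{claim:parityTriples} a \emph{second} time, now inside the cylinder $\mathcal{S}(\nu)$, to the three curves there whose endpoints on the valve of $\nu\mu_2$ correspond (via $h_{\nu\mu_1}$) to the endpoints of the $e_{i,1}$'s on one valve of the pipe of $\rho$, and whose endpoints on the valve of $\nu\mu_1$ correspond (via $h_{\nu\mu_2}$) to the endpoints of the $e_{i,3}$'s on the other valve. This directly relates the cyclic-order discrepancy governing $B$ to the total crossing parity
\[
\sum_{i<j}\bigl(\nCross{e_{i1}}{e_{j1}}{\hat\psi_0}+\nCross{e_{i3}}{e_{j3}}{\hat\psi_0}+\nCross{e_{i1}}{e_{j3}}{\hat\psi_0}+\nCross{e_{i3}}{e_{j1}}{\hat\psi_0}\bigr),
\]
after which the mixed terms $\nCross{e_{i1}}{e_{j3}}{\hat\psi_0}$ drop out exactly as you say (independent evenness plus no multi-edges in $H$), leaving $A+C$. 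So your final cancellation step is right; you just need to replace the disc-style Jordan argument by a second invocation of Claim~\ref{claim:parityTriples} inside the cylindrical cluster.
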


\begin{proof}
  Refer to Figure~\ref{fig:ieRadial}.
  For the sake of contradiction suppose the contrary.
  Analogously to the proof of Claim~\ref{claim:ie}, we split every edge $e_i$, $i=1,2,3$ into three parts $e_{i1},e_{i2},e_{i3}$. Thus, we temporarily subdivide $e_i$'s as indicated in the figure.
We are done if we prove that $\sum_{i=1}^3\sum_{j=i}^{3}\sum_{k=1}^{3}\nCross{e_{ik}}{e_{jk}}{\psi_0'}\equiv_2 0$ or equivalently,
that  $\sum_{i=1}^3\sum_{j=i}^{3}(\nCross{e_{i1}}{e_{j1}}{\psi_0'}+
\nCross{e_{i3}}{e_{j3}}{\psi_0'})\equiv_2 \sum_{i=1}^3\sum_{j=i}^{3}\nCross{e_{i2}}{e_{j2}}{\psi_0'}$.

We have that $\sum_{i=1}^{3}\sum_{j=i}^{3}\nCross{e_{i2}}{e_{j2}}{\psi_0'}\equiv_2 1$ if and only if the end points of $e_{12},e_{22},e_{32}$ appear on one valve of the pipe
of $\rho$ in a clockwise order and on the other valve of the same pipe in a counterclockwise order by Claim~\ref{claim:parityTriples} or vice-versa. 
%Indeed, even if $M$ is non-orientable this holds due to the choice of signs of the edges $H'$, and the orientation of $\partial \mathcal{D}(\varphi(C)^*)$ and $\partial \mathcal{D}(\varphi(D)^*)$.
On the other hand, by  Claim~\ref{claim:parityTriples} applied inside the cluster $\mathcal{S}(\nu)$ this happens if and only if
 $\sum_{i=1}^3\sum_{j=i}^{3}(\nCross{e_{i1}}{e_{j1}}{\psi_0}+
\nCross{e_{i3}}{e_{j3}}{\psi_0}+\nCross{e_{i1}}{e_{j3}}{\psi_0}+
\nCross{e_{i3}}{e_{j1}}{\psi_0})\equiv_0 1$.

However, $\sum_{i=1}^3\sum_{j=i}^{3}(\nCross{e_{i1}}{e_{j3}}{\psi_0}+
\nCross{e_{i3}}{e_{j1}}{\psi_0})\equiv_20$, as $H$ has no multi-edges or loops,
and $\psi_0$ is independently even, since all the involved pairs of crossing edges cross an even number of times.
 Thus,  $\sum_{i=1}^{3}\sum_{j=i}^{3}\nCross{e_{i2}}{e_{j2}}{\psi_0'}\equiv_2 1$ if and only if 
 $\sum_{i=1}^3\sum_{j=i}^{3}(\nCross{e_{i1}}{e_{j1}}{\psi_0'}+
\nCross{e_{i3}}{e_{j3}}{\psi_0'})\equiv_2 1$, where in the last equality we used the fact that $\nCross{e_{i1}}{e_{j1}}{\psi_0}\equiv_2 \nCross{e_{i1}}{e_{j1}}{\psi_0'}$ and $\nCross{e_{i3}}{e_{j3}}{\psi_0}\equiv_2 \nCross{e_{i3}}{e_{j3}}{\psi_0'}$ for $i\not=j$.
 \end{proof} 

For $\rho\in E(H')$, we consider an auxiliary graph $G_\rho$, whose vertices are 
edges $e$ of $G'$ such that $\varphi'(e)=\rho$, and in which two vertices $e_1$ and $e_2$ are connected by an edge if and only if $\nCross{e_1}{e_2}{{\psi}_0'}\equiv_2 1$. Note that the application of a twist on $e\in V(G_\rho)$ results in a local complementation at $e$, i.e., in a new graph $G_\rho$, in which $e$ is connected with a vertex if and only if $e$ was not connected with the vertex before. By~(\ref{eqn:triple}), it follows that $G_\rho$ is a complete bipartite graph. For if not, $G_\rho$ is either a bipartite graph containing three vertices that induce a single edge (contradiction); or $G_\rho$ contains a (shortest) odd cycle, which must be a triangle (contradiction). The last claim about a triangle follows, since in $G_\rho$, every edge in an odd cycle of length at least 5 must be adjacent to a diagonal of the cycle.  Therefore by applying a twist to every vertex in one part we obtain a desired $\mathbb{Z}_2$-approximation.

% We think that the technique we introduced could lead to solutions of much more general cases c-planarity, and not necessarily only in the case of flat c-graphs.

\section{Acknowledgments}

We are grateful to Arkadiy Skopenkov for informing us about~\cite{Sko03_approximability}, Mikhail Skopenkov for reading
carefully preliminary version(s) and providing valuable critical comments, and anonymous referees for comments that helped us to improve the presentation of the results.

%%%%%%%%%%%%%%%%%%%%%%%%%%%%%%%%%%%%%%%%%%%%%%%%%%%%%%%%%%%%%%%%%%%%%%
% If you use BibTeX to create a bibliography
% then copy and past the contents of your .bbl file into your .tex file

%%%%%%%%%%%%%%%%%%%%%%%%%%%%%%%%%%%%%%%%%%%%%%%%%%%%%%%%%%%%%%%%%%%%%%

\end{document}